\newtheorem{theorem}{Theorem}[section] 
\newtheorem{definition}[theorem]{Definition}
\newtheorem{lemma}[theorem]{Lemma}
\newtheorem{corollary}[theorem]{Corollary}
\newtheorem{proposition}[theorem]{Proposition}
\numberwithin{equation}{section}
\newcommand \be {\begin{equation}}
\newcommand \ee {\end{equation}}
\newcommand \N {\mathbb N}
\newcommand \R {\mathbb R} 
\newcommand{\scalarpr}[2]{\left<#1,#2\right>}
\newcommand{\pairing}[2]{\left(#1,#2\right)}
\newcommand{\diag}{\text{Diag}\xspace}
\newcommand{\Eqref}[1]{Eq.~\eqref{#1}}
\newcommand{\Eqsref}[1]{Eqs.~\eqref{#1}}
\newcommand{\Sectionref}[1]{Section~\ref{#1}}  
\newcommand{\Sectionsref}[1]{Sections~\ref{#1}}  
\newcommand{\Defref}[1]{Definition~\ref{#1}}
\newcommand{\Lemref}[1]{Lemma~\ref{#1}}
\newcommand{\Propref}[1]{Proposition~\ref{#1}}
\newcommand{\Theoremref}[1]{Theorem~\ref{#1}}
\newcommand{\Corref}[1]{Corollary~\ref{#1}}
\newcommand{\Conditionref}[1]{Condition~\ref{#1}}
\newcommand{\Conditionsref}[1]{Conditions~\ref{#1}}
\newcommand \del 	\partial
\newcommand{\keyword}[1]{\textbf{#1}}
\newcommand{\LPDE}[2]{\ensuremath{\widehat L(#1)[#2]}\xspace}
\newcommand{\LPDEuOp}[1]{\ensuremath{\widehat L(u_0+#1)}\xspace}
\newcommand{\LPDEu}[2]{\LPDE{u_0+#1}{#2}}
\newcommand{\LPDEun}[2]{\LPDE{u_n+#1}{#2}}
\newcommand{\HPDELinOp}{\ensuremath{\mathbb H}\xspace}
\newcommand{\HPDEuOp}[1]{\ensuremath{\mathbb H(u_0+#1)}\xspace}
\newcommand{\HPDEu}[2]{\ensuremath{\mathbb H(u_0+#1)[#2]}\xspace}
\newcommand{\GuOp}{\ensuremath{\mathbb G(u_0)}\xspace}
\newcommand{\Gu}[1]{\ensuremath{\mathbb G(u_0)[#1]}\xspace}
\newcommand{\LODE}[2]{\ensuremath{L_{\text{ODE}}(#1)[#2]}\xspace}
\newcommand{\LODEu}[1]{\LODE{u_0}{#1}}
\newcommand{\HODEu}[1]{\ensuremath{\mathbb H_{\text{ODE}}(u_0)[#1]}\xspace}
\newcommand{\SO}[1]{\ensuremath{S_1(#1)}\xspace}
\newcommand{\SOu}[1]{\SO{u_0+#1}}
\newcommand{\SOL}[1]{\ensuremath{S_{1,0}(#1)}\xspace}
\newcommand{\SOLu}{\SOL{u_0}}
\newcommand{\SOLus}{\ensuremath{S_{1,0}}\xspace}
\newcommand{\SOH}[1]{\ensuremath{S_{1,1}(#1)}\xspace}
\newcommand{\SOHu}[1]{\SOH{u_0+#1}}
\newcommand{\SOHus}[1]{\SOH{#1}}
\newcommand{\SOInv}[1]{\ensuremath{S_{1}^{-1}(#1)}\xspace}
\newcommand{\SOInvu}[1]{\SOInv{u_0+#1}}
\newcommand{\SOLInv}[1]{\ensuremath{S_{1,0}^{-1}(#1)}\xspace}
\newcommand{\SOLInvu}{\SOLInv{u_0}}
\newcommand{\ST}[1]{\ensuremath{S_2(#1)}\xspace}
\newcommand{\STu}[1]{\ST{u_0+#1}}
\newcommand{\STL}[1]{\ensuremath{S_{2,0}(#1)}\xspace}
\newcommand{\STLu}{\STL{u_0}}
\newcommand{\STLus}{\ensuremath{S_{2,0}}\xspace}
\newcommand{\STH}[1]{\ensuremath{S_{2,1}(#1)}\xspace}
\newcommand{\STHu}[1]{\STH{u_0+#1}}
\newcommand{\STHus}[1]{\STH{#1}}
\newcommand{\NN}[1]{\ensuremath{N(#1)}\xspace}
\newcommand{\NNu}[1]{\NN{u_0+#1}}
\newcommand{\NL}[1]{\ensuremath{N_{0}(#1)}\xspace}
\newcommand{\NLu}{\NL{u_0}}
\newcommand{\NLus}{\ensuremath{N_{0}}\xspace}
\newcommand{\NH}[1]{\ensuremath{N_{1}(#1)}\xspace}
\newcommand{\NHu}[1]{\NH{u_0+#1}}
\newcommand{\NHus}[1]{\NH{#1}}
\newcommand{\f}[1]{\ensuremath{f(#1)}\xspace}
\newcommand{\FPDE}[2]{\ensuremath{F(#1)[#2]}\xspace}
\newcommand{\FPDEu}[1]{\FPDE{u_0}{#1}}
\newcommand{\FPDEus}[1]{\ensuremath{F[#1]}\xspace}
\newcommand{\FPDEun}[1]{\FPDE{u_n}{#1}}
\newcommand{\FODE}[2]{\ensuremath{F_{\text{ODE}}(#1)[#2]}\xspace}
\newcommand{\FODEu}[1]{\FODE{u_0}{#1}}
\newcommand{\FODEuOp}{\ensuremath{F_{\text{ODE}}(u_0)}\xspace}
\newcommand{\FL}[2]{\ensuremath{\mathscr{F}(#1)[#2]}\xspace}
\newcommand{\FLuOp}{\ensuremath{\mathscr{F}}(u_0)\xspace}
\newcommand{\FLu}[1]{\ensuremath{\mathscr{F}}(u_0)[#1]\xspace}
\newcommand{\FLus}[1]{\ensuremath{\mathscr{F}}[#1]\xspace}
\newcommand{\FLun}[1]{\ensuremath{\mathscr{F}}(u_n)[#1]\xspace}
\newcommand{\FLunOp}{\ensuremath{\mathscr{F}}(u_n)\xspace}
\newcommand{\Resu}[1]{\ensuremath{\mathrm{Res}[#1]}\xspace}
\newcommand{\RR}[1]{\ensuremath{\mathcal{R}[#1]}\xspace}
\begin{document}

\title{Quasilinear hyperbolic Fuchsian systems 
and AVTD behavior in $T^2$--symmetric vacuum spacetimes}

\author{Ellery Ames\footnote{Department of Physics, University of
    Oregon, Eugene, OR 97403, USA. Email: ellery@uoregon.edu} 
\hskip0.cm, Florian Beyer\footnote{
Department
  of Mathematics and Statistics, University of Otago, P.O. Box 56, Dunedin 9054, New Zealand. Email: fbeyer@maths.otago.ac.nz.}, 
\\
James Isenberg\footnote{Department of Mathematics, University of Oregon, Eugene, OR 97403, USA. Email: isenberg@uoregon.edu.},
and Philippe G. LeFloch\footnote{ Laboratoire Jacques--Louis Lions \& Centre National de la Recherche Scientifique, Universit\'e Pierre et Marie Curie (Paris 6), 4 Place Jussieu, 75252 Paris, France. Email: contact@philippelefloch.org.}
}

\date{February 2013 (final version)}

\maketitle

\begin{abstract}
We set up the singular initial value problem for quasilinear hyperbolic Fuchsian systems of first order and establish an existence and uniqueness theory for this problem with smooth data and smooth coefficients (and with even lower regularity). We apply this theory in order to show the existence of smooth (generally not analytic) $T^2$--symmetric solutions to the vacuum Einstein equations, which exhibit AVTD (asymptotically velocity term dominated) behavior in the neighborhood of their singularities and are polarized or half--polarized. 
\end{abstract}

 \tableofcontents

\vspace{2ex}

\section{Introduction}

Fuchsian formulations have proven to be very useful for studying the behavior of cosmological spacetimes in the neighborhood of their singularities. Introduced into general relativity almost fifteen years ago by Kichenassamy and Rendall \cite{Kichenassamy:1999kg}, these formulations have been used primarily as a tool for showing that within certain families of solutions of the Einstein equations (defined primarily by the invariance of each member of the family under a fixed isometry group), there is a large collection of solutions which exhibit AVTD (asymptotically velocity term dominated) behavior. Roughly speaking, a spacetime shows AVTD behavior if, in a neighborhood of its cosmological singularity, the evolution of the spacetime metric field of the solution approaches the evolution of a model metric field which (relative to some choice of spacetime coordinates)  satisfies a system of ordinary differential equations (ODEs) deduced from the Einstein equations by suppressing spatial derivatives relative to time derivatives. The detection of AVTD behavior has proven to be a very useful step towards verifying that the strong cosmic censorship conjecture holds for certain families of solutions of the Einstein equations \cite{Chrusciel:1999dk,Ringstrom:2009ji}. 

Fuchsian formulations are effective in studying the possible presence of AVTD behavior since they are designed specifically to handle singular systems of partial differential equations (PDEs) or, equivalently, PDE systems in the neighborhood of their singularities. While it is not easy to identify the location of singularities in generic spacetime solutions of Einstein's equations, for certain isometry-defined families of solutions ---e.g. spatially homogeneous solutions, Gowdy solutions, $T^2$--symmetric solutions, and many families of $U(1)$--symmetric solutions--- one can use areal coordinates or special forms of harmonic coordinates to locate the singularities. If the Einstein equations are then reduced relative to these symmetries and expressed in terms of these coordinates, then the resulting PDE system takes a singular form in the neighborhood of the singularity which is amenable to a Fuchsian formulation and analysis in the form of a singular initial value problem ---presuming that various further conditions are met. 

Most of the earlier applications of Fuchsian formulations to families of solutions of Einstein's equations have presumed that the spacetimes are analytic \cite{Kichenassamy:1999kg,Isenberg:1999ba,Isenberg:2002ku, Andersson:2001fa}. This is not surprising, since Fuchsian formulations for generic systems of PDEs were initially developed with analytic PDE systems in mind \cite{Kichenassamy:1996wy,Kichenassamy:2007tr}. It is important, however, to extend studies of AVTD behavior and strong cosmic censorship beyond analytic spacetimes and to consider if they also hold for spacetime solutions which are smooth, but not necessarily analytic. 

There are two sets of results (prior to our work) known to the authors concerning the existence of solutions to quasilinear Fuchsian PDEs in smooth or finite differentiability regularity classes.\footnote{In addition to the results we cite here involving quasilinear Fuchsian PDEs with non-analytic regularity, there are a number of treatments of \emph{linear} Fuchsian PDEs with non-analytic regularity; these include the works of Tahara \cite{Tahara:1986vz,Tahara:1984tg} and Kichenassamy \cite{Kichenassamy:2007tr,Kichenassamy:1996hr}.}
In the first of these, proposed by Claudel and Newman \cite{Claudel:1998tt}, the main result is that if a number of quite restrictive technical conditions are satisfied by the PDE system, then the Cauchy problem is well-posed for data specified at the singular time. As noted in \cite{Rendall:2000ki}, these restrictive conditions are not generally satisfied by the PDE systems corresponding to the Einstein equations for the Gowdy spacetimes, for the $T^2$-symmetric spacetimes, or for most other families of spacetimes under consideration; hence the Claudel and Newman results are not useful for our present purposes. 

The second set of results concerning smooth solutions of Fuchsian systems are those proven by Rendall. 
In \cite{Rendall:2000ki}, he develops a Fuchsian--based approach that is applicable to both semilinear and quasilinear equations,  and he uses it to establish the existence of a class of smooth $T^3$ Gowdy spacetimes which exhibit AVTD behavior. In Rendall's approach,  one performs a series of reduction steps in order to obtain a symmetric hyperbolic system, and one then proves the existence of smooth solutions using a sequence of analytic solutions to a sequence of analytic ``approximate equations.'' Although this method has successfully been applied by Clausen \cite{Clausen:2007vq} 
to the family of polarized $T^2$-symmetric spacetimes, and has also been used by St{\aa}hl \cite{Stahl:2002bv} in studying $S^{3}$ and $S^{2}\times S^{1} $ Gowdy spacetimes, it has proved difficult to apply in more general cases, such as  for spacetimes with only one Killing vector field \cite{Isenberg:2002ku}, or with no symmetries \cite{Andersson:2001fa}.

Our goal in the present work is to develop a general Fuchsian formulation for analyzing smooth (but not analytic) solutions to quasilinear hyperbolic PDEs which can be fairly directly applied to polarized and half--polarized $T^2$--symmetric solutions to the Einstein equations and can be applied  to polarized and half--polarized $U(1)$-symmetric solutions as well. Two of the authors of this work, Beyer and LeFloch in \cite{Beyer:2010fo}, have carried out this program for semilinear hyperbolic systems  and have applied  their formulation to $T^3$ Gowdy solutions. Therein, Beyer and LeFloch set up a second--order Fuchsian formulation for smooth semilinear systems. In the present paper, in addition to generalizing to smooth \emph{quasilinear} hyperbolic systems, we also work with Fuchsian systems in first--order form. 

One of the motivations for the semilinear work and its application to the $T^3$ Gowdy spacetimes was that the  approximation scheme which plays a key role in the existence proof can also be used as the basis for a robust method for numerical simulations. This numerical approach has been developed and implemented in \cite{Beyer:2010tb,Beyer:2011ce} (see also \cite{Amorim:2009ka}) as a tool for the  numerical exploration of Gowdy solutions. Since our analysis in the present paper involves a similar approximation scheme, we expect to be able to carry out  numerical investigations of singular initial value problems in more general classes of equations in future work.

An outline of this paper is as follows. 
We begin the discussion of our results in Subsection \ref{sec:firstordertheory}, where we consider a general class of first--order quasilinear Fuchsian systems and, then, formulate the singular initial value problem for such systems. Next, in the same subsection, we introduce the class of first--order quasilinear Fuchsian systems in \textit{symmetric hyperbolic} form and, for such systems, state an existence and uniqueness result (in Theorem \ref {th:Wellposedness1stOrderFiniteDiff}, below) which holds in, both, infinite differentiability and finite differentiability classes. In Subsection \ref{section:proofs}, we carry out the details of the proof of this result. 

This theorem holds for a broad class of asymptotic data specified on the singularity. Next, in Subsection \ref{sec:expansionsHO}, we discuss special choices for this data, which we call ``ODE leading-order term", and then state and establish an alternative existence and uniqueness result (Theorem \ref{th:Wellposedness1stOrderHigherOrder}) for the singular initial value problem. This alternative theorem is useful in applications, as we show later in Subsections \ref{sec:EPD} and  \ref{sec:optimalexistence}.  Interestingly enough, ODE--leading--order asymptotic data also play a useful role as approximate solutions. 

In the second part of this paper,  in Section \ref{application}, we apply our theoretical results and study polarized and half-polarized $T^2$--symmetric solutions of Einstein's vacuum field equations. To this end, in Subsection \ref{spacetimes}, we define this family of spacetimes, and then, in the same subsection, we write out the Einstein equations in terms of areal coordinates.  Next, in Subsection \ref{sec:defnAVTD}, we discuss the concept of AVTD behavior and discuss what one needs to do in order to check whether a set of polarized or half-polarized $T^2$--symmetric solutions do exhibit AVTD behavior. In Subsections \ref{sec:firstexistence} and \ref{sec:optimalexistence}, we are in a position to rely on our results in Section \ref{theory} and we establish that our conditions therein indeed hold true for the spacetimes under consideration. On one hand, we rely on Theorem \ref {th:Wellposedness1stOrderFiniteDiff} and establish the existence of a parametrized family of polarized and half--polarized $T^2$--symmetric solutions with, both, finite or infinite order of differentiability and with the expected AVTD behavior. On the other hand, we rely on Theorem \ref{th:Wellposedness1stOrderHigherOrder} and show that, provided attention is restricted to smooth ($C^{\infty}$) solutions, then the family of $T^2$--symmetric with AVTD behavior can be extended to include those with a wider (``optimal") range of the ``asymptotic velocities" (labeled by $k$, below, as defined and discussed in Section \ref{sec:avtdexistence}).

In Section \ref{conclusion}, we conclude and discuss the relevance of Fuchsian formulations for numerical simulations, as well as the application of the proposed formulation in order to tackle more general families of spacetimes. 


\section{The singular initial value problem}
\label{theory}

\subsection{Objective of this section} 

For a given (say, first-order) PDE system $\mathcal{P}[\psi]=0$, the (regular) Cauchy problem involves finding a solution $\psi=\psi(t,x)$ to this system such that, at some chosen value $t_0$ of the time, the solution satisfies the initial condition $\psi(t,x_0) = \phi(x)$, for some specified initial data function $\phi=\phi(x)$. If the Cauchy problem is well-posed, then for any appropriate choice of $\phi$, the solution $\psi$ exists for some open interval $I$ containing $t_0$. 

Here, we are interested in the \emph{singular} initial value problem rather than the regular one. That is, rather than seeking solutions to $\mathcal{P}[\psi]=0$ which agree with specified initial data at a chosen time, we seek for solutions which become (in general) singular as $t$ approaches some fixed value $t_{\infty}$, and which agree with some specified fall-off data as one approaches the singularity at $t_{\infty}$. As for the Cauchy problem, one can prove existence and uniqueness theorems for the singular initial value problem; these theorems guarantee that for any appropriate choice of the ``asymptotic data", there is a solution to $\mathcal{P}[\psi]=0$ which exists for $t$ approaching $t_{\infty}$, and which asymptotically matches the prescribed asymptotic data


 \subsection{Quasilinear first-order symmetric hyperbolic
  Fuchsian systems}
\label{sec:firstordertheory}

Before making this notion of singular initial value problem precise, and proving existence and uniqueness results, we carefully define the class of PDEs we shall consider here.  While the theory we develop here can be generalized to a much wider
class of background spacetime manifolds\footnote{We work with one
  spatial dimension, primarily since in addition to simplifying the
  discussion, this case is sufficient for handling our primary
  application: $3+1$ dimensional spacetimes with a spatially acting $T^2$ isometry group.} (see, for example \cite{Ames:2012tm}, in which we generalize our results to spacetime manifolds $(0,\delta] \times T^n$),
let us presume for now
that we work on the cylinder spacetime 
$(0,\delta] \times T^1$, for some small parameter $\delta$, with coordinates $t \in (0,\delta]$ and  $x \in T^1$. The singularity is presumed to occur at $t=0$ (hence we set $t_{\infty} =0$ in the earlier notation); correspondingly, it is useful to work with the ``singular time differential operator"
$$
D:=t\partial_t.
$$

The general form of the first-order PDE systems under consideration reads 
\begin{equation}
  \label{eq:1stordersystem}
  S_1(t,x,u)Du(t,x)+S_2(t,x,u) t \partial_x u(t,x)+N(t,x,u)
  u(t,x)=f(t,x,u),
\end{equation}
in which the unknown is a vector-valued spacetime function $u:(0,\delta] \times T^1 \rightarrow \R^d$
for some integer $d \geq 1$ and some real $\delta>0$. Here $S_1=S_1(t,x,u)$, $S_2=S_2(t,x,u)$, and $N=N(t,x,u)$ are specified
$d\times d$--matrix--valued maps of the spacetime coordinates
$(t,x)$ and the unknown $u$ (but is independent of its derivatives), while
$f=f(t,x,u)$ is a specified $\R^d$--valued map of $(t,x)$ and the unknown $u$ (but again is independent of its derivatives). 
The specific requirements for the functions $S_1$, $S_2$, $N$ and $f$ are fixed precisely below; see in particular \Defref{def:quasilinearlimit}.
For notational convenience, we often leave out the arguments $(t,x)$, instead we use the short--hand notation $\SO{u}$, $\ST{u}$, $\NN{u}$, and $\f{u}$. Notationally, an object such as $\SO{u}$ may be interpreted as a map $u\mapsto \SO{u}$  between two function spaces (as further discussed below). In this context we often write $\SO{u}(t,x)$, and we do the same for $S_2(u)$ and $N(u)$.

Observe that, in principle, one could absorb the term $\NN{u}u$ into the source $\f{u}$;
however in view of the conditions on these terms that we will introduce below, it is important to keep these two terms separate. A system of this form \eqref{eq:1stordersystem} (noting especially the use of the singular operator $D$) will be referred to as 
a \keyword{quasilinear first--order Fuchsian system}.

Before 
formulating the singular initial value problem for such systems, we wish to define functional norms and the corresponding function spaces which include  built-in specifications of the asymptotic behavior of the functions in time. We state these definitions first (here) for vector-valued functions, and then (below) for matrix-valued functions. The definitions are parametric: For the vector-valued function case, we specify as parameters  i) a non-negative integer $q$, and ii) a fixed smooth\footnote{One could choose $\mu$ (and other parameter functions below) to have less regularity, but this is an inconsequential generality for the purpose of this paper.} 
vector-valued function $\mu:T^1\rightarrow \R^d$. Then using $\mu$ (which we label as an \keyword{exponent vector}) to construct the corresponding diagonal matrix  
\begin{equation}
 \label{eq:defR}
 \RR{\mu}(t,x):=\text{diag}\, \bigl(t^{-\mu_1(x)},\ldots,t^{-\mu_d(x)}\bigr), 
\end{equation}
we  define the norm
\begin{equation}
  \label{eq:norm}
  \|w\|_{\delta,\mu,q}:=\sup_{t\in (0,\delta]}
    \|\RR{\mu}(t,\cdot) w(t,\cdot)
    \|_{H^q(T^1)} 
\end{equation}
for vector-valued functions $w=w(t,x)$. Here, $||\cdot||_{H^q(T^1)}$ denotes the standard $q$--order Sobolev norm on $T^1$. Based on \eqref{eq:norm}, we define the Banach space $X_{\delta, \mu,q}(T^1)$ ---also simply written as $X_{\delta, \mu,q}$--- as the completion of the set of all functions $w\in C^\infty((0,\delta]\times T^1)$ for which this norm is finite, and we denote by $B_{\delta,\mu,q,r} \subset X_{\delta,\mu,q}$ the closed ball of radius $r>0$ (measured with the given norm) 
and center $0$. To handle the class of functions which are infinitely differentiable, we define the space 
\begin{equation*}
X_{\delta,\mu,\infty}:=\bigcap_{q=0}^\infty X_{\delta,\mu,q}.
\end{equation*}
In order to compare two function spaces $X_{\delta,\mu,q}$ and $X_{\delta, \nu,q}$, we write $\nu>\mu$ if, for each index $i=1,\ldots,d$ and for all $x\in
T^1$, the components of $\nu$ and $\mu$ satisfy the inequality
$\nu_{i}(x)>\mu_i(x)$. Clearly, $X_{\delta,\nu,q}
\subset X_{\delta,\mu,q}$ if $\nu>\mu$. 

We use analogously defined norms and functions spaces in order to control $d \times d$ matrix-valued functions such as $S_{1}, S_{2}$, and $N$. More specifically, in this case we choose 
as a parameter a  fixed $d\times d$ matrix-valued valued function $\zeta$ (labeled as an \keyword{exponent matrix})  which depends smoothly $x \in T^{1}$, and we define the corresponding norm as 
\begin{equation}
\|S\|_{\delta,\zeta,q} := \sup_{t\in (0,\delta]}\sum_{i,j=1}^d \| t^{-\zeta_{ij}(\cdot)} S_{ij}(t,\cdot)\|_{H^q},
\end{equation}
for matrix-valued functions $S= S(t,x)$. We denote the corresponding Banach space by $X_{\delta,\zeta, q}.$
Based on these function spaces of matrix-valued functions, we define (for $r>0$) $B_{\delta,\zeta,q,r} \subset X_{\delta,\zeta,q}$ and $X_{\delta, \zeta, \infty}$ (as above), and we note that $X_{\delta,\xi,q}
\subset X_{\delta,\zeta,q}$ if $\xi>\zeta.$

As noted above, the singular initial value problem associated with a system such as \Eqref{eq:1stordersystem} consists of choosing a set of ``asymptotic data", and seeking for solutions which asymptotically approach that data. Using the function spaces just defined, we make this idea precise as follows. 

 \begin{definition}
  \label{def:SIVP}
Given the parameters $\delta,\mu$ and $q$ as above, and a chosen function $u_0:(0,\delta]\times T^1\rightarrow\R^d$, the \keyword{singular initial value problem} consists of seeking a solution $u=u_0+w$ to
  \Eqref{eq:1stordersystem} whose \keyword{remainder} $w$ belongs to
 $X_{\delta,\mu,q}(T^1)$. 
\end{definition}

The function $u_0$, which we refer to as the \keyword{leading-order term}, constitutes the asymptotic data, and is (a priori) of unspecified regularity. Regarding the solution function $w$, if it is to be considered a ``remainder", then in comparison with $u_0$ it should be of higher order in $t$ as one approaches $t=0$. 
Observe that the exponent vector $\mu$, which parametrizes the function space  $X_{\delta,\mu,q}(T^1)$, 
controls the order of the singularity of the remainder  $w$ at
$t=0$; roughly speaking, each component of $w$ is of corresponding
component order $O(t^\mu)$ if $w\in X_{\delta,\mu,q}$. Hence, the
components of $\mu$ are sometimes referred to as the remainder
exponents, with $\mu$ collectively labeled the \keyword{(remainder)
exponent vector}. Generally, we assume here and below that exponent vectors are smooth. Also, for a given exponent vector $\mu$ and a given scalar $\epsilon$, we use the notation $\mu + \epsilon$ to indicate a new exponent vector obtained by adding $\epsilon$ to each component of $\mu$.

We now discuss the conditions on $\SO{u}$, $\ST{u}$, $\NN{u}$, and
$\f{u}$ in \eqref{eq:1stordersystem} which, together with further conditions on the space of
leading-order terms and the space of remainder functions, are
sufficient for establishing the well--posedness of the singular initial value
problem. The main set of conditions needed is included in the
following definition.

\begin{definition}
\label{def:quasilinearlimit}
Fix
positive constants $\delta$ and $s$, a pair of non-negative integers $q_0$ and $q$  (possibly $+\infty$), and an exponent vector $\mu:T^1\rightarrow\R^d$, 
together with a leading-order term $u_0 : (0, \delta] \times T^1 \to \mathbb R^d$ (with so far unspecified regularity). The system \Eqref{eq:1stordersystem} is called a \keyword{quasilinear symmetric hyperbolic Fuchsian system} around $u_0$
if, for each $x \in T^1$,  there exist a matrix $\SOLu(x)$
that is positive definite and symmetric and independent of $t$, 
a matrix $\STLu(x)$ that is symmetric and independent of $t$, 
and a matrix $\NLu(x)$ that is independent of $t$, all defining matrix-valued functions in the Sobolev space $H^{q_0}(T^1)$; and if there exists a smooth vector function $\beta : T^1 \to \mathbb R^d$ with strictly positive components, 
such that for every $\delta'\in (0,\delta]$,
each of the``remainder matrices''
\begin{align*}
\SOHu{w}&:=\SOu{w}-\SOLu,\\
    \STHu{w}&:=\RR{\beta-1}\STu{w}-\STLu,\\
    \NHu{w}&:=\NNu{w}-\NLu,
 \end{align*}
considered as an operator of the form (for example) $w\mapsto\SOHu{w}$, maps  all functions $w\in B_{\delta', \mu, q,s}$ to elements in $B_{\delta', \zeta, q,r}$, in which  $\zeta$ is some exponent matrix with strictly positive entries, and $r>0$ is some constant. It is furthermore required that $\SOHu{w}$ and $\STHu{w}$ are symmetric matrices for all $w\in B_{\delta, \mu, q,s}$.
\end{definition}

Before discussing further conditions which are needed in order to obtain existence and uniqueness
for the singular initial value problem, we note the following remarks:
\begin{enumerate}[label=\textit{(\roman{*})}, ref=(\roman{*})]   

\item If a system \Eqref{eq:1stordersystem} satisfies the conditions in the above definition and is thus a quasilinear symmetric hyperbolic Fuchsian system, then the matrices $\SO{u}$, $\ST{u}$, and $\NN{u}$ in \Eqref{eq:1stordersystem} (acting on $u=u_0 +w$) decompose as
 \begin{align*}
    \SOu{w}&= \SOLu +\SOHu{w},\\
    \STu{w}&= \RR{1-\beta}(\STLu + \STHu{w}),\\
    \NNu{w}&=\NLu+\NHu{w}.
  \end{align*}  
Quasilinear symmetric hyperbolic Fuchsian systems are therefore
  ``essentially'' semilinear (described by the coefficients
  $S_{1,0}$, $S_{2,0}$ and $N_{0}$), up to ``quasilinear
  perturbations'' (given by $S_{1,1}$, $S_{2,1}$ and $N_{1}$), which
  decay as $t \rightarrow 0$ with a rate controlled by $\zeta$. The
  purely semilinear case has been treated earlier within a second--order framework 
by Beyer and LeFloch~\cite{Beyer:2010tb,Beyer:2010fo,Beyer:2010wc}. (See also \cite{Beyer:2011uz,Beyer:2011ce}).

\item Presuming that a fixed leading-order term $u_0$ has been chosen, it is convenient to use the short-hand notation $\SOHus{w}$, $\STHus{w}$, $\NHus{w}$ in place of  $\SOHu{w}$, $\STHu{w}$, $\NHu{w}$, respectively, and similarly to use $\SOLus$, $\STLus$, $\NLus$ in place of $\SOLu$, $\STLu$, $\NLu$. It is important, however, to keep in mind the dependence of these matrices on the choice of the leading-order term $u_0$.

\item If \Eqref{eq:1stordersystem} satisfies the conditions in \Defref{def:quasilinearlimit}, then it is symmetric hyperbolic for all $t\in (0,\delta]$, provided $\delta$ is sufficiently small in order to guarantee that $S_1$ is positive definite. Consequently, standard theorems ensure that the Cauchy problem with initial data  specified at $t=t_0\in (0,\delta]$ is well--posed in the usual sense (away from $t = 0$), so long as the order of differentiability (determined by $q, q_0$) is sufficiently large. The solutions belong to the space $C(I, H^{q}(T^{1}))$ for $q \ge 2$ and for some interval $I \subset (0, \delta]$; however, nothing is known a priori regarding the behavior of these solutions as $t$ approaches the singularity at $t=0$.

\item We note that the matrix-valued operators $\SOHus{w}$, $\STHus{w}$, and $\NHus{w}$ are non-singular in a neighborhood of $t=0$ since they take values in $X_{\delta, \zeta, q}$ with $\zeta>0$. Hence quasilinear symmetric hyperbolic Fuchsian systems are singular precisely at $t=0$, where the PDE coefficients are singular. 

\item Presuming  that $q$ is sufficiently large, the conditions in \Defref{def:quasilinearlimit} require that the coefficient matrices on the left-hand side of \Eqref{eq:1stordersystem}, e.g.,\ $S_1(t,x,u)$,  are defined on the domain $(0,\delta]\times T^1\times U$, where $U$ is an open subset of $\R^d$ about the origin. This domain $U$ must be compatible with the choice of parameters $\delta$, $\mu$ and $s$. For example, if $\mu>0$, $q>1$ and $w\in B_{\delta,\mu,q,s}$, then $\|w\|_{L^\infty}\le C s \delta^{\mu_{\text{min}}}$, where $C$ is the Sobolev constant and $\mu_{\min}$ is the minimal value over all components of $\mu$ over all spatial points. Hence, if necessary, $s$ and or $\delta$ must be chosen sufficiently small in order to fit into $U$.
\end{enumerate}

We
discuss a collection of  useful technical tools in the appendix (primarily  in \Sectionref{sec:productsfunctions}) which allow us to check if the conditions of \Defref{def:quasilinearlimit} are satisfied for a given problem.

The remaining conditions we consider concern the coupling between the components of an $\R^d$-valued function $u=u(t,x)$, presumed to satisfy \Eqref{eq:1stordersystem}, and the effects of these couplings on the asymptotic behavior of the components as $t$ approaches the singularity. 

\begin{definition} 
  \label{def:nonessentiallycouple}
Given the singular initial value problem (\Defref{def:SIVP}) for a specified quasilinear symmetric hyperbolic system (\Defref{def:quasilinearlimit})  with specified leading-order term $u_0$ and specified function space $X_{\delta, \mu,q}$, the system \Eqref{eq:1stordersystem} is called \keyword{block
    diagonal with respect to $\mu$}, provided the following commutation conditions hold
  \[
\RR{\mu} \SO{u}=\SO{u} \RR{\mu},\quad \RR{\mu} \ST{u}=\ST{u} \RR{\mu},\quad
  \RR{\mu} \NN{u}=\NN{u} \RR{\mu},
  \] 
  for all $u=u_0+w$ with $w\in X_{\delta,\mu,q}$, and provided 
the same condition holds for all
 relevant spatial derivatives of $\SO{u}, \ST{u}$, and $\NN{u}$. (Recall that  $\RR{\mu}$ is defined in \Eqref{eq:defR}.)
\end{definition}

This block diagonality condition is used in the derivation of energy estimates (in \Sectionref{section:proofs}, below)
and thus plays a major role in the proof of existence and uniqueness. 
Roughly speaking, this condition guarantees  that the ``principal part
operator"\footnote{This operator appears in \Eqref{eq:1stordersystem} in the form $\LPDE{u}{u}$; however, it is useful in the discussion below to define this operator with distinct functions $u, v$.}
\begin{equation}
  \label{eq:DefLPDE}
  \LPDE{u}{v}:= \SO{u}Dv+\ST{u} t \partial_x v+\NN{u}v,
\end{equation}
takes block diagonal form for $v$, and that each block is
associated with only one component of the exponent vector
$\mu$.  Recall that the components of $\mu$ determine the order of the singularity in $t$ at
$t=0$ for the components of the remainder of the singular initial value
problem. Hence, this condition requires that the principal part may only be
coupled within those components of the solution whose remainders behave
the same at $t=0$. Note that the condition does allow all of the
matrices $S_1$, $S_2$ and $N$ to depend on all components of $u$. It also
allows for arbitrary coupling in the source--term, which we write as
\[\FPDEu{w}:=\f{u_0+w},\]
(or, in short form, as $\FPDEus{w}$, whenever it does not lead to confusion).

We are now ready to state our main existence and uniqueness results for the singular initial value problem associated with first-order quasilinear symmetric hyperbolic Fuchsian systems. Our hypotheses below include conditions on the matrix 
\begin{equation}
  \label{eq:energydissipationmatrix}
  M_0:= \SOLus\, \diag(\mu_1,\ldots,\mu_d) + \NLus, 
\end{equation}
which we refer to as the {\bf energy dissipation matrix} and depends on the space coordinate $x$, only. 

\begin{theorem}[Existence theory for symmetric hyperbolic Fuchsian systems]
  \label{th:Wellposedness1stOrderFiniteDiff}
  Suppose that \Eqref{eq:1stordersystem} is a quasilinear symmetric hyperbolic Fuchsian system around a leading-order term $u_0$ (with  the choice of the parameters\footnote{Although  
Definition \ref{def:quasilinearlimit} involves the choice of $\beta, \zeta$ in addition to  $\delta, \mu, q, q_0$, this latter set is crucial to the nature of the results of this theorem, while the former set is in a certain sense incidental. Hence even though $\zeta$ does appear in condition (iii), we do not list $\beta$ and $\zeta$ in the hypothesis here.}
$\delta$, $s$, $\mu$, $q$, and $q_0$ as specified in \Defref{def:quasilinearlimit}) and is block diagonal with respect to $\mu$. Suppose that $q\ge 3$ and $q_0=q+2$. Then there exists a unique solution $u$ to \Eqref{eq:1stordersystem} whose remainder $w:=u-u_0$ belongs to $X_{\widetilde\delta, \mu, q}$ with $Dw \in X_{\widetilde \delta, \mu, q-1}$ for some $\widetilde \delta \in (0, \delta]$, provided the following structural conditions are satisfied:
  \begin{enumerate}[label=\textit{(\roman{*})}, ref=(\roman{*})] 
  \item \label{en:cond2}%
    The energy dissipation matrix $M_0$ defined in 
    \Eqref{eq:energydissipationmatrix} is positive definite
    for every spatial point $x \in T^1$.  
  \item \label{en:cond4N}%
    The map
    \begin{equation}
      \label{eq:defFLu}
      \FLuOp: w\mapsto \FPDEu{w}-\LPDEu{w}{u_0}
    \end{equation}
    is well-defined, and for every $\delta'\in (0,\delta]$, it maps $w\in B_{\delta',\mu,q,s}$ to
    $X_{\delta',\nu,q}$ for some exponent vector $\nu>\mu$.
  \item \label{cond:LipschitzF} For all $\delta'\in(0,\delta]$, for a constant $C>0$ and for all $w, \widetilde w\in B_{\delta',\mu,q,s}$, one has
  \begin{equation}
\|\FLus{w}-\FLus{\widetilde w}\|_{\delta',\nu,q} \le C \|w-\widetilde w\|_{\delta',\mu,q}
\end{equation}
and
    \begin{equation}
      \label{eq:LipschitzF}
      \begin{split}
 &\! \|\FLus{w}-\FLus{\widetilde w}\|_{\delta',\nu,q-1}
        +\|\SOHus{w}-\SOHus{\widetilde w}\|_{\delta',\zeta,q-1}\\
        &+\|\STHus{w}-\STHus{\widetilde w}\|_{\delta',\zeta,q-1}
        +\|\NHus{w}-\NHus{\widetilde w}\|_{\delta',\zeta,q-1}
 \le C \|w-\widetilde w\|_{\delta',\mu,q-1}
      \end{split}
    \end{equation}    
    for all $w, \widetilde w\in B_{\delta',\mu,q,s}$. 

  \end{enumerate}
  If all of these conditions are satisfied for all 
  $q\ge 3$, then there exists a unique solution $u$ of
  \Eqref{eq:1stordersystem} such that $u-u_0$ and $D(u-u_0)$, both, belong to\footnote{This result is short of ``well-posedness'' in the conventional sense because we do not prove continuous dependence of the solution on the asymptotic data. Such a result is certainly desirable, and we plan to investigate it in future work.} $X_{\widetilde \delta,\mu,\infty}$.
\end{theorem}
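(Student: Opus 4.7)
The plan is to set $w := u-u_0$ and recast \Eqref{eq:1stordersystem} as the fixed-point equation
$$\LPDEu{w}{w} = \FLus{w},$$
where by hypothesis \ref{en:cond4N} the right-hand side lies in $X_{\widetilde\delta,\nu,q}$ with $\nu > \mu$. I would attack it via a Picard-type iteration: freeze the argument in the coefficients and in $\FLus{\cdot}$, solve the resulting \emph{linear} Fuchsian problem, and show the induced map $\Phi: v \mapsto w$ is a contraction. The quasilinear dependence forces the standard loss-of-one-derivative trick: contract in the weaker space $X_{\widetilde\delta,\mu,q-1}$ while restricting to a closed ball of $X_{\widetilde\delta,\mu,q}$. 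This is exactly why hypothesis \ref{cond:LipschitzF} is split into a tame top-order bound and a genuine Lipschitz bound at order $q-1$.

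The linear building block is
$$\SOu{v} Dw + \STu{v}\, t\partial_x w + \NNu{v}\, w = F,$$
with frozen $v\in B_{\widetilde\delta,\mu,q,s}$ and source $F \in X_{\widetilde\delta,\nu,q}$. I would construct the desired $w \in X_{\widetilde\delta,\mu,q}$ by regularization: on $[t_n,\widetilde\delta]\times T^1$, solve the non-singular symmetric hyperbolic Cauchy problem with zero initial data at $t_n \downarrow 0$ (well-posed since $\SOu{v}$ is positive definite), and then let $t_n \to 0$. The crucial step is a uniform-in-$n$ energy estimate. Setting
$$E_q(t) := \sum_{|\alpha|\le q}\int_{T^1} (\RR{\mu}\partial_x^\alpha w)^T \SOLus\,(\RR{\mu}\partial_x^\alpha w)\,dx,$$
one applies $\partial_x^\alpha$ to the equation, multiplies by $\RR{2\mu}\partial_x^\alpha w$ and integrates over $T^1$. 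The symmetry of $\SOLus$ and $\STLus$ together with the block-diagonality of \Defref{def:nonessentiallycouple} --- which lets $\RR{\mu}$ commute through the leading-order coefficients and their $x$-derivatives --- recasts the principal contribution as $\tfrac12 D E_q$ plus the coercive term
$$\int_{T^1} (\RR{\mu}\partial_x^\alpha w)^T M_0\,(\RR{\mu}\partial_x^\alpha w)\,dx,$$
positive definite by hypothesis \ref{en:cond2}. The quasilinear perturbations $\SOHus{v}, \STHus{v}, \NHus{v}$ are absorbed because they decay as $t^\zeta$ with $\zeta > 0$; commutators of $\partial_x^\alpha$ with the coefficients are controlled by the assumption $q_0 = q+2$; and the source contributes a factor with strictly positive exponents $\nu_i - \mu_i$. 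A Gronwall argument then yields a uniform-in-$n$ bound on $E_q(t)$. This both guarantees extraction of a convergent subsequence of $w^{(n)}$ and places the limit in $X_{\widetilde\delta,\mu,q}$; the bound $Dw \in X_{\widetilde\delta,\mu,q-1}$ follows by inverting $\SOu{w}$ in the equation, and the same estimate applied to differences produces uniqueness.

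With this linear theory, the map $\Phi: v \mapsto w$ is well-defined on a small ball $B_{\widetilde\delta,\mu,q,r}$. Hypothesis \ref{cond:LipschitzF}, combined with the same energy estimate applied to $\Phi(v_1)-\Phi(v_2)$ at order $q-1$, yields contraction in the $X_{\widetilde\delta,\mu,q-1}$ norm once $\widetilde\delta$ is taken small enough. Applying Banach's theorem in the closed $X_{\widetilde\delta,\mu,q}$-ball endowed with the complete (weaker) $X_{\widetilde\delta,\mu,q-1}$ metric produces a fixed point, and weak-$\ast$ compactness of balls in $X_{\widetilde\delta,\mu,q}$ ensures the limit retains full $q$-regularity. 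For the $C^\infty$ statement, fix $\widetilde\delta$ from the $q=3$ case, run the construction for each $q \ge 3$, and invoke uniqueness at the base level to identify all of the solutions; the unique $w$ then belongs to every $X_{\widetilde\delta,\mu,q}$ and hence to $X_{\widetilde\delta,\mu,\infty}$.

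The main obstacle throughout is the uniform-in-$n$ energy estimate: producing it requires both the positivity of the dissipation matrix $M_0$, which supplies the coercive lower bound, and the block-diagonal commutation property, which is what lets the $\RR{\mu}$-weight pass through the principal symbol without destroying symmetry. Every other element of the argument --- Cauchy well-posedness for the regularized problems, weak-$\ast$ compactness, Banach fixed point --- is routine once this central estimate is in place.
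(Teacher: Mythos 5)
Your overall strategy is essentially the paper's: linearize by freezing the argument of the coefficients, construct the linear solution as the limit of Cauchy problems with zero data at times $t_n\searrow 0$ using a $\RR{\mu}$-weighted energy estimate whose coercive term is supplied by $M_0$ and whose weight passes through the principal part by block diagonality, and then run a contraction in $X_{\widetilde\delta,\mu,q-1}$ on a bounded ball of $X_{\widetilde\delta,\mu,q}$, recovering top-order regularity of the fixed point by compactness. (The paper builds the Gronwall factor into the energy via the weight $e^{-\kappa t^\gamma}$ rather than applying Gronwall afterwards; since the quasilinear perturbations contribute an integrable coefficient $t^{\zeta-1}$ with $\zeta>0$, the two devices are equivalent. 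The paper is also more careful than you are about the higher-order estimate: the commutator term $\partial_x N\cdot v$ is only $O(t^\mu)$, not $O(t^\nu)$, and differentiating $t^{-\mu(x)}$ for non-constant $\mu$ produces $\log t$ factors, so each derivative costs an arbitrarily small $\epsilon$ in the exponent which must be recovered by rerunning the argument with $\mu+\epsilon$. These are technicalities your sketch could absorb.)

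The genuine gap is in the $C^\infty$ statement. You propose to ``fix $\widetilde\delta$ from the $q=3$ case and run the construction for each $q\ge 3$,'' but the contraction constant --- and hence the admissible $\widetilde\delta$ --- depends on $q$ through the constants in the higher-order energy estimates and in \Conditionref{cond:LipschitzF}. A priori, the fixed-point argument at level $q$ only produces a solution on some interval $(0,\widetilde\delta_q]$, and nothing in your argument prevents $\widetilde\delta_q\to 0$ as $q\to\infty$; uniqueness at the base level then identifies the solutions only on a possibly degenerate common interval. The paper closes this by a continuation argument: having obtained the $H^q$ solution on $(0,\widetilde\delta_q]$, one takes Cauchy data at some $t_*\in(0,\widetilde\delta_q]$ and uses the fact that for $q\ge 2$ the first spatial derivative is bounded by Sobolev embedding, so the standard persistence-of-regularity/continuation criterion for symmetric hyperbolic systems (Proposition~1.5 in Chapter~16 of \cite{Taylor:2011wn}) extends the $H^q$ solution to the fixed interval obtained at the lowest level, uniformly in $q$. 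You need this step, or an equivalent device giving a $q$-independent time of existence, before you can intersect the spaces and conclude $w\in X_{\widetilde\delta,\mu,\infty}$.
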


\Sectionref{section:proofs}, below, is devoted to the proof of this theorem. Observe that, in the hypothesis of this theorem, 
the regularity required for $S_{1,0}$, $S_{2,0}$, and $N_{0}$ (specified by $q_0$) slightly differs from that required for 
$S_{1,1}$, $S_{2,1}$, and $N_{1}$ (specified by $q$). The same observation can be made regarding  the asymptotic data $u_0$ (implicitly specified by \Conditionref{en:cond4N}) and the solution $u$ (specified by $q$). 
These gaps arise in the course of our proof, in particular in obtaining the energy estimates for the Cauchy problem \Lemref{lem:energyestimateshigherorder}. It is not clear whether this discrepancy in regularity could be eliminated by another method of proof, and in any case it disappears in the ``smooth'' case, if  $q$ and $q_0$ are both infinite.

In formulating this theorem, we require that the source term operator $w\mapsto\FLus{w}$, and with it the source term function $f(t,x,u)$ in \Eqref{eq:1stordersystem}, be defined on the domain $(0,\delta]\times T^1\times U$, where $U$ is an open neighborhood of the origin in $\R^d$. In the same way as for the coefficient matrices $S_1$, $S_2$ and $N$, we find that the parameters $\delta$, $\mu$ and $s$ must be compatible with $U$. We note that \Conditionref{en:cond4N} also restricts the regularity of the leading-order term $u_0$.

We also note that the time of existence of the solutions, specified by $\tilde\delta$, could a priori be very small. Indeed, a smaller choice of the parameter $s$ (which may be necessary in order to fit into the domains of the coefficient functions of \Eqref{eq:1stordersystem}) generally leads to a shorter guaranteed time interval of existence.

In its applications, Theorem~\ref{th:Wellposedness1stOrderFiniteDiff}
 often allows one to find an open set of values for the exponent vector $\mu$ for which the singular initial value problem admits unique solutions. A lower bound for this set\footnote{A real $\Lambda$ is defined to be a lower bound for the allowed values of the vector $\mu$ if each component $\nu^a$ of $\nu$ satisfies the condition $\nu^a(t,x) > \Lambda$ for all $x$ in the domain of $\mu$. A similar definition holds for an upper bound for $\mu$.} can originate in
 \Conditionref{en:cond2}, while an upper bound is usually determined by \Conditionref{en:cond4N}. Both bounds on the set of allowed values for $\mu$ provide useful information on the problem. The upper bound for $\mu$ specifies  the smallest regularity space and, hence, the most precise description of the behavior of $w$ (in the limit $t\searrow 0$), while the lower bound for $\mu$ determines the largest space in which the solution $u$ is guaranteed to be unique. Observe  that this \textit{uniqueness} property must be interpreted with care: under the conditions of our theorem, there is exactly one solution $w$ in the space $X_{\tilde\delta,\mu,q}$, although we do not exclude the possibility that another solution may exist in a larger space, for example, in $X_{\tilde\delta,\widetilde\mu,q}$ with $\widetilde\mu<\mu$. Note that if a given system does not satisfy our hypothesis above, there is sometimes a systematic method which allows one to ``improve" a leading--order term $u_0$; cf.~the discussion of (order-n)-leading-order terms in \Sectionref{sec:expansionsHO} and, in particular, \Theoremref{th:Wellposedness1stOrderHigherOrder}.

We also remark that results analogous to those stated in Theorem \ref{th:Wellposedness1stOrderFiniteDiff}
for $D(u-u_0)$ 
 can be derived for
higher--order time derivatives of the solution.

\subsection{Proof of the existence and uniqueness theorem}
\label{section:proofs}

\subsubsection{Outline of the argument}

Before carrying out the details of the proof of the existence and uniqueness Theorem~\ref{th:Wellposedness1stOrderFiniteDiff},
we outline the basic strategy and the basic steps of the proof. We start by working with a linear version of the PDE system. We consider the Cauchy problem for this linear system, verifying that the conditions we have assumed as part of the hypothesis of Theorem~\ref{th:Wellposedness1stOrderFiniteDiff} 
guarantee local existence and uniqueness of solutions for this Cauchy problem, with appropriate levels of regularity. We then use these results pertaining to the Cauchy problem for the linear system and establish that unique solutions to the singular initial value problem for the linear system exist in a neighborhood of the singularity. This is done using the solutions of sequences of Cauchy problems with the initial time for the $j$'th element of this sequence set at $t_j$, and with $t_j$ approaching 
zero, the time of the singularity. To show that the limit of such a sequence of solutions exists, and satisfies the singular initial value problem, we work with the linear PDE system in a weak form, and we also employ a family of energy functionals. To proceed from solutions of the singular initial value problem for the linear system to solutions for the full quasilinear system of Theorem  \ref{th:Wellposedness1stOrderFiniteDiff},
 we use a standard  fixed point iteration argument for a sequence of linearized equations and their singular initial value problems. Observe that arguments similar to those used here have been applied in \cite{Beyer:2010tb} in order to establish existence and uniqueness results for the singular initial value problem for semilinear (second order) Fuchsian PDEs.

\subsubsection{The singular initial value problem for linear PDEs}
\label{sec:lineartheory}

The linear systems we consider here are essentially those of Theorem~\ref{th:Wellposedness1stOrderFiniteDiff} (see \Eqref{eq:1stordersystem})
with $S_1, S_2$, and $N$ set to be independent of $u$, and with $f$ set to be linear in $u$. More specifically, we introduce
 the following definition. 

\begin{definition}
  \label{def:linearity}
  Suppose that $\delta$ and $r$ are positive reals, $q$ and $q_0$ are non-negative integers,  $\mu:T^1\rightarrow\R^d$ is an exponent vector, and $\zeta:T^1\rightarrow\R^{d \times d}$ is an exponent matrix such that $\zeta>0$. The system \Eqref{eq:1stordersystem} is called a \keyword{linear symmetric hyperbolic Fuchsian system} if the following conditions are satisfied:
  \begin{enumerate}[label=\textit{(\roman{*})}, ref=(\roman{*})] 
  \item The operators $S_1, S_2$, and $N$ are independent of $u$, and  they can be decomposed as 
    \begin{align}
      S_1(t,x)&=S_{1,0}(x)+S_{1,1}(t,x),\\
      S_2(t,x)&=\RR{1-\beta(x)}\bigl(S_{2,0}(x)+S_{2,1}(t,x)\bigr),\\
      N(t,x)&=N_{0}(x)+N_{1}(t,x),
    \end{align}
    where $S_{1,0}$ is symmetric and positive definite at every spatial point, where  $S_{1,1}, S_{2,0}$, and $S_{2,1}$ are symmetric, and in addition, the maps $S_{1,0},S_{2,0}$ and $N_{0}$ belong to $H^{q_0}(T^1)$, while $S_{1,1},S_{2,1}$ and $N_1$  are $d\times d$--matrix--valued functions in $B_{\delta,\zeta,q,r}$. Here, $\beta : T^1 \to \mathbb R^d$ is a smooth vector function with strictly positive components.
  \item \label{enum:condition2lindef}
    The constant $\delta$ is sufficiently small so that $S_1$ is uniformly positive definite\footnote{By
      \keyword{uniformly positive definite} we mean that $S_1(t,x)=S_{1,0}(x)+S_{1,1}(t,x)$ is uniformly positive definite (I) \textit{at all} $t\in(0,\delta]$ in the $L^2$-sense (with respect to $x$), and (II) \textit{for all} $S_{1,1}\in B_{\delta,\zeta,q,r}$. Since we assume in addition above that $S_{1,0}$ is positive definite and the perturbation $S_{1,1}$ is bounded, this implies that $S_1$ is positive definite pointwise at all $(t,x)\in (0,\delta]\times T^1$ for all $S_{1,1}\in B_{\delta,\zeta,q,r}$ if $q_0$ and $q$ are sufficiently large (thanks to the Sobolev inequalities).}.
  \item \label{enum:linearsource} The source term is linear in the sense that 
    \begin{equation*}
      \FPDEu{w}=f(t,x,u_0+w) =  f_0(t,x) + F_1(t,x)w,
    \end{equation*}
    with $f_0 \in X_{\delta, \nu, q}$ and the matrix $F_1$ satisfying $\RR{\mu} F_1 \RR{\mu}^{-1} \in B_{\delta, \zeta, q, r}$.
  Here $\nu$ is an exponent vector with $\nu>\mu$.
\end{enumerate}
\end{definition}
In this definition, we note the condition $\nu>\mu$. It is used primarily in the proof of \Propref{prop:linearfirstexistence},  to enforce the needed rapid decay of the source term $f_0(t,x)$ as $t \rightarrow 0$.

It is clear from this definition that a linear symmetric hyperbolic Fuchsian system is a special case of a quasilinear symmetric
hyperbolic system,  with the leading-order term $u_0=0$ (this is no loss of generality for linear systems). Both in the linear and in the quasilinear case, we consider the functions $S_{1,1}$, $S_{2,1}$, $N_1$, and $F_1$ to be perturbations of $S_{1,0}$, $S_{2,0}$, $N_0$, and $f_0$. An important step in our analysis is to seek uniform estimates for these perturbations. It turns out that such estimates can only be obtained if the perturbations are bounded. This is the reason for introducing the balls with radius $r$ above, $B_{\delta,\zeta,q,r}$, which can be considered as those spaces in which we seek the perturbations. 

In carrying out the proof, it is important that we keep careful track of which quantities the constants arising in various estimates are allowed to depend upon. To make this precise, it is useful to have  the following definition.

\begin{definition}
\label{def:uniform}
Suppose that \Eqref{eq:1stordersystem} is a {linear symmetric hyperbolic Fuchsian system} for a chosen set of the parameters 
$\delta, \mu, \zeta, q, q_0$ and $r$. Suppose that a particular estimate (e.g., the energy estimate \Eqref{EnergyEstimate}), involving a collection $\mathcal{C}$ of constants,  holds for solutions of \Eqref{eq:1stordersystem} under a certain collection of hypotheses $\mathcal H$. The constants $\mathcal{C}$ are defined to be \keyword{uniform} with respect to the system and the estimate so long as  the following conditions hold: 
\begin{enumerate}[label=\textit{(\roman{*})}, ref=(\roman{*})] 
\item For any choice of $S_{1,1}, S_{2,1}, N_1$ and $ F_1$ contained in the perturbation space $B_{\delta,\zeta,q,r}$ (see \Defref{def:linearity}) which is compatible with the hypothesis $\mathcal H$, the estimate holds for the \textit{same} set of constants $\mathcal{C}$. %
\item If the estimate holds for a choice of the constants  $\mathcal{C}$ for one particular choice of $\delta$, then for every smaller (positive) choice of $\delta$, the estimate remains true for the same choice of  $\mathcal{C}$.
\end{enumerate}
\end{definition}

Recalling our definition above (see \Eqref{eq:DefLPDE}) of the principal part operator $\widehat L$, we define the linear principal part operator by
\begin{equation}
\label{linprincpartop}
L[w]:=(S_{1,0} + S_{1,1})Dw + \RR{1-\beta}(S_{2,0}+S_{2,1})t \partial_x w +(N_0+N_1)w.
\end{equation}
In terms of this operator, the linearized PDE system \Eqref{eq:1stordersystem} may be written in the form $L[w] =f_0 + F_1 w$.

In summary, the parameters $\delta$, $\mu$ and $q$ determine the space $X_{\delta,\mu,q}$ for the remainder of the solution of the singular initial value problem with leading-order term $u_0$, while $\delta$, $\zeta$, $q$ and $r$ fix the space $B_{\delta,\zeta,q,r}$ of the perturbations of the coefficients. The parameter $q_0$ determines the order of differentiability of the ``leading-order'' coefficient matrices $S_{1,0}$, $S_{2,0}$ and $N_0$.

Suppose that \Eqref{eq:1stordersystem} is a linear symmetric hyperbolic Fuchsian system (for parameters $\delta$, $r$, $q$, $q_0$, $\zeta$,  $\mu$; cf.~\Defref{def:linearity}). We first consider the \textit{Cauchy problem}; that is, we prescribe initial data $v_{[t_0]}$ specified at some $t_0 \in (0,\delta)$ and we seek solutions on $[t_0,\delta]\times T^1$ which agree with $v_{[t_0]}$ at $t=t_0$. It is useful at this stage for us to make the
temporary assumption that $S_{1,1}$, $S_{2,1}$, $N_1$ and $F_1$ are $C^\infty((0,\delta]\times T^1)$ functions contained in their respective function spaces, as discussed in \Defref{def:linearity}. (If $S_{1,1}$, $S_{2,1}$, $N_1$ and $F_1$ satisfy this smoothness assumption, then \Eqref{eq:1stordersystem} is said to have \keyword{smooth coefficients}\footnote{Since the assumption of smooth coefficients is not presumed to hold everywhere below, in all cases that is does hold, we state that explicitly.}.)
 Given such a linear symmetric hyperbolic system \Eqref{eq:1stordersystem} with smooth coefficients and if in addition $q_0\ge 2$ and also $f_0\in X_{\delta,\nu,q}$ is smooth, then it is a standard result (see, e.g., Chapter~16 in \cite{Taylor:2011wn}) that the Cauchy problem is well-posed in the sense that for initial data $v_{[t_0]} \in H^{q_0}(T^1)$, there is a unique solution $v: [t_0, \delta]\times T^1\rightarrow \R^d$ to this Cauchy problem with $v(t_0)=v_{[t_0]}$ and with $v(t,\cdot) \in H^{q_0}(T^1)$ for all $t \in [t_0,\delta]$. It is crucial for the following discussion that indeed this solution exists for the \textit{full} interval $[t_0,\delta]$, regardless of the choice of $t_0\in (0,\delta)$. This is true as a consequence of the positivity of $S_1$ on $(0,\delta]$;  cf., \Conditionref{enum:condition2lindef} in \Defref{def:linearity}.

In fact, this statement about the Cauchy problem remains true if the matrices $S_{1,1}$, $S_{2,1}$, $N_1$, $f_0$ and $F_1$ are not required to be smooth, but are only required to have $q_0$ spatial derivatives. Such a relaxation is, however, not useful for our arguments;  we use a more general continuation argument below by which we recover the non-smooth case.  We also note that although this assumption that \Eqref{eq:1stordersystem} has smooth coefficients implies that $S_{1,1}$, $S_{2,1}$, $N_1$, $f_0$, $F_1$ are differentiable to all orders, it does not \textit{not} guarantee that all derivatives have controlled asymptotic behavior for $t\searrow 0$. This control holds only for a set of derivatives given by $q$, as labeled by the relevant function space.

To establish control over the solutions to the Cauchy problem for the linear version of the PDE (\ref{eq:1stordersystem}) and the regularity of these solutions, we now introduce a two-parameter family of explicitly time-dependent energies: Presuming that the remainder exponent  vector $\mu$ is fixed, for any pair of positive constants $\kappa$ and $\gamma$ we define the energy $E_{\mu, \kappa, \gamma}$ for a function $w:[t_0,\delta] \times T^1 \rightarrow \R^d$ (with $w(t,\cdot) \in L^2(T^1)$ for each $t \in[t_0, \delta]$) as follows: 
\begin{equation}
  \label{energies}
  E_{\mu,\kappa,\gamma}[w](t):=\frac 12 e^{-\kappa t^\gamma}
  \scalarpr{S_1(t,\cdot) \RR{\mu}(t,\cdot) w(t,\cdot)}{\RR{\mu}(t,\cdot)
    w(t,\cdot)}_{L^2(T^1)},
\end{equation}
where $S_1$ is the matrix appearing in \Eqref{eq:1stordersystem}. We emphasize again that, unlike standard definitions of energy, the energy functionals $E_{\mu,\kappa,\gamma}[w](t)$ defined here depend on time explicitly, and not just through the time dependence of $w(t,x)$. Note that it readily follows from this definition, and from the conditions assumed to hold for $S_1$ in \Defref{def:linearity}, that there exist uniform (in the sense of \Defref{def:uniform}) constants $C_1$ and $C_2$ such that for any $L^2(T^1)$ function $w(t,x)$, one has (for all $t$) 
\begin{equation}
  \label{normequiv}
  C_1 \|\RR{\mu}(t,\cdot) w(t,\cdot)\|_{L^2(T^1)}
  \le\sqrt{E_{\mu,\kappa,\gamma}[w](t)} \le C_2
  \|\RR{\mu}(t,\cdot)w(t,\cdot)\|_{L^2(T^1)}. 
\end{equation} 

These energies, as is usually the case, have been defined in such a way (including the presence of the factor $e^{-\kappa t^\gamma}$) that for solutions of the Cauchy problem for \Eqref{eq:1stordersystem}, which we label $v(t,x)$, the growth of the energies is controlled.
Explicitly, we obtain the following estimate.

\begin{lemma}[Basic energy estimates for the Cauchy initial value problem]
  \label{lem:energyestimates}
Suppose that for some choice of the parameters $\delta, \mu, \zeta, q, q_0$, and $r$, with $q=0$ and $q_0=2$, and for $u_0=0$, \Eqref{eq:1stordersystem}  is a linear symmetric hyperbolic  Fuchsian system with smooth coefficients and with $f_0$ both smooth and contained in $X_{\delta,\nu,q}$ for some $\nu>\mu$. Suppose also that the system is block diagonal with respect to $\mu$, 
that the energy dissipation matrix \Eqref{eq:energydissipationmatrix} is positive definite for all $x\in T^1$ and, in addition, that $DS_{1,1}$ and $\partial_x S_{2,1}$ are contained in $B_{\delta,\xi,0,s}$ for some constant $s>0$ and some exponent matrix $\xi$ with strictly positive entries. Then there exist positive constants $\kappa$, $\gamma$, and $C$ such that for any initial data $v_{[t_0]} \in H^2(T^1)$ specified at some $t_0 \in
  (0,\delta]$, the solution of the Cauchy problem $v$ for this system and this initial data satisfies the energy estimate
  \begin{equation}
    \label{EnergyEstimate}
    \sqrt{E_{\mu,\kappa,\gamma}[v](t)}\le \sqrt{E_{\mu,\kappa,\gamma}[v](t)}|_{t=t_0}
    +C\int_{t_0}^t s^{-1}\|\RR{\mu}(s,\cdot) f_0(s,\cdot)\|_{L^2(T^1)} \, ds 
 \end{equation}
 for all $t\in [t_0,\delta]$. The constants $C$, $\kappa$, and $\gamma$ may be chosen to be uniform\footnote{While the constants $C$, $\kappa$ and $\gamma$ here can be chosen to be uniform in the sense of \Defref{def:uniform}, there generally does {not} exist a choice which holds for all $\delta, S_{1,0}, S_{2,0}, N_0, \beta, r, s, \zeta, \xi, \mu$ and $\nu$.} and do not depend on $f_0$. In particular, if one replaces $v_{[t_0]}$ specified at $t_0$ by any  $v_{[t_1]}$ specified at any time $t_1 \in (0, t_0]$, then the energy estimate holds for the \emph{same} constants $C$, $\kappa$, $\gamma$.
\end{lemma}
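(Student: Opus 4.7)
The plan is to differentiate the weighted energy $E_{\mu,\kappa,\gamma}[v](t)$ along a smooth solution $v$ of $L[v]=f_0+F_1 v$, substitute the PDE to remove $S_1 Dv$, use the symmetry of the leading matrices together with the block structure to integrate by parts in $x$, and extract the positive-definite dissipation matrix $M_0$ as the coercive term that dominates the evolution. The two free parameters $\kappa,\gamma$ in the weight $e^{-\kappa t^\gamma}$ are then chosen in order to absorb all the remaining ``small'' contributions coming from the $t$-decaying perturbations $S_{1,1},S_{2,1},N_1,F_1$ and from the $x$-dependence of the exponent vector $\mu$.

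First I conjugate the system by $\RR{\mu}$. Setting $\tilde v:=\RR{\mu}v$, the block-diagonal hypothesis makes $\RR{\mu}$ commute with $S_1,S_2,N$, and a direct computation gives $\RR{\mu}\,Dv = D\tilde v + \mathrm{diag}(\mu)\tilde v$ and $\RR{\mu}\,t\partial_x v = t\partial_x\tilde v + \mathrm{diag}(\partial_x\mu)\,t\ln t\,\tilde v$. Substituting these identities into the PDE produces
\begin{equation*}
S_1 D\tilde v + \RR{1-\beta}(S_{2,0}+S_{2,1})\,t\partial_x\tilde v + M_0\tilde v = \RR{\mu}f_0 + \tilde F_1\tilde v - \mathcal P(t,x)\tilde v,
\end{equation*}
where $\tilde F_1:=\RR{\mu}F_1\RR{\mu}^{-1}\in B_{\delta,\zeta,q,r}$ and where $\mathcal P(t,x)$ gathers $S_{1,1}\,\mathrm{diag}(\mu)+N_1$ together with the logarithmic commutator term $\RR{1-\beta}(S_{2,0}+S_{2,1})\,\mathrm{diag}(\partial_x\mu)\,t\ln t$. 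Under the hypotheses $\zeta>0$, $\xi>0$ and the positivity of each component of $\beta$, every entry of $\mathcal P$ and $\tilde F_1$ is pointwise bounded by $Ct^\alpha$ for some $\alpha>0$.

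Next I pair the transformed equation with $\tilde v$ in $L^2(T^1)$. The symmetry of $S_1$ converts $\int\langle S_1 D\tilde v,\tilde v\rangle\,dx$ into $\tfrac12 D\!\int\langle S_1\tilde v,\tilde v\rangle\,dx-\tfrac12\int\langle DS_1\tilde v,\tilde v\rangle\,dx$, and the block-wise symmetry of $\RR{1-\beta}(S_{2,0}+S_{2,1})$ allows integration by parts in $x$ to produce $-\tfrac{t}{2}\int\langle\partial_x[\RR{1-\beta}(S_{2,0}+S_{2,1})]\tilde v,\tilde v\rangle\,dx$. Combined with the identity $DE = -\kappa\gamma t^\gamma E + \tfrac12 e^{-\kappa t^\gamma} D\!\int\langle S_1\tilde v,\tilde v\rangle\,dx$, I arrive at
\begin{equation*}
DE = -\kappa\gamma t^\gamma E - e^{-\kappa t^\gamma}\int\langle M_0\tilde v,\tilde v\rangle\,dx + \mathcal R(t) + e^{-\kappa t^\gamma}\int\langle\RR{\mu}f_0,\tilde v\rangle\,dx,
\end{equation*}
in which $\mathcal R(t)$ collects the contributions of $DS_{1,1}$, $\partial_x S_{2,1}$, $\partial_x\RR{1-\beta}$ (whose $\ln t$ factor is tamed by $\beta>0$), $\mathcal P$, and $\tilde F_1$. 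Pointwise positive-definiteness of $M_0$, uniform in $x\in T^1$ by compactness, combined with the norm equivalence \eqref{normequiv}, yields $\int\langle M_0\tilde v,\tilde v\rangle\,dx\ge 2c_1 e^{\kappa t^\gamma}E$ for some uniform $c_1>0$. The Sobolev inequality (with $q_0=2$) delivers $L^\infty$-control of all remainder matrices uniformly over $B_{\delta,\zeta,q,r}$, so $|\mathcal R(t)|\le C_2 t^\alpha E$, while Cauchy--Schwarz bounds the source contribution by $C_3\|\RR{\mu}f_0\|_{L^2}\sqrt E$. Choosing $\gamma$ strictly smaller than the smallest positive exponent appearing in $\zeta,\xi$ and in the components of $\beta$, and then shrinking $\delta$ so that $C_2 t^\alpha\le c_1$ on $(0,\delta]$, I obtain $DE\le C_3\|\RR{\mu}f_0\|_{L^2}\sqrt E$; dividing by $2\sqrt E$ and using $D\sqrt E = t\,\tfrac{d}{dt}\sqrt E$ gives $\tfrac{d}{dt}\sqrt E \le \tfrac{C_3}{2t}\|\RR{\mu}f_0\|_{L^2}$, and integrating from $t_0$ to $t$ produces \eqref{EnergyEstimate} with $C:=C_3/2$. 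Since no step depends on the particular perturbations $S_{1,1},S_{2,1},N_1,F_1$ (only on $\delta,r,\mu,\zeta,\xi,\beta$ and the $H^{q_0}$-norms of $S_{1,0},S_{2,0},N_0$), and since every bound survives a decrease of $\delta$, the constants $\kappa,\gamma,C$ are uniform in the sense of \Defref{def:uniform}.

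The main obstacle I anticipate is that the logarithmic commutator factor $t\ln t$ produced by $\partial_x\RR{\mu}$ (and by $\partial_x\RR{1-\beta}$) must be made compatible with the pure power decay rates dictated by $\zeta$ and $\xi$, while simultaneously preserving uniformity across the entire perturbation ball. This is precisely the role of the weight $e^{-\kappa t^\gamma}$: the exponent $\gamma$ must be strictly smaller than every positive decay rate present in the problem, and the prefactor $\kappa$ must be chosen large enough to dominate all explicit constants at one stroke. This headroom, which the positivity of $M_0$ alone does not supply, will also be indispensable when the present basic estimate is upgraded to the higher-order Sobolev version needed later in the proof of \Theoremref{th:Wellposedness1stOrderFiniteDiff}.
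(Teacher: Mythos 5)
Your overall strategy is the paper's: differentiate the weighted energy, substitute the equation, use symmetry and block diagonality to integrate by parts, isolate the energy dissipation matrix $M_0$ as the coercive $O(1)$ term, and absorb the $t$-decaying perturbations. The preliminary conjugation by $\RR{\mu}$ is only a repackaging of the paper's direct computation of $DE[v]$, and your commutator identities (producing $\mathrm{diag}(\mu)$ from $D\RR{\mu}$ and the $t\log t$ terms from $\partial_x\RR{\mu}$, $\partial_x\RR{1-\beta}$) are correct.

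The one genuine defect is the final absorption step. You control the remainder $|\mathcal R(t)|\le C_2 t^\alpha E$ by \emph{shrinking $\delta$} until $C_2 t^\alpha\le c_1$, so that the $M_0$-term swallows $\mathcal R$. This proves the estimate only on a smaller interval $(0,\delta']$, whereas the lemma asserts it on the originally given $[t_0,\delta]$ --- and that full interval is what is used downstream (the approximate solutions in \Propref{prop:linearfirstexistence} are compared on all of $[t_n,\delta]$). Worse, with this mechanism the weight $e^{-\kappa t^\gamma}$ does no work at all (one could take $\kappa=0$), which contradicts your own closing paragraph. The intended mechanism, and the paper's, is that differentiating the weight produces the manifestly negative term $-\tfrac12\kappa\gamma t^\gamma\scalarpr{S_1\tilde v}{\tilde v}$, whose size $\kappa\gamma t^\gamma\lambda_{\min}(S_1)$ dominates $C_2t^\alpha$ on \emph{all} of $(0,\delta]$ once $\gamma\le\alpha$ and $\kappa\gamma\lambda_{\min}(S_1)\ge 2C_2\delta^{\alpha-\gamma}$; the positivity of $M_0$ is reserved for the $O(1)$ block $N_0+S_{1,0}\,\mathrm{diag}(\mu)$, which has no decay to exploit. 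Replacing your shrinkage of $\delta$ by this choice of $\kappa$ closes the gap and also delivers the uniformity claim, since $\kappa$ depends only on $r,s,\zeta,\xi,\beta,\delta$ and not on the particular perturbations. Two lesser points: dividing by $\sqrt E$ requires the regularization $E\mapsto E+\epsilon$ since $E$ may vanish; and deducing $L^\infty_x$ control of $S_{1,1},N_1,\tilde F_1$ from membership in $B_{\delta,\zeta,0,r}$ (only $L^2$ in $x$ at $q=0$) is not literally justified by Sobolev embedding --- the paper is equally terse here, so this is an inherited imprecision rather than a new one.
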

Before proving this lemma, we make a few remarks: 
I) Lemma~\ref{lem:energyestimates} does \textit{not} imply that the energy estimate
\Eqref{EnergyEstimate} holds for\ $t<t_0$; in particular, it need not hold for $t\searrow 0$. II) The well-posedness of the Cauchy problem which is used implicitly in the proof of  \Lemref{lem:energyestimates} requires sufficiently high regularity on the coefficients (see for example \cite{Taylor:2011wn}); this gives rise to the condition $q_0=2$ stated in the hypotheses. III) We remind the reader that the condition that the  coefficients be smooth does \textit{not} imply either the $q=0$ condition or the conditions that $DS_{1,1},\partial_x S_{2,1} \in B_{\delta,\xi,0,s}$. While the smoothness condition implies the \textit{existence} of all derivatives, the latter are statements about the \textit{behavior} of the lowest derivatives in the limit $t\searrow 0$. 
It may appear that since Lemma~\ref{lem:energyestimates} focuses on the Cauchy problem at times $t_0>0$ only, control of behavior near $t=0$ is not necessary. However, such control is in fact needed to obtain an energy estimate with constants which are uniform and independent of $t_0$. IV) In  view of the norm equivalence (\ref{normequiv}) stated above, the estimate
\Eqref{EnergyEstimate} can be rewritten as
\begin{equation}
  \label{eq:EnergyEstimate2}
  \begin{split}
    \|\RR{\mu}(t,\cdot) v(t,\cdot)\|_{L^2(T^1)}  \le
    \widetilde{C}\Bigl(&\|\RR{\mu}(t_0,\cdot) v_{t_0}\|_{L^2(T^1)} \\
&+ \int_{t_0}^t s^{-1}\|\RR{\mu}(s,\cdot)
      f_0(s,\cdot)\|_{L^2(T^1)}ds\Bigr).
  \end{split}
\end{equation}
We observe that for this version of the energy estimates, all of the constants
$C$, $\kappa$ and $\gamma$ are absorbed into the constant $\widetilde C$;
every change of the former constants is therefore reflected in a
corresponding change of the latter one. V) For some of the following discussion it is important to note that the particular values of the parameters of the perturbations space $\zeta$ and $r$ (and also $\xi$ and $s$) do not play an essential role in this lemma: if we change from one perturbation space $B_{\delta,\zeta,q,r}$ to another one $B_{\delta,\widetilde\zeta,q,\widetilde r}$ (and $\widetilde\xi$ and $\widetilde s$), the same result is obtained with possibly different, but still uniform, constants $\widetilde C$, $\widetilde\gamma$, $\widetilde\kappa$. This is true for all of the following results.

\begin{proof}
 The basic idea of the proof  is to compute $DE[v](t)$,  then bound the terms on the right hand side  and finally integrate the equation in time. For simplicity we write $E[v]$ in place of $E_{\mu, \kappa, \gamma}[v]$. 
 Computing\footnote{In calculating this time derivative, we use the fact that the solution $v$ is $C^1$ in both time and space.} $DE[v]$, and using the symmetry of the matrix  $S_1$, we obtain 
\begin{align*}
\label{DECalc}
DE[v](t) =& 
	-\kappa \gamma t^\gamma E[v](t) 
 	+\frac{1}{2} e^{-\kappa t^\gamma} \int_{T^1}  \left< (DS_1) \RR{\mu} v,\RR{\mu}v \right> dx\\
& 	+ e^{-\kappa t^\gamma} \int_{T^1}  \left< S_1 (D\RR{\mu}) v, \RR{\mu}v \right>dx
    	+ e^{-\kappa t^\gamma} \int_{T^1}  \left< S_1 \RR{\mu} \,Dv, \RR{\mu}v \right> dx.
\end{align*}
We first analyze the fourth term on the right hand side of this
expression, which we label $I$. Using the fact that $v$ is a solution
of equation \Eqref{eq:1stordersystem}, using the block diagonal
condition (\Defref{def:nonessentiallycouple}), and integrating by
parts, we calculate
 \begin{align*}
I = e^{-\kappa t^\gamma} \int_{T^1} \Big(& 
   \left< \RR{\mu}f_0, \RR{\mu}v \right> 
+  \left< \RR{\mu}F_1v, \RR{\mu}v \right> 
+ 1/2 t \left< \left(\partial_x S_2 \right) \RR{\mu}v, \RR{\mu}v \right>  \\
&+ t \left< S_2 (\partial_x   \RR{\mu}) v, \RR{\mu}v \right> 
-  \left< N \RR{\mu}v , \RR{\mu}v \right> \Big) dx.
\end{align*}
Using the H\"{o}lder inequality, we may then estimate the first term in this expression as follows: 
\begin{align*}
 e^{-\kappa t^\gamma} \int_{T^1}  \left< \RR{\mu} f_0, \RR{\mu}v \right>dx \le    e^{-\kappa t^\gamma} || \RR{\mu}f_0 ||_{L^2} || \RR{\mu}v ||_{L^2}.
\end{align*}

We now argue that for appropriate choices of $\kappa$
and  $\gamma$, all the other terms besides this one can be
neglected in a certain sense. First, we use the properties of the linear symmetric
hyperbolic Fuchsian system to expand the coefficient matrices $S_1$, $S_2$, $N$ into terms which are $O(1)$ at $t\rightarrow 0$,  and terms which decay as a power of $t$.  We
thereby obtain
\begin{align*}
DE[v] \le &\,
-  e^{-\kappa t^\gamma} \int_{T^1}
\left<\left(N_0 - S_{1,0} (D \RR{\mu}) \RR{\mu}^{-1}\right)\RR{\mu} v, \RR{\mu} v\right> dx \\
&+ e^{-\kappa t^\gamma} \int_{T^1}  
\Bigl<\Bigl(
- \frac 12\kappa \gamma t^\gamma S_1
+ S_{1,1} D \RR{\mu} \RR{\mu}^{-1}
+\frac 12 DS_{1,1}
- N_1 
+ \RR{\mu} F_1 \RR{\mu}^{-1}\\
&\qquad 	\qquad\quad				
+ \frac{1}{2} t\partial_x S_{2}				
+ t S_2 (\partial_x \RR{\mu}) \RR{\mu}^{-1}
\Bigr)\RR{\mu}v,\RR{\mu}v\Bigr> dx\\
&+ e^{-\kappa t^\gamma} || \RR{\mu}f_0 ||_{L^2} || \RR{\mu}v ||_{L^2},
 \end{align*}
where we use the expansion for $S_2$ to write
\begin{equation*}
t \partial_x S_2 =  \partial_x \RR{-\beta(x)}  \left( S_{2,0} + S_{2,1} \right) 
 + \RR{-\beta(x)}  \left( \partial_x S_{2,0} + \partial_x S_{2,1} \right).
\end{equation*}

The first integral on the right hand side of this inequality  for $DE[v]$ is negative definite if the energy dissipation matrix $M_0=N_0-S_{1,0} (D \RR{\mu}) \RR{\mu}^{-1}$ (see \Eqref{eq:energydissipationmatrix})
is positive definite, and so can be neglected. All of the 
terms in the second integral  on the right hand side of this inequality decay as some
positive power of $t$. We also know that as a consequence of \Defref{def:linearity}, the matrix
$S_1$ is positive definite uniformly. 
It is at this point that we use the factor of $e^{-\kappa t^\gamma}$
which appears in the definition of the energy functionals. The scheme
is to choose $\kappa$ and $\gamma$
in such a way that the second integral in the estimate above is
negative definite. This can be achieved if we choose $\gamma$ small enough and $\kappa$ large enough so that the negative definite $S_1$-term dominates all of the other terms in the second integral on $(0,\delta]$. 
To see that the constants $\kappa$ and $\gamma$ may be chosen so that
they are independent of the functions $S_{1,1}, S_{2,1}, N_1$ and
$F_1$ and are therefore uniform in the sense of \Defref{def:uniform}, we recall that by assumption (see
Definition \ref{def:linearity}), $S_{1,1}, S_{2,1}$ and $N_1$, and $\RR{\mu}F_1\RR{\mu}^{-1}$ are
contained in the ball $B_{\delta, \zeta,q,r}$. Hence these functions all must have
finite norms bounded by $r$. Since the role played by $S_{1,1},
S_{2,1}, N_1$ and $F_1$ in determining the constants $\kappa$ and $\gamma$
depends strictly on the norms of these functions, we may choose a fixed
set of the constants such that the inequality holds for any $S_{1,1},
S_{2,1}, N_1$ and $F_1$ contained in these balls.
In total, we obtain
 \begin{align*}
DE[v](t) 
 \le &  e^{-\kappa t^\gamma} || \RR{\mu}f_0 ||_{L^2} || \RR{\mu}v ||_{L^2},
 \end{align*}
  which implies that 
  \begin{align*}
\partial_tE[v](t) 
 \le &  t^{-1} e^{-\kappa t^\gamma} || \RR{\mu}f_0 ||_{L^2} || \RR{\mu}v ||_{L^2}.
 \end{align*}
 Then using the norm equivalence \Eqref{normequiv}, we may rewrite this as 
  \begin{align}
  \label{E[v]ineq}
\partial_tE[v](t) 
 \le &  C t^{-1} e^{-\kappa t^\gamma} || \RR{\mu}f_0 ||_{L^2} \sqrt{E[v](t)}.
 \end{align}

To integrate this inequality, it would be useful to divide both sides by $\sqrt{E[v](t)}$. However, since the $L^2$ norm of $v$ may vanish in special cases, we  use the following strategy. We set $E_\epsilon := E + \epsilon$ for some constant $\epsilon > 0$ (see, for instance, \cite[Page 59]{Ringstrom:2009cj}), and we check that (\ref{E[v]ineq}) holds if we replace $E$ by $E_{\epsilon}$. Then dividing, and using  $\frac{1}{\sqrt{E_\epsilon}} \partial_t E_\epsilon = 2 \partial_t \sqrt{E_\epsilon}$, we obtain 
\begin{align*}
 \partial_t \sqrt{E_\epsilon[v](t) }
 \le &  C t^{-1} e^{-\kappa t^\gamma} || \RR{\mu}f_0 ||_{L^2},
 \end{align*}
 after rescaling the constant $C$. We now integrate both sides
over $\int_{t_0}^t ds$, thereby obtaining
 \begin{align*}
\sqrt{E_\epsilon[v](t) }
 & \le \sqrt{E_\epsilon[v](t_0) } + {C} \int_{t_0}^t s^{-1} e^{-\kappa s^\gamma} || \RR{\mu}f_0 ||_{L^2}(s) ds
 \\
 & \le \sqrt{E_\epsilon[v](t_0) } + {C} \Big( \sup_{s \in (t_0,t)} e^{-\kappa s^\gamma} \Big)  \int_{t_0}^t s^{-1}  || \RR{\mu}f_0 ||_{L^2}(s) ds \\
  & \le \sqrt{E_\epsilon[v](t_0) } + C  \int_{t_0}^t s^{-1}  || \RR{\mu}f_0 ||_{L^2}(s) ds,
 \end{align*}
where we note that the constant $C$ changes from the second to the third line of this calculation. Taking the limit $\epsilon \to 0$ finishes the proof that the inequality (\ref{EnergyEstimate}) holds. 
It also  follows directly that  the constant $C$ is uniform.
\end{proof}

In order to derive the solution of the singular initial value problem
from a sequence of solutions of the Cauchy problem, we  need
estimates involving higher order spatial derivatives. We obtain these as
follows.

\begin{lemma}[Higher-order energy estimates for the Cauchy initial value problem]
  \label{lem:energyestimateshigherorder}
Suppose that a linear symmetric hyperbolic Fuchsian system has been
chosen which satisfies all of the conditions of the energy estimate
Lemma \ref{lem:energyestimates}, except that\footnote{We need the conditions $\partial_x S_{2,1} \in B_{\delta,\xi,0,s}$ and $DS_{1,1}\in B_{\delta,\xi, 0,s}$ to both be implicitly included in the hypothesis for this lemma. The first of these follows automatically from the following assumptions on $q$ and $q_0$. The second does not, but it is included in the hypothesis for Lemma \ref{lem:energyestimates}, and therefore is included in the hypothesis for this lemma. } (rather than $q=0$ and $q_0=2$) $q$ is an arbitrary integer greater than one, and $q_0=q+2$.
 Then there exists a pair of positive constants $C$ and $\rho$ such that for every sufficiently small $\epsilon>0$, the solution $v$ of the Cauchy initial value problem with initial data $v_{[t_0]} \in H^{q_0}(T^1)$ specified at $t_0$ satisfies (for all $t\in [t_0, {\delta}]$) 
 \begin{equation}
    \label{eq:EnergyEstimateHigherOrder}
    \begin{split}
      \|&\RR{\mu-\epsilon}(t,\cdot) v (t,\cdot)\|_{H^q (T^1)}
     \le
      C\,  \|\RR{\mu}(t_0,\cdot) v_{t_0}\|_{H^q (T^1)}\\
      &+C \, \int_{t_0}^t
      s^{-1}\big(\|\RR{\mu}(s,\cdot) f_0(s,\cdot)\|_{H^q (T^1)}
      +s^\rho\|\RR{\mu}(s,\cdot) v(s,\cdot)\|_{H^{q-1}(T^1)}    \big) \, ds.
  \end{split}
\end{equation}
The constants $C$ (which in general differs from $\widetilde C$ in \Eqref{eq:EnergyEstimate2}) and $\rho$ are uniform\footnote{The constant $C$ generally only depends on $\delta$, $S_{1,0}$, $S_{2,0}$, $N_0$, $\beta$, $r$, $\zeta$, $\mu$, $\epsilon$ and $q$, and the constant $\rho$ depends only on $\epsilon$ and $q$.} in the sense of \Defref{def:uniform} and do not depend on $f_0$. If we replace $v_{[t_0]}$ specified at $t_0$ by any $v_{[t_1]}$ specified at any $t_1 \in (0,t_0]$, then the same estimate holds, for the same constants $C$ and $\rho$.
\end{lemma}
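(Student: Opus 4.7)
The plan is to induct on $q$ by commuting spatial derivatives into the equation and then applying the basic estimate of \Lemref{lem:energyestimates} to the resulting system. For each multi-index $|\alpha|\le q$, I would set $v_\alpha:=\partial_x^\alpha v$; since $D=t\partial_t$ commutes with $\partial_x$, applying $\partial_x^\alpha$ to $L[v]=f_0+F_1 v$ yields
\begin{equation*}
L[v_\alpha]=\partial_x^\alpha f_0+F_1 v_\alpha+\mathcal{C}_\alpha[v],
\end{equation*}
where $\mathcal{C}_\alpha[v]$ collects the four commutators $-[\partial_x^\alpha,S_1]Dv$, $-[\partial_x^\alpha,S_2]\,t\partial_x v$, $-[\partial_x^\alpha,N]v$ and $[\partial_x^\alpha,F_1]v$. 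Applying \Lemref{lem:energyestimates} to $v_\alpha$ with effective source $\partial_x^\alpha f_0+\mathcal{C}_\alpha[v]$, summing over $|\alpha|\le q$, and invoking the norm equivalence \eqref{normequiv} yields an $H^q$-analogue of \eqref{eq:EnergyEstimate2} whose right-hand side carries the extra term $\int_{t_0}^t s^{-1}\|\RR{\mu}\mathcal{C}[v](s,\cdot)\|_{L^2}\,ds$. The constants produced are uniform in $\alpha$ because the leading coefficients $S_{1,0},S_{2,0},N_0$ do not depend on $\alpha$ and the perturbations stay in the same ball $B_{\delta,\zeta,q,r}$.

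The $\epsilon$-loss on the left of \eqref{eq:EnergyEstimateHigherOrder} is forced by commuting $\partial_x^\alpha$ through $\RR{\mu}$ itself: the chain rule produces polynomials in $\log t$ of degree up to $|\alpha|$, which one dominates by $C_{q,\epsilon}\, t^{-\epsilon}$ for any $\epsilon>0$. To estimate the commutator source, I would decompose each coefficient as in \Defref{def:linearity} and split $\mathcal{C}_\alpha$ into a ``small'' part built from $S_{1,1},S_{2,1},N_1,F_1$ and a ``leading'' part built from $S_{1,0},S_{2,0},N_0$. The small part inherits an intrinsic $t^\rho$ decay from $B_{\delta,\zeta,q,r}$ (with $\rho$ a positive entry of $\zeta$) and is controlled by the tame Moser-type Sobolev product inequalities of the appendix, primarily \Sectionref{sec:productsfunctions}. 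The factor $Dv$ appearing inside $[\partial_x^\alpha,S_1]Dv$ is eliminated by solving the PDE for $Dv=S_1^{-1}(f_0+F_1 v-S_2 t\partial_x v-Nv)$, which trades it for spatial derivatives and lowers the effective derivative count on $v$ by one, producing precisely the $s^\rho\|\RR{\mu}v\|_{H^{q-1}}$ term of \eqref{eq:EnergyEstimateHigherOrder}.

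The main obstacle is the ``leading'' part of $\mathcal{C}_\alpha$, which carries no intrinsic time decay. Two ingredients rescue us: the hypothesis $q_0=q+2$ gives two extra derivatives on $S_{1,0},S_{2,0},N_0$, so every $x$-derivative of these coefficients sits in $L^\infty$ by Sobolev embedding; and the symmetry of $S_{1,0}$ and $S_{2,0}$ allows the would-be top-order borderline term to be integrated back onto $v_\alpha$ (exactly as in the principal step of the proof of \Lemref{lem:energyestimates}), so that it too contributes only at order $\|\RR{\mu}v\|_{H^{q-1}}$, after the residual $\log t$ factors are absorbed in the same $t^{-\epsilon}$ slack. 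A final absorption of any leftover $s^\rho\|\RR{\mu}v\|_{H^q}$ into the left-hand side (possible on a sufficiently small time interval since $s\le\delta$) delivers \eqref{eq:EnergyEstimateHigherOrder}. Uniformity of $C$ and $\rho$ and independence of the starting time, so that the estimate holds verbatim when $t_0$ is replaced by any $t_1\in(0,t_0]$, follow exactly as in \Lemref{lem:energyestimates}: every constant depends only on the norms of $S_{1,0},S_{2,0},N_0$ and on $r,\zeta,\mu,\beta,\epsilon,q$, and the Cauchy problem is well-posed on the full interval $[t_1,\delta]$ by \Conditionref{enum:condition2lindef} of \Defref{def:linearity}.
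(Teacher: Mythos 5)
Your overall strategy --- differentiate the equation in $x$, regard the commutators as an effective source, and feed the result back into \Lemref{lem:energyestimates} --- is exactly the paper's. The paper writes the $q=1$ case explicitly, with $\widehat f_0=\partial_x f_0+(\partial_x F_1-\partial_x N)v+\partial_x S_1\,S_1^{-1}(Nv-f_0-F_1v)$ and $\widehat F_1=F_1-t\partial_x S_2+t\partial_x S_1\,S_1^{-1}S_2$; your elimination of $Dv$ via the equation and your splitting of the commutator into a decaying part and a leading part correspond to this. One simplification you miss: because $S_2=\RR{1-\beta}(S_{2,0}+S_{2,1})$ with $\beta>0$, the coefficients $t\partial_x S_2$ and $t\partial_x S_1\,S_1^{-1}S_2$ multiplying the top-order derivative of $v$ already decay like a positive power of $t$ and are simply folded into the $F_1$-slot of the base lemma. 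Hence no borderline top-order term survives, no symmetrization beyond what \Lemref{lem:energyestimates} already performs is required, and your final ``absorption of a leftover $s^{\rho}\|\RR{\mu}v\|_{H^q}$ term into the left-hand side'' is neither needed nor safe: since $s^{\rho}\|\RR{\mu}v\|_{H^q}=s^{\rho-\epsilon}\|\RR{\mu-\epsilon}v\|_{H^q}$ and one ends up with $\rho\le\epsilon$, the resulting Gronwall kernel $s^{-1+\rho-\epsilon}$ is not integrable uniformly in $t_0$, which would destroy the independence of the constants from the starting time.

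The genuine gap is in the $\epsilon$/$\rho$ bookkeeping for the \emph{leading} part of the commutator. A term such as $(\partial_x N_0)v$, or $\partial_x S_{1,0}\,S_{1,0}^{-1}N_0\,v$ after you eliminate $Dv$, has no intrinsic time decay: it is exactly $O(t^{\mu})$. If you apply \Lemref{lem:energyestimates} to $v_\alpha$ at the \emph{same} exponent $\mu$, as you propose, its contribution to the right-hand side is $\int_{t_0}^{t}s^{-1}\|\RR{\mu}v\|_{H^{q-1}}\,ds$ with no factor $s^{\rho}$ --- strictly weaker than \Eqref{eq:EnergyEstimateHigherOrder} and useless downstream, where the convergence arguments of \Propref{prop:linearexistenceregularitypre} require $\int_0^{\delta}s^{-1+\rho}\,ds<\infty$. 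The factor $s^{\rho}$ does not come from eliminating $Dv$, nor from the intrinsic decay of $S_{1,1},S_{2,1},N_1$, nor from the $\log t$ terms generated by $\partial_x\RR{\mu}$; it comes from running the energy estimate for $\partial_x v$ at the \emph{shifted} exponent $\widehat\mu=\mu-\epsilon/2$, so that $\|\RR{\widehat\mu}(\partial_x N_0)v\|_{L^2}\le C\,s^{\epsilon/2}\|\RR{\mu}v\|_{L^2}$ and one may take $\rho=\epsilon/2$. This shift is also what makes the differentiated system a bona fide linear symmetric hyperbolic Fuchsian system (its source must lie in some $X_{\delta,\nu,q-1}$ with $\nu$ strictly larger than the remainder exponent, which fails at $\nu=\mu$), and it obliges you to re-verify block diagonality and the positive definiteness of the energy dissipation matrix for $\mu-\epsilon/2$ --- true for $\epsilon$ sufficiently small. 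With that correction, and with the remaining $\epsilon/2$ reserved for the $\log t$ factors from differentiating $\RR{\mu}$ as you describe, your argument coincides with the paper's.
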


Observe that it is necessary (as stated in the hypothesis of this lemma) that 
 the solution (and hence the data and coefficients) be contained in
$H^{q+2}$ if we wish to obtain
 an energy estimate for $q$ spatial
derivatives. The reason for this requirement is made clear in the course of  the proof.
The main difference between the hypotheses of
\Lemref{lem:energyestimates} and \ref{lem:energyestimateshigherorder}
is that we require stronger control of the
behavior of spatial derivatives of $S_1$, $S_2$, $N$,
$f_0$, and $F_1$ in the limit $t\searrow 0$ in
\Lemref{lem:energyestimateshigherorder} (i.e.,\ $q\ge 1$ as opposed to
$q=0$ in \Lemref{lem:energyestimates}), while we 
presume smoothness for $S_1$, $S_2$, $N$,
$f_0$, and $F_1$ in both lemmas. 

\begin{proof}
This lemma is proven by taking $q$ spatial derivatives of \Eqref{eq:1stordersystem}, reorganizing the resulting equations into a linear symmetric hyperbolic Fuchsian system for the $q$'th order derivative of $v$, applying Lemma \ref{lem:energyestimates} to that system, and then carrying out a number of estimates needed to derive \Eqref{eq:EnergyEstimateHigherOrder} from  the inequality resulting from this application. We discuss some of the details for the $q=1$ case here; the  $q >1$ cases are similar.

Presuming that $v(t,x)$ is the solution to the Cauchy problem for the linear system (before differentiating) with  initial data $v_{[t_0]}$ (contained in $H^{q_0}(T^1))$ specified at $t_0$, we carry out the differentiation and obtain the following PDE system for $\partial_x v$:
\begin{equation}
\label{eq:equationforderivative}
S_1 D \partial_x v+S_2 t\partial_x (\partial_x v)+N \partial_x v
=\widehat f_0+\widehat F_1 \partial_x v,
\end{equation}
where 
\begin{equation}
  \label{eq:defhatf0}
  \widehat f_0:=\partial_x f_0+(\partial_x F_1-\partial_x N)v
  +\partial_x S_1 S_1^{-1}
  (N v-f_0-F_1 v),
\end{equation}
and
\begin{equation*}
\widehat F_1:=F_1-t\partial_x S_2+t \partial_x S_1 S_1^{-1} S_2.
\end{equation*}
Here, we interpret $v$ as a given function so that $\widehat f_0$ can
be considered as a source term function, and $\partial_x v$ as the unknown.
This PDE system is clearly of the desired form \Eqref{eq:1stordersystem} (with $\partial_x v \in H^{q_0-1}(T^1)$ for each value of $t$). However, it is not a linear
symmetric hyperbolic Fuchsian system with respect to the same exponent vector $\mu$: the term $\partial_x N v$ in $\widehat f_0$ violates \Conditionref{enum:linearsource} of \Defref{def:linearity} since it is in $X_{\delta,\mu,q-1}$ rather than in $X_{\delta,\nu,q-1}$ for some $\nu>\mu$. However, \Eqref{eq:equationforderivative} \textit{is} a linear symmetric hyperbolic Fuchsian system if we choose $\widehat\mu:=\mu-\epsilon/2$ as the remainder exponent vector for any scalar constant $\epsilon>0$. One verifies that
\Eqref{eq:equationforderivative} has  block diagonal form with respect to $\widehat\mu$
and also that the energy dissipation matrix is positive definite if $\epsilon$ is sufficiently small.
Consequently, this system \Eqref{eq:equationforderivative} 
satisfies the hypothesis of \Lemref{lem:energyestimates}. It follows that there
exist uniform (in the sense above) constants $\widehat C$, $\widehat\kappa$ and
$\widehat\gamma$ (which generally differ from the ones for the original
equation) such that 
$\partial_x v$ satisfies the energy estimate (for all $t\in [t_0,\delta]$) 
\begin{equation}
 \label{eq:EnergyEstimatestep1}
  \sqrt{E_{\widehat\mu,\widehat\kappa,\widehat\gamma}[\partial_x v](t)}
  \le \sqrt{E_{\widehat\mu,\widehat\kappa,\widehat\gamma}[\partial_x v]} |_{t_0}
  +\widehat C\int_{t_0}^t
  s^{-1}\|\RR{\widehat\mu}(s,\cdot)
  \widehat f_0(s,\cdot)\|_{L^2(T^1)}ds. 
\end{equation}

To derive the $q=1$ version of the estimate (\ref{eq:EnergyEstimateHigherOrder}) from the energy estimate (\ref{eq:EnergyEstimatestep1}), we first note two useful inequalities. Letting $f: (0,\delta] \times T^1 \rightarrow \R^d$ denote any function for which the following norms are finite, we find\footnote{There are versions of this inequality, as well as the one below, which hold for higher order spatial derivatives of $f$ as well.} that, for all $t\in (0,\delta]$, 
\begin{equation}
  \label{eq:estimateswapderivatives1}
  \|\RR{\mu-\epsilon} f(t)\|_{H^1(T^1)}\le
  C\sum_{\xi=0}^1\|\RR{\mu-\epsilon/2}\partial_x^\xi f(t)\|_{L^2(T^1)}. 
\end{equation} 
 Here, the constant $C>0$ may depend on $\mu$ and
$\epsilon$, but, in particular, is independent of $t$.
Observe that the use of $\mu-\epsilon$ on the left hand side of \Eqref{eq:estimateswapderivatives1} and of $\mu-\frac{\epsilon}{2}$ on the right hand side, is needed to dominate the terms
 of the form $\log t$ which are picked up on the left hand side when
$R[\mu]$ is differentiated in space if $\mu$ is not constant (as a result of the $H^1(T^1)$ norm).
We also readily check that 
\begin{equation}
  \label{eq:estimateswapderivatives2}
  \|\RR{\mu-\epsilon/2}\partial_x f(t)\|_{L^2(T^1)}\le 
  C \|\RR{\mu} f(t)\|_{H^1(T^1)}.
\end{equation}

We now work on inequality \eqref{eq:EnergyEstimatestep1}: Observe first that inequality \eqref{EnergyEstimate} from Lemma \ref{lem:energyestimates} holds for $\widehat\mu=\mu-\frac{\epsilon}{2}$ so long as $\epsilon$ is sufficiently small; hence we may add the left hand side of \Eqref{EnergyEstimate} (with $\mu-\frac{\epsilon}{2}$) to that of \Eqref{eq:EnergyEstimatestep1}, and the right hand side of \Eqref{EnergyEstimate} (again with $\mu-\frac{\epsilon}{2}$) to that of \Eqref{eq:EnergyEstimatestep1}. If we now use (i) the norm equivalence \Eqref{normequiv} on both sides to replace energy terms by terms involving norms of $\RR{\cdot}$, (ii) the definition of the Sobolev norm $\| \cdot \|_{H^1(T^1)}$ to combine terms on each side, and (iii) the inequalities \eqref{eq:estimateswapderivatives1} and \eqref{eq:estimateswapderivatives2}, then we obtain the following inequality:
\begin{align*}
  \|\RR{\mu-\epsilon} v\|_{H^1(T^1)}(t)
  \le C\Biggl(\|\RR{\mu} v\|_{H^1(T^1)}(t_0)
  +&\int_{t_0}^t s^{-1}\Bigl(\|\RR{\mu-\epsilon/2}(s,\cdot) \widehat f_0(s,\cdot)\|_{L^2(T^1)}\\
    &+\|\RR{\mu-\epsilon/2}(s,\cdot) f_0(s,\cdot)\|_{L^2(T^1)}\Bigr) ds\Biggr).
\end{align*}
 It remains to substitute in the definition of $\widehat f_0$ from
 \Eqref{eq:defhatf0}. Noting the properties of the functions on the
 right hand side of \Eqref{eq:defhatf0}, we determine that there
 exists a uniform constant $C$ (in the sense above) such that, 
for all $t\in (0,\delta]$, 
\begin{align*}
  \|\RR{\mu-\epsilon/2}\widehat f_0(t)\|_{L^2(T^1)}
  &\le C\left(\|\RR{\mu-\epsilon/2}\partial_x f_0(t)\|_{L^2(T^1)}
    +\|\RR{\mu-\epsilon/2} v(t)\|_{L^2(T^1)}\right)\\
  &\le C\left(\|\RR{\mu} f_0(t)\|_{H^1(T^1)}
  +s^{\epsilon/2}\|\RR{\mu} v(t)\|_{L^2(T^1)}\right). 
\end{align*}
Here in the second step, the constant $C$ has been inconsequentially changed. Combining these last two inequalities, we obtain the desired result
\Eqref{eq:EnergyEstimateHigherOrder} with $q=1$ by setting $\rho=\epsilon/2$.

This concludes the proof that the inequality (\ref{eq:EnergyEstimateHigherOrder}) holds for the case $q=1$. The proof for $q > 1$ proceeds very similarly. The argument that the constants $C$ and $\widehat \delta$ may be chosen so that the inequality holds for all $S_{1,1}, S_{2,1}, N_1$, and $F_1$ contained in $B_{\delta,\zeta,q,r}$ is essentially the same as that used in proving Lemma \ref{lem:energyestimates}. 
\end{proof}

We remark that while the introduction of $\epsilon$ into the estimate \Eqref{eq:EnergyEstimateHigherOrder} is certainly needed,  one \emph{can} choose this $\epsilon$ to be arbitrarily small.  
One might worry that as one proceeds from $q=1$ to higher values, the necessary value of $\epsilon$ grows and causes trouble. However, since the incremental value needed for each step is arbitrarily small, one sees that the total value of $\epsilon$ needed for arbitrary differentiability values can be kept small (below any chosen positive value).

With these results for the Cauchy problem for linear symmetric
hyperbolic Fuchsian systems established, we now set out to use
solutions of the Cauchy problem to establish the existence of
solutions to the \emph{singular} initial value problem. We do this via
an approximation scheme which works as follows: We first choose a
monotonically decreasing sequence of times $t_n \in (0,\delta]$ which
converges to zero. Then for each $n$, we construct a function $v_n:
(0,\delta] \times T^1 \rightarrow \R^n$ which vanishes on $(0,t_n] $,
and which is equal on $(t_n, \delta]$ to the solution of the Cauchy
problem with zero initial data at $t_n$. One readily checks that for
every choice of $\mu$, one has $v_n \in C^0((0,\delta]\times T^1)\cap
X_{\delta,\mu,0}$. The central result of this section is that if
certain hypotheses hold, then the sequence $(v_n)$ -- whose elements
we label \keyword{approximate solutions} -- converges to a solution of
the singular initial value problem for the linear system with
vanishing leading term.

The first step in showing this convergence is to set up the formalism
to work with weak solutions to the linear system. To do this, we
define a \keyword{test function} for this system to be any smooth function
$\phi: (0,\delta] \times T^1 \rightarrow \R^d$ for which there is a
$T\in (0,\delta]$, such that $\phi(t,x)=0$ for all $t>T$. We then
define the operators $\mathcal L$ and $\mathcal F$ acting on functions
$w \in  X_{\delta,\mu,0}$ via\footnote{The operator $\mathcal L$ is
  the adjoint form of the linear principal part operator $L$ (see
  \Eqref{linprincpartop}).}
\begin{align*}
  &\left<\mathcal L[w],\phi\right>
  :=-\int_0^\delta\Bigl(
  \scalarpr{\RR{\mu}  S_1 w}{D\phi}_{L^2(T^1)}+\scalarpr{\RR{\mu}S_2 w}{ t\partial_x\phi}_{L^2(T^1)}\\
  &+\scalarpr{\RR{\mu}\left(S_1 -N w+
    \RR{\mu}^{-1}D \RR{\mu} S_1+DS_1
 +\RR{\mu}^{-1} t\partial_x \RR{\mu} S_2+t\partial_x S_2\right)}{\phi}_{L^2(T^1)}\Bigr)dt 
\end{align*}
and
\begin{equation*}
  \left<\mathcal F[w],\phi\right>
  :=\int_0^\delta \scalarpr{\RR{\mu}\left(f_0+F_1 w\right)}{\phi}_{L^2(T^1)} dt,
\end{equation*}
where $\phi$ is an arbitrary test function\footnote{Note that  we do not need to set $\phi(0,x)=0$ in order to avoid boundary terms at $t=0$; the decay of $w(t,\cdot)$ and of $f_0(t,\cdot)$ as $t$ approaches zero removes such potential boundary terms.}.
These operators are well-defined for $w\in X_{\delta,\mu,0}$ so long as the system 
\Eqref{eq:1stordersystem} is a linear symmetric hyperbolic Fuchsian system  for parameters $\delta$, $\mu$, $\zeta$, $r$, $q$ and $q_0$ as in
\Defref{def:linearity}.
We now define $w$ to be a \keyword{weak solution} of \Eqref{eq:1stordersystem} with vanishing leading term provided it satisfies, for \emph{all} test functions $\phi$,  
\begin{equation}
\label{weaksoln}
\left<\mathcal P[w],\phi\right>:=\left<\mathcal L[w]-\mathcal F[w],\phi\right>=0.
\end{equation}
Here, we note the discussion of distributional time derivatives in  \Sectionref{sec:propertiesspaces} of the appendix. 

Before proceeding to show that weak solutions exist, we establish the following useful technical result. 

\begin{lemma}
\label{lem:bnddweakoperators}
Suppose that \Eqref{eq:1stordersystem} satisfies the conditions to be
a linear symmetric hyperbolic Fuchsian system for a fixed set of parameters $\delta$, $\mu$, $\zeta$, $\delta$, $r$,
$q$, $q_0$ as per \Defref{def:linearity}, and is block diagonal with respect to $\mu$. Then for every test function $\phi$, the maps
$\left<\mathcal L[\cdot],\phi\right>$ and $\left<\mathcal
  F[\cdot],\phi\right>$ are bounded linear functionals on
$X_{\delta,\mu,0}$.
\end{lemma}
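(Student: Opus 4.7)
The plan is to verify boundedness directly by estimating each summand in the defining integrals. Linearity of $\langle\mathcal L[\cdot],\phi\rangle$ in $w$ follows by inspection; for $\mathcal F$, the $f_0$ term contributes a $w$-independent constant while the $F_1 w$ term is linear in $w$, so I would split $\langle\mathcal F[\cdot],\phi\rangle$ into its constant and linear pieces and show that each is controlled (so the lemma is really the statement that $\mathcal F$ is a continuous affine functional).

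For the main step, I would use the block diagonality hypothesis to commute $\RR{\mu}$ past $S_1$, $S_2$, and $N$, reducing every summand to the schematic form $\scalarpr{A\, \RR{\mu} w}{\psi}_{L^2(T^1)}$, with a coefficient matrix $A$ built from $S_1$, $S_2$, $N$ (or their spatial or singular-time derivatives) and some derivative $\psi\in\{\phi, D\phi, t\partial_x\phi\}$ of the test function. Cauchy--Schwarz on $T^1$ then gives the pointwise-in-$t$ estimate
\[
|\scalarpr{A\,\RR{\mu} w}{\psi}_{L^2(T^1)}|(t) \le \|A(t,\cdot)\|_{L^\infty(T^1)}\,\|\RR{\mu}(t,\cdot) w(t,\cdot)\|_{L^2(T^1)}\,\|\psi(t,\cdot)\|_{L^2(T^1)},
\]
and the middle factor is bounded by $\|w\|_{\delta,\mu,0}$ uniformly in $t$. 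Integration in $t$ then reduces the problem to showing that the product $\|A(t,\cdot)\|_{L^\infty}\|\psi(t,\cdot)\|_{L^2}$ is integrable on $(0,\delta]$ for each summand.

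The test function $\phi$ is smooth and supported in $\{t\le T<\delta\}$, so $\|\phi(t,\cdot)\|_{L^2}$, $\|D\phi(t,\cdot)\|_{L^2}$, and $\|t\partial_x\phi(t,\cdot)\|_{L^2}$ are uniformly bounded on $(0,\delta]$. For $A=S_1,N$, Sobolev embedding with $q_0\ge 2$ controls $S_{1,0}$, $N_0$ in $L^\infty(T^1)$, while the perturbations $S_{1,1},N_1\in B_{\delta,\zeta,q,r}$ decay like positive powers of $t$ since $\zeta$ has strictly positive entries. For $A=S_2=\RR{1-\beta}(S_{2,0}+S_{2,1})$ the factor $\RR{1-\beta}$ is singular like $t^{\beta_{\min}-1}$, but every occurrence of $S_2$ in $\mathcal L$ is multiplied by an extra factor of $t$ (via $t\partial_x\phi$, via $\RR{\mu}^{-1}t\partial_x\RR{\mu}$, or via $t\partial_x$ on $S_2$ itself), producing the bounded factor $t^{\beta_{\min}}$. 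The auxiliary expressions $\RR{\mu}^{-1}D\RR{\mu}=-\mathrm{diag}(\mu_i)$ and $\RR{\mu}^{-1}t\partial_x\RR{\mu}=-\mathrm{diag}((\partial_x\mu_i)\, t\log t)$ are respectively bounded and vanishing as $t\searrow 0$, so all such factors are integrable. For the $\mathcal F$ operator, the $F_1 w$ piece is handled identically after writing $\RR{\mu}F_1 w=(\RR{\mu}F_1\RR{\mu}^{-1})\RR{\mu}w$ and using $\RR{\mu}F_1\RR{\mu}^{-1}\in B_{\delta,\zeta,q,r}$, while the $f_0$ piece is bounded by rewriting $\RR{\mu}f_0=\RR{\mu-\nu}(\RR{\nu}f_0)$ with $\nu>\mu$, so that $\RR{\mu-\nu}$ supplies a positive power of $t$ and $\RR{\nu}f_0$ is $L^2$-bounded by $\|f_0\|_{\delta,\nu,0}$.

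Collecting everything yields $|\langle\mathcal L[w],\phi\rangle|\le C(\phi)\|w\|_{\delta,\mu,0}$ and $|\langle\mathcal F[w],\phi\rangle|\le C(\phi)(1+\|w\|_{\delta,\mu,0})$, which is the claim. I do not foresee a genuine conceptual obstacle; the work is entirely in tracking the decay of each coefficient against the singular weight $\RR{\mu}$. The only mildly subtle point is the interpretation of the $DS_1$ and $t\partial_x S_2$ terms in $\mathcal L$ when the perturbations $S_{1,1},S_{2,1}$ are not classically differentiable; this is handled via the distributional time derivatives discussed in the appendix and does not affect the boundedness argument, since the pairing is always against the smooth test function $\phi$.
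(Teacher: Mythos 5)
Your proposal is correct and follows essentially the same route as the paper's proof: term-by-term Cauchy--Schwarz (H\"older) estimates using the block-diagonality to commute $\RR{\mu}$ past the coefficients, Sobolev control of the leading coefficients, and the positive powers of $t$ supplied by $\zeta$, $\beta$, and $\nu>\mu$ to get integrability on $(0,\delta]$. The paper only works out the first summand explicitly and declares the rest ``similar,'' so your version simply supplies the details (including the affine, rather than linear, nature of $\mathcal F$) that the paper leaves implicit.
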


\begin{proof}
To prove this lemma it is sufficient to  show that each term in  
 $\left< \mathcal L[w], \phi \right>$ is bounded by $C || w ||_{\delta, \mu, 0}$, for some positive constant $C$ and for every $w\in X_{\delta,\mu,0}$.
We demonstrate this for the first term, $\int_0^\delta \scalarpr{\RR{\mu} S_1 w}{D \phi}_{L^2}dt$.
 Using H\"older's inequality, the spatial continuity\footnote{This follows from the definition of a linear symmetric hyperbolic Fuchsian system, and from Sobolev embedding.}  of $S_1$  and the block-diagonal property, we find that
\begin{align*}
\left| \int_0^\delta \left< \RR{\mu} S_1 w, D \phi \right>_{L^2} dt \right |
	\le\, & \int_0^\delta || \RR{\mu} S_1 w  ||_{L^2} ||D \phi ||_{L^2} dt\\
        \le\, & \delta \sup_{t\in (0,\delta]} || \RR{\mu} S_1 w  ||_{L^2} ||D \phi ||_{L^2}
        \le  C || w  ||_{\delta,\mu,0}.
\end{align*}
The constant $C$, which is used to estimate both the contributions from $S_1$ and from $\phi$, is uniform in the sense defined above.
Other terms in $\left<\mathcal L[w],\phi\right>$ follow similarly, and the same arguments hold for the $\left<\mathcal F[\cdot],\phi\right>$ operator.
\end{proof}

We now determine that, for a given linear symmetric hyperbolic Fuchsian system with certain conditions holding, the singular initial value problem with zero leading term has a weak solution. In the proof of this result, we show that these solutions can be obtained as a limit of approximate solutions of the Cauchy problem, as described above. 

\begin{proposition}[Existence of weak solutions of the
  linear singular initial value problem with smooth coefficients]
  \label{prop:linearfirstexistence}
  Suppose that \Eqref{eq:1stordersystem} satisfies the same conditions as stated in \Lemref{lem:energyestimates} and hence is a linear symmetric hyperbolic Fuchsian system (with smooth coefficients)  for $\delta$, $\mu$, $\zeta$, $r$, $q$, and $q_0$ as per \Defref{def:linearity} with $q=0$ and $q_0=2$. Then there exist weak solutions $w:(0,{\delta}] \times T^1\rightarrow \R^d$ to the singular initial value problem (with vanishing leading term)  which are elements of $X_{{\delta}, \mu, 0}$.
\end{proposition}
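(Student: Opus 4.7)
The plan is to build a weak solution as a weak-$*$ limit of a sequence of classical solutions of Cauchy problems whose initial time is pushed down toward the singularity. Choose a monotonically decreasing sequence $t_n \searrow 0$ in $(0, \delta]$. For each $n$, standard well-posedness for linear symmetric hyperbolic systems with smooth coefficients (using $q_0 = 2$) produces a classical solution $\tilde v_n \in C^0([t_n, \delta]; H^2(T^1))$ of $L[v] = f_0 + F_1 v$ with vanishing data at $t_n$; uniform positive definiteness of $S_1$ on all of $(0, \delta]$ guarantees that this solution persists on the full interval $[t_n, \delta]$. Extending $\tilde v_n$ by zero to $(0, t_n]$ produces the approximate solution $v_n \in X_{\delta, \mu, 0}$.

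Next I apply \Lemref{lem:energyestimates} to $v_n$ with zero initial data at $t_n$ to obtain
\[
\sqrt{E_{\mu, \kappa, \gamma}[v_n](t)} \;\le\; C \int_{t_n}^{t} s^{-1} \|\RR{\mu}(s, \cdot) f_0(s, \cdot)\|_{L^2(T^1)}\, ds, \qquad t \in [t_n, \delta].
\]
Since $f_0 \in X_{\delta, \nu, 0}$ with $\nu > \mu$, the integrand is dominated by a positive power of $s$ times $s^{-1}$ and is therefore integrable down to $s = 0$. The norm equivalence \eqref{normequiv} converts this into the uniform bound $\sup_n \|v_n\|_{\delta, \mu, 0} \le C'$. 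The map $w \mapsto \RR{\mu} w$ isometrically embeds $X_{\delta, \mu, 0}$ into $L^\infty((0,\delta]; L^2(T^1))$, which is the dual of the separable space $L^1((0,\delta]; L^2(T^1))$, and Banach--Alaoglu then extracts a weak-$*$ convergent subsequence; I call its limit $w$ and note $\|w\|_{\delta, \mu, 0} \le C'$.

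To verify that $w$ is a weak solution, I multiply $L[v_n] = f_0 + F_1 v_n$ by $\phi$ (a test function in the sense of the excerpt), integrate over $[t_n, \delta] \times T^1$, and integrate by parts in both $t$ and $x$, using the block-diagonal property of \Defref{def:nonessentiallycouple} to commute $\RR{\mu}$ past $S_1, S_2, N$. The boundary term at $t = \delta$ vanishes because $\phi$ is supported away from $\delta$, while the boundary term at $t = t_n$ vanishes because $v_n(t_n, \cdot) = 0$. The resulting identity differs from $\langle \mathcal P[v_n], \phi \rangle$ only by the portion of the $f_0$-integral on $(0, t_n]$, which is bounded by $\int_0^{t_n} \|\RR{\mu}(s, \cdot) f_0(s, \cdot)\|_{L^2}\, \|\phi(s, \cdot)\|_{L^2}\, ds$ and vanishes as $n \to \infty$. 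By \Lemref{lem:bnddweakoperators}, each of $\langle \mathcal L[\cdot], \phi\rangle$ and $\langle \mathcal F[\cdot], \phi\rangle$ is a bounded linear functional on $X_{\delta, \mu, 0}$ which, upon inspection of its defining integral, is represented as a pairing against an explicit element of $L^1((0, \delta]; L^2(T^1))$ built from $\phi$, $D\phi$, $t\partial_x \phi$ and the smooth coefficients. Weak-$*$ convergence therefore gives $\langle \mathcal P[w], \phi \rangle = \lim_n \langle \mathcal P[v_n], \phi \rangle = 0$ for every test function $\phi$.

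The step I expect to be most delicate is identifying the weak-$*$ limit as a genuine element of the completion $X_{\delta, \mu, 0}$ rather than merely of the ambient $L^\infty(L^2)$ space: one must upgrade the $L^\infty$-in-time regularity of $\RR{\mu} w$ to continuity, using the weak equation itself to control $Dw$ via the distributional-derivative framework referenced in \Sectionref{sec:propertiesspaces}. The block-diagonality hypothesis is essential throughout, since it is what permits $\RR{\mu}$ to be commuted cleanly past $S_1$, $S_2$ and $N$ in both the energy estimate and in the integration-by-parts that produces the weak formulation.
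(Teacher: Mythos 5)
Your overall architecture (approximate solutions of Cauchy problems with zero data at $t_n\searrow 0$, the energy estimate of \Lemref{lem:energyestimates}, passage to the limit in the weak formulation via \Lemref{lem:bnddweakoperators}) matches the paper's, but the compactness mechanism you use leaves a genuine gap at exactly the point you flag. Banach--Alaoglu only delivers a subsequence converging weak-$*$ in $L^\infty((0,\delta];L^2(T^1))$, and a weak-$*$ limit there need not be continuous in time, let alone lie in the completion $X_{\delta,\mu,0}$; the proposition explicitly asserts membership in that space. Your proposed repair --- reading off $Dw$ from the weak equation --- is not carried out, and at the regularity level $q=0$ it would at best place $\RR{\mu}w$ in $C^0((0,\delta];H^{-1})\cap L^\infty((0,\delta];L^2)$; even bounded continuity with values in $L^2$ only puts $w$ in $\widehat X_{\delta,\mu,0}$, which by \Lemref{lem:relationspaces} is strictly larger than $X_{\delta,\mu,0}$, so an additional argument exploiting the gap $\nu>\mu$ would still be required.

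The missing idea is to exploit linearity \emph{before} taking limits: for $m>n$, the difference $\xi_{mn}=v_m-v_n$ vanishes on $(0,t_m]$, equals $v_m$ on $(t_m,t_n]$, and on $[t_n,\delta]$ solves the \emph{homogeneous} Cauchy problem ($f_0=0$) with data $v_m(t_n,\cdot)$. Applying \Eqref{EnergyEstimate} on each subinterval and setting $G(t):=\int_0^t s^{-1}\|\RR{\mu}(s,\cdot)f_0(s,\cdot)\|_{L^2}\,ds$ (finite and $O(t^\epsilon)$ since $\nu>\mu$) yields
\begin{equation*}
\|v_m-v_n\|_{\delta,\mu,0}\le C\,|G(t_n)-G(t_m)|\longrightarrow 0,
\end{equation*}
so $(v_n)$ is Cauchy in the Banach space $X_{\delta,\mu,0}$ and converges \emph{strongly} there; each $v_n$ lies in $X_{\delta,\mu,0}$ because it vanishes identically near $t=0$. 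This makes the weak-$*$ extraction unnecessary, gives convergence of the full sequence, and places the limit in $X_{\delta,\mu,0}$ by completeness. The conclusion then follows from the continuity of $\left<\mathcal P[\cdot],\phi\right>$ established in \Lemref{lem:bnddweakoperators} together with the tail estimate $\left|\left<\mathcal P[v_n],\phi\right>\right|=\bigl|\int_0^{t_n}\scalarpr{\RR{\mu}f_0}{\phi}_{L^2}\,dt\bigr|\le C\,G(t_n)\to 0$, which is the computation you already have. Your verification of the weak equation via integration by parts and the treatment of the $f_0$-tail are otherwise sound.
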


Observe  that \Propref{prop:linearfirstexistence} is the most general
existence result which we obtain for linear equations, in the sense that only
minimal control of the behavior of the coefficients of the equation is
required (i.e., $q=0$, $q_0=2$ as in \Lemref{lem:energyestimates}).  We
discuss higher regularity of the solutions under stronger regularity
assumptions in \Propref{prop:linearexistenceregularitypre}
below. We also note that while \Propref{prop:linearfirstexistence} provides sufficient conditions for the existence of solutions, it tells us nothing regarding uniqueness. To obtain uniqueness, we need to impose stronger assumptions on the coefficients of the PDE system; see
\Propref{prop:uniqueness} below.
  
\begin{proof}
  As described
  above, we choose a sequence $(t_n)$ converging to zero, and the
  corresponding sequence of approximate solutions $(v_n) \in
  C^0( (0,\delta],H^{q_0}(T^1))\cap X_{\delta,\mu,0}$. 
  We seek to show
  that the sequence $(v_n)$ forms a Cauchy sequence in $X_{\delta,\mu,0}$. Defining
  $\xi_{mn}:= v_m-v_n$, we readily see that
\begin{align}
\xi_{mn}(t,x) 
	= \left\{
    		\begin{array}{llll}
    	 	0, &  t \in (0,t_m], &\\
       		v_m, &  t \in (t_m, t_n], & \\
	       v_m - v_n, &  t \in (t_n,\delta]. &  
     		\end{array}
   	\right.
\end{align}
From the energy estimate for the Cauchy problem \Lemref{lem:energyestimates} on each subinterval, we then derive
\begin{align}
 \label{eq:xiestimate}
|| \RR{\mu}(t, \cdot) \xi_{mn}(t,\cdot) ||_{L^2} \left\{
     	\begin{array}{lll}
	      	=& 0, &  t \in (0,t_m],    \\
       		\le & 0+C \int_{t_m}^t s^{-1} ||\RR{\mu}f_0 ||_{L^2} ds,          &  t \in (t_m, t_n], \\
       		\le & ||\RR{\mu}(t_n, \cdot) v_m(t_n, \cdot)||_{L^2},             &  t \in (t_n, \delta], 
     \end{array}
   \right.
\end{align}
where in the last inequality we have used the energy/norm equivalence
\Eqref{normequiv} above, and we have also used the fact that the
(linear) PDE system for $v_m-v_n$ has a vanishing source term $f_0$. Recalling the definition of the norm $|| \cdot||_{ \delta, \mu, q}$, noting the monotonicity of  $\int_{t_m}^t s^{-1} ||\RR{\mu}f_0 ||_L^2 ds $, and noting the equality $\xi_{mn}(t_n, \cdot) = v_m(t_n, \cdot)$ for $t=t_n$, we now have
\begin{equation*}
||\xi_{mn}||_{\delta, \mu, 0} 
	= \sup_{t \in (0, \delta]} || \RR{\mu}(t, \cdot) \xi_{mn}(t, \cdot) ||_{L^2} 
	\le \widetilde C  \int_{t_m}^{t_n} s^{-1} ||R[\mu]f_0 ||_{L^2} ds.
\end{equation*}

To complete the argument that we have a Cauchy sequence, it is useful to introduce  
\begin{equation}
\label{eq:gdefinition}
G(t) := \int_0^t s^{-1} ||\RR{\mu} f_0 ||_{L^2} (s) ds,  
\end{equation}
which is well-defined so long as $f_0 \in X_{ \delta, \nu, 0}$ for  $\nu > \mu$. Choosing $\epsilon>0$ as a lower bound for the gap between $\nu$ and $\mu$ among all components, we see that there must exist a constant $C$ such that $G(t) \le C t^\epsilon$; thence, we have
\begin{equation}
\label{eq:errorestimate}
||\xi_{mn}||_{ \delta, \mu, 0} 
	\le C | G(t_n) - G(t_m) |,
\end{equation}
from which it easily follows that $(v_n)$ is a Cauchy sequence in the Banach space $ X_{\delta,\mu,0}.$

Since it has been established (in \Lemref{lem:bnddweakoperators}) that 
$\mathcal P = \mathcal L - \mathcal F$
 is a continuous operator on  $X_{\delta,\mu,0}$, to show that the limit of the Cauchy sequence $(v_n)$ is a weak solution of the system of interest, it is sufficient to show that the limit of the sequence of reals $(\left< \mathcal P[v_n],\phi \right>)$ is zero for all test functions $\phi$. Choosing  any $v_n$ in our sequence, we know from its definition that $v_n$ vanishes on $(0, t_n]$ and is a solution to the equation 
$\left< \mathcal P[\cdot],\phi \right>=0$ on $[t_n, \delta]$. Recalling the definition of $\mathcal P$, we calculate on this latter interval, for any test function $\phi$,  
\begin{align*}
\left | \left< \mathcal P[v_n],\phi \right> \right |=&  \left | - \int_0^{t_n} \left< \RR{\mu}f_0, \phi \right>_{L^2(T^1)} dt \right |.
\end{align*}
Straightforward calculation then shows that \begin{align*}
\left | - \int_0^{t_n}  \left< \RR{\mu} f_0, \phi \right>_{L^2(T^1)} \right | dt 
	\le & \int_0^{t_n}   | \left< \RR{\mu} f_0, \phi \right>_{L^2(T^1)} | dt \\	
	\le & \int_0^{t_n}  \left( 
		\Big( 
			\int_{T^1} dx |\RR{\mu} f_0|^2 
		\Big)^{1/2} 
		\Big( \int_{T^1} dx |\phi|^2 
		\Big)^{1/2} \right) dt \\	
	= &  \int_0^{t_n}  \left(
		 t^{-1} 	
		 \Big(
		 	 \int_{T^1} dx |\RR{\mu} f_0|^2 
		\Big)^{1/2} 
		t 
		\Big(
		 	\int_{T^1} dx |\phi|^2
		\Big)^{1/2} \right) dt \\
	\le & \sup_{t\in (0, \delta]} ||t \phi||_{L^2}  \int_0^{t_n}  t^{-1} ||\RR{\mu} f_0(t) ||_{L^2} dt 
\le C G(t_n),
\end{align*}
from which it follows (from the properties of $G(t))$, that we have a weak solution.
\end{proof} 

Based on this existence result for weak solutions, we would like to define a map which, for a fixed choice of $S_1$, $S_2$, $N$ and $F_1$, maps any smooth function $f_0\in X _{\delta,\nu,0}$ to a weak solution $w\in X _{\delta,\nu,0}$ of $\left< \mathcal P[v_n],\phi\right>=0.$  Then as a next step, we would  like to extend this map to all $f_0$ of $X_{\delta,\nu,0}$, and thereby show that weak solutions exist for all $f_0\in X _{\delta,\nu,0}$, and not just for those $f_0$ which are smooth.  While the lack of a uniqueness result for weak solutions is an impediment to defining the desired map, we can get around this by provisionally defining an operator of this sort which maps a smooth $f_0$ to the solution of the weak solution which is obtained as the limit of the sequence $(v_n)$ (as discussed in the proof of \Propref{prop:linearfirstexistence}). We do this now, noting that the definition makes sense only after we have established that we get the same limit for any choice of the sequence of times $t_n$).  We then  establish an estimate for this operator, and use this estimate to extend the operator to all of $X _{\delta,\nu,0}$.

\begin{proposition}
\label{prop:Hlemma}
Presuming the hypotheses listed in \Propref{prop:linearfirstexistence}, there exists an operator $\mathbb H : X_{\delta, \nu, 0} \to X_{\delta, \mu, 0}$ which maps a smooth source function $f_0$ to the weak solution $w$ of $\left< \mathcal P[w],\phi \right>=0$ which is obtained as the limit of the sequence of approximate solutions $(v_n)$ corresponding to a choice of a monotonic sequence of times ($t_n)$ converging to zero. This operator is well-defined (independent of the choice of the sequence $(t_n)$) and satisfies the estimate
\begin{equation}
  \label{eq:continuityHPDE}
  \|\mathbb H[f_0]\|_{\delta,\mu,0} \le \delta^\rho C\|f_0\|_{\delta,\nu,0},
\end{equation}
for all smooth $f_0\in X _{\delta,\nu,0}$. The positive constants $C$ and $\rho$ are uniform.  

The operator extends to all  (not necessarily smooth) $f_0\in X_{\delta,\nu,0}$, with the estimate \eqref{eq:continuityHPDE} holding for all such $f_0$ with the same constants. Indeed, this extended operator $\mathbb H$ maps \textit{all} $f_0\in X_{\delta,\nu,0}$ to weak solutions of \Eqref{eq:1stordersystem}.
\end{proposition}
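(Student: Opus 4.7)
The plan is to build the operator in three stages: construct it on smooth sources, derive the continuity bound, and then extend by density while verifying that the weak--solution property is preserved.

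\textbf{Stage 1 (well-definedness on smooth sources).} Given a smooth $f_0\in X_{\delta,\nu,0}$, fix two monotonic sequences $(t_n)$ and $(\tilde t_n)\subset(0,\delta]$ decreasing to zero, and let $(v_n)$ and $(\tilde v_n)$ be the corresponding approximate solutions produced in \Propref{prop:linearfirstexistence}. Both converge in $X_{\delta,\mu,0}$ to weak solutions; call them $w$ and $\tilde w$. To see that $w=\tilde w$, I would apply verbatim the computation leading to \eqref{eq:xiestimate} to the difference $v_n-\tilde v_m$, splitting $(0,\delta]$ into the three subintervals determined by $t_n$ and $\tilde t_m$ and exploiting that the Cauchy problem for $v_n-\tilde v_m$ above $\max(t_n,\tilde t_m)$ has vanishing source. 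This yields $\|v_n-\tilde v_m\|_{\delta,\mu,0}\le C\,|G(t_n)-G(\tilde t_m)|$ with $G$ as in \eqref{eq:gdefinition}, so both limits coincide. Linearity of $\mathbb H$ on smooth sources is then immediate from linearity of the Cauchy problem for the approximate solutions.

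\textbf{Stage 2 (the continuity estimate).} Applying the basic energy estimate \Lemref{lem:energyestimates} to each $v_n$ with vanishing data at $t_n$ and invoking the norm equivalence \eqref{normequiv} yields
\begin{equation*}
\|\RR{\mu}(t,\cdot) v_n(t,\cdot)\|_{L^2(T^1)}\le \tilde C\int_{t_n}^t s^{-1}\|\RR{\mu}(s,\cdot) f_0(s,\cdot)\|_{L^2(T^1)}\,ds.
\end{equation*}
Since $\nu>\mu$, setting $\rho:=\min_{i,x}(\nu_i(x)-\mu_i(x))>0$, the pointwise bound $\|\RR{\mu}(s,\cdot) f_0(s,\cdot)\|_{L^2}\le s^{\rho}\|f_0\|_{\delta,\nu,0}$ holds. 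Integrating, taking the supremum in $t\in(0,\delta]$, and passing to the limit $n\to\infty$ in $X_{\delta,\mu,0}$, I obtain \eqref{eq:continuityHPDE} with $C=\tilde C/\rho$. Uniformity of $C$ follows from that of $\tilde C$ in \Lemref{lem:energyestimates}, while $\rho$ depends only on $\mu$ and $\nu$.

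\textbf{Stage 3 (extension and preservation of the weak--solution property).} Since $X_{\delta,\nu,0}$ is by definition the completion of the smooth functions with finite norm, smooth sources are dense, and the bounded linear transformation theorem furnishes a unique continuous extension $\mathbb H:X_{\delta,\nu,0}\to X_{\delta,\mu,0}$ obeying the same bound \eqref{eq:continuityHPDE}. To verify that this extension still produces weak solutions, I would pick a smooth sequence $f_0^{(k)}\to f_0$ in $X_{\delta,\nu,0}$, set $w^{(k)}:=\mathbb H[f_0^{(k)}]\to\mathbb H[f_0]$ in $X_{\delta,\mu,0}$, and pass to the limit in the identity $\langle\mathcal L[w^{(k)}],\phi\rangle=\langle\mathcal F^{(k)}[w^{(k)}],\phi\rangle$, where $\mathcal F^{(k)}$ is $\mathcal F$ with $f_0$ replaced by $f_0^{(k)}$. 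The $\mathcal L$-term and the $F_1$-contribution to $\mathcal F^{(k)}$ converge by \Lemref{lem:bnddweakoperators}, while the source contribution is handled directly by H\"older since $\int_0^\delta\|\RR{\mu}(f_0^{(k)}-f_0)\|_{L^2}\|\phi\|_{L^2}\,dt\to 0$. The most delicate step is Stage 3: one must carefully track the dependence of $\mathcal F$ on $f_0$ throughout the limiting procedure so that the limit equation carries the correct source, rather than a stale one. The other ingredients are mild variations on techniques already deployed in \Lemref{lem:energyestimates} and \Propref{prop:linearfirstexistence}.
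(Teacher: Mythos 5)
Your proposal is correct and follows essentially the same route as the paper: well-definedness via the Cauchy-type estimate $\|v_n-\tilde v_m\|_{\delta,\mu,0}\le C|G(t_n)-G(\tilde t_m)|$, the bound \eqref{eq:continuityHPDE} via the energy estimate combined with $G(\delta)\le\rho^{-1}\delta^{\rho}\|f_0\|_{\delta,\nu,0}$, and extension by density. Your Stage 3 merely spells out in more detail the limit passage (including the $f_0$-dependence of $\mathcal F$) that the paper dispatches by citing continuity of $\mathbb H$ and of $\left<\mathcal P[\cdot],\phi\right>$, which is a harmless elaboration.
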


The last paragraph in this proposition generalizes the existence result in \Propref{prop:linearfirstexistence} to all, not necessarily smooth, source terms $f_0\in X_{\delta,\nu,0}$. We note, however, that otherwise the system is still assumed to have smooth coefficients in the sense defined above. 

\begin{proof}
We presume initially (as part of the hypothesis of smooth coefficients) that $f_0$ is smooth; i.e., $f_0 \in
C^\infty((0,\delta]\times T^1)\cap X_{\delta,\nu,0}$.
To show that $\mathbb H$ is a well-defined map from
$C^\infty((0,\delta]\times T^1)\cap X_{\delta,\nu,0}$ to 
$X_{\delta,\mu,0}$,
independent of the choice of time sequence, we choose a pair of 
such sequences $(t^1_n)$ and $(t^2_m)$ with their corresponding sequences $(v^1_n)$ and $(v^2_m)$ of approximate solutions, and from the union of the two time sequences we construct a third time sequence $(t_l)$. As is the case for  $(v^1_n)$ and $(v^2_m)$, the combined sequence of approximate solutions  $(v_l)$ must be  a Cauchy sequence, so\footnote{Here, we set $ \delta$ to be the smallest bound among the two sequences.} $||v_n^1 - v_m^2 ||_{ \delta, \mu, 0}$ must vanish in the limit $n,m \to \infty$. Then labeling $w^1$ as the limit of the first sequence and $w^2$ as the limit of the second, we calculate  
\begin{align*}
||w^1 - w^2||_{ \delta, \mu, 0} \le ||w^1 - v_n^1 ||_{ \delta, \mu, 0} +|| v_m^2 - w^2 ||_{ \delta, \mu, 0} + || v_n^1 - v_m^2 ||_{ \delta, \mu, 0} .
\end{align*}
It easily follows that $w^1$ and $w^2$ are equal in $X_{ \delta, \mu, 0}$.  

To prove the estimate for $\mathbb H$ (restricted to smooth $f_0$), we let $(v_n)$ be a sequence of approximate solutions with limit $w =\mathbb H(f_0) $, and then based on \Eqref{eq:errorestimate}  we determine that $|| w - v_1 ||_{ \delta, \mu, 0} \le CG(t_1) \le C G( \delta)$. It then follows that 
\begin{equation*}
|| w ||_{ \delta, \mu, 0} \le || v_1 ||_{ \delta, \mu, 0} + C G( \delta). 
\end{equation*}
If we now apply  the energy estimates to show that $|| v_1 ||_{ \delta, \mu, 0} \le \widetilde C G( \delta)$, we deduce that 
\begin{equation*}
|| w ||_{ \delta, \mu, 0} \le  C G( \delta),
\end{equation*}
for some adapted constant $C$. To relate $G( \delta)$ to the source term, we check that 
\begin{equation*}
s^{-\rho} ||\RR{\mu} f_0 ||_{L^2} \le || f_0 ||_{ \delta, \nu, 0}
\end{equation*}
for some $\rho >0$ so long as $\mu < \nu$. It then follows from multiplying both sides by $s^{-1}$ and integrating over $\int_0^{ \delta}$ that 
\begin{equation*}
G( \delta) \le \frac{1}{\rho}  \delta^ \rho ||f_0 ||_{ \delta, \nu, 0}. 
\end{equation*}
The estimate \Eqref{eq:continuityHPDE} is then a consequence. 

To extend the domain of $\mathbb H$ from $C^\infty((0,\delta]\times T^1)\cap X_{\delta,\nu,0}$ to $X_{\delta,\nu,0}$, we note that this first space is dense in the second by definition. Hence, for any $f_0 \in X_{\delta,\nu,0}$, we can find a sequence of functions $f_{0,j} \in C^\infty((0,\delta]\times T^1)\cap X_{\delta,\nu,0}$ which converges to $f_0$. It follows as a consequence of the estimate \Eqref{eq:continuityHPDE}  that there is a 
unique continuous extension of  $\mathbb H$ to the full space 
$X_{\delta,\nu,0}$.
The extended operator, which we refer to with
the same symbol $\mathbb H$, is continuous and satisfies the same estimate. 

The continuity of the extended operator $\mathbb H$ and the continuity of $\left<\mathcal {P}[w],\phi\right>=0$ with respect to $w$ easily implies that $\mathbb H$ maps any $f_0\in X_{\delta,\nu,0}$, even those which are not smooth,  to weak solutions.
\end{proof}

To proceed from weak solutions to strong solutions of the singular
initial value problem for these linear systems (while still keeping the
smoothness assumption for the coefficients $S_{1,1}$, $S_{2,1}$, $N_1$
and $F_1$),
we need to determine the regularity of these weak solutions.  We do this in the following proposition, and thereby prove the existence of strong solutions.

\begin{proposition}[Regularity of solutions for smooth coefficients]
  \label{prop:linearexistenceregularitypre}
  Suppose that all of the conditions of Proposition
  \ref{prop:linearfirstexistence} hold, with the exception that 
  $q\ge 1$ and $q_0=q+2$. Then, weak solutions
  $w$ of the singular initial value problem (whose existence has been checked in \Propref{prop:linearfirstexistence}) 
  are differentiable in time\footnote{in the sense of \Sectionref{sec:propertiesspaces} in the appendix} and hence
  are  strong
  solutions of \Eqref{eq:1stordersystem}, with $w\in X_{\delta, \mu, q}$ and $Dw\in X_{\delta,\mu,q-1}$. As well, the solution operator
  $\mathbb H$ defined in \Propref{prop:linearfirstexistence} maps
  $X_{\delta,\nu,q}$ to $X_{\delta,\mu,q}$, and satisfies
  \begin{equation}
   \label{eq:continuityHPDENew}
    \|\mathbb H[f_0]\|_{\delta,\mu,q} \le \delta^\rho C\|f_0\|_{\delta,\nu,q},
  \end{equation}
  for all (not necessarily smooth) $f_0\in X _{\delta,\nu,q}$. The constants $C>0$ and $\rho>0$
  are uniform in the sense of \Defref{def:uniform} (but may depend in particular on $q$). 
\end{proposition}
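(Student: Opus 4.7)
The plan is to follow the same architecture as Proposition~\ref{prop:linearfirstexistence}, but replace the basic energy estimate (Lemma~\ref{lem:energyestimates}) with the higher-order one (Lemma~\ref{lem:energyestimateshigherorder}) and proceed by induction on $q$. More precisely, I would re-examine the sequence of approximate solutions $(v_n)$ constructed in the proof of Proposition~\ref{prop:linearfirstexistence} and show that, under the stronger hypotheses $q \ge 1$ and $q_0 = q+2$, this same sequence is uniformly bounded and Cauchy in $X_{\delta,\mu,q}$, with limit equal to the weak solution $w = \mathbb H[f_0]$ produced before. The base case $q=0$ of the induction is already supplied by Proposition~\ref{prop:linearfirstexistence}.

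For the inductive step, assuming the sequence is uniformly bounded and Cauchy in $X_{\delta,\mu,q-1}$, I would apply Lemma~\ref{lem:energyestimateshigherorder} to $v_n$ on $[t_n,\delta]$ (with vanishing initial data at $t_n$), obtaining a bound on $\|\RR{\mu-\epsilon}(t) v_n(t)\|_{H^q(T^1)}$ in terms of two quantities: the integral $\int_0^\delta s^{-1}\|\RR{\mu}(s,\cdot) f_0(s,\cdot)\|_{H^q(T^1)}\,ds$, which is finite and bounded by $\delta^\rho C\|f_0\|_{\delta,\nu,q}$ exactly as in the definition of $G$ in Proposition~\ref{prop:linearfirstexistence} (thanks to $\nu>\mu$), and the integral involving $s^\rho\|\RR{\mu}(s,\cdot) v_n(s,\cdot)\|_{H^{q-1}(T^1)}$, controlled by the inductive hypothesis. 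Applied instead to the difference $\xi_{mn} = v_m - v_n$ (which solves the linear system with zero source on $[t_n,\delta]$ and initial data $v_m(t_n)$), the same estimate yields the Cauchy property in $X_{\delta,\mu-\epsilon,q}$: the driving term involving $f_0$ disappears, and the remaining integral involves $\|\RR{\mu}\xi_{mn}\|_{H^{q-1}(T^1)}$, which tends to zero by induction, while the initial-data contribution $\|\RR{\mu}(t_n) v_m(t_n,\cdot)\|_{H^q(T^1)}$ is controlled by the uniform higher-order bound on $v_m$ combined with the decay afforded by the $\epsilon$-shift as $t_n \searrow 0$.

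The main obstacle is that this procedure naturally lands one in $X_{\delta,\mu-\epsilon,q}$ rather than $X_{\delta,\mu,q}$. I would remove this $\epsilon$-loss by exploiting the gap $\nu>\mu$: pick a small $\eta>0$ such that $\mu+\eta<\nu$ (possible because $\nu>\mu$ uniformly), and rerun the entire argument with the shifted exponent vector $\mu':=\mu+\eta$ in place of $\mu$. Applying Lemma~\ref{lem:energyestimateshigherorder} with exponent $\mu'$ and parameter $\epsilon=\eta$ then yields boundedness and Cauchy property with exponent $\mu'-\epsilon = \mu$, as required. The remark following Lemma~\ref{lem:energyestimateshigherorder} ensures that the constants remain uniform under this adjustment, with possibly different (but uniform) $C$ and $\rho$.

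It remains to extract $Dw$ and handle the extension. Once $w\in X_{\delta,\mu,q}$, I would solve algebraically for the singular time derivative from \Eqref{eq:1stordersystem},
\[
Dw = S_1^{-1}\bigl(f_0 + F_1 w - \ST{}\, t\partial_x w - \NN{} w\bigr),
\]
where $S_1^{-1}$ is uniformly controlled thanks to the uniform positive definiteness assumed in \Defref{def:linearity}, and each right-hand side term lies in $X_{\delta,\mu,q-1}$ by the multiplication properties of the appendix together with the structural decomposition of $S_2, N, F_1$. This simultaneously upgrades the weak solution to a strong one. Combining the bounds derived in the induction yields \eqref{eq:continuityHPDENew} for smooth $f_0$, and since $C^\infty((0,\delta]\times T^1)\cap X_{\delta,\nu,q}$ is dense in $X_{\delta,\nu,q}$, the continuity of $\mathbb H$ expressed by \eqref{eq:continuityHPDENew} allows a unique continuous extension to the whole of $X_{\delta,\nu,q}$, exactly as in the final step of Proposition~\ref{prop:Hlemma}, and the extended operator still maps to weak (hence, by the regularity just established, strong) solutions.
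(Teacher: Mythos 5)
Your proposal follows essentially the same route as the paper: apply the higher-order energy estimate (Lemma~\ref{lem:energyestimateshigherorder}) to the approximating sequence $(v_n)$ and to the differences $\xi_{mn}$ to get a Cauchy sequence in $X_{\delta,\mu-\epsilon,q}$, and then remove the $\epsilon$-loss by rerunning the argument with the exponent shifted upward inside the gap $\nu>\mu$ --- this is exactly the paper's device. The only step that needs more care is your treatment of $Dw$: you cannot ``solve algebraically for the singular time derivative from \Eqref{eq:1stordersystem}'' when $w$ is so far only a weak solution, since that presupposes the differentiability you are trying to establish. The paper avoids this circularity by defining $\widehat v_n:=S_1^{-1}(-S_2 t\partial_x v_n-Nv_n+f_0+F_1 v_n)$, noting that $\widehat v_n=Dv_n$ on each $[\delta_I,\delta]$, showing $(\widehat v_n)$ converges in $X_{\delta,\mu,q-1}$ uniformly in time, and invoking the standard fact that uniform convergence of derivatives yields differentiability of the limit with $Dw=\widehat v$; your formula is then the \emph{conclusion}, not the starting point. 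Since the sequence $(v_n)$ is already central to your argument, this repair fits directly into your framework, and the remainder of your proposal (density extension of $\mathbb H$, uniformity of constants) matches the paper.
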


Observe (without pursuing the details here) that an  estimate similar to  \Eqref{eq:continuityHPDENew} can also be proven for the time derivative of the solution. Additional regularity assumptions on the time derivatives of the coefficients of the equation also allow one to prove corresponding statements regarding higher order time derivatives $D^{k'} w$ for $k'\ge 2$.

\begin{proof} 
  Using $w$ to denote the solution to the singular initial value
  problem whose existence is established in Proposition
  \ref{prop:linearfirstexistence} (as an element of $X_{
    \delta,\mu,0}$), and using $(v_n)$ to denote the sequence of
  approximate solutions which converges to $w$, we first note that it
  follows from their definitions\footnote{By definition,  $v_n(t, \cdot)=0$ for $t\in (0,t_n]$,
  and for $t\in [t_n,\delta]$ it is equal to the unique 
  solution of the Cauchy problem with zero data at $t_n$.}
  that the $v_n$ are contained in $C^0( (0,\delta],H^q(T^1))$ -- even in $X_{\delta,\mu,q}$ -- under the hypothesis of \Propref{prop:linearexistenceregularitypre}. Hence, in the
  same way as we have used the energy estimates in
  \Lemref{lem:energyestimates} to show that $(v_n)$ is a Cauchy
  sequence in $X_{ \delta,\mu,0}$ for the proof of
  \Propref{prop:linearfirstexistence}, we can now use the energy
  estimates of \Lemref{lem:energyestimateshigherorder} to show that
  $(v_n)$ is a Cauchy sequence in $X_{ \delta,\mu-\epsilon,q}$  for an arbitrarily small constant
  $\epsilon>0$. We then show that the limit of this Cauchy sequence
  equals $w$ above.  The solution $w$ is hence in $X_{
    \delta,\mu-\epsilon,q}$ and we get the estimate
\[\|\mathbb H[f_0]\|_{  \delta,\mu-\epsilon,q} \le \delta ^\rho
C\|f_0\|_{ \delta,\nu,q}.
\]

Now, if the equation is of linear symmetric hyperbolic Fuchsian form
for a choice of $\mu$, as we have assumed so far, it is also of linear
symmetric hyperbolic Fuchsian form for a choice of
$\widehat\mu:=\mu+\epsilon$ if $\epsilon>0$ is sufficiently small in
comparison to $\nu-\mu$. Moreover, the assumption that the system is
block diagonal with respect to $\mu$ also implies that this is the
case with respect to $\widehat\mu$; the same is true for the condition involving the energy dissipation matrix. Hence, we can apply the argument in
the previous paragraph based on this choice of $\widehat\mu$.  This leads
to the conclusion that, in fact, the solution $w$ is in $X_{
  \delta,\widehat\mu-\epsilon,q}=X_{ \delta,\mu,q}$ (as opposed to $X_{
  \delta,\mu-\epsilon,q}$ above) and
\[\|\mathbb H[f_0]\|_{ \delta,\mu,q} \le \delta^\rho
C\|f_0\|_{ \delta,\nu,q},
\]
possibly after a slight change of the constants $C$ and $\rho$.

Next we show that the solution $w$ is differentiable  in time. We define
\[\widehat v_n:=S_1^{-1}\left(-S_2 t\partial_x v_n- N v_n+f_0+F_1
  v_n\right).
\] 
We know that $\widehat v_n\in X_{\delta,\mu,q-1}$ and $\widehat v_n(t)=D v_n(t)$ for all $t\in [\delta_I,\delta]$ for any $\delta_I\in (0,\delta)$ and for $n$ sufficiently large;  we cannot choose $\delta_I=0$ here since the time
derivative of $v_n$ is in general not defined at $t=0$.
Moreover, we find from the definition and the convergence of the sequence $v_n$ that
\[||\widehat v_n-\widehat v_m\|_{\delta,\mu,q-1}\le C \|v_n-v_m\|_{\delta,\mu,q}\rightarrow 0\]
for a uniform constant $C>0$. Hence there exists $\widehat v\in X_{\delta,\mu,q-1}$ such that $\widehat v_n\rightarrow \widehat v$. The estimate also holds if we restrict the time interval to $[\delta_I,\delta]$ as above and hence we find that $Dv_n(t)=\widehat v_n(t)\rightarrow\widehat v(t)$ uniformly at every $t\in [\delta_I,\delta]$. It is then a standard result that $w$ is differentiable in $t$ at every $t\in [\delta_I,\delta]$ and further that $\widehat v(t)=Dw(t)$. Since $\delta_I$ can be chosen arbitrarily small, it follows that for all  $t\in (0,\delta]$,
 $w$ is differentiable in $t$ and  $\widehat v=Dw$. Consequently, we find  that $Dw=\widehat v\in X_{\delta,\mu,q-1}$.

To argue that $w$ is a strong solution, we start from the fact that $w$ is a solution of the weak equation whose integral representation can be integrated by parts in both time (using \Eqref{eq:distribtimederivative} in the appendix) and space. We may then choose a suitable sequence of test functions, for example those which are used as mollifiers in \Lemref{lem:mollifiers} in the appendix, so that the resulting system converges pointwise almost everywhere to one of the components of \Eqref{eq:1stordersystem} evaluated at one point $(t,x)$. Doing this for every component and for every point $(t,x)\in (0,\delta]\times T^1$, we determine  that $w$ is actually a solution of the strong equation almost everywhere.
\end{proof}

To this point, we have assumed throughout our analysis that the
matrices $S_{1,1}$, $S_{2,1}$, $N_1$ and $F_1$ are smooth; i.e., we
have not thus far allowed these matrices to be general elements of the  spaces $B_{\delta,\zeta,q,r}$ from \Defref{def:linearity}. If we wish to use our current (linear) results as a tool for proving that there
are (unique) solutions to the singular initial value problem for the (nonlinear)
quasilinear system, we need to generalize these  linear results to
include the possibility that the matrices listed above are not smooth
(since, in the quasilinear case, these matrices are functions of the
solutions, which may not a priori be smooth).

Before carrying out this generalization of the existence (and uniqueness) results for the linear singular initial value problem, we note that we can at this stage assume that the term $F_1$ vanishes. This simplification does not constitute an essential loss of generality since in our work below, we replace the linear source term function $f_0$ by a general quasilinear expression shortly; the resulting expression  then incorporates the effects of the term $F_1$. We recall that the $F_1$-term plays a convenient role in our verification that the higher order energy estimates of Lemma \ref{lem:energyestimateshigherorder} hold. Such a term is generated in the linear equation for $\partial_x v$, which we obtain by taking a spatial derivative of the equation for $v$.

\begin{proposition} [Existence for the linear singular initial value problem for non-smooth coefficients]
  \label{prop:linearexistencenonsmooth}
Suppose
that \Eqref{eq:1stordersystem} is a linear symmetric hyperbolic Fuchsian system for $\delta$, $\mu$, $\zeta$, $q$, $q_0$ and $r$ as in \Defref{def:linearity}  (with $F_1=0$)  for $q_0 =q+ 2$ and $q\ge 2$ (not necessarily with smooth coefficients), and suppose that it  is block diagonal with respect to $\mu$. Suppose that the energy dissipation matrix \Eqref{eq:energydissipationmatrix} is positive definite. Then there exists a
solution $w: (0, \delta] \times T^1 \to \mathbb R^d$ to the singular initial value problem with zero leading order term such that $w \in X_{\delta, \mu, q}$ and $Dw \in X_{\delta, \mu, q-1}$. The solution operator $\mathbb H: f_0\mapsto w$ maps $X_{\delta, \mu, q}$ into $X_{\delta, \nu, q}$, and satisfies
\[ \| \mathbb H[f_0] \|_{\delta, \mu, q} \leq \delta^\rho C \| f_0 \|_{\delta, \nu, q}
 \]
for some positive uniform constants $C$ and $\rho$.
\end{proposition}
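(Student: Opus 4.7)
The strategy is to reduce to \Propref{prop:linearexistenceregularitypre} by an approximation argument, exploiting the uniformity (in the sense of \Defref{def:uniform}) of the constants in the estimate \Eqref{eq:continuityHPDENew} with respect to the perturbation coefficients ranging over $B_{\delta,\zeta,q,r}$.

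First I would regularize the non-smooth coefficients $S_{1,1}, S_{2,1}, N_1$ by spatial convolution against a symmetric mollifier, producing smooth sequences $S_{1,1}^{(j)}, S_{2,1}^{(j)}, N_1^{(j)}$ converging to the originals in $X_{\delta,\zeta,q}$. A symmetric mollifier preserves the symmetry of $S_{1,1}$ and $S_{2,1}$ and, because $\RR{\mu}$ acts as multiplication in $x$, preserves the block-diagonal commutation relations of \Defref{def:nonessentiallycouple}; a mild enlargement of the ball radius $r\to r'\geq r$ accommodates the approximations. (If additional time regularity is needed for the standard Cauchy theory invoked inside \Propref{prop:linearexistenceregularitypre}, a further mollification in $t$ on compact subintervals of $(0,\delta]$ does not disturb any structural hypothesis.) \Propref{prop:linearexistenceregularitypre} then yields for each $j$ a strong solution $w_j\in X_{\delta,\mu,q}$ with $Dw_j\in X_{\delta,\mu,q-1}$ satisfying the uniform bound
\[
\|w_j\|_{\delta,\mu,q}+\|Dw_j\|_{\delta,\mu,q-1}\le\delta^{\rho}C\,\|f_0\|_{\delta,\nu,q},
\]
with $C,\rho$ independent of $j$.

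Next I would establish that $(w_j)$ is Cauchy in a slightly weaker norm. The difference $w_j-w_k$ satisfies the $j$-th approximate system with source
\[
g_{jk}:=-\bigl(S_1^{(j)}-S_1^{(k)}\bigr)Dw_k-\bigl(S_2^{(j)}-S_2^{(k)}\bigr)t\partial_x w_k-\bigl(N^{(j)}-N^{(k)}\bigr)w_k.
\]
Since the coefficient differences tend to zero in $X_{\delta,\zeta,q-1}$ while $w_k$, $Dw_k$, and $t\partial_x w_k$ are uniformly bounded in $X_{\delta,\mu,q-1}$, the product estimates collected in \Sectionref{sec:productsfunctions} place $g_{jk}$ in $X_{\delta,\nu',q-1}$ for some $\nu'>\mu$ (one may take $\nu':=\mu+\tfrac12\zeta_{\min}$, where $\zeta_{\min}>0$ is a componentwise lower bound on $\zeta$), with norm tending to zero as $j,k\to\infty$. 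Invoking \Eqref{eq:continuityHPDENew} at regularity $q-1$ for the difference system shows that $(w_j)$ is Cauchy in $X_{\delta,\mu,q-1}$, with some limit $w$.

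Finally, the uniform $X_{\delta,\mu,q}$ bound combined with weak-$*$ compactness and uniqueness of the $X_{\delta,\mu,q-1}$ limit places $w\in X_{\delta,\mu,q}$ with the claimed estimate; passing to the limit in $Dw_j=(S_1^{(j)})^{-1}\bigl[-S_2^{(j)}t\partial_x w_j-N^{(j)}w_j+f_0\bigr]$ identifies $Dw\in X_{\delta,\mu,q-1}$ and shows that $w$ solves the original non-smooth system in the strong sense. The main obstacle is producing the strict improvement $\nu'>\mu$ for $g_{jk}$: this relies essentially on the strict positivity $\zeta>0$ of the perturbation exponent matrix (which supplies the $t$-decay of the coefficient differences) together with the one-spatial-derivative loss permitted by the hypothesis $q\geq 2$ (so that the products $w_k\cdot(\textrm{coefficient difference})$ remain controlled in $H^{q-1}$). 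The book-keeping needed to preserve symmetry, block diagonality, and positive definiteness of the energy dissipation matrix throughout the mollification is routine but must be tracked carefully.
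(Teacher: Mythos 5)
Your overall strategy coincides with the paper's: mollify the coefficients, apply \Propref{prop:linearexistenceregularitypre} to get a uniformly bounded sequence of solutions, show the sequence is Cauchy in $X_{\delta,\mu,q-1}$ via the difference equation with the source $-(L_{[i]}-L_{[j]})w_{[j]}$, and recover the full $X_{\delta,\mu,q}$ regularity of the limit from the uniform bound. However, there are two points you dismiss as routine that in fact carry most of the technical weight, and one of them is a genuine gap as written.

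First, the time mollification is not optional, and its uniformity is the crux. The energy estimates underlying \Propref{prop:linearexistenceregularitypre} (via \Lemref{lem:energyestimates}) require $DS_{1,1}\in B_{\delta,\xi,0,s}$ for some fixed $s$; a generic element of $B_{\delta,\zeta,q,r}$ need not be time-differentiable at all, so you must mollify in $t$ as well as in $x$. Once you do, the constants $C,\rho$ in \Eqref{eq:continuityHPDENew} are uniform in $j$ only if $\|\RR{\widetilde\zeta}DS_{1,1,[j]}(t,\cdot)\|_{L^2}$ is bounded uniformly in $j$ and $t$ --- otherwise each approximate problem sits in a different perturbation ball and the Cauchy-sequence argument collapses. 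Establishing this bound is a nontrivial computation with the specific space-time kernel of \Lemref{lem:mollifiers} (the key identity being $\int_0^\infty \alpha_j^{-2}\,t\,\phi'((s-t)/\alpha_j)\,ds=-(t/\alpha_j)\phi(-t/\alpha_j)$, which is bounded independently of $j$ and $t$). Your parenthetical ``does not disturb any structural hypothesis'' hides exactly this. Relatedly, the mollified coefficients converge only in $\|\cdot\|_{\delta,\widetilde\zeta,q}$ for exponent matrices $\widetilde\zeta$ strictly smaller than $\zeta$ (\Lemref{lem:mollifiers}, \Conditionref{cond:uniformconvergence}), not in $X_{\delta,\zeta,q}$ as you assert; this is harmless but must be tracked, since the Cauchy estimate then lives at exponent $\mu+\widetilde\zeta_{\min}$.

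Second, the final upgrade to $w\in X_{\delta,\mu,q}$ needs more than ``weak-$*$ compactness and uniqueness of the limit.'' That argument (\Corref{cor:rescueregularityT1}) gives $w(t_0)\in H^q(T^1)$ for each $t_0$ with a uniform bound, i.e., membership in $\widehat X_{\delta,\mu,q}$ at best. Since $X_{\delta,\mu,q}$ is defined as a completion of smooth functions, one must additionally obtain continuity of $t\mapsto\RR{\mu}w(t)$ (the paper does this by viewing $w$ as a solution of the regular Cauchy problem with data $w(t_0)\in H^q$) and then run the whole argument with $\mu$ replaced by $\mu+\widetilde\epsilon$ so that \Lemref{lem:relationspaces} yields $w\in X_{\delta,\mu,q}$. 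Both gaps are fixable along the paper's lines, but neither is book-keeping.
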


Observe that this result also holds in the case that $q_0$ and $q$ are both infinite: If the conditions of this proposition are satisfied for all integers $q_0 =q+2$ and $q \ge 2$, then $w \in X_{\delta, \mu, \infty}$ and $Dw \in X_{\delta, \mu, \infty}$. However, the $q$-parametrized sequence of constants $C$ and $\rho$ occurring in the estimate of the solution operator  may in general be unbounded as $q \rightarrow \infty$.

\begin{proof}
The basic idea is to approximate the non-smooth coefficients $S_{1,1}, S_{2,1}$ and $N_1$ by a sequence of smooth ones and then to apply \Propref{prop:linearexistenceregularitypre} to obtain a sequence of approximate solutions. The main work is then to prove that this sequence converges to the solution of the system with non-smooth coefficients in an appropriate sense.
\vspace{0.5ex}

\noindent\textit{Step~1: A sequence of approximate solutions.} We presume that a linear symmetric hyperbolic Fuchsian system with parameters $\delta$, $r$, $q$, $q_0$, $\mu$ and $\zeta$, and with coefficient matrices $S_{1,1}, S_{2,1}$ and $N_1$ in the perturbation space $B_{\delta,\zeta,q,r}$ has been specified. We assume that this system is block diagonal with respect to $\mu$ and that the energy dissipation matrix is positive definite. According to the definition of the space $B_{\delta,\zeta,q,r}$, there exist sequences $(S_{1,1,[j]})$, $(S_{2,1,[j]})$ and $(N_{1,[j]})$ of smooth elements in $B_{\delta,\zeta,q,r}$ which converge to $S_{1,1}, S_{2,1}$ and $N_1$ (in a way described below). We thus obtain a sequence of linear principal part operators (with smooth coefficients)
\begin{equation}
\label{linprincpartopseq}
L_{[j]}[w]:=(S_{1,0} + S_{1,1,[j]})Dw + \RR{1-\beta}(S_{2,0}+S_{2,1,[j]})t \partial_x w +(N_0+N_{1,[j]})w,
\end{equation}
and hence a sequence of systems of the form $L_{[j]}[w]=f_0$. For each $j$, this is a linear symmetric hyperbolic Fuchsian system for $\delta$, $\mu$, $q$, $q_0$, $\zeta$ and $r$ with smooth coefficients. It is clear that the sequences can be chosen so that, for each $j$, the block diagonal condition with respect to $\mu$ is satisfied, and the energy dissipation matrix is positive definite for each equation. Clearly as well, each $S_{1,1,[j]}$ is differentiable in time and $DS_{1,1,[j]}$ is bounded (in a sense which we make more precise below). Hence,
for each $j$, \Propref{prop:linearexistenceregularitypre} implies the existence of a solution operator $\mathbb H_{[j]}$, and therefore a sequence $w_{[j]}\in X_{\delta,\mu,q}$ defined by $w_{[j]}:=\mathbb H_{[j]}[f_0]$.
\vspace{0.5ex}

\noindent\textit{Step~2: Uniformity of the sequence of coefficients.} 
To study the convergence properties of these approximate solutions
$w_{[j]}$ in more detail we make a special choice of the sequences $(S_{1,1,[j]})$, $(S_{2,1,[j]})$ and $(N_{1,[j]})$ as in \Lemref{lem:mollifiers} in the appendix (where we  replace the two indices $i$ and $j$ by just one index $j$). The advantage of this choice is that we will be able to argue that $\|\RR{\widetilde\zeta}DS_{1,1,[j]}(t,\cdot)\|_{L^2}$ is uniformly bounded in $j$ and $t$ under the hypotheses of \Propref{prop:linearexistencenonsmooth}, which will be important for the following argument. The slight disadvantage, however, as stated in
 \Conditionref{cond:uniformconvergence} of  \Lemref{lem:mollifiers}, is that the convergence is guaranteed only with respect to a norm
$\|\cdot\|_{\delta,\widetilde\zeta,q}$ for any exponent matrix $\widetilde\zeta$ \textit{smaller} than $\zeta$; fortunately we will see below that this is not significant. Let us choose such an exponent matrix $\widetilde\zeta$ with strictly positive entries. Let us moreover suppose for the moment that a uniform bound for $\|\RR{\widetilde\zeta}DS_{1,1,[j]}(t,\cdot)\|_{L^2}$ has been found (which we show shortly). By setting
$\xi=\widetilde\zeta$ and choosing some uniform value of the (function space ball) radius $s$ in  Condition~(ii) of \Propref{prop:linearexistenceregularitypre}, we are allowed to apply \Propref{prop:linearexistenceregularitypre}
in such a way that each of the approximate equations $L_{[j]}[w]=f_0$ is a perturbation 
 of \textit{one} common equation in the perturbation space $B_{\delta,\widetilde\zeta,q,r}$. A particular consequence is then that we obtain an estimate for the operators $\mathbb H_{[j]}$ of the form \Eqref{eq:continuityHPDENew} with $C$ \textit{independent of $j$}.

To establish this uniform bound of $\|\RR{\widetilde\zeta}DS_{1,1,[j]}(t,\cdot)\|_{L^2}$, we use \Eqsref{eq:mollifier} and \eqref{eq:mollifier2} from \Lemref{lem:mollifiers} in the appendix to obtain
\begin{align*}
\RR{\widetilde\zeta} DS_{1,1,[j]}
=&\RR{\widetilde\zeta} D\RR{-\widetilde\zeta} \RR{\widetilde\zeta} S_{1,1,[j]}\\
&+\int_0^\infty\int_{T^1}(\RR{\widetilde\zeta} \widehat S_{1,1})(s,y)\frac 1{\alpha_j}\phi\left(\frac{x-y}{\alpha_j}\right) (-1)\frac 1{\alpha_j^2}t\phi'\left(\frac{s-t}{\alpha_j}\right)\,dy\,ds,
\end{align*}
where $\widehat S_{1,1}$ is the extension introduced in  \Lemref{lem:mollifiers}.
We wish to estimate this expression in the $L^2$-norm. The first term can be estimated in the $L^2$-norm by $C(\widetilde\zeta)\|S_{1,1,[j]}\|_{\delta,\widetilde\zeta,q}$ with a constant determined by $\widetilde\zeta$. The second term is treated as follows\footnote{We write e.g.\ $L^2_x(T^1)$ to fix the integration variable corresponding to the norm.}
\begin{align*}
  &\left\|\int_0^\infty\int_{T^1}(\RR{\widetilde\zeta} \widehat S_{1,1})(s,y)\frac 1{\alpha_j}\phi\left(\frac{x-y}{\alpha_j}\right)\frac 1{\alpha_j^2}t\phi'\left(\frac{s-t}{\alpha_j}\right)\,dy\,ds\right\|_{L^2_x(T^1)}\\
&\le \left\|\sup_{s\in (0,\delta]}\left(\int_{T^1}(\RR{\widetilde\zeta} S_{1,1})(s,y)\frac 1{\alpha_j}\phi\left(\frac{x-y}{\alpha_j}\right)\,dy\right)
\left|\int_0^\infty\frac 1{\alpha_j^2}t\phi'\left(\frac{s-t}{\alpha_j}\right)\,ds\right|\right\|_{L^2_x(T^1)}\\
&\le \left\|\sup_{s\in (0,\delta]}\left(\left\|(\RR{\widetilde\zeta} S_{1,1})(s,y)\right\|_{L _y^\infty(T^1)}
\left\|\frac 1{\alpha_j}\phi\left(\frac{x-y}{\alpha_j}\right)\right\|_{L^1_y(T^1)}\right)\right\|_{L^2_x(T^1)}\\
&\qquad\cdot\left|\int_0^\infty\frac 1{\alpha_j^2}t\phi'\left(\frac{s-t}{\alpha_j}\right)\,ds\right|,
\end{align*}
where we have used the definition of the extension $\widehat S_{1,1}$ in the second line.
The properties of the kernel $\phi$ imply that the term with the $L^1_y$-norm is unity (independently of $x$). Since $q\ge 1$, we can use  Sobolev embedding to estimate the term with the $L^\infty_y$-norm by the $H^q_y$-norm and hence the $\sup_s$-term by the norm $\|\cdot\|_{\delta,\widetilde\zeta,q}$. As a consequence, all quantities are independent of $x$, and therefore we find  that the second term of $\RR{\widetilde\zeta} DS_{1,1,[j]}$ above can in total be estimated in the $L^2$-norm as less than or equal to 
\begin{equation*}
C \left\|S_{1,1}\right\|_{\delta,\widetilde\zeta,q}
\left|\int_0^\infty\frac 1{\alpha_j^2}t\phi'\left(\frac{s-t}{\alpha_j}\right)\,ds\right|,
\end{equation*}
for some constant $C>0$ which is, in particular, independent of $t$ and $j$. Hence, we only need to estimate the last integral:
\begin{equation*}
  \int_0^\infty\frac 1{\alpha_j^2}t\phi'\left(\frac{s-t}{\alpha_j}\right)\,ds
  =t \int_{-t/\alpha_j}^{\infty} \frac 1{\alpha_j}\phi'(\sigma)d\sigma
  =-\frac t{\alpha_j}\phi(-t/\alpha_j).
\end{equation*}
This takes  the form $x\cdot \phi(x)$ for all $x\in\R$. This has the property that it vanishes for all $|x|\ge 1$ and is bounded for all $|x|\le 1$. Summarizing, we have thus confirmed that
\[\|DS_{1,1,[j]}\|_{\delta,\widetilde\zeta,0}\le C \left\|S_{1,1}\right\|_{\delta,\widetilde\zeta,q}
\]
for every $j$, and for a constant $C$, which, in particular, does not depend on $j$.
\vspace{0.5ex}

\noindent\textit{Step~3: Convergence of the sequence $(w_{[j]})$ in $X_{\delta,\mu,q-1}$.}
We seek to show that the sequence $(w_{[j]})$ converges\footnote{The reason for considering $q-1$ and not $q$ is made clear below.} in $ X_{ \delta, \mu, q-1}$. 
We do this by showing  that $(w_{[j]})$ is a Cauchy sequence:  Setting $\xi_{[ij]} := w_{[i]} - w_{[j]}$, we derive the equation
\begin{equation}
\label{CauchySeqEqn}
L_{[i]} \xi_{[ij]}= - (L_{[i]}-L_{[j]})w _{[j]},
\end{equation}
where we interpret the right hand side as a source term for this linear equation for $\xi_{[ij]}$. 
One readily checks that \Eqref{CauchySeqEqn} is a linear symmetric hyperbolic system for the same parameters as above, but with differentiability index $q-1$ (since the source term incorporates one spatial derivative); hence, so long as $q\ge2$, we may apply \Propref{prop:linearexistenceregularitypre} also to this equation. 
We thus obtain, as a consequence of \Eqref{eq:continuityHPDENew},
\[
\|w_{[i]}-w_{[j]}\|_{ \delta,\mu,q-1}\le C \left\|-(L_{[i]}-L_{[j]})
  w_{[j]}\right\|_{ \delta,\mu+\widetilde\zeta_{\min},q-1},
\] 
where the (scalar) constant $\widetilde\zeta_{\min}$ is the minimal value of all
components of $\widetilde\zeta$ at all $x\in T^1$ (note that $\widetilde\zeta_{\min}$ is
positive), and where the generic constant $C$ represents the constant
in \Eqref{eq:continuityHPDENew} (for $\nu$ replaced by
$\mu+\widetilde\zeta_{\min}$). It is  crucial  here that the constant $C$
does not depend on the index $i$; this is a consequence of the uniformity of
the constant in \Eqref{eq:continuityHPDENew}.
If we now expand out the definition of
$L_{[j]}$, using the block diagonality conditions, the Sobolev
embedding (for spatial dimension one and $q\ge 2$) and the Moser
inequality (stated for instance in Proposition~3.7 in Chapter~13 of \cite{Taylor:2011wn}), we find that
\begin{align*}
  \|w_{[i]}-w_{[j]}\|_{ \delta,\mu,q-1}\le 
  C\Bigl(&\|S_{1,1,[i]}-S_{1,1,[j]}\|_{ \delta,\widetilde\zeta,q-1}
  +\|S_{2,1,[i]}-S_{2,1,[j]}\|_{ \delta,\widetilde\zeta,q-1}\\
  &+\|N_{1,[i]}-N_{1,[j]}\|_{ \delta,\widetilde\zeta,q-1}
  \Bigr)\|f_0\|_{ \delta,\nu,q}.
\end{align*}
It immediately follows that $(w_{[j]})$ is a Cauchy sequence in
$X_{ \delta,\mu,q-1}$, and hence the sequence has a limit  $w\in
X_{ \delta,\mu,q-1}$.
\vspace{0.5ex}

\noindent\textit{Step~4: The limit $w$ is a solution of the original equation.}
 Standard arguments of the sort carried out
in the proof of Proposition \ref{prop:linearfirstexistence} show
that $w$ is a weak solution of the system $Lw=f_0$ with non-smooth coefficients. 
Since each
$w_{[j]}$ is differentiable in time and a strong solution of the equation, we can solve
each equation for
$Dw_{[j]}$. Relying on an argument similar to that used in the proof of
\Propref{prop:linearexistenceregularitypre}, we see that it follows that $w$ is differentiable
in time, with $Dw \in X_{\delta,\mu,q-2}$, and therefore  the limit $w$ is actually a strong
solution of the equation.
\vspace{0.5ex}

\noindent\textit{Step~5: The limit $w$ is in $X_{\delta,\mu,q}$.}
We now show that $w$ is in fact in $X_{\delta,\mu,q}$ (and not
only in $X_{\delta,\mu,q-1}$) and consequently  $Dw\in
X_{\delta,\mu,q-1}$ (and not only in $X_{\delta,\mu,q-2}$).  It
follows from its definition that the sequence $(w_{[j]})$ is contained
in $X_{\delta, \mu, q}$, and furthermore, as a consequence of the operator estimate
\Eqref{eq:continuityHPDENew}, we have that
\begin{equation}
  \label{eq:boundednessseq}
  \|w_{[j]}\|_{ \delta,\mu,q}\le C \|f_0\|_{ \delta,\nu,q}.
\end{equation} 
Here, the constant $C$ is independent of the index $j$
as a result  of our discussion of uniformity above. We thus find  that the
sequence is uniformly bounded in $X_{ \delta, \mu, q}$. If we now fix
a time $t_0\in (0,\delta)$, then the sequence $(w_{[j]}(t_0))$ is 
bounded in the Hilbert space $H^q(T^1)$.
Since the sequence $(w_{[j]}(t_0))$ is also
convergent in the Hilbert space $H^{q-1}(T^1)$, we can apply \Corref{cor:rescueregularityT1} from the Appendix. We hence find that the limit $w(t_0)$ is in $H^{q}(T^1)$.
We consider the function $w$, which is a strong solution in $X_{\delta,\mu,q-1}$ with $w(t_0)\in H^q(T^1)$, to be a
strong solution of the Cauchy problem of the linear symmetric
hyperbolic equation with the 
``initial data'' $w(t_0)$ contained in $H^q(T^1)$. Given that the coefficients of the system are contained in $C^0((0,\delta], H^q(T^1))$, the
standard theory of linear symmetric hyperbolic equations (see \cite{Taylor:2011wn})
 implies that $w$, and therefore $\RR{\mu} w$ (recall that $\RR{\mu}$ is in $C^\infty((0,\delta]\times T^1)$) is a continuous map $(0,\delta]\rightarrow
H^q(T^1)$. In fact, this latter map is bounded as a consequence of taking the limit $j\rightarrow\infty$ of
\Eqref{eq:boundednessseq}; i.e., $w\in \widehat X_{\delta,\mu,q}$ (see the appendix). Replacing $\mu$ by $\mu+\widetilde\epsilon$ in all of the previous steps, we see that the same arguments go through as long as $\widetilde\epsilon>0$ is sufficiently small. We therefore find that  $\RR{\mu+\widetilde\epsilon}w$ is a bounded continuous map $(0,\delta]\rightarrow H^q(T^1)$. It then follows from 
\Lemref{lem:relationspaces} in the appendix  that
$w\in X_{ \delta,\mu,q}$.
\vspace{0.5ex}

\noindent\textit{Step~6: Properties of the solution operator $\mathbb H$.}
We have thus 
extended the solution operator $\HPDELinOp$ to the case of
non-smooth coefficients by the above limit procedure. It can thus be checked that the estimate 
\Eqref{eq:continuityHPDENew} still holds with uniform constants.
\end{proof}

Having  obtained a comprehensive existence result for the singular
initial value problem of linear symmetric hyperbolic Fuchsian
equations, we  now show uniqueness of these solutions.

\begin{proposition}[Uniqueness for the linear singular initial value problem]
\label{prop:uniqueness}
Suppose that all of the conditions of
\Propref{prop:linearexistencenonsmooth} hold for a chosen singular
initial value problem (with zero leading-order term). The solution for
this problem is unique in $X_{\delta,\mu,q}$.
\end{proposition}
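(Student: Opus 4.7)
My plan is to reduce uniqueness to a homogeneous energy identity that is forced to vanish at the singularity. Given two candidate solutions $w_1, w_2 \in X_{\delta,\mu,q}$, their difference $w := w_1 - w_2$ belongs to $X_{\delta,\mu,q}$ with $Dw \in X_{\delta,\mu,q-1}$ and satisfies the \emph{homogeneous} equation $S_1 Dw + S_2\, t\partial_x w + N w = 0$ as a strong solution (the source $f_0$ cancels and $F_1 = 0$ by the standing hypothesis of \Propref{prop:linearexistencenonsmooth}). I would prove that $w \equiv 0$.

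I would first treat the case of smooth coefficients, where $w$ coincides on every $[t_0,\delta]$ with the unique Cauchy solution with data $w(t_0) \in H^q(T^1)$, so \Lemref{lem:energyestimates} applies. Using the given exponent $\mu$ directly yields only $\sqrt{E_{\mu,\kappa,\gamma}[w](t)} \le \sqrt{E_{\mu,\kappa,\gamma}[w](t_0)}$, and the right-hand side is merely bounded, not vanishing, as $t_0 \searrow 0$. To force vanishing I would re-run the estimate with the shifted exponent $\widetilde\mu := \mu - \epsilon\,\mathbf{1}$ for a scalar $\epsilon > 0$ chosen small enough that the perturbed dissipation matrix $M_0 - \epsilon S_{1,0}$ remains positive definite. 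Block--diagonality transfers automatically from $\mu$ to $\widetilde\mu$ since the shift is a multiple of the identity, and the remaining hypotheses of \Lemref{lem:energyestimates} carry over intact. Because $w \in X_{\delta,\widetilde\mu,q}$ and
\[
\|\RR{\widetilde\mu}(t_0) w(t_0)\|_{L^2} = t_0^{\epsilon} \|\RR{\mu}(t_0) w(t_0)\|_{L^2} \le t_0^{\epsilon} \|w\|_{\delta,\mu,0} \longrightarrow 0 \quad \text{as } t_0 \searrow 0,
\]
sending $t_0 \to 0$ in the homogeneous energy estimate forces $E_{\widetilde\mu,\kappa,\gamma}[w] \equiv 0$, hence $w \equiv 0$.

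To drop the smoothness assumption I would recycle the mollification scheme from \Propref{prop:linearexistencenonsmooth}: take smooth approximations $S_{1,1,[j]}, S_{2,1,[j]}, N_{1,[j]}$ via \Lemref{lem:mollifiers}, producing smooth-coefficient operators $L_{[j]}$. The identity $L[w] = 0$ rewrites as $L_{[j]}[w] = (L_{[j]} - L)[w]$, with right-hand side
\[
(S_{1,1,[j]} - S_{1,1})\,Dw + \RR{1-\beta}(S_{2,1,[j]} - S_{2,1})\,t\partial_x w + (N_{1,[j]} - N_1)\,w,
\]
which by the product estimates in the appendix lies in $X_{\delta, \mu + \widetilde\zeta_{\min},\, q-1}$ and whose norm there tends to zero as $j\to\infty$, since the coefficient differences vanish in $\|\cdot\|_{\delta,\widetilde\zeta,q-1}$ by \Lemref{lem:mollifiers}. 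The smooth-case uniqueness just established, applied to $L_{[j]}$ together with \Propref{prop:linearexistenceregularitypre}, identifies $w$ with $\mathbb H_{[j]}\bigl[(L_{[j]} - L)[w]\bigr]$ in $X_{\delta,\mu,q-1}$. The crucial point is that the operator bound \Eqref{eq:continuityHPDENew} for $\mathbb H_{[j]}$ is $j$-uniform --- this is precisely the uniformity extracted in Step~2 of the existence proof via the uniform $L^2$-bound on $DS_{1,1,[j]}$. Therefore
\[
\|w\|_{\delta,\mu,q-1} \le C\,\|(L_{[j]} - L)[w]\|_{\delta,\, \mu + \widetilde\zeta_{\min},\, q-1} \longrightarrow 0,
\]
so $w = 0$ in $X_{\delta,\mu,q-1}$, and \emph{a fortiori} in $X_{\delta,\mu,q}$.

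The principal obstacles are twofold: the energy method nominally requires the time derivative $DS_{1,1}$, which is not provided by the hypotheses, and even in the smooth setting the naive application of \Lemref{lem:energyestimates} at exponent $\mu$ fails to force decay of the boundary term at $t_0 = 0$. Sacrificing an arbitrarily small scalar $\epsilon$ from the exponent dispatches the second difficulty, and the uniform mollification machinery of \Propref{prop:linearexistencenonsmooth} dispatches the first --- so the whole argument is genuinely parasitic on the existence theorem rather than independent of it.
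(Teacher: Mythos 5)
Your proof is correct, and its engine is exactly the paper's: form the difference $\omega$ of two solutions, observe it solves the homogeneous equation, run the energy estimate at the shifted exponent $\mu-\epsilon$, and use $\|\RR{\mu-\epsilon}\omega\|_{L^2}(t_0)=t_0^\epsilon\|\RR{\mu}\omega\|_{L^2}(t_0)\to 0$ together with the boundedness of $\RR{\mu}\omega$ to kill the Cauchy data term as $t_0\searrow 0$. Where you diverge is in the treatment of non-smooth coefficients. The paper applies the energy estimate \Eqref{eq:EnergyEstimate2} directly to the difference $\omega$, asserting only that the $H^2$-regularity of the coefficients and of $\omega$ on $[t_0,\delta]$ suffices; it does not revisit the smoothness assumption (nor the hypothesis $DS_{1,1}\in B_{\delta,\xi,0,s}$) under which \Lemref{lem:energyestimates} is stated. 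You instead reduce to the smooth case by mollifying the coefficients, rewriting $L[w]=0$ as $L_{[j]}[w]=(L_{[j]}-L)[w]$, invoking smooth-case uniqueness to identify $w$ with $\mathbb H_{[j]}\bigl[(L_{[j]}-L)[w]\bigr]$, and then using the $j$-uniform operator bound \Eqref{eq:continuityHPDENew} to conclude $\|w\|_{\delta,\mu,q-1}\to 0$. This buys rigor on precisely the point the paper elides, at the cost of importing the full uniformity machinery of Step~2 of \Propref{prop:linearexistencenonsmooth} and of losing one derivative in the intermediate norm (harmless, since the conclusion $w=0$ in $X_{\delta,\mu,q-1}$ trivially gives $w=0$ in $X_{\delta,\mu,q}$). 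Both routes are sound; yours is longer but self-contained with respect to the stated hypotheses of the lemmas it cites.
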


\begin{proof}
  We consider $w$ and $\widetilde w$ to be a pair of (generally different) solutions to the same singular initial  
  value problem, and we define $\omega:=w- \widetilde w$ to be the difference
  between the two. It follows that $\omega$ is a solution of the same
  equation with vanishing source-term $f_0$, with $\omega(t)$ being an element of
  $H^2(T^1)$ for  every time $t\in (0,\delta]$.  Choosing any $t_0 \in
  (0, \delta]$, we can also consider $\omega|_{(t_0, \delta]}$ to be the
  unique solution of the Cauchy initial value problem (for the same
  linear PDE system) with initial data $\omega(t_0)$. Since the
  solution $\omega$ together with the coefficients have
  $H^2$-regularity and since $S_1$ is guaranteed to be positive
  definite on the whole time interval, we may apply the energy estimate
  \Eqref{eq:EnergyEstimate2}. We obtain
  (replacing $\mu$ by $\mu-\epsilon$, which is allowed for any
  $\epsilon>0$)
  \begin{equation}
    \label{yup}
    ||\RR{\mu-\epsilon} \omega ||_{L^2}(t) \le C ||\RR{\mu-\epsilon} \omega ||_{L^2}(t_0),
  \end{equation}
  for all $t\in (t_0,  \delta]$, with the constant $C$ independent of $t$.
  Observe that it follows from the definition of $\RR{\mu}$ that we can rewrite the right hand side of \Eqref{yup}  as 
  \begin{equation}
  C || \RR{\mu-\epsilon}\omega||_{L^2}(t_0) = t_0^\epsilon C || \RR{\mu}\omega||_{L^2}(t_0), 
  \end{equation}
  so that \Eqref{yup} takes the form
 \begin{equation}
 \label{yupyup}
      ||\RR{\mu-\epsilon} \omega ||_{L^2}(t) \le  t_0^\epsilon C || \RR{\mu}\omega||_{L^2}(t_0).  
  \end{equation} 
  We now take the limits of \Eqref{yupyup}  as $t_0 \to 0$, noting that
  the left hand side of the equation and the constants are unchanged
  by taking this limit.  Since $\RR{\mu} \omega$ is a bounded map from
  $(0,\delta]$ to $L^2(T^1)$, the limit as $t_0 \to 0$ of the right
  hand side of \Eqref{yup} vanishes. It thus follows that for all $t
  \in (0, \delta]$,
  \begin{equation*}
    || \RR{\mu-\epsilon} \omega||_{L^2}(t) =0.
  \end{equation*}
  Then since $\RR{\mu-\epsilon}(t,x)$ is bounded positive at any fixed $t$ on
  $(0, \delta]$, we deduce that $\omega(t,x)=0$ at all $t$; uniqueness
  follows.
\end{proof}

\subsubsection{The nonlinear theory}

The results obtained in Section \ref{sec:lineartheory} pertain
exclusively to linear systems. In this section, we use those results
together with a fixed point iteration procedure to prove Theorem
\ref{th:Wellposedness1stOrderFiniteDiff}, which establishes
existence and uniqueness of solutions to the singular initial value problem for the
nonlinear system \Eqref{eq:1stordersystem} with a (no longer necessarily
vanishing) leading-order term $u_0$.

To start, it is useful to rewrite \Eqref{eq:1stordersystem} in a convenient form. Recalling the definition (see \Eqref{eq:DefLPDE}) of the operator $\LPDE{u}{v} := S_1(u)Dv +S_2(u)t\partial_x v + N(u)v$, we may write \Eqref{eq:1stordersystem} in the form $\LPDE{u}{u} =\f{u}$. Despite the nonlinear nature of $\widehat L$, this operator is linear in the sense that $\LPDE{u}{v_1 + v_2}=\LPDE{u}{v_1}+\LPDE{u}{v_2}$. Hence if we let 
$u_0$ denote a chosen leading-order term (satisfying the hypotheses of Theorem
\ref{th:Wellposedness1stOrderFiniteDiff}), if we set $u=u_0 + w$, and if we recall the definition (see \Eqref{eq:defFLu})
\begin{equation*}
\FLu{w}:= \FPDEu{w}-\LPDEu{w}{u_0},
\end{equation*} 
then  \Eqref{eq:1stordersystem} takes the form
\begin{equation}
\label{LPDEq}
\LPDEu{w}{w}=\FLu{w}.
\end{equation}
The linear analysis discussed in Section \ref{sec:lineartheory} does
not apply to the nonlinear equation \Eqref{LPDEq} directly. Nevertheless, if we linearize this equation by fixing $\widetilde w\in B_{\delta,\mu,q,s}$ (for some $s>0$) and write
\begin{equation}
  \label{eq:linearization}
  \LPDEu{\widetilde w}{w} =\FLu{\widetilde w},
\end{equation}
then (presuming the hypothesis of \Theoremref{th:Wellposedness1stOrderFiniteDiff}) the techniques of Section \ref{sec:lineartheory} are
applicable.  In applying these techniques, we assume
that $\delta$ has been chosen sufficiently small so that
\Conditionref{enum:condition2lindef} of \Defref{def:linearity} is
satisfied; this implies no loss of generality since, as we see below, the argument leading to the proof of \Theoremref{th:Wellposedness1stOrderFiniteDiff} requires further shrinkage of the time
interval. It is important to note that, for  every $\widetilde w\in B_{\delta,\mu,q,s}$, the hypothesis of Theorem 2.4 implies the existence of common quantities $\zeta$ and $r$ (as in Definition 2.2) so that $S_{1,1}(u_0+\widetilde w),S_{2,1}(u_0+\widetilde w)$ and $N_1(u_0+\widetilde w)$ are all contained $ B_{\delta,\zeta,q,r}$. Moreover, if $\widetilde w\in B_{\delta',\mu,q,s}$ for any $\delta'<\delta$, then the same statement regarding $S_{1,1}$, $S_{2,1}$ and $N_1$ holds for the same common $\zeta$ and $r$, but with $\delta$ replaced by $\delta'$.

Replacing the right--hand side of this equation by a fixed function $\phi \in X_{\delta,\nu,q}$, we readily check that the linear system 
\begin{equation}
\label{Eq:linearization}
\LPDEu{\widetilde w}{w}(t,x)=\phi (t,x),
\end{equation}
is of linear symmetric hyperbolic Fuchsian form
(\Defref{def:linearity}) for $q$ and $q_0=q+2$ and for a sufficiently
large constant $r$. Hence, it follows from
Propositions \ref{prop:linearexistencenonsmooth} and \ref{prop:uniqueness} that the system has a
unique solution $w\in X_{\delta,\mu,q}$ (we only require $q\ge 2$ at
this stage of the proof), and we can define the
corresponding solution operator $\HPDEuOp{\widetilde w}$ which maps
the source term $\phi$ to the solution $w= \HPDEu{\widetilde
  w}{\phi}$. The case $\phi=\FLu{\widetilde w}$ corresponds to
\Eqref{eq:linearization}; thus we compose $\HPDEuOp{\widetilde w}$
with $\FLuOp$ to define the operator $\GuOp$ as follows:
\begin{equation*}
\Gu{\widetilde w} := \HPDEu{\widetilde w}{\FLu{\widetilde w}}.
\end{equation*}
Hence, $w=\Gu{\widetilde w}\in X_{\delta,\mu,q}$ is the unique solution of the singular initial value problem
of \Eqref{eq:linearization}.
In terms of $\GuOp$, we see that $w$ is a solution of the singular
initial value problem for the nonlinear equation \eqref{eq:1stordersystem} with leading-order
term $u_0$ if and only if it satisfies $ w= \Gu{w}$; i.e., if and only if $w$ is a
fixed point of $\GuOp$.

The operator $\GuOp$ is the key to the following fixed point iteration
argument. We define the sequence of functions $(w_N)$ by setting
$w_0=0$, and defining $w_{N+1}=\Gu{w_N}$ for $N\in
\N.$ To control this sequence, we need uniform bounds; i.e., we wish
to show that each element of the sequence is contained in 
$B_{\delta,\mu,q,s}$. Suppose that this is true for
$w_0,\ldots,w_N$. 
It follows
from the hypothesis of Theorem
\ref{th:Wellposedness1stOrderFiniteDiff} (given that $w_0=0$) that
\[\|\FLu{w_N}\|_{\delta,\nu,q}\le C \|w_N\|_{\delta,\mu,q}
+\|\FLu{0}\|_{\delta,\nu,q}.
\] 
The constant $C>0$ does not depend on $N$.
Using
the definition of $w_{N+1}$, together with
\Eqref{eq:continuityHPDENew}, we have that
\[\|w_{N+1}\|_{ \delta,\mu,q}\le \delta^\rho
\widetilde C\|\FLu{w_N}\|_{ \delta,\nu,q},
\] 
where $\widetilde C$ and $\rho>0$ are  constants which also do not depend on
$N$. Combining, we obtain
\[\|w_{N+1}\|_{ \delta,\mu,q}\le \delta^\rho
\widetilde C\left(C \|w_N\|_{ \delta,\mu,q}
+\|\FLu{0}\|_{\delta,\nu,q}\right).
\]
We recall that the uniformity of the constants implies that the same
estimate holds with the same constants if we choose to formulate the same singular initial
value problem in terms of a constant $\bar\delta\in (0,\delta)$
instead of $\delta$ itself.
Since we have supposed that $w_N$ is contained in $B_{\delta,\mu,q,s}$ -- that is, $\|w_N\|_{ \delta,\mu,q}\le s$ --
we can find such a sufficiently small $\bar\delta$ so that
\[\bar\delta^\rho
\widetilde C C\le 1/2 \quad\text{and}\quad 
\bar\delta^\rho \widetilde C\|\FLu{0}\|_{ \delta,\nu,q}\le s/2,
\] 
while preserving the
bound $\|w_N\|_{\bar \delta,\mu,q}\le s$.
This  can be done, since $\|\FLu{0}\|_{\bar \delta,\nu,q}\le \|\FLu{0}\|_{\delta,\nu,q}$. For this diminished choice $\bar \delta$, we
thus determine that $\|w_{N+1}\|_{\bar \delta,\mu,q}\le s$. Since the above
estimates do not depend on the index $N$, it follows that the whole sequence is
bounded, and we have  $(w_N)\subset B_{\bar \delta,\mu,q,s}$.

We now consider an arbitrary pair of functions $w, v \in B_{\bar \delta, \mu, q,s}$, and we calculate the following estimate for the
norm of the difference of the operator $\GuOp$ acting on each of
these:
\begin{align*}
  \|\Gu{w}-\Gu{v}\|_{\bar \delta,\mu,q-1}
  \le \, 
& \|\HPDEu{w}{\FLu{w}}-\HPDEu{w}{ \FLu{v}} \|_{\bar \delta,\mu,q-1}\\
  &+\| \HPDEu{w}{ \FLu{v}} 
  -\HPDEu{v}{\FLu{v}}
  \|_{\bar \delta,\mu,q-1}.
\end{align*}
Note that, for reasons discussed below, we work with $\| \cdot
\|_{\bar \delta,\mu, q-1}$ rather than $\| \cdot \|_{\bar \delta,\mu, q}$. It follows from the hypothesis of Theorem
\ref{th:Wellposedness1stOrderFiniteDiff} and from
\Eqref{eq:continuityHPDENew} that the first term on the right hand
side of this estimate satisfies the inequality
\[
\|\HPDEu{w}{\FLu{w}}-\HPDEu{w} {\FLu{v}}\|_{\bar\delta,\mu,q-1} 
\le \mathcal {C} \, \|w-v\|_{\bar\delta, \mu, q-1},
\] 
with the Lipschitz constant $\mathcal C$ smaller than unity so long as
we allow a further decrease in $\bar\delta$; the argument for this is the same as for the semilinear case \cite{Beyer:2010fo,Beyer:2010tb}. This controls this first
term.

To estimate the second term on the right hand side, we set
\[
w_A:=\HPDEu{w}{\FLu{v}},\qquad
\qquad 
w_B:=\HPDEu{v}{\FLu{v}}.
\] 
It follows from the definition of $\HPDEuOp{w}$ that 
\[
\LPDEu{w} {w_A}=\FLu{v}, \qquad
\qquad 
\LPDEu{v} {w_B}=\FLu{v}.
\]
Therefore, setting $\LPDEu{w}{w_A}$ and $\LPDEu{v} {w_B}$
equal, and using the linear property of the operator $\LPDEuOp{w} $
noted above, we derive
\begin{equation}
\label{eq:wAwB}
\LPDEu{w}{w_A-w_B}=-(\LPDEuOp{w}-\LPDEuOp{v})[w_B].
\end{equation}
The right hand side of this equation is similar to a term which appears in \Eqref{CauchySeqEqn}; thus
 we can treat it using similar techniques to those used in the proof of
Proposition \ref{prop:linearexistencenonsmooth}. In doing this, we rely on the hypothesis for \Theoremref{th:Wellposedness1stOrderFiniteDiff}, and  we use the
condition $q\ge 3$  in order to guarantee that the source term of
\Eqref{eq:wAwB} has at least two spatial derivatives.  We thus obtain
\begin{align*}
&  \|\HPDEu{w}{\FPDEu{v}}
  -\HPDEu{v}{\FPDEu{v}}
  \|_{\bar \delta,\mu,q-1}\\
&  \le 
 \bar  \delta^\rho \widehat C\Bigl(
 \|\SOHus{w}-\SOHus{v}\|_{\bar \delta,\zeta,q-1}
 + \|\STHus{w}-\STHus{v}\|_{\bar \delta,\zeta,q-1}\\
 & \hskip1.3cm + \|\NHus{w}-\NHus{v}\|_{\bar \delta,\zeta,q-1}
  \Bigr) 
 \end{align*} 
 for a constant $\widehat C$ which may depend on $s$, but not on the
 particular choice of $\bar \delta$. 
 
 Again using the hypothesis of
 \Theoremref{th:Wellposedness1stOrderFiniteDiff}, we see that for a
 choice of a possibly even smaller $\bar \delta$, which we now label
 $\widetilde \delta$, we can control this second term from the right hand
 side of the estimate for $\|\Gu{w}-\Gu{v}\|_{\widetilde \delta,\mu,q-1}$
 via a term of the form $\mathcal {C} \|w-v\|_{\widetilde \delta, \mu, q-1}$ for $\mathcal C\in (0,1)$. We thus determine that indeed the
 operator $\GuOp$ is a contraction mapping on $B_{\widetilde\delta,\mu,q-1,s}$ (for sufficiently small $\widetilde \delta$). It
 follows from standard arguments that the sequence $(w_N)$ has a unique limit
 $w$, contained in $B_{\widetilde \delta,\mu,q-1,s}$, which is a fixed
 point for $\GuOp$ and hence is a weak solution.
 
The sequence $(w_N)\subset B_{\widetilde \delta,\mu,q,s}$ is bounded
 in $X_{\widetilde \delta,\mu,q}$, but to this stage is known only to converge in
 $X_{\widetilde \delta,\mu,q-1}$ to $w$. This situation is similar to that encountered in
 the proof of \Propref{prop:linearexistencenonsmooth}. A similar argument involving \Corref{cor:rescueregularityT1} and the standard Cauchy problem of hyperbolic equations implies that $w$ is indeed an element of $X_{\widetilde \delta,\mu,q}$.

To show that $w$ is the remainder of a strong solution of the singular initial value problem,
 it remains for us to check that $w$ is differentiable in time. The
 definition of the sequence $(w_N)$ shows that for each integer $N$,
 $Dw_N$ exists and is contained in $X_{\widetilde \delta, \mu,q-1}$. Furthermore, this sequence converges in $X_{\widetilde \delta, \mu,q-1}$ by a similar argument as in the proof of \Propref{prop:linearexistenceregularitypre} using the \Conditionref{cond:LipschitzF}, and the positivity of $\zeta$. Since this convergence is uniform in time, it follows that $w$ is
 differentiable at each $t$ and that $Dw(t)$ is the limit of $(Dw_N(t))$ at
 each $t$. It follows from this limiting procedure that $u$ is indeed a strong solution to the singular initial value problem; similar arguments have been used before also in the proof of \Propref{prop:linearexistencenonsmooth}. 

The uniqueness of this solution $w$ follows from the
 uniqueness of the fixed point for the contraction mapping $\GuOp$.

In order to complete the proof of
 \Theoremref{th:Wellposedness1stOrderFiniteDiff}, we must consider the
 case $q=\infty$. We do this inductively in $q$. It is important to
 notice here that all of the constants in the previous estimates may depend
 on $q$ and hence we may have to adapt the choice of $\widetilde\delta$ in
 each induction step. It is thus  possible that the sequence of these constants
 $(\widetilde\delta_q)$ tends to zero as $q\rightarrow\infty$. To show that this possibility is avoided we use the result that any solution to the Cauchy problem for symmetric hyperbolic systems with a bounded first spatial derivative can be extended to a common time interval. Let us fix any $q \ge 3$.  \Theoremref{th:Wellposedness1stOrderFiniteDiff} (with finite $q$) shows that there exists a solution $w \in X_{\widetilde \delta, \mu, q}$ for some $\widetilde \delta \in (0, \delta]$. Let $t_{*} \in (0, \widetilde \delta]$, and consider the regular Cauchy problem with data $w(t_{*},x)$. Since $q \ge 2$, the Sobolev inequalities guarantee that the first spatial derivative is bounded on $[t_{*}, \widetilde \delta]$, and we may apply Proposition~1.5 in Chapter~16 of \cite{Taylor:2011wn} to show that there exists a $\widetilde \delta_{2} > \widetilde \delta$ such that the solution may be extended as a $H^q$-solution to $(0,\widetilde \delta_{2}]$. The same argument applied to any other value of $q\ge 3$ implies that the solution can be extended as $H^q$-solutions to the \textit{same} time interval $(0,\widetilde \delta_{2}]$. For $q=\infty$, we therefore find a unique solution $w$ on the same time interval in $X_{\widetilde\delta_2,\mu,\infty}$ (and hence $Dw\in X_{\widetilde\delta_2,\mu,\infty}$).

\subsection{Existence and uniqueness results based on ODE-leading-order terms}
\label{sec:expansionsHO}

Definition \ref{def:quasilinearlimit} of a quasilinear symmetric hyperbolic Fuchsian system, as well as the conditions which the singular initial value problem for such a system must satisfy if we wish to 
apply Theorem \ref{th:Wellposedness1stOrderFiniteDiff}
and thereby guarantee the existence of a solution, involve the specified leading-order term $u_0$ just as crucially as they involve the exponent vector $\mu$ and the functions $S_1, S_2$ and $N$ appearing in \Eqref{eq:1stordersystem}. In some applications, it is not easy to determine which choices of $u_0$ (if any) lead to these conditions being satisfied. Here we discuss an approach which starts with the choice of a leading-order term of a very restricted type (which we label ``ODE"-leading-order terms), and provides an alternate set of criteria for the existence of solutions to the singular initial value problem (with an ODE-leading-order term). This approach, presuming the criteria are satisfied, also systematically produces a sequence (possibly finite) of improved leading-order terms, which effectively serve as progressively higher order approximations to the solution of the singular initial value problem. We detail this approach here. 

As we see in Section \ref{sec:optimalexistence} below, the ODE-leading-order term approach is very useful in our analysis of the $T^2$--symmetric spacetimes. In particular, this approach plays a crucial role in our use of Fuchsian methods to obtain an optimal collection of $T^2$--symmetric solutions of Einstein's equations with AVTD behavior.

\subsubsection{The ODE-Fuchsian operator and ODE-leading-order terms}
We start by defining the differential operator $\LODEu{\cdot}$, which
plays a central role in carrying out this approach. Presuming that we
are working with a specified quasilinear symmetric hyperbolic Fuchsian
system \Eqref{eq:1stordersystem} with specified (as yet arbitrary)
leading-order term $u_0$ and with specified parameters $\delta$,
$\mu$, $q_0$ and $q$, we define the
\keyword{ODE-Fuchsian operator} as follows:
\begin{equation}
\label{LODE}
\LODEu{v}:=Dv+\SOLInvu\NLu v.
\end{equation}
Here we note that since (by Definition \ref{def:quasilinearlimit})
$S_{1,0}$ is invertible, it follows that $\LODEu{\cdot}$ is
well-defined. We also note that, since $\LODEu{\cdot}$ does not
involve any spatial derivatives, it is essentially a parametrized set
of ordinary differential operators (one for each point $x \in T^1$)
rather than a partial differential operator (hence the ``ODE"
label). 

Although not necessary yet at this stage, we assume $q\ge 3$ and $q_0=q+2$ (consistent with
\Theoremref{th:Wellposedness1stOrderFiniteDiff}) in all of what
follows. In particular this guarantees that all maps (including
their first spatial derivatives) are continuous with respect to $x$
(as follows from the Sobolev embedding theorem). For example, the ODE
operator above is well-defined at every spatial point $x$ under this
condition.

We wish to write the Fuchsian PDE system \Eqref{eq:1stordersystem} in terms of the operator $\LODEu{\cdot}$. Recalling the operational form $\LPDEu{w}{u_0+w}=\FPDEu{w}$ for \Eqref{eq:1stordersystem}, and noting that we can relate the operators $ \LPDEu{w}{\cdot}$ and $\LODEu{\cdot}$ as follows
\begin{equation}
  \label{eq:LPDE2ODE}
  \begin{split}
    \LPDEu{w}{v}=&\,\SOu{w}\LODEu{v}+\STu{w} t \partial_xv \\
    &+\SOu{w}\left(\SOInvu{w}\NNu{w}-\SOLInvu\NLu\right) v, 
  \end{split}
\end{equation}
we find that we can write out \Eqref{eq:1stordersystem} in the form
\begin{equation}
 \label{eq:equation}
\LODEu{u_0+w}=\FODEu{w},
\end{equation} 
if we define the term on the right hand side of \Eqref{eq:equation} as
\begin{equation}
  \label{eq:defFODE}
  \begin{split}
    \FODEu{w}
    :=&\,\SOInvu{w}\FPDEu{w}-\SOInvu{w}\STu{w} t \partial_x (u_0+w)\\
    &-\left(\SOInvu{w}\NNu{w}-\SOLInvu\NLu\right) (u_0+w).
  \end{split}
\end{equation}
The expression $ \FODEu{\cdot}$ is well-defined so long as $\delta$ is
sufficiently small so that $\SOu{w}$ is invertible for any given $w$
in $B_{\delta,\mu,q,s}$ for some $s>0$; we presume this is the case in
all of what follows. 

As noted above, a key aspect of this approach is the selection of a special class of leading-order terms.

\begin{definition}
\label{ODElead}
A leading-order term $u_0$ is an \keyword {ODE-leading-order term} if it satisfies the condition
\begin{equation}
  \label{eq:canonleadingterm}
  \LODEu{u_0}(t,x)=0.
\end{equation}
\end{definition}
Expanding out the expression for $\LODEu{\cdot}$, we see that an
ODE-leading-order term $u_0$ must satisfy $D u_0(t,x)+\widetilde N(x)
u_0(t,x)=0$, where $\widetilde N:=\SOLInvu\NLu$ is (by
definition) independent of  $t$.  For those very special cases in
which $\widetilde N$ is independent of $u_0$, \Eqref{eq:canonleadingterm}
is a parametrized set of linear ODEs, which can be readily solved for
$u_0$. More generally,  \Eqref{eq:canonleadingterm} is nonlinear and
therefore not so easy to analyze. We are interested here only in those
cases in which we can establish that solutions to
\Eqref{eq:canonleadingterm} exist. For those cases, we proceed to seek
solutions of the singular initial value problem for the system
\Eqref{eq:1stordersystem} with ODE-leading-order term $u_0$; we call
this the \keyword{ODE-singular initial value problem.} Observe that
the solution to an ODE-singular initial value problem, if obtained,
behaves in a way which suggests that as $t \rightarrow 0$, the spatial
derivative terms in \Eqref{eq:1stordersystem} become
negligible. Indeed, this is true for the AVTD solutions of the Einstein field equations 
which we treat in  \Sectionref{application} below.

It is useful for our work below to notice that under the assumption \Eqref{eq:canonleadingterm}, a combination of \Eqsref{eq:defFODE} and \eqref{eq:LPDE2ODE} yields
\begin{equation}
  \label{eq:FODEexprP0}
  \begin{split}
    \FODEu{w}=&\,\SOInvu{w}\FLu{w}
    -\SOInvu{w}\STu{w} t \partial_x w\\
    &-\left(\SOInvu{w}\NNu{w}-\SOLInvu\NLu\right) w,
  \end{split}
\end{equation}
where $\FLu{w}$ is defined in \Eqref{eq:defFLu}.

\subsubsection{(Order n)-leading-order terms}

We now use the ODE-leading-order term $u_0$ (presuming that it exists) and the ODE-Fuchsian
operator $\LODEu{\cdot}$ to generate a (possibly finite) sequence
$(u_n)$ of ``qualitative'' solutions (in a sense described shortly) to
the corresponding ODE-singular initial value problem; these $(u_n)$
play an important role in establishing a set of conditions which are
sufficient to show that this ODE-singular initial value problem does
admit a solution (see \Theoremref{th:Wellposedness1stOrderHigherOrder} below). 

We first consider the $x$-parametrized set of linear inhomogeneous ODEs
\begin{equation}
\label{eq:InhomogODE}
\LODEu{v}(t,x)=f_0(t,x),
\end{equation}
where $u_0$ is a fixed ODE-leading-order term and $f_0$ is a specified
inhomogeneity (whose regularity we discuss below). If we use
$W(t,x)$ to denote a fundamental matrix for the linear homogeneous equation
$\LODEu{v}(t,x)=0$, and if we let $(u_{*,1}(x),\ldots,u_{*,d}(x))$
represent free data for the initial value problem at $t_0\in
(0,\delta)$, then the general solution to \Eqref{eq:InhomogODE} may be
formally written as follows:
\begin{equation}
\label{eq:ODESoln}
v(t,x)=W(t,x)(u_{*,1}(x),\ldots,u_{*,n}(x))^T+W(t,x)\int_{0}^t s^{-1} W^{-1}(s,x) f_0(s,x) ds.
\end{equation}
We may then formally define the operator 
\begin{equation}
\label{eq:ODEop}
\HODEu{f_0}(t,x):=W(t,x)\int_{0}^t s^{-1} W^{-1}(s,x) f_0(s,x) ds,
\end{equation}
which, if it exists, maps a given source function $f_0$ to the
particular solution $w=\HODEu{f_0}$ of \Eqref{eq:InhomogODE} determined
by $(u_{*,1}(x),\ldots,u_{*,d}(x))=0$. We notice that the definition of this operator is invariant if the fundamental matrix $W$ is replaced by an equivalent fundamental matrix $W\mapsto W\cdot M$ for any invertible $d\times d$-matrix $M\in H^{q_0}$.

To proceed, we need to identify conditions which are sufficient for the existence of $\HODEu{\cdot}$. Noting that we may always choose the free data $(u_{*,1}(x),\ldots,u_{*,d}(x))$ in such a way that the first term in \Eqref{eq:ODESoln} equals (the already specified) $u_0$, we wish to also show that these same  conditions are sufficient to guarantee that the second term in \Eqref{eq:ODESoln}, i.e., $\HODEu{f_0}$, is higher order in time as $t$ approaches $0$ and therefore serves as a remainder term (in the sense of \Defref{def:SIVP}) for the singular initial value problem. We state the needed conditions in the following lemma, which is readily checked. 

\begin{lemma}[Existence and properties of $\HODEu{\cdot}$]
\label{HODElemma}
Suppose that a quasilinear symmetric hyperbolic Fuchsian system
\Eqref{eq:1stordersystem} has been chosen
(\Defref{def:quasilinearlimit}) for, in particular, a fixed
sufficiently small parameter $\delta$, for differentiability indices
$q\ge 3$ and $q_0=q+2$, and for an ODE-leading-order term $u_0$
(\Eqref{eq:canonleadingterm}). Suppose that $S_{1,0}^{-1}N_0$ is of
Jordan normal form. Then $\HODEu{\cdot}$ is well-defined on the domain
$X_{\delta,\widetilde\nu,q}$ for every smooth exponent vector $\widetilde\nu$, so
long as each component of $\widetilde\nu$ is strictly
larger
than the real part of the
negative of the corresponding diagonal element (eigenvalue) of
$S_{1,0}^{-1}N_0$. The target space of $\HODEu{\cdot}$ is
$X_{\delta,\widetilde\mu,q}$ for any smooth exponent vector
$\widetilde\mu<\widetilde\nu$, and one has the estimate
\begin{equation*}
\|\HODEu{f_0}\|_{\delta,\widetilde\mu,q}
\le C \delta^\rho \|f_0\|_{\delta,\widetilde\nu,q},
\end{equation*}
where $C>0$ depends only on the eigenvalues of $S_{1,0}^{-1}N_0$, on the dimension $d$ of the first-order system, on the
choices of $q$, $\widetilde\nu$, and on the difference between $\widetilde\nu$
and $\widetilde\mu$; the constant $\rho>0$ only depends on the difference
between $\widetilde\nu$ and $\widetilde\mu$, and on $q$.
\end{lemma}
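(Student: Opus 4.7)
The plan is to construct the fundamental matrix $W$ explicitly from the Jordan normal form hypothesis and then to estimate $\HODEu{f_0}$ in the required Sobolev spaces by careful analysis of the resulting integral. Since $\widetilde N := \SOLInvu\NLu$ is in Jordan normal form, I would decompose $\widetilde N(x) = \Lambda(x) + U(x)$ into its diagonal (eigenvalue) part $\Lambda(x) = \operatorname{diag}(\lambda_1(x),\ldots,\lambda_d(x))$ and a nilpotent upper-triangular part $U(x)$ commuting with $\Lambda$ within each Jordan block. A fundamental matrix for $Dv + \widetilde N v = 0$ is then
\[
W(t,x) = e^{-\widetilde N(x) \log t} = t^{-\Lambda(x)} \sum_{k=0}^{d-1} \frac{(-U(x) \log t)^k}{k!},
\]
so that each entry of $W(t,x)$ (respectively $W^{-1}(s,x) = e^{\widetilde N(x) \log s}$) is of the form $t^{-\lambda_i(x)}$ (respectively $s^{\lambda_j(x)}$) multiplied by a polynomial in $\log t$ (respectively $\log s$) of degree at most $d-1$, and nonzero entries $W_{ij}$ only appear when $i,j$ lie in the same Jordan block.

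Substituting into the formula \Eqref{eq:ODEop} for $\HODEu{f_0}$, the $i$-th component is a sum of terms of the form
\[
t^{-\lambda_i(x)} P_k(\log t) \int_0^t s^{\lambda_j(x) - 1} Q_\ell(\log s)\, f_{0,m}(s,x)\, ds,
\]
with $i,j,m$ in a common Jordan block (so $\lambda_i = \lambda_j = \lambda_m$) and $P_k,Q_\ell$ polynomials. Using the pointwise bound $|f_{0,m}(s,x)| \le C s^{\widetilde\nu_m(x)}\|f_0\|_{\delta,\widetilde\nu,q}$ (valid via Sobolev embedding since $q\ge 3$), the integrand is dominated by $C s^{\operatorname{Re}\lambda_j(x) + \widetilde\nu_m(x) - 1}|\log s|^\ell$, which is integrable at $s=0$ precisely under the hypothesis $\widetilde\nu_m > -\operatorname{Re}\lambda_m$. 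The integral is then bounded by $C t^{\operatorname{Re}\lambda_j(x) + \widetilde\nu_m(x)}|\log t|^K$, and multiplication by $t^{-\lambda_i(x)}P_k(\log t)$ produces a pointwise bound of order $t^{\widetilde\nu_m(x)}|\log t|^M$ for each component of $\HODEu{f_0}$.

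To turn this into the norm estimate in $X_{\delta,\widetilde\mu,q}$, I would use the elementary inequality $t^{\widetilde\nu_m(x) - \widetilde\mu_i(x)}|\log t|^M \le C\,\delta^\rho$ on $(0,\delta]$, valid for some $\rho > 0$ depending only on the (positive, by hypothesis) gap between $\widetilde\nu$ and $\widetilde\mu$; this is precisely where the \emph{strict} inequality $\widetilde\mu < \widetilde\nu$ is essential, as it absorbs the accumulated polylogarithmic factors. Higher Sobolev regularity in $x$ is handled by differentiating under the integral sign: derivatives $\partial_x^\alpha W$ and $\partial_x^\alpha W^{-1}$ produce the same structural form with only additional polynomial factors in $\log t$ coming from differentiating $t^{-\lambda_i(x)}$, and these are again swallowed into $\delta^\rho$; the smoothness of $\widetilde N$ required is available since $q_0 = q+2$ is large enough for Sobolev embedding. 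The main obstacle is tracking the log factors carefully through repeated spatial differentiation and the integration, and ensuring that the constant $C$ depends only on the advertised quantities (the eigenvalues, $d$, $q$, $\widetilde\nu$, and $\widetilde\nu - \widetilde\mu$); a secondary but routine check is that the definition of $\HODEu{\cdot}$ is invariant under the change of representative $W \mapsto W\cdot M$ for any invertible $M\in H^{q_0}$, which is immediate from the integral expression.
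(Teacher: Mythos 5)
The paper does not actually supply a proof of this lemma --- it is declared ``readily checked'' --- so your argument is filling a gap the authors left to the reader, and it is the natural and correct way to do so. The explicit fundamental matrix $W(t,x)=t^{-\widetilde N(x)}=t^{-\Lambda(x)}e^{-U(x)\log t}$ from the Jordan-form hypothesis, the observation that the convergence of $\int_0^t s^{\lambda_j+\widetilde\nu_m-1}(\log s)^\ell\,ds$ is exactly the stated eigenvalue condition (using the paper's convention that exponent components agree within a Jordan block), and the absorption of all polylogarithmic factors --- including those generated by $\partial_x^\alpha t^{-\lambda_i(x)}$ --- into $\delta^\rho$ via the strict gap $\widetilde\mu<\widetilde\nu$ are precisely the ingredients needed, and the regularity $q_0=q+2$ indeed covers the $x$-derivatives of $W$ and $W^{-1}$. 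The one place where your write-up is looser than a complete argument is the passage from the pointwise bound $|f_{0,m}(s,x)|\le Cs^{\widetilde\nu_m(x)}\|f_0\|_{\delta,\widetilde\nu,q}$ to the $H^q$ estimate: for the Sobolev norm one should estimate $\|\RR{\widetilde\mu}\HODEu{f_0}(t,\cdot)\|_{H^q}$ directly by distributing the $q$ spatial derivatives over the product $W\cdot\int(\cdots)$ and invoking the algebra property of $H^q(T^1)$ for $q\ge 1$ (Lemma \ref{lem:product1} in the appendix), with the factor $\RR{\widetilde\nu}f_0(s,\cdot)$ kept in $H^q$ rather than in $L^\infty$; this is routine but is the step that makes the constant $C$ depend on $q$ and $d$ as advertised. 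With that bookkeeping made explicit, the proof is complete.
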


In particular, we may choose $\widetilde\mu$ arbitrarily close to
$\widetilde\nu$, and we then check that the remainder $w=\HODEu{f_0}$,
(the second term in \Eqref{eq:ODESoln}) is of higher order in $t$ near $t=0$ (as measured by $\widetilde\mu$) as the order of $f_0$ (measured by $\widetilde\nu$) becomes large. As well, as the difference between $\widetilde\mu$ and $\widetilde\nu$ diminishes, one may have to choose the constant $C$ to be larger, and the constant $\rho$ to be smaller.

Some comments about this lemma are in order. For each quasilinear symmetric hyperbolic Fuchsian system and each choice of leading-order term $u_0$, there exists an invertible matrix $T\in H^{q_0}$ such that $T S_{1,0}^{-1}N_0 T^{-1}$ is in Jordan normal form. We assume in the following that such a transformation has been applied to the system, and hence that $S_{1,0}^{-1}N_0$ is in Jordan normal form. The fact that the matrices $S_1$ and $S_2$ in the principal part are in general not symmetric after such a transformation has been carried out is not important for the arguments that follow. Moreover, for simplicity we assume for each exponent vector here that those of its components which correspond rto the same Jordan block of $S_{1,0}^{-1}N_0$ have the same value.

As a consequence of this result, we may formally 
define the following  sequence:

\begin{definition}[(Order n)-leading-order sequence]
  \label{Ordern}
  Suppose $q\ge 3$.  With $w_0=0$, we formally set
  \begin{equation}
    \label{eq:defsequence}
    w_n:=\HODEu{\FODEu{w_{n-1}}},
  \end{equation}
  for all positive
  integers $1\le n\le q-2$.
   The \keyword{(order n)-leading-order
    terms}  are then defined by
  \begin{equation}
    \label{eq:Defun}
    u_n:=u_0+w_n,
  \end{equation}
  for $0\le n\le q-2$.
\end{definition}

To turn this formal specification of the (order-n)-leading-order-sequence into a definition, we need to state sufficient conditions for the composition on the right hand side of \Eqref{eq:defsequence} to be well-defined for each $n$. We do this in the proposition below. This proposition also proves that the sequence is characterized by certain properties which are relevant to the two roles which it plays: i) an increasingly accurate sequence of 
approximations to the solution of the singular initial value problem with
ODE-leading-order-term $u_0$ (presuming that such a solution exists); and ii) a sequence of ``new'', and ``better'' leading-order terms which can be used to define new singular initial value problems (closely tied to the original) for which we can prove the existence of solutions. The following proposition states the manner
in which the first use makes sense, and provides the first step
towards proving that the second use works.

\begin{proposition}[Existence and properties of the (order
  n)-leading-order terms $u_n$]
  \label{prop:propHO}
  Let $q\ge 3$ and $q_0=q+2$.
  Suppose that a quasilinear symmetric hyperbolic Fuchsian system
  \Eqref{eq:1stordersystem} has been chosen satisfying
  \Defref{def:quasilinearlimit} for an ODE-leading-order term $u_0$
  satisfying \Eqref{eq:canonleadingterm}, for  fixed parameters $\delta$ (sufficiently small) and $\mu$, and for         
  all differentiability
  indices $q'$ in the interval $[3,q]$.  Here we require that the exponent matrix $\zeta$, whose existence (in specifying  the function spaces containing $S_{1,1}, S_{2,1}$, and $N_1$)  is a necessary part of the definition of a quasilinear symmetric hyperbolic Fuchsian system (see Definition \ref{def:quasilinearlimit}), can be written as $\zeta_{ij}=\xi_i$ for some vector-valued exponent $\xi$ with strictly positive entries.  Suppose that the matrix
  $\SOLInvu\NLu$ is given in Jordan normal form, and suppose in
  addition that the following conditions are satisfied for all $\delta'\in (0,\delta]$ and all integers $q'\in (3,q]$:
  \begin{enumerate}[label=\textit{(\roman{*})}, ref=(\roman{*})]  
  \item \label{en:modBD} The remainder exponent vector $\mu$ is strictly larger than
    the negative of the corresponding diagonal elements (eigenvalues) of
    $S_{1,0}^{-1}N_0$ and satisfies the \keyword{modified block diagonality conditions}:
    For every $w\in B_{\delta',\mu,q',s}$, we have
    \[
    \RR{\mu} \SOu{w}=\SOu{w} \RR{\mu},\quad
    \RR{\mu} \NNu{w}=\NNu{w} \RR{\mu},
    \] 
    and, there exists $r>0$, so that
    \begin{equation}
      \label{eq:particularconditionS2}
      \RR{\mu}t \STu{w} \RR{-\mu}\in B_{\delta',\zeta,q',r},
    \end{equation}
    for $\zeta$ defined in terms of $\xi$, as above. 
   \item \label{en:cond1HOSource} There exists an exponent vector $\nu$
    with $\nu>\mu$ and a constant $r>0$, so that $\FLuOp$ maps $B_{\delta',\mu,q',s}$ into
    $B_{\delta',\nu,q',r}$.
  \item \label{en:cond4HO}%
    For all $w\in B_{\delta',\mu,q',s/2}$ and $\omega\in
    B_{\delta',\widehat\mu,q',s/2}$ for any exponent vector $\widehat\mu$ which
    satisfies $\widehat\mu\ge\mu$ and with respect to which the system is block diagonal, there exists a constant $r>0$ for which
    \begin{align*}
      \FLu{w+\omega}-\FLu{w}&\in B_{\delta',\widehat\mu+\nu-\mu,q',r},\\
      \SOu{w+\omega}-\SOu{w}&\in B_{\delta',\widehat\mu+\xi-\mu,q',r},\\
     \RR{\mu} t(\STu{w+\omega}-\STu{w})\RR{-\mu}&\in B_{\delta',\widehat\mu+\xi-\mu,q',r},\\
      \NNu{w+\omega}-\NNu{w}&\in B_{\delta',\widehat\mu+\xi-\mu,q',r}.
    \end{align*}    
    Moreover, there exists a constant $C>0$ such that the norm of each of these quantities can be bounded as  follows, 
  \[  \|\FLu{w+\omega}-\FLu{w}\|_{\delta',\widehat\mu+\nu-\mu,q'}
    \le  C \|\omega\|_{\delta',\widehat\mu,q'},\]
    with analogous inequalities holding for $S_1, S_2$, and $N$.

  \end{enumerate}
  Then, the sequence $u_n$ specified in \Defref{Ordern} is well-defined and, for
  some $\widetilde\delta\in (0,\delta]$ and constants $\gamma>0$, one has
  \begin{equation}
    \label{eq:propertyHO}
    u_n-u_0\in B_{\widetilde\delta,\mu,q-(n-1),s/2},\quad\qquad 
    u_{n}-u_{n-1}\in B_{\widetilde\delta,\mu+(n-1)\gamma,q-(n-1),s/2},
  \end{equation}
  for all $1\le n\le q-2$. Moreover, the
  \keyword{residual}\footnote{The residual of a function $u$ is defined so that  $u$ satisfies the
    system \Eqref{eq:1stordersystem} if and only if its residual
    vanishes; i.e., \Resu{u}=0. So the residual measures the accuracy of the sequence $(u_n)$ as approximate solutions to the system  \Eqref{eq:1stordersystem}}
  of $u_n$, defined by
  \begin{equation}
  \label{Res}
    \Resu{u_n}:=\LPDE{u_n}{u_n}-\f{u_n},
  \end{equation}
 is contained in  $X_{\delta,\mu+n\kappa\gamma,q-n}$.
\end{proposition}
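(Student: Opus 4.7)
The plan is to argue by finite induction on $n\in\{0,1,\dots,q-2\}$, preceded by deriving an identity for the residual that reduces everything to a difference estimate on $\FODEu{\cdot}$. Combining \Eqref{eq:LPDE2ODE} (with $v=u_n$ and $w=w_n$) with \Eqref{eq:defFODE} multiplied by $\SOu{w_n}$, the $\STu{w_n}t\partial_x u_n$ and $\SOu{w_n}(\SOInvu{w_n}\NNu{w_n}-\SOLInvu\NLu)u_n$ contributions cancel between $\LPDE{u_n}{u_n}$ and $\f{u_n}$, leaving $\Resu{u_n}=\SOu{w_n}(\LODEu{u_n}-\FODEu{w_n})$. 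Since $u_0$ is an ODE-leading-order term, $\LODEu{u_n}=\LODEu{u_0}+\LODEu{w_n}=\LODEu{w_n}$, which by construction of $w_n$ through \Lemref{HODElemma} equals $\FODEu{w_{n-1}}$ for $n\ge 1$. Therefore
\[
\Resu{u_n}=\SOu{w_n}\bigl(\FODEu{w_{n-1}}-\FODEu{w_n}\bigr),\qquad n\ge 1.
\]

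For the base case $n=1$, since $w_0=0$ the expression \Eqref{eq:FODEexprP0} reduces to $\FODEu{0}=\SOInvu{0}\FLu{0}$, which by \Conditionref{en:cond1HOSource} and the regularity of $\SOInvu{0}$ lies in $X_{\delta,\nu,q}$. \Conditionref{en:modBD} ensures that $\mu$ exceeds the negative eigenvalues of $\SOLInvu\NLu$, so \Lemref{HODElemma} delivers $w_1=\HODEu{\FODEu{0}}\in X_{\widetilde\delta,\mu,q}$ with $\|w_1\|_{\widetilde\delta,\mu,q}\le C\widetilde\delta^\rho \|\FODEu{0}\|_{\delta,\nu,q}$; shrinking $\widetilde\delta$ forces $w_1\in B_{\widetilde\delta,\mu,q,s/2}$, settling $n=1$.

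For the inductive step, assuming the claims through index $n-1$, linearity of $\HODEu{\cdot}$ yields $w_n-w_{n-1}=\HODEu{\FODEu{w_{n-1}}-\FODEu{w_{n-2}}}$. Setting $w=w_{n-2}$ and $\omega=w_{n-1}-w_{n-2}$ in \Conditionref{en:cond4HO} with $\widehat\mu=\mu+(n-2)\gamma$ yields bounds on $\FLu{w_{n-1}}-\FLu{w_{n-2}}$ (gain $\nu-\mu$), on $\SOu{w_{n-1}}-\SOu{w_{n-2}}$, on $\RR{\mu}t(\STu{w_{n-1}}-\STu{w_{n-2}})\RR{-\mu}$, and on $\NNu{w_{n-1}}-\NNu{w_{n-2}}$ (all with gain $\xi$). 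Reassembling via \Eqref{eq:FODEexprP0}, where the $t\partial_x w$ term costs one order of spatial regularity, places $\FODEu{w_{n-1}}-\FODEu{w_{n-2}}$ in $X_{\widetilde\delta,\mu+(n-1)\gamma,q-(n-1)}$ for any $\gamma>0$ strictly below $\min(\min_i(\nu_i-\mu_i),\xi_{\min})$ (to absorb the $\widetilde\mu<\widetilde\nu$ loss in \Lemref{HODElemma}). A final application of \Lemref{HODElemma} produces the second inclusion in \Eqref{eq:propertyHO}, a telescoping sum yields the first, and applying \Conditionref{en:cond4HO} once more to $\omega=w_n-w_{n-1}$ in the residual identity places $\Resu{u_n}$ in $X_{\widetilde\delta,\mu+n\kappa\gamma,q-n}$ for some $\kappa>0$.

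The main obstacle I expect is the careful bookkeeping of constants through the induction: the constants from \Conditionref{en:cond4HO} and \Lemref{HODElemma} typically depend on the differentiability index $q-(n-1)$, and each step may require a further shrinkage of $\widetilde\delta$. Because the $\widetilde\delta^\rho$ prefactor in the \Lemref{HODElemma} estimate can absorb any fixed Lipschitz constant, and because $n$ ranges over the finite set $\{1,\dots,q-2\}$, taking the minimum over the finitely many $\widetilde\delta$-requirements still yields a positive $\widetilde\delta$ that suffices for all $n$.
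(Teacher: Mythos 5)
Your proposal is correct and follows essentially the same route as the paper: the residual identity $\Resu{u_n}=-\SOu{w_n}(\FODEu{w_n}-\FODEu{w_{n-1}})$, the base case via \Conditionref{en:cond1HOSource} and \Lemref{HODElemma}, and an induction driven by a Lipschitz-type difference estimate for $\FODEu{\cdot}$ obtained by applying \Conditionref{en:cond4HO} term-by-term to \Eqref{eq:FODEexprP0} (which the paper packages as \Lemref{lem:propertyFODE}, together with \Lemref{lem:S0matrix} for the differences of $\SOInvu{\cdot}$), with the one-derivative loss from $t\partial_x$, the choice of $\gamma$ absorbing the $\widetilde\mu<\widetilde\nu$ loss in \Lemref{HODElemma}, and finitely many shrinkages of $\widetilde\delta$. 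No substantive differences.
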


We make a few remarks here concerning some of the details of this proposition. First, we observe that as a consequence of the
definition of this sequence, presuming that we start with an ODE-leading-order term $u_0$ of a certain order of differentiability, we find that  the first term of the sequence $u_1$ retains that regularity (up to order $q$), while the rest of the elements of the sequence ($u_2, u_3...$) generally do not. This is true because, since $w=0$ is smooth, it follows from the formula  \Eqref{eq:FODEexprP0} that $\FODEu{w_0}$ has $q$ derivatives; the same is then true for $w_1$. But then since  $\FODEuOp$ maps $X_{\delta,\mu,q}$ to
$X_{\delta,\widehat\nu,q-1}$, we see that $w_2$ has  only $q-1$ derivatives. The same loss of a derivative occurs for each successive element of the sequence. 

Secondly, we remark that the modified block diagonal conditions for $\mu$ are a slight generalization of \Defref{def:nonessentiallycouple}. In particular, it is not necessary that $\RR{\mu}$  commute with $S_2$ here. We have chosen a formulation of \Conditionref{en:modBD} 
 which applies directly our applications --- see \Sectionsref{sec:EPD}~and~\ref{application}. \Conditionref{en:modBD} can, however, be generalized to match more general situations. 
 
Thirdly, we note that \Conditionref{en:cond4HO} can be checked in our applications using the tools in the appendix; see \Sectionref{sec:productsfunctions}.

Finally, 
we note that for the special choice of  $\zeta_{ij}=\xi_i$ used here,  the space of matrix-valued functions $X_{\delta,\zeta,q}$ can be equivalently written as $X_{\delta,\xi,q}$. This latter space---a Banach space of matrix-valued functions $X_{\delta,\xi,q}$ with a vector-valued exponent $\xi$---is defined in essentially the same way as above in Eq. (2.3); the key difference is that the norm used to define this space  is the $H^q$-norm of $\RR{\xi}S$ - the matrix product of the matrix $\RR{\xi}$ formed from $\xi$ (see Eq. (2.2)) times $S$.  The motivation for specializing the matrix-valued exponent $\zeta$ in this way  and thence introducing the new notation $X_{\delta,\xi,q}$ for matrix-valued functions is to be able to express Condition (iii) of Proposition 2.18 in a natural way.

The proof of \Propref{prop:propHO} depends upon tight control of $\FODEu{\cdot}$ as given by \Eqref{eq:FODEexprP0}, and tight control of the inverse of $S_0$. We obtain this needed control using the following two lemmas.

\begin{lemma}
  \label{lem:S0matrix} 
  If the hypothesis for \Propref{prop:propHO} holds (presuming as usual that $\delta$ is sufficiently small so that $\SO{u_0+w}$ is invertible for all $w\in B_{\delta,\mu,q,s}$ and presuming that $\delta'$ and $q'$ satisfy the conditions stated in that hypothesis), it follows that for some $r>0$, $\SOInv{u_0+w}\in B_{\delta',0,q',r}$ for all $w\in B_{\delta',\mu,q',s}$.  Moreover, the operator given by
\[w\mapsto \SOInv{u_0+w} -(\SOLu)^{-1}\]
maps $B_{\delta',\mu,q',s}$ into $B_{\delta',\zeta,q',r}$ for some constant $r>0$, and this operator satisfies the 
difference condition 
\[\SOInvu{w+\omega}-\SOInvu{w}\in B_{\delta',\widehat\mu+\zeta-\mu,q',r}\]
for some exponent vector $\zeta>0$, and for some constant $C>0$ it satisfies the inequality 
  \[  \|\SOInvu{w+\omega}-\SOInvu{w}\|_{\delta',\widehat\mu+\zeta-\mu,q'}
    \le C \|\omega\|_{\delta',\widehat\mu,q'},\]
 for all $w\in B_{\delta',\mu,q',s/2}$ and all $\omega\in
    B_{\delta',\widehat\mu,q',s/2}$; here $\widehat\mu$ is any exponent vector which satisfies the inequality $\widehat\mu\ge\mu$, and for which the system is 
 block diagonal.
\end{lemma}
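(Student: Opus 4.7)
The proof will rest on the algebraic identity $A^{-1} - B^{-1} = -A^{-1}(A-B)B^{-1}$, combined with a Neumann series expansion and the weighted Sobolev product estimates of the appendix (\Sectionref{sec:productsfunctions}).

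\textbf{Boundedness of $\SOInv{u_0+w}$.} By \Defref{def:quasilinearlimit} (as re-invoked in \Propref{prop:propHO}), $\SOLu(x)$ is pointwise symmetric and positive definite and lies in $H^{q_0}(T^1)$ with $q_0 = q+2$; since $q'\ge 3$, Sobolev embedding gives a positive uniform lower bound on its eigenvalues, hence its pointwise inverse $(\SOLu)^{-1}$ exists and lies in $H^{q_0}(T^1)$. The remainder matrix $\SOHus{w} = \SOu{w} - \SOLu$ lies in $B_{\delta',\zeta,q',r_0}$ with $\zeta>0$ for all $w\in B_{\delta',\mu,q',s}$. Using the product estimates in \Sectionref{sec:productsfunctions}, we have $\|(\SOLu)^{-1}\SOHus{w}\|_{\delta',\zeta,q'}$ bounded uniformly in $w$, and this quantity tends to zero as $\delta'\to 0$ in any $B_{\delta',0,q',\cdot}$-norm since $\zeta>0$. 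Thus, after possibly shrinking $\delta'$, the factor $I + (\SOLu)^{-1}\SOHus{w}$ has $H^{q'}$-norm perturbation at most $1/2$, so its inverse is given by a convergent Neumann series and lies in $B_{\delta',0,q',r}$ uniformly in $w$. Multiplying on the right by $(\SOLu)^{-1}$ gives $\SOInv{u_0+w} \in B_{\delta',0,q',r}$ for some common $r$.

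\textbf{The difference $\SOInv{u_0+w} - (\SOLu)^{-1}$.} Apply the identity with $A = \SOu{w}$ and $B = \SOLu$ to obtain
\begin{equation*}
  \SOInv{u_0+w} - (\SOLu)^{-1} = -\SOInv{u_0+w}\,\SOHus{w}\,(\SOLu)^{-1}.
\end{equation*}
The outer two factors belong to $B_{\delta',0,q',r}$ by the previous paragraph and by hypothesis, while the middle factor $\SOHus{w}$ lies in $B_{\delta',\zeta,q',r_0}$. The weighted Sobolev product estimates of \Sectionref{sec:productsfunctions} then yield membership in $B_{\delta',\zeta,q',r}$ for some constant $r$ independent of $w$.

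\textbf{The Lipschitz-type difference in $w$.} Now apply the same identity with $A = \SOu{w+\omega}$ and $B = \SOu{w}$, yielding
\begin{equation*}
  \SOInvu{w+\omega}-\SOInvu{w} = -\SOInvu{w+\omega}\bigl(\SOu{w+\omega}-\SOu{w}\bigr)\SOInvu{w}.
\end{equation*}
By \Conditionref{en:cond4HO} of \Propref{prop:propHO}, the middle factor lies in $B_{\delta',\widehat\mu+\xi-\mu,q',r}$ with norm bounded by $C\|\omega\|_{\delta',\widehat\mu,q'}$. By the first part of the lemma (applied to both $w$ and $w+\omega$, which is legitimate since $\|w+\omega\|_{\delta',\mu,q'}\le s$ whenever $w\in B_{\delta',\mu,q',s/2}$ and $\omega\in B_{\delta',\widehat\mu,q',s/2}\subset B_{\delta',\mu,q',s/2}$ because $\widehat\mu\ge\mu$), the outer two factors are bounded in $B_{\delta',0,q',r}$. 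One more application of the product estimates in the appendix delivers the stated membership with $\zeta = \xi$, together with the claimed Lipschitz bound in terms of $\|\omega\|_{\delta',\widehat\mu,q'}$.

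The principal technical difficulty is not conceptual but bookkeeping: one must verify that the weighted Sobolev product of a bounded factor and a factor decaying at rate $\xi$ decays at rate $\xi$ with constants that are uniform over $B_{\delta',\mu,q',s}$. This is exactly what the appendix's product lemmas provide under the assumption $q'\ge 3$ (which makes $H^{q'}(T^1)$ a Banach algebra via Sobolev embedding), so once those tools are in hand the three steps above go through essentially by direct computation.
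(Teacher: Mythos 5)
Your proposal is correct, and it fills in a proof that the paper itself only sketches (the paper merely lists three ingredients --- smoothness of matrix inversion, commutation of $\RR{\mu}$ and $\RR{\widehat\mu}$ with $S_1$, and the Moser composition estimate from Taylor --- and omits the details). Your route is genuinely different in one respect: where the paper would treat $A\mapsto A^{-1}$ as a smooth map and invoke the Moser estimate for compositions $F(u)$, you reduce everything to the resolvent identity $A^{-1}-B^{-1}=-A^{-1}(A-B)B^{-1}$ plus a Neumann series for the uniform bound on $\SOInv{u_0+w}$, so that only the product lemmas of the appendix are needed. This is arguably cleaner and more self-contained; the price is the Neumann-series step, which requires the smallness $r_0\,\delta^{\zeta_{\min}}<1$ (uniform over all $\delta'\le\delta$ once $\delta$ is fixed small, so it is consistent with the ``for all $\delta'\in(0,\delta]$'' quantifier). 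The one point you elide, and which the paper explicitly flags as essential, is the commutation of $\RR{\mu}$ and $\RR{\widehat\mu}$ with $S_1$ (i.e.\ the block diagonality hypothesis): in your third step, to conclude that $\RR{\widehat\mu+\xi-\mu}\bigl(\SOInvu{w+\omega}\,\Delta\,\SOInvu{w}\bigr)$ is controlled, you must first move the weight $\RR{\widehat\mu+\xi-\mu}$ past the outer factors $\SOInvu{\cdot}$ before applying \Lemref{lem:productmatrix}; without that commutation the weighted product estimate does not close, since the weight is attached to specific components. With that remark added, your bookkeeping paragraph at the end is exactly right and the argument is complete.
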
 
The proof of this lemma relies on i) the fact that the inversion of invertible matrices is a smooth map, ii) the fact that both $\RR{\mu}$ and $\RR{\hat\mu}$ commute with $S_1$, and iii) Proposition~3.9 in Chapter~13 of \cite{Taylor:2011wn}; we omit the details here.

\begin{lemma}
  \label{lem:propertyFODE}
If the hypothesis for \Propref{prop:propHO} holds, then there exist positive constants $\gamma$ and $r$ so that for all constants $\delta'\in (0,\delta]$ and all integers $q'\in [3,q]$, and for all exponent vectors $\widehat\mu\ge\mu$ with respect to which the system is block diagonal, we have that $\FODEu{w}\in B_{\delta',\mu+\gamma,q'-1,r}$
and
  \begin{equation*}
    \FODEu{w+\omega}-\FODEu{w}\in B_{\delta',\widehat\mu+\gamma,q'-1,r}.
  \end{equation*} 
Further, there exists a constant $C>0$ such that for  all $w\in B_{\delta',\mu,q',s/2}$ and all  $\omega\in  B_{\delta',\widehat\mu,q',s/2}$, we have
  
\[  \|\FODEu{w+\omega}-\FODEu{w}\|_{\delta',\widehat\mu+\gamma,q'-1}
    \le C \|\omega\|_{\delta',\widehat\mu,q'}.\]

\end{lemma}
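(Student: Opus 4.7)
I would proceed by decomposing $\FODEu{w}$ via \Eqref{eq:FODEexprP0} into three pieces,
\[
\FODEu{w}=\SOInvu{w}\FLu{w}\,-\,\SOInvu{w}\STu{w}\,t\partial_x w\,-\,\bigl(\SOInvu{w}\NNu{w}-\SOLInvu\NLu\bigr)w,
\]
and estimating each piece separately in the weighted Sobolev norms. Throughout, \Lemref{lem:S0matrix} provides the needed control on $\SOInvu{w}$ (uniformly bounded) and on $\SOInvu{w}-\SOLInvu$ (a matrix in $B_{\delta',\zeta,q',r}$), both in supremum and in Lipschitz form.

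For the first piece, \Conditionref{en:cond1HOSource} of \Propref{prop:propHO} gives $\FLu{w}\in B_{\delta',\nu,q',r}$ with $\nu>\mu$, so multiplying by the bounded $\SOInvu{w}$ yields a term in $B_{\delta',\mu+\gamma_1,q',r'}$ for $\gamma_1:=\min_{i,x}(\nu_i(x)-\mu_i(x))>0$. For the third piece I would split the matrix factor as $\SOInvu{w}\NHus{w}+(\SOInvu{w}-\SOLInvu)\NLu$; both summands lie in a matrix-valued space of the form $B_{\delta',\xi,q',r}$ (using the hypothesized structure $\zeta_{ij}=\xi_i$) by \Defref{def:quasilinearlimit} and \Lemref{lem:S0matrix} respectively, and the modified block diagonality \Conditionref{en:modBD} (applied to $\SOu{w}$ and $\NNu{w}$) lets me commute $\RR{\mu}$ through the matrix onto $w$, placing the third piece in $B_{\delta',\mu+\xi_{\min},q',r''}$ with no loss of spatial derivatives.

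The second piece $\SOInvu{w}\STu{w}\,t\partial_x w$ is the one that forces the drop to $q'-1$. Using the $S_1$-commutation I would rewrite
\[
\RR{\mu}\SOInvu{w}\STu{w}\,t\partial_x w=\SOInvu{w}\bigl(\RR{\mu}\,t\STu{w}\RR{-\mu}\bigr)\RR{\mu}\partial_x w,
\]
the middle factor being in $B_{\delta',\zeta,q',r}$ by \eqref{eq:particularconditionS2}. The last factor, expanded as $\RR{\mu}\partial_x w=\partial_x(\RR{\mu}w)-(\partial_x\RR{\mu})w$, is bounded in $H^{q'-1}$ by $\|w\|_{\delta',\mu,q'}$ up to logarithmic factors from $\partial_x\RR{\mu}$; those logs are absorbed into the $t^{\xi_{\min}}$ decay of the middle factor at the cost of slightly diminishing the gain. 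Setting $\gamma:=\min(\gamma_1,\xi_{\min}-\varepsilon)$ for sufficiently small $\varepsilon>0$ then yields $\FODEu{w}\in B_{\delta',\mu+\gamma,q'-1,r}$ with $\gamma>0$ uniform in $\delta'$ and $q'$.

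For the difference $\FODEu{w+\omega}-\FODEu{w}$, I would carry out the same three-piece decomposition for the pair $w$, $w+\omega$ and control each difference of factors by the Lipschitz-type bounds of \Conditionref{en:cond4HO} together with the Lipschitz estimate for $\SOInvu{\cdot}$ supplied by \Lemref{lem:S0matrix}. The block diagonality assumed with respect to $\widehat\mu$, together with the analog of \eqref{eq:particularconditionS2} with $\mu$ replaced by $\widehat\mu$, lets the same commutation argument go through, yielding membership in $B_{\delta',\widehat\mu+\gamma,q'-1,r}$ with the announced Lipschitz bound $C\|\omega\|_{\delta',\widehat\mu,q'}$. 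The hard part will be the careful bookkeeping needed to absorb the logarithmic factors from $\partial_x\RR{\mu}$ (which arise because $\mu$ is generally non-constant) and to verify that a single $\gamma>0$ works across all three terms, all $q'\in[3,q]$, all $\delta'\in(0,\delta]$, and all admissible $\widehat\mu$; the remaining algebra is then handled by the product-type estimates in the weighted spaces detailed in the appendix.
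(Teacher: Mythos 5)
Your argument is correct and follows essentially the same route as the paper: a term-by-term estimate of the decomposition \eqref{eq:FODEexprP0} using \Conditionsref{en:cond1HOSource}~and~\ref{en:cond4HO} of \Propref{prop:propHO}, the commutation/\eqref{eq:particularconditionS2} structure, \Lemref{lem:S0matrix}, and the product lemmas of the appendix, with the $S_2$-term accounting for the loss of one derivative. The only (inessential) difference is that the paper first multiplies \eqref{eq:FODEexprP0} by $\SOu{w}$ before expanding the difference (yielding \eqref{eq:FODEDiff}), whereas you keep $\SOInvu{w}$ throughout and control it directly via \Lemref{lem:S0matrix}; your treatment of the $\partial_x\RR{\mu}$ logarithms is a correct filling-in of a detail the paper leaves implicit.
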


\begin{proof}[Proof  of Lemma \ref{lem:propertyFODE}]
  The first statement
is easily obtained by multiplying the expression \Eqref{eq:FODEexprP0} for $\FODEu{\cdot}$ by the quantity $\SOu{w}$ and then using the facts that $\RR{\mu}$ commutes with $\SOu{w}$, and  that $\SOu{w}$ is in $B_{\delta,0,q,r}$ for some $r>0$, and also applying  \Lemref{lem:productmatrix} and \Lemref{lem:S0matrix}.  To prove the rest, we multiply this same expression for $\FODEu{\cdot}$ by $\SOu{w}$, and then calculate

  \begin{equation}
    \label{eq:FODEDiff}
    \begin{split}
    \SOu{w}(&\FODEu{w+\omega}-\FODEu{w})\\
    =&\FLu{w+\omega}-\FLu{w}\\
    &-\left(\SOu{w+\omega}-\SOu{w}\right)\FLu{w+\omega}\\
    &-(\STu{w+\omega}-\STu{w})\, t\partial_x (w+\omega)\\
    &-\STu{w}\, t\partial_x \omega\\
    &-(\NNu{w+\omega}-\NNu{w}) \, (w+\omega)\\
    &-(\NNu{w}-\NLu)\omega\\
    &- \SOu{w}(\SOInvu{w}-\SOLInvu) \NLu \omega\\
    &-\left(\SOu{w+\omega}-\SOu{w}\right) \NLu (w+\omega).
  \end{split}
\end{equation}
Applying arguments of the sort used to verify \Propref{prop:propHO} and \Lemref{lem:S0matrix} together with the estimates included in the hypothesis, we obtain the conclusion.
\end{proof}

 We now proceed to prove \Propref{prop:propHO}:

\begin{proof}[Proof of \Propref{prop:propHO}]
  We
first show that the sequence $(w_n)$ (and the corresponding sequence $(u_n)$) is well-defined at least for finitely many sequence elements. It follows from \Conditionref{en:cond1HOSource} of the hypothesis that $\FLu{0}\in B_{\delta,\nu,q,r}$ for some $r>0$. Noting (see \Eqref{eq:FODEexprP0}) that $\FODEuOp$ evaluated at $0$ reduces to $S_{1,0}^{-1}\FLu{0}$, we infer from \Lemref{HODElemma} and \Lemref{lem:productmatrix} that the term
  $w_1$ is hence well-defined and is contained in $X_{\delta,\mu,q}$. It then follows from \Lemref{lem:propertyFODE} and \Lemref{HODElemma}  (whose hypotheses are satisfied) that the operator $\HODEu{\cdot}$ is well-defined, and consequently that $w_n$  is well-defined for all $2\le n\le q-2$. These functions are all elements of $X_{\delta,\mu,q-(n-1)}$. Using the estimate for the operator $\HODEu{\cdot}$ stated in \Lemref{HODElemma}, we verify that if we shrink the time interval $(0,\delta]$ to $(0,\widetilde\delta]$ as stated in the hypothesis of the Proposition under consideration, then we can show that finitely many of the sequence elements stay in a ball of fixed radius. We have thus verified  the first statement appearing in  \Eqref{eq:propertyHO}. Note that for convenience, in the remainder of this proof we continue to write $\delta$ instead of $\widetilde\delta$; however, we reserve the right to repeatedly shrink the time interval as necessary (a finite number of times).

We next argue by induction that the second statement in
  \Eqref{eq:propertyHO} holds. We presume that the differentiability
  index $q$ is sufficiently large so that there exist  nontrivial $n\le q-2$. To initialize the induction, we note that for $n=1$, this statement says that $u_1-u_0 \in B_{\delta,\mu,q,s/2}$. Noting that, by definition, $u_1-u_0=w_1=\HODEu{\FODEu{0}}$, and recalling from above that this term is contained in $B_{\delta,\mu,q,s/2}$, we verify the initialization. 
  
  To continue the induction argument, we suppose now that for some positive integer 
  $m<n$ there is an exponent vector $\mu^{(m)}\ge\mu$ such that
  $w_m-w_{m-1}\in B_{\delta,\mu^{(m)},q-(m-1),s/2}$, and further suppose
  that the same is true for all positive integers $i$ less than $m$
  (with corresponding exponent vectors $\mu^{(i)}$). Using the
  definition of $w_{i}$, together with \Lemref{HODElemma} and \Lemref{lem:propertyFODE}, we find that there exists a pair of exponent
  vectors $\mu\le\mu^{(m+1)} < \nu^{(m+1)}$ such that
\begin{align*}
  \|w_{m+1}-w_{m}\|_{\delta,\mu^{(m+1)},q-m}
  &=\|\HODEu{\FODEu{w_{m}}-\FODEu{w_{m-1}}}\|_{\delta,\mu^{(m+1)},q-m}\\
  &\le C\delta^\rho \|\FODEu{w_{m}}-\FODEu{w_{m-1}}\|_{\delta,\nu^{(m+1)},q-m}\\
  &\le C\delta^\rho  C\|w_{m}-w_{m-1}\|_{\delta,\nu^{(m+1)}-\gamma,q-(m-1)}.
\end{align*}
In carrying out this calculation (with $\gamma$ being the quantity hypothesized in  \Lemref{lem:propertyFODE}), we note that 
the operator $\HODEu{\cdot}$ is well-defined here according to
\Lemref{HODElemma} and \Lemref{lem:propertyFODE} since $w_m-w_{m-1}\in
B_{\delta,\mu^{(m)},q-(m-1),s/2}$ with $\mu^{(m)}\ge\mu$.
Finally we note  that the constants $C$ and
$\rho$ may depend in particular on $q$ and $m$, but this dependence is not a
problem for carrying out our argument since we are only interested in finitely many sequence elements.  

To complete the induction argument, we verify that since we have assumed (as part of the induction) that
$w_m-w_{m-1}\in B_{\delta,\mu^{(m)},q-(m-1),s/2}$, it follows that so long as $\nu^{(m+1)}-\gamma<\mu^{(m)}$ holds, we have the final right hand side of the above inequality finite. Therefore the initial left hand side must be finite, and this holds for any $\mu^{(m+1)}$, so long as $\mu^{(m+1)} <\mu^{(m)} +\gamma$. We satisfy these conditions by choosing 
$\mu^{(m+1)}=\mu+m\kappa\gamma$ for any $\kappa < 1$. Noting that this is the case for all $m$, with $\kappa$ chosen independently  of $m$, we conclude that
 \Eqref{eq:propertyHO} holds, after having identified $\kappa\gamma$ with $\gamma$ to simplify the notation.

It remains to verify that \Eqref{Res} holds for the residuals of the sequence $(u_n)$.
 Using
\Eqsref{eq:equation}, \eqref{eq:canonleadingterm} and
\eqref{eq:defsequence} we  calculate
\begin{align*}
  \Resu{u_n}&=\LPDEu{w_n}{u_0+w_n}-\FPDEu{w_n}\\
  &=-\SOu{w_n}\left(\FODEu{w_{n}}-\FODEu{w_{n-1}}\right).
\end{align*}
Since $w_{n}-w_{n-1}\in B_{\delta,\mu+(n-1)\gamma,q-n+1,s/2}$, it
follows from \Lemref{lem:propertyFODE} that
\begin{equation*}
\Resu{u_n}\in X_{\delta,\mu+n\gamma,q-n}.
\end{equation*}
\end{proof}

\subsubsection{(Order n)-singular initial value problem}
Proposition \ref{prop:propHO} shows that, so long as we can find an
ODE-leading-order term $u_0$ and so long as certain conditions hold,
the difference $u_{n+1}-u_n$ behaves like a power of $t$ near
$t=0$, with this power increasing monotonically with $n$.  It is hence
meaningful to consider, in addition to the ODE-singular initial value
problem with leading-order term $u_0$, a sequence of (order
n)-singular initial value problems which use (order n)-leading-order
terms $u_n$ ($n\le q-2$). In view of the relationship between $u_0$
and the sequence $(u_n)$, we may write the same solution $u$ of a
given singular initial value problem either in the form $u=u_0 +w$ for
a remainder $w$ in $X_{\delta,\mu,q}$, or, as $u=u_n+\omega$ for a
remainder $\omega$ in $X_{\delta,\widehat\mu,q}$ with $\widehat\mu$ increasing
suitably with $n$. The same can be done for any of the  $u_m$ ($m\le n$) in the (order n)-leading-order term sequence.

We now use the (order n)-leading-order terms to argue that, at least
for the smooth case ($q=\infty$), if the conditions of Proposition
\ref{prop:propHO} are met, the ODE-singular initial value problem, and 
correspondingly the (order n)-singular initial value problem have
solutions.

\begin{theorem}[Existence and uniqueness for the ODE-singular initial value
  problem]
  \label{th:Wellposedness1stOrderHigherOrder}
  Suppose that a quasilinear symmetric hyperbolic Fuchsian system with
  ODE-leading-order term $u_0$ has been chosen which satisfies the
  hypotheses of \Propref{prop:propHO} for all finite values of (differentiability index) $q$.
  Then, for some sufficiently small $\delta_1 \in
  (0, \delta]$ and for a sufficiently large $n$, there exists a unique
  solution $u$ of \Eqref{eq:1stordersystem} with $u-u_n, D(u-u_n)\in
  X_{\delta_1,\mu+n \gamma,\infty}$, where $u_n$ is the (order n)-leading-order
  term defined in Definition~\ref{Ordern} for this system. This
  solution   $u$ is also the only solution of the ODE-singular initial value problem
  with $u-u_0\in
  X_{\delta_1,\mu,\infty}$.
\end{theorem}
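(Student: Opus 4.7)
The plan is to reduce \Theoremref{th:Wellposedness1stOrderHigherOrder} to the main existence and uniqueness result \Theoremref{th:Wellposedness1stOrderFiniteDiff} by using the (order $n+1$)-leading-order term $u_{n+1}$ in place of $u_0$ and the shifted remainder exponent $\widehat\mu:=\mu+n\gamma$. Writing $u=u_{n+1}+\omega$ and using the linearity of $\widehat L$ in its second slot, \Eqref{eq:1stordersystem} is equivalent to
\[\LPDEu{w_{n+1}+\omega}{\omega}=F(u_{n+1}+\omega)-\LPDEu{w_{n+1}+\omega}{u_{n+1}},\]
whose right-hand side at $\omega=0$ is $-\Resu{u_{n+1}}$. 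By \Propref{prop:propHO}, $\Resu{u_{n+1}}\in X_{\delta,\mu+(n+1)\gamma,q-n-1}$ at every finite $q$, which supplies the strict gap $\nu'=\mu+(n+1)\gamma>\widehat\mu$ required by \Conditionref{en:cond4N} of \Theoremref{th:Wellposedness1stOrderFiniteDiff}; the $\omega$-dependent part of the source is Lipschitz into this same space by \Conditionref{en:cond4HO}.

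I would then check the remaining structural hypotheses of \Theoremref{th:Wellposedness1stOrderFiniteDiff} around the new leading-order term. The quasilinear symmetric hyperbolic Fuchsian structure (\Defref{def:quasilinearlimit}) transfers from $u_0$ to $u_{n+1}$ because $u_{n+1}-u_0=w_{n+1}$ lies in $B_{\widetilde\delta,\mu,q-n,s/2}$; the ``leading'' matrices $\SOLus,\STLus,\NLus$ are unchanged, while the new quasilinear remainders are controlled via \Lemref{lem:S0matrix} together with the product and composition estimates used in \Lemref{lem:propertyFODE}. The energy dissipation matrix $M_0$ remains positive definite at $\widehat\mu$ because replacing $\mu$ by $\widehat\mu$ only adds the positive definite contribution $n\gamma\,\SOLus$ to $M_0$. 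The Lipschitz bound \eqref{eq:LipschitzF} follows from \Conditionref{en:cond4HO}.

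The one delicate point, and the main obstacle, is the block diagonality condition (\Defref{def:nonessentiallycouple}) with respect to $\widehat\mu$. \Propref{prop:propHO} grants only the \emph{modified} block diagonality \eqref{eq:particularconditionS2}: $\SO{u}$ and $\NN{u}$ commute with $\RR{\mu}$, and therefore with $\RR{\widehat\mu}=t^{-n\gamma}\RR{\mu}$, but $\ST{u}$ need not; we only know that $\RR{\mu}t\ST{u}\RR{-\mu}\in B_{\delta',\zeta,q',r}$ with $\zeta>0$. I would handle this by splitting $\ST{u}=S_2^{BD}+S_2^{OBD}$ along the Jordan-block structure associated with $\RR{\mu}$, keeping $S_2^{BD}$ in the principal part and moving $tS_2^{OBD}\,\partial_x\omega$ to the source. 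Since $\RR{\widehat\mu}t\,S_2^{OBD}\RR{-\widehat\mu}=\RR{\mu}t\,S_2^{OBD}\RR{-\mu}\in B_{\delta',\zeta,q',r}$ (the scalar factor $t^{-n\gamma}$ cancels) and $\omega\in X_{\delta',\widehat\mu,q}$, this added source term lies in $X_{\delta',\widehat\mu+\zeta_{\min},q-1}$, strictly above $\widehat\mu$, and can therefore be absorbed without spoiling the gap $\nu'>\widehat\mu$. The reorganised system is then genuinely block diagonal with respect to $\widehat\mu$, and \Theoremref{th:Wellposedness1stOrderFiniteDiff} applies at every finite $q\ge 3$.

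A continuation-in-$q$ argument identical to the one at the end of the proof of \Theoremref{th:Wellposedness1stOrderFiniteDiff} then produces a single $\omega\in X_{\delta_1,\widehat\mu,\infty}$ with $D\omega\in X_{\delta_1,\widehat\mu,\infty}$ on a common interval $(0,\delta_1]$. Setting $u:=u_{n+1}+\omega$ and combining with $u_{n+1}-u_n\in X_{\delta_1,\widehat\mu,\infty}$ from \Propref{prop:propHO} gives $u-u_n,D(u-u_n)\in X_{\delta_1,\widehat\mu,\infty}$, and $u-u_0=w_{n+1}+\omega\in X_{\delta_1,\mu,\infty}$ is immediate. For the uniqueness claim in the coarser space $X_{\delta_1,\mu,\infty}$: if $\widetilde u$ is a second solution with $\widetilde u-u_0\in X_{\delta_1,\mu,\infty}$, then the linear uniqueness of \Propref{prop:uniqueness}, applied inductively to the equation satisfied by $\widetilde u-u_m$ (whose source is controlled by $\Resu{u_m}\in X_{\delta_1,\mu+m\gamma,\cdot}$), forces $\widetilde u-u_m\in X_{\delta_1,\mu+m\gamma,\infty}$ for every $m$, and in particular for $m=n$; the uniqueness already established for the shifted singular initial value problem then yields $\widetilde u=u$.
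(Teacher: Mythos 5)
Your overall strategy (re-express the system around a higher-order leading term, shift the exponent so that the energy dissipation matrix becomes positive definite, invoke \Theoremref{th:Wellposedness1stOrderFiniteDiff}, then run a continuation-in-$q$ argument and an inductive uniqueness argument) is the same as the paper's. The problem is the step you yourself flag as delicate: your treatment of the off-block-diagonal part of $S_2$ does not work as stated. Moving $t\,S_2^{OBD}\partial_x\omega$ into the source produces a source operator that maps $\omega\in B_{\delta',\widehat\mu,q,s}$ only into $X_{\delta',\widehat\mu+\zeta_{\min},q-1}$ — you lose one spatial derivative. But \Conditionref{en:cond4N} of \Theoremref{th:Wellposedness1stOrderFiniteDiff} requires the source operator to land in $X_{\delta',\nu,q}$ with the \emph{same} index $q$, and the first Lipschitz estimate in \Conditionref{cond:LipschitzF} must hold at order $q$ on both sides; these are exactly what make the fixed-point iteration contract in a single space $B_{\widetilde\delta,\mu,q,s}$. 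A source containing $\partial_x\omega$ satisfies neither, so \Theoremref{th:Wellposedness1stOrderFiniteDiff} cannot be applied to your reorganised system at any finite $q$, and the contraction does not close. The gap is not cosmetic: the whole point of the hypotheses on $\FLuOp$ is to forbid derivatives of the unknown in the source.

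The paper avoids this in two ways, and you need at least one of them. First, rather than insisting on the exact shift $\widehat\mu=\mu+n\gamma$, it chooses exponent vectors $\widetilde\mu^{(n)}$ anywhere in the window $\mu<\widetilde\mu^{(n)}<\mu+(n-1)\gamma$ subject to the \emph{genuine} block diagonality condition of \Defref{def:nonessentiallycouple}; for $n$ large the window is wide enough that such a choice exists (e.g.\ all components equal, so that $\RR{\widetilde\mu^{(n)}}$ is scalar), and then $S_2$ commutes with $\RR{\widetilde\mu^{(n)}}$ and no splitting is needed. Second, and crucially for the source term, when one writes out $\FLun{\omega}$ explicitly (using \Eqsref{eq:LPDE2ODE}, \eqref{eq:FODEexprP0} and \Defref{Ordern}) the term $\STu{w}\,t\partial_x\omega$ cancels exactly against the spatial-derivative term hidden inside $\SOu{w}\bigl(\FODEu{w}-\FODEu{w_{n-1}}\bigr)$ (compare with \Eqref{eq:FODEDiff}); only spatial derivatives of the fixed functions $w_n$, $w_{n-1}$ survive, so $\FLunOp$ incurs no loss of regularity in $\omega$ and \Conditionsref{en:cond4N}~and~\ref{cond:LipschitzF} can be verified at order $q$. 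Your proposal neither selects the exponent inside the window to restore block diagonality nor exploits this cancellation, so as written the central application of \Theoremref{th:Wellposedness1stOrderFiniteDiff} fails. The remaining parts of your argument (positivity of $M_0$ for large $n$, the bootstrap to $\mu+n\gamma$ via a larger $n_+$, and the inductive uniqueness in $X_{\delta_1,\mu,\infty}$) are in line with the paper, although for the uniqueness induction the paper works with the ODE operators $\LODEu{\cdot}$ and $\HODEu{\cdot}$ rather than \Propref{prop:uniqueness}, which spares you from having to verify the hypotheses of the linear theory at each step.
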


This result states conditions which are sufficient for the ODE-singular initial value problem
(with leading-order term $u_0$) to admit (unique) solutions.
In doing so, \Theoremref{th:Wellposedness1stOrderHigherOrder} provides a potentially very useful alternative to \Theoremref{th:Wellposedness1stOrderFiniteDiff} of Section~\ref{sec:firstordertheory}. 
Observe in particular that the hypothesis for \Theoremref{th:Wellposedness1stOrderHigherOrder} does \textit{not} require that the energy dissipation matrix be positive definite with respect to $\mu$.

Here we state and prove 
\Theoremref{th:Wellposedness1stOrderHigherOrder} only for the infinite
differentiability case ($q=\infty$). This smoothness restriction plays a role in the proof, since it  allows one to always choose $n$ large enough so that \Conditionref{en:cond2} of
\Theoremref{th:Wellposedness1stOrderFiniteDiff} for the singular
initial value problem with leading-order term $u_n$ is satisfied. If one tries to prove a result like \Theoremref{th:Wellposedness1stOrderHigherOrder} for finite differentiability order, then there is an upper bound for the possible choice of $n$, and consequently one may not be able to choose it large enough to satisfy \Conditionref{en:cond2}. However, in certain circumstances, a large but finite order of differentiability is in fact sufficient to carry through the proof.

\begin{proof}
The basic idea of the proof is to reformulate the system using
$u_n$ for the leading-order term in place of $u_0$, and then verify
that the hypothesis of Theorem
\ref{th:Wellposedness1stOrderFiniteDiff} (in the case $q=\infty$) is
satisfied if $n$ is chosen sufficiently large. To carry this through, we first argue that the system \Eqref{eq:1stordersystem}, which for the ODE-singular initial value problem can be written as
\begin{equation} 
\label{eqn1}
0=\LPDEu{w}{u_0+w}-\FPDEu{w}, 
\end{equation}
can also be written as 
\begin{equation}
\label{eqn2}
0=\LPDEun{\omega}{\omega}-\FLun{\omega},
\end{equation}
where we recall the definition \Eqref{eq:DefLPDE} of the principal
part operator $\widehat L$ and the definition \Eqref{eq:defFLu}
for the operator $\FLun{\cdot}$. Here we use $w$ for the remainder term
corresponding to $u_0$ and we use $\omega$ for the remainder term
corresponding to $u_n$ (hence $u_0 + w= u_n +\omega$). To show this equivalence, we note the relations 
$ \LPDEun{\omega}{v}= \LPDEu{w}{v}$ and $\FPDEun{\omega}=\FPDEu{w}$, and then using these we calculate
\begin{align*}
  0&=\LPDEu{w}{u_0+w}-\FPDEu{w}
   =\LPDEun{\omega}{u_n+\omega}-\FPDEun{\omega}\\
  &=\LPDEun{\omega}{u_n}+\LPDEun{\omega}{\omega}-\FPDEun{\omega}\\
   &=\LPDEun{\omega}{\omega}-\FLun{\omega}.
\end{align*}
The equivalence of \Eqref{eqn1} and \Eqref{eqn2} immediately follows.

We now choose a sequence of exponent vectors $\widetilde\mu^{(n)}$ 
which satisfy
\begin{equation}
  \label{eq:muinequality}
  \mu<\widetilde\mu^{(n)}<\mu+(n-1)\gamma,
\end{equation}
and which are consistent with the block diagonal condition for
\Eqref{eqn2}; we note that this is possible for all  sufficiently large
integers $n$.
Examining the singular initial value problem corresponding to \Eqref{eqn2}, we verify that for any given sufficiently large integer $n$, this PDE system, 
together with $u_n$ as leading order term and exponent vector $\widetilde\mu^{(n)}$, satisfies the conditions to be a quasilinear symmetric hyperbolic Fuchsian system. We also verify, based on \Eqref{eq:energydissipationmatrix}, that for sufficiently large $n$ (and therefore sufficiently large $\mu+(n-1)\gamma$) the exponent vectors $\widetilde\mu^{(n)}$ can be chosen large enough to guarantee that the energy dissipation matrix $M_0$ is positive definite. Consequently, this system satisfies \Conditionref{en:cond2} of the hypothesis of \Theoremref{th:Wellposedness1stOrderFiniteDiff}.

To check that Conditions~\ref{en:cond4N}~and~\ref{cond:LipschitzF} of
\Theoremref{th:Wellposedness1stOrderFiniteDiff} are also satisfied, we examine the
operator $\FLunOp$.  Using 
\Eqsref{eq:LPDE2ODE} and \eqref{eq:FODEexprP0} together with  \Defref{Ordern}, we
calculate
\begin{align*}
  \FLun{\omega}=&\,\SOu{w}\left(\FODEu{w}-\FODEu{w_{n-1}}\right) +\STu{w} t \partial_x \omega
\\&
    +\SOu{w}\Bigl(  \SOInvu{w}(\NNu{w}-\NLu)\\
      &\qquad\qquad\qquad\quad 
      +\bigl(\SOInvu{w}-\SOLInvu\bigr) \NLu\Bigr) \omega.
\end{align*}

By comparing the first line of this expression with \Eqref{eq:FODEDiff}, we notice that all spatial derivative terms cancel; hence there is no loss of regularity in this expression as is the case for the operator $\FODEuOp$ itself. We therefore get estimates analogous to those in
 \Lemref{lem:propertyFODE}, with $q-1$ replaced by $q$. 

Combining the assumptions for $S_1$, $S_2$ and $N$ which are stated in the hypothesis of \Propref{prop:propHO} (and therefore included in the hypothesis of Theorem \ref{th:Wellposedness1stOrderHigherOrder}) with  the upper bound stated in \Eqref{eq:muinequality}, we readily show that all of the conditions
of \Theoremref{th:Wellposedness1stOrderFiniteDiff} hold for $q=\infty$.
Then the consequent application of \Theoremref{th:Wellposedness1stOrderFiniteDiff} shows  that so long as $n$ is sufficiently large, there exists exactly one solution $u=u_n+\omega$ with $\omega\in X_{\delta_1,\widetilde\mu^{(n)},\infty}$.  

We wish to show next that for such a fixed chosen value of $n$, in fact $\omega\in X_{\delta_1,\mu+n\gamma,\infty}$. To show this, we consider an integer $n_+$ which is large enough so that $\widetilde\mu^{(n_+)}>\mu+(n-1)\gamma$. Applying the same argument as above, but now with $n_+$ instead of $n$ (and hence using $u_{n_+}$ as the leading-order term), we obtain a solution $\widetilde u=u_{n_+}+\widetilde\omega$ which has the property that $\widetilde\omega\in X_{\delta_1,\widetilde\mu^{(n_+)},\infty}$. Uniqueness of the singular initial value problem with respect to $u_n$ implies that
$\widetilde u$ equals $u$. Moreover, we have 
\[\omega=w_{n_+}-w_n+\widetilde
\omega=(w_{n+1}-w_n)+\ldots+(w_{n_+}-w_{n_+ -1})+\widetilde\omega.\]
Given that $w_{n+1}-w_n\in X_{\delta_1,\mu+n\gamma,\infty}$, we obtain the desired result
\[\omega\in X_{\delta_1,\mu+n\gamma,\infty}.\]

To conclude the proof of this theorem,
we must show that any solution $\widehat u$ of the form $\widehat
u=u_0+\widehat w$ with $\widehat w\in X_{\delta_1,\mu,\infty}$ must equal the solution
$u$. To show this, it is useful to write $\widehat u=u_n+\widehat w-w_n$, where $u_n$ is defined by \Eqref{eq:Defun} and $w_n$ is defined by \Eqref{eq:defsequence}. Then if 
 we can verify that $\widehat w-w_n\in
X_{\delta_1,\widetilde\mu^{(n)},\infty}$, it follows from uniqueness
that $\omega=\widehat w-w_n$, and hence that
$\widehat u=u$. We make this verification by using induction to show that, in fact, 
 $\widehat w-w_m\in X_{\delta_1,\mu+m  \gamma,\infty}$ holds for every non-negative integer $m$. In the case
$m=0$, we have $\widehat w-w_0=\widehat w\in X_{\delta_1,\mu,\infty}$ which implies the
claim for $m=0$. Suppose the claim has been shown for $m=m_0\ge 1$. We
know that $\widehat w$ is a solution of the equation
\[\LODEu{\widehat w}=\FODEu{\widehat w},\]
while $w_{m_0+1}$ is a solution of
\[\LODEu{w_{m_0+1}}=\FODEu{w_{m_0}}.\]
Taking the difference, we obtain
\[\LODEu{\widehat w-w_{m_0+1}}=\FODEu{\widehat w}-\FODEu{w_{m_0}}.\]
We can write this formally as
\[\widehat w-w_{m_0+1}=\HODEu{\FODEu{\widehat w}-\FODEu{w_{m_0}}}.\]
Now, the fact that $w-w_{m_0}\in
X_{\delta_1,\mu+m_0\gamma,\infty}$ implies that $\FODEu{w}-\FODEu{w_{m_0}}\in
X_{\delta_1,\mu+(m_0+1)\gamma,\infty}$
(\Lemref{lem:propertyFODE}). Consequently (see \Lemref{HODElemma}), the operator $\HODEu{\cdot}$ is
well-defined. This completes the proof.
\end{proof}


\subsubsection{An example: the Euler--Poisson--Darboux equation}
\label{sec:EPD}

We consider now the example of the {Euler-Poisson-Darboux equation} (see also \cite{Ames:2012tm} for another example)
\begin{equation}
  \label{eq:EPD}
  D^2 u(t,x)-t^2 u_{xx}(t,x)=f_0(t,x).
\end{equation}
Here, $u(t,x)$ is the unknown (assumed to be a scalar function), and $f_0(t,x)$ is a specified scalar function. The Euler-Poisson-Darboux equation is second order, and in previous work by two of the authors \cite{Beyer:2010fo} on semilinear second-order Fuchsian systems, it has been shown that this equation admits unique solutions to the singular initial value problem with leading-order term 
\begin{equation}
 \label{eq:2ndorderLOT}
 u_0(t,x)=u_*(x)\log t+u_{**}(x), 
\end{equation} 
(for arbitrary functions  $u_*$ and $u_{**}$) so long as $f_0=O(t^{\widehat\nu})$ with $\widehat\nu>0$. We seek to show that we obtain these same results using the first-order methods which we have developed here. In particular, this example demonstrates the usefulness of the techniques  discussed in \Sectionref{sec:expansionsHO}, thereby serving as a linear warmup example with  which we can  explore some of the issues which arise below in our discussion of the application of these methods to the fully nonlinear $T^2$--symmetric Einstein's vacuum equations in \Sectionref{application}.

To apply the first-order  theory developed in this paper, we first convert this 
equation into a first-order system by setting
\begin{equation}
\label{ICs}
u_1:=u,\quad u_2:=Du,\quad u_3:=t\partial_x u,\quad U:=(u_1,u_2,u_3)^T.
\end{equation}
\Eqref{eq:EPD} then takes the form of a first-order \textit{evolution system}
\begin{equation}
\label{EPDeq}
  S_1 D U+S_2 t\partial_x U+N U=f,
\end{equation}
with
\begin{equation*}
  S_1=\diag(1,1,1),\quad S_2=
  \begin{pmatrix}
    0 & 0 & 0\\
    0 & 0 & -1\\
    0 & -1 & 0
  \end{pmatrix}, \quad
  N=
  \begin{pmatrix}
    0 & -1 & 0\\
    0 & 0 & 0\\
    0 & 0 & -1
  \end{pmatrix},\quad
  f=
  \begin{pmatrix}
    0\\f_0\\0
  \end{pmatrix},
\end{equation*}
plus a \textit{constraint equation}
\begin{equation}
\label{EPDCon}
\Delta_u:=u_3/t-\partial_x u_1=0.
\end{equation}
Observe that in working with the Euler-Poisson-Darboux system in this first-order form, one first treats the components $u_1, u_2$, and $u_3$ as independent functions whose evolution is determined by \Eqref{EPDeq}. This means that we solve the singular initial value problem of this system with respect to a leading-order term motivated by \Eqref{eq:2ndorderLOT}. Then, in a second step, we identify $u_1$ with the original variable $u$ and consider the two remaining relations \Eqref{ICs} as constraints: the one involving the time derivative is automatically implied by the first of \Eqsref{EPDeq} (the evolution equation for $u_1$), while the one involving the spatial derivative gives rise to the condition $\Delta_u\equiv 0$ in \Eqref{EPDCon}. Let us start with the first step.

One readily verifies that this evolution system is of (quasilinear) symmetric hyperbolic Fuchsian form for any choice of leading-order term, and hence our theory can, in principle, be applied. Our approach is to find a leading-order term for the first-order variables which is consistent with \Eqref{eq:2ndorderLOT} and which, in addition, is an ODE-leading-order term. We easily determine that  the general solution to \Eqref{eq:canonleadingterm} for \Eqref{EPDeq} takes the  form
\begin{equation}
\label{EPDODETerm}
U_0=(C_1+C_2\log t, C_2, C_3 t)^T,
\end{equation}
for the spatially-dependent
parameters $C_1(x)$, $C_2(x)$ and $C_3(x)$. However, we see that this leading-order term can only be consistent with \Eqsref{eq:2ndorderLOT} and \eqref{ICs} in the special case $u_*=0$ and $C_2=0$. Hence, this approach for finding a leading-order term fails.

We circumvent this problem as follows. For a specified function $u_*$ (which is at least second order differentiable; we specify its necessary regularity more precisely below), we define 
\begin{equation}
  \label{eq:rescu}
  \widehat u:=u-u_*(x)\log t,
\end{equation}
and work with the evolution equation for $\widehat u$ rather than that for $u$. Substituting \Eqref{eq:rescu} into \Eqref{eq:EPD}, we obtain
\[D^2 \widehat u-t^2 \widehat u_{xx}=t^2\log t u_*''+ f_0(t,x),\]
where $u_*''$ indicates the second derivative of $u_*$. 
Now, setting
\[\widehat u_1:=\widehat u,\quad \quad
\widehat u_2:=D\widehat u,\quad \quad
\widehat u_3:=t\partial_x \widehat u,\quad\quad  
\widehat U:=(\widehat u_1,\widehat u_2,\widehat u_3)^T,\]
we obtain the evolution equation
\begin{equation}
\label{EPDhat}
  S_1 D \widehat U+S_2 t\partial_x \widehat U+N \widehat U=\widehat f,
\end{equation}
for the same matrices $S_1$, $S_2$, $N$ as above, but with
\[\widehat f=\bigl(0,f_0+t^2\log t\, u_*'',0\bigr)^T.\]
In terms of $\widehat u$, the constraint \Eqref{EPDCon} takes the form
\begin{equation}
  \label{eq:constraintviolation}
  \Delta_{\widehat u}:=\widehat u_3/t-\partial_x \widehat u_1=0.
\end{equation}
Choosing the ODE-leading-order term for the $\widehat U$ formulation to be of the same form as \Eqref{EPDODETerm}, we have 
\begin{equation*}
\widehat U_0=(C_1+C_2\log t, C_2, C_3 t)^T,
\end{equation*}
but now (in view of \Eqref{eq:2ndorderLOT}) we are led to 
choose the parameter functions in the form  $C_1=u_{**}$, $C_2=0$ and $C_3=u_{**}'$; hence 
\begin{equation}
  \label{eq:1stLOTEPD}
  \widehat U_0(t,x)=\bigl(u_{**}(x),0,t u_{**}'(x)\bigr)^T.
\end{equation}
The function $u_*$ appearing in \Eqref{eq:rescu} together with  the function $u_{**}$ introduced here together comprise the full range of free data  suggested by \Eqref{eq:2ndorderLOT}. Both play the role of asymptotic data functions.

Having found a suitable representation of the equations and the leading-order term, we write 
 the unknown $\widehat U$ of the evolution system as $\widehat U=\widehat U_0+W$, and look for  sufficient conditions for the existence of solutions to the singular initial value problem in this form, with $W$ as a remainder term. To enforce the remainder falloff properties, we 
 choose an exponent vector $\mu=(\mu_1,\mu_2,\mu_3)$ and, in view of \Eqref{eq:1stLOTEPD}, we require that $\mu_1,\mu_2>0$ and $\mu_3>1$. 

We first seek to prove existence of solutions using 
 \Theoremref{th:Wellposedness1stOrderFiniteDiff}.  To satisfy the  block diagonality condition of \Theoremref{th:Wellposedness1stOrderFiniteDiff} we must set $\mu_1=\mu_2=\mu_3$. We therefore simplify the notation by writing the exponent vector as $(\mu,\mu,\mu)$ for some smooth scalar function $\mu$ which, from above considerations, must be greater than one. Observe here that, while this equality of all components of the exponent vector is necessary to satisfy the hypothesis of  \Theoremref{th:Wellposedness1stOrderFiniteDiff}, it does appear to be an artificial restriction. Under reasonable regularity assumptions, we might rather expect that  if the first and second
components are $O(t^\mu)$, then the third component of $W$ should be
$O(t^{\mu+1}\log t)$; the $\log t$ factor may arise from derivatives of $t^\mu$ since  $\mu$ is generally not constant. 
In any case, we readily verify that the energy dissipation matrix 
\[M_0=\begin{pmatrix}
\mu & -1 & 0\\
0 & \mu & 0\\
0 & 0 & \mu-1
\end{pmatrix},
\] 
is positive definite so long as $\mu>1$. 

Calculating 
\begin{equation*}
  \FL{\widehat U_0}{W}
  =\FPDE{\widehat U_0}{W}-\LPDE{\widehat U_0+W}{U_0}
  =\bigl(0,f_0+t^2(\log t\, u_*''+u_{**}''),0\bigr)^T,
\end{equation*}
 we now suppose that $W\in X_{\delta,(\mu,\mu,\mu),q}$ and $f_0\in X_{\delta,\widehat\nu,q}$ for $\widehat\nu>1$. Then $\FL{\widehat  U_0}{W}\in X_{\delta,(\nu,\nu,\nu),q}$  if $u_*,u_{**}\in H^{q+2}(T^1)$, where $\nu=\widehat\nu$, if $\widehat\nu<2$, or, we have $\nu<2$, if $\widehat\nu\ge 2$. Choosing $q\ge 3$, we verify that \Theoremref{th:Wellposedness1stOrderFiniteDiff}  implies the existence of solutions of the evolution system $\widehat U=\widehat U_0+W$ with $W\in X_{\delta_1,(\mu,\mu,\mu),q}$ for $\delta_1$
sufficiently small\footnote{In fact, since the equations are linear, we can extend the solution to all positive times $t>0$.} and for an exponent $\mu\in (1, \min\{ 2, \widehat\nu \})$. For any specified set of the asymptotic data
$u_*$ and $u_{**}$, we find that the solution is unique for remainders in the  space $X_{\delta_1,(\mu,\mu,\mu),q}$.

Given any such solution of the first-order evolution system, our  next step is to identify $\widehat u_1$ with $u-\log(t) u_*$ and then, \emph{if} the remaining constraint $\Delta_{\widehat u}\equiv 0$ is satisfied, to conclude that $u$ is actually a solution of the original second-order equation \Eqref{eq:EPD} with leading-order term $u_0=u_*\log t+u_{**}$ and with remainder $w=w_1$ (the first component of the vector $W$) in $X_{\delta_1,\mu,q}$.
To determine  if the constraint is satisfied, we use the evolution equation \Eqref{EPDhat} to calculate the time derivative of the constraint violation term $\Delta_{\widehat u}$, obtaining 
\begin{equation}
\label{EPDtime}
D\Delta_{\widehat u}=0.
\end{equation}

We then note that (i) if we construct  $\Delta_{\widehat u}$ using $\widehat U_0$ from \Eqref{eq:1stLOTEPD},  we get $\Delta_{\widehat U_0}=0$; and (ii) if we combine the evolution equation \Eqref{EPDtime} with the leading order term $\Delta_{\widehat U_0}$ as well as $q\ge 3$ and other appropriate choices of $\mu$, etc., then we find that $\Delta_{\widehat u}$ satisfies a singular initial value problem which satisfies the hypothesis of \Theoremref{th:Wellposedness1stOrderFiniteDiff}. Noting that $\Delta_{\widehat u}=0$ is a solution to this singular initial value problem, and recalling that \Theoremref{th:Wellposedness1stOrderFiniteDiff} implies that solutions are unique, we see that indeed, the constraint  $\Delta_{\widehat u}=0 $ must be satisfied.

While this approach to analyzing  the singular initial value problem for the Euler-Poisson-Darboux system does produce a solution,
it is unsatisfactory
for two reasons. First, it does not allow us to treat the case in which $f_0\in X_{\delta,\widehat\nu,q}$ for $\widehat\nu<1$. Second, if $\widehat\nu>1$, this approach does not exclude the possible existence of other solutions $u$ with remainders $w$ in $X_{\delta,\mu,q}$ for $\mu<1$. Both of these issues are resolved if we use an alternative approach based on \Theoremref{th:Wellposedness1stOrderHigherOrder} and the use of (order n)-leading order terms. In doing this, we pay a price in that we must require a that the spatial derivative parameter $q$ is infinite. 

If we are to work with \Theoremref{th:Wellposedness1stOrderHigherOrder}, a key requirement is that we start with an ODE-leading-order term; we have already fulfilled this requirement by our choice of $\widehat U_0.$ We now have the advantage that we do not need to impose the block diagonal condition, but only the modified block diagonal conditions, see \Conditionref{en:modBD} in \Propref{prop:propHO}, and also not the positivity of the energy dissipation matrix in choosing the remainder exponent vector $\mu$; we may work with $\mu=(\mu_1,\mu_2,\mu_3)$ for \emph{any} $\mu_1,\mu_2>0$ and $1<\mu_3<\mu_2+1$, thereby permitting the full range of values of $\mu$ for which the singular initial value problem is meaningful. Notice that the upper bound for $\mu_3$ is implied by \Eqref{eq:particularconditionS2} and is related to the observation above that a spatial derivative of a spatially dependent power of $t$ may introduce additional $\log t$-terms.
 Proceeding, we suppose that we have chosen some $f_0\in X_{\delta,\widehat\nu,\infty}$ with $\widehat\nu>0$. Any choice of $\mu$ satisfying the above conditions is consistent with \Conditionref{en:cond1HOSource} of \Propref{prop:propHO} (as part of \Theoremref{th:Wellposedness1stOrderHigherOrder}) if $\mu_2<\min\{2,\widehat\nu\}$. Choosing $u_*,u_{**}\in C^\infty(T^1)$, we then verify straightforwardly  that \Conditionref{en:cond4HO} of \Propref{prop:propHO} is satisfied. We conclude that there exists a solution $\widehat U$ of the evolution system with $\widehat U-\widehat U_n\in X_{\delta_1,(\mu_1,\mu_2,\mu_3)+n\gamma,\infty}$ for some constants $\delta_1>0$ and a sufficiently large integer $n$. This solution is unique, with the remainder $\widehat U-\widehat U_0$ contained in
$ X_{\delta_1,(\mu_1,\mu_2,\mu_3),\infty}$.

Having verified the existence of solutions to
the first-order evolution system, we wish to show again that the corresponding solution is actually a solution of the original second-order equation by considering the constraint \Eqref{eq:constraintviolation}. This can be done essentially as discussed above.

To illustrate the use of the leading-order term approach, we choose the source term in the form
 $f_0(t,x)=f_*(x) t^{1/2}$ for a  smooth function $f_*$, and calculate
\[\widehat U_1=\widehat U_0+
\begin{pmatrix}
  4f_*(x) t^{1/2}+\frac 14 t^2\left(\log t u_*''+u_{**}''-u_*''\right)\\
  2f_*(x) t^{1/2}+\frac 12 t^2\left(\log t u_*''+2u_{**}''-u_*''\right)\\
  0
\end{pmatrix}
\]
and
\[\widehat U_2=\widehat U_1+
\begin{pmatrix}
  0\\0\\
  4 t^{3/2}\log t f_*'+\frac 14 t^3\left(\log t u_*^{(3)}+u_{**}^{(3)}-u_*^{(3)}\right)
\end{pmatrix}.
\]
One may continue to calculate the sequence, and one verifies (in accord with the last statement in Proposition \ref{prop:propHO}) that the residuals corresponding to this sequence are contained in $X$ spaces of monotonically increasing exponent. 

\section
{\texorpdfstring{$T^2$}{T2}--symmetric vacuum Einstein spacetimes}
\label{application}

\subsection{Objective of this section} 

 As noted in the Introduction, one of the main motivations for this work is to explore  the singular regions of certain classes of solutions of the Einstein gravitational field equations. In particular, as a step towards studying the strong cosmic censorship conjecture in families of solutions characterized by relatively large isometry groups, we use the Fuchsian formulations developed here to show that there are large sets of solutions in these families which exhibit AVTD behavior in a neighborhood of their singularity.

We work here with spacetimes which are characterized by a spatially-acting $T^2$ isometry group, but do not have the further restriction of a non-vanishing ``twist", which defines the familiar Gowdy spacetimes. Following convention, we refer to them as the ``$T^2$--symmetric spacetimes"; if they also satisfy the Einstein equations, we call them ``$T^2$--symmetric solutions". While much is known regarding the Gowdy spacetimes, including a proof that strong cosmic censorship holds for the Gowdy spacetimes with $T^3$ spatial topology \cite{Ringstrom:2009ji} and for polarized Gowdy spacetimes with any allowed spatial topology \cite{Chrusciel:1999dk}, much less is known about the $T^2$--symmetric solutions. For both the Gowdy and $T^2$--symmetric families, the presence of the $T^2$ isometry effectively reduces the analysis  to that of a PDE system on a $1+1$ dimensional manifold.  One notable difference, however, is that while the Gowdy PDE system is  semilinear,  that of the $T^2$--symmetric solutions is  quasilinear.

The first work showing that there are (non-polarized) Gowdy spacetimes with AVTD behavior  is that of Kichenassamy and Rendall \cite{Kichenassamy:1999kg} which uses Fuchsian methods to show that this is true for analytic Gowdy solutions on $T^3$. The later work of Rendall \cite{Rendall:2000ki} shows this for Gowdy spacetimes which are smooth, again using Fuchsian methods (adapted to smooth solutions rather than analytic solutions). Fuchsian methods have been used \cite{Isenberg:1999ba}  to verify that there are analytic polarized $T^2$--symmetric solutions with AVTD behavior. Here, we use the results presented above to show the same for $T^2$--symmetric solutions (polarized and half-polarized) which are not analytic.

\subsection{ \texorpdfstring{$T^2$}{T2}-symmetric
  spacetimes}
  \label{spacetimes}

The family of vacuum $T^2$--symmetric spacetimes  is characterized by a $T^2$ isometry group which acts effectively on each spacetime in the family, with the generating Killing vector fields being everywhere spacelike. We assume that each such spacetime is the maximal globally hyperbolic development of an initial data set on a compact Cauchy surface, with the data invariant under an effective $T^2$ action. One more condition distinguishes the spacetimes we consider here from the Gowdy subfamily. Let $Y$ and $Z$ be the generators of the $T^2$ isometry. The Gowdy subfamily is characterized by the assumption that the distribution defined by the tangent planes orthogonal to the generators $Y$ and $Z$ is integrable. This condition is usually expressed as the vanishing of the two twists $K_Y$ and $K_Z$.\footnote{Let $\xi := g(Y,\cdot), \zeta : = g(Z, \cdot)$ be the generating forms of the distribution $D$. Frobenius' theorem states that $D$ is integrable if and only if $ K_Y:= \star d\xi \wedge \xi \wedge \zeta$ and $ K_Z:= \star d\zeta \wedge \xi \wedge \zeta $ both vanish.} We work here with $T^2$-spacetimes with at least one non-vanishing twist. Chru{\'s}ciel has shown \cite{Chrusciel:1990ti} that the vacuum Einstein equations force the twists to be constants, and that the condition of non-vanishing twist implies the Cauchy surfaces must have $T^3$ topology.

Such spacetimes can be foliated by areal coordinates,  in which the time coordinate labeling each symmetry group orbit is equal to the area of that orbit. This coordinate system conveniently locates the singularity at $t=0$ except in the special case of flat Kasner, as is shown by Isenberg and Weaver in \cite{Isenberg:2003dr}. Local existence for these coordinates is shown by Chru{\'s}ciel, \cite{Chrusciel:1990ti}, and  global existence is proved by Berger et.\ al.\ in \cite{Berger:1997dp}, and further clarified in  \cite{Isenberg:2003dr}.

Let $y,z$ be coordinates on $T^2$, and let $x$ be the remaining spatial coordinate, which takes values in $S^1$. The metric can be written \cite{Chrusciel:1990ti} in the form\footnote{In some representations of these metrics, two ``shift constants" $M_x$ and $M_y$ also appear. These can be removed using gauge choices; hence we leave them out.}
 \begin{equation*}
    g = e^{2(\eta -U)} \Big( -\alpha dt^2 + dx^2 \Big) 
    + e^{2U} \bigg( dy + A dz + \Big(G_1 + A G_2 \Big) dx \bigg)^2 
    + e^{-2U} t^2 \bigg( dz + G_2 dx \bigg)^2,
  \end{equation*}
  where all the metric functions $\{\eta, U, \alpha, A, G_1, G_2 \}$ depend only on $t$ and $x$. 

If both twist constants vanish, then  the function $\alpha $ can be chosen to be a constant, in which case the above metric reduces to the Gowdy metric \cite{Gowdy:1974hv}. 

The \emph{polarized} class of $T^2$--symmetric spacetimes results from setting $A$ equal to a constant in the initial data (or, equivalently, assuming that the dot product of the generators $Y,Z$ is initially the same at all spatial points\footnote{ Observe that if the dot product is constant, one can always find a new pair of generators (by taking linear combinations) which are orthogonal.}), and verifying that this condition is preserved under evolution. While the polarized spacetimes are characterized by a geometric condition, another subclass we consider, called the \emph{half-polarized} $T^2$--symmetric spacetimes, is defined by a restriction on the asymptotic behavior of the fields (see \Sectionref{sec:defnAVTD}).

Before writing down the Einstein vacuum equations, we make a few further coordinate
choices to simplify the presentation. Without loss of generality we
choose the generators such that $K_Y = 0, K_Z \equiv K \neq 0$. This
can be achieved by choosing an appropriate linear combination of any generators
for the $T^2$ action. It is sufficient to consider $K>0$ since the
transformation $K \to -K$ preserves all conditions imposed thus
far. Next we choose coordinates $y,z$ on $T^2$ so that $Y
= \partial_y$ and $Z= \partial_z$. This can be done without changing
the form of the metric above. Implementing these simplifications, and using the short-hand notation
 $U_t := \partial_t U$ for derivatives, we write the Einstein equations as the
following system of PDEs, which includes a set of second order equations
  \begin{eqnarray}
    \label{eq:WaveequationU}
    U_{tt} +\frac{U_t}{t} - \alpha U_{xx} &=&  
    \frac{\alpha_x U_x}{2} + \frac{\alpha_t U_t}{2 \alpha } 
    + \frac{e^{4 U}}{2 t^2} \big(A_t^2 - \alpha A_x^2 \big), \\
    \label{eq:WaveequationA}
    A_{tt} -\frac{A_t}{t} - \alpha A_{xx} &=&  \frac{\alpha_x A_x}{2}
    + \frac{\alpha_t A_t}{2 \alpha} - 4 A_t U_t + 4 \alpha A_x U_x,\\
    \label{eq:Waveequationeta}
    \eta_{tt} - \alpha \eta_{xx} &=&  \frac{\alpha_x \eta_x}{2} 
    + \frac{\alpha_t \eta_t}{2 \alpha} - \frac{\alpha_x^2}{4 \alpha} 
    + \frac{\alpha_{xx}}{2} - U_t^2 + \alpha U_x^2, \\
    & & + \frac{e^{4 U}}{4 t^2} \big(A_t^2 - \alpha A_x^2 \big) 
    - \frac{3 e^{2 \eta} \alpha}{4 t^4} K^2,\notag
  \end{eqnarray}
a set of  first-order equations
\begin{eqnarray}
\label{eq:Firstorderequationeta}
\eta_t &=& t U_t^2 + t\alpha U_x^2 
+ \frac{e^{4 U}}{4t} ( A_t^2 +\alpha A_x^2)
+ \frac{e^{2 \eta}}{4t^3} \alpha K^2,\\
\label{eq:Constrequationeta}
\eta_x  &=& 2t U_t U_x
+  \frac{e^{4 U}}{2t} A_t A_x - \frac{\alpha_x}{2 \alpha}, \\
\label{eq:Firstorderequationalpha}
\alpha_t &=& -\frac{e^{2 \eta}}{t^3} \alpha^2 K^2,
\end{eqnarray}
plus a set of  auxiliary equations
\begin{equation}
  \label{eq:EquationForG}
  G_{1t}= e^{2 \eta} \sqrt{\alpha}\, A K t^{-3},
  \quad G_{2t}= -e^{2 \eta} \sqrt{\alpha}\, K t^{-3}. 
\end{equation}
Here,  the auxiliary equations originate from  the definition of the twist constants $K_Y$ and
$K_Z$ and from the ``gauge" simplification $K_Y=0$ noted above.

Observe that the $T^2$--symmetric Einstein system reduces to the Gowdy
system in the standard areal coordinates if we set $K=0$, $\alpha\equiv
1$, $G_1\equiv 0$, and $G_2\equiv 0$. The Einstein equations in the Gowdy
class are semilinear and a Fuchsian analysis with analytic asymptotic
data has been carried out by Kichenassamy and Rendall 
\cite{Kichenassamy:1999kg}, and with smooth asymptotic data by Rendall
\cite{Rendall:2000ki} and by Beyer and LeFloch \cite{Beyer:2010tb}.

\subsection{Existence of AVTD solutions to the Einstein vacuum equations}
\label{sec:avtdexistence}

\subsubsection{AVTD behavior and heuristics}
\label{sec:defnAVTD}

What is the behavior of a singular solution to Einstein's equations
near the singularity? In principle the behavior could be very
complicated for a solution to a system of nonlinear PDE such as the
Einstein equations. In \cite{Lifshitz:1963hz,Belinskii:1970fu,Belinskii:1982bn} Belinskii, Khalatnikov,
and Lifshitz (BKL) propose that generically the spacetime dynamics
near the singularity is vacuum dominated, local, and oscillatory. According to this picture, an
observer traveling  toward the singularity (either backward or forward in time, depending upon the location of the singularity) would
experience an infinite sequence of Kasner epochs, and each observer at
different spatial points would experience a different, generally unrelated, sequence.

Numerical simulations of $T^2$--symmetric spacetimes \cite{Andersson:2005vg,Berger:2001dl, Lim:2009fz} support this picture, except perhaps at points where spikes occur. Whether the complicated behavior found near spikes, and the apparent prevalence of spikes, invalidates the BKL picture for general $T^2$--symmetric solutions is far from clear. However, for the restricted family of \emph{polarized} $T^2$--symmetric solutions, numerical simulations indicate that a special form of BKL behavior occurs near singularities--asymptotically velocity term dominated, or AVTD, behavior--which is not dominated by what happens near spikes. In a spacetime with  AVTD behavior,  each
observer experiences only a finite sequence of Kasner epochs in the
approach to the singularity \cite{Isenberg:1990gn,Eardley:1972ig,Hanquin:1983dl}, and the limiting spacetime is different for each observer. 

While there are no analytical studies of inhomogeneous cosmological solutions which either confirm or deny the presence of general BKL behavior, as noted above there has been a significant amount of such work supporting the generic presence of AVTD behavior in restricted families of solutions. Studies based on singular initial value problem formulations of Fuchsian PDEs are particularly well-adapted to doing this, since 
they involve specifying a choice of asymptotic behavior (a Kasner evolution independently at each point), and showing that there are solutions of the equations which approach this asymptotic behavior. If we can show that the Einstein equations for the polarized  $T^2$--symmetric spacetimes, together with certain choices of the leading order term, satisfy the conditions of the hypothesis of either Theorem \ref{th:Wellposedness1stOrderFiniteDiff} or Theorem \ref{th:Wellposedness1stOrderHigherOrder}, then we have confirmation that there are such spacetimes which have AVTD behavior.

Observe that finding solutions in a given family of spacetimes with AVTD behavior does not imply that there are not solutions in that same family  with a very different form of asymptotic behavior. However, since numerical simulations support AVTD behavior being generic among polarized $T^2$--symmetric solutions, there have been no searches for alternative forms of asymptotic behavior among them.

The name ``asymptotically velocity term dominated" refers to the fact that the leading order terms are chosen as asymptotic solutions of the ``velocity term dominated" (VTD) system, which is formed
 from the Einstein equations by
dropping terms with spatial derivatives. This step encodes the local
aspect of the BKL proposal. It can be shown \cite{Isenberg:1999ba,Clausen:2007vq} that the
following expansions for the metric functions below asymptotically
solve this VTD system in the limit $t \to 0$. We write these expansions in
terms of asymptotic data $\{ k, U_{**}, A_*, A_{**}, \eta_{*}, \alpha_* ,G_{1*},
G_{2*}\}$
with the regularity of the data specified below.
\begin{align}
  \label{eq:AVTDLeadingorder1}
  U(t,x)&=\frac 12(1-k(x))\log t+U_{**}(x)+\ldots,\\
  A(t,x)&=A_*(x)+A_{**}(x) t^{2k(x)}+\ldots,\\
  \eta(t,x)&=\frac 14(1-k(x))^2\log t+\eta_*(x)+\ldots,\\
  \alpha(t,x)&=\alpha_*(x)+\ldots,\\
  G_{1}(t,x)&=G_{1*}(x) +\ldots,\\
  \label{eq:AVTDLeadingorder6}
  G_{2}(t,x)&=G_{2*}(x) +\ldots.
\end{align}
Of particular importance here is the function $k$. It determines the Kasner exponents $p_1, p_2, p_3$ of the local Kasner solutions which are approached at any spatial point
\[p_1=(k^2-1)/(k^2+3),\quad
p_2=2(1-k)/(k^2+3),\quad
p_3=2(1+k)/(k^2+3).\]

We recall here that a $T^2$--symmetric solution is defined to be {\bf polarized} if
the two Killing vectors corresponding to the $T^2$ isometry can be
chosen to be {\sl orthogonal everywhere.} This is the case if and only if the metric coefficient 
$A\equiv const$. A solution 
with AVTD behavior has this property if and only if the asymptotic data corresponding to $A$ satisfy the conditions $A_{**}\equiv 0$ and $A_{*}\equiv const$. Since $A_{*}\equiv const$ can be gauged to $A_{*}\equiv 0$, we see that in the polarized case, there is effectively no free asymptotic data to choose which relates to $A$. There is an interesting relationship between the polarization condition and the sign of $k$: Examining equations \eqref{eq:AVTDLeadingorder1}--\eqref {eq:AVTDLeadingorder6}, we find that if a solution is not  polarized and has AVTD behavior, then there is power law blow-up at the singularity if and only if $k$ is negative. Yet if that spacetime is polarized, then regardless of the sign of $k$, there is no power law blow-up at the singularity. 

The polarization condition is relevant to our application of our Fuchsian results to $T^2$--symmetric solutions since, as we see below, our results cannot be applied unless  the condition  $\partial_x A_*=0$ holds for the asymptotic data. For polarized $T^2$--symmetric solutions, this restriction on $A_*$ is automatic. It is important to note, however, that requiring $\partial_x A_*= 0$ does not restrict us to polarized solutions. We may consider asymptotic data which has this restriction on $A_*$, but has \emph{no} restriction on $A_{**}$. $T^2$--symmetric solutions which are AVTD and which have asymptotic data of this sort are known to exist, and have been called ``half-polarized"\footnote{The use of the term``half-polarized" to describe AVTD solutions with ``half" of the asymptotic data of one of the gravitational degrees of freedom turned off first appears in a discussion of $U(1)$-symmetric solutions with AVTD behavior, in \cite{Isenberg:2002ku}.} \cite{Clausen:2007vq}. Extending the results of both \cite{Isenberg:1999ba} (analytic and polarized) and \cite{Clausen:2007vq} (higher regularity), we show here that there are large families of both half-polarized and polarized $T^2$--symmetric solutions which are smooth or of even lower regularity, and which have AVTD behavior near their cosmological singularities.

A general (neither polarized nor half-polarized) $T^2$--symmetric solution, were it to be AVTD, would have asymptotic data with both $A_*$ and $A_{**}$ non-vanishing and non-constant. Based on numerical and heuristic considerations, however, it is expected that  spacetimes with non-constant $A_*$ do not generally show AVTD behavior. Rather, these are expected to  show Mixmaster-like BKL behavior at the $t=0$ singularity, or behavior which is even more complicated (with strong spike influence). We do not address this issue here. 

We now discuss two applications of our Fuchsian results which verify AVTD behavior in $T^2$--symmetric solutions. For the first one, \Theoremref{th:existenceAVTDCk}, we make only minimal assumptions regarding  the regularity of the asymptotic data. The price to pay for this is that the result does not cover
the full expected range for the function $k=k(x)$ in
\Eqsref{eq:AVTDLeadingorder1} -- \eqref{eq:AVTDLeadingorder6}. For the second result \Sectionref{sec:optimalexistence},
\Theoremref{th:existenceAVTDCinf}, we add regularity restrictions, but we do get the expected full range of allowed values for $k$.


\subsubsection{Existence of low regularity solutions with AVTD behavior}
\label{sec:firstexistence}

The low regularity result, which we formulate, discuss, and prove in this subsection, is an application of 
 \Theoremref{th:Wellposedness1stOrderFiniteDiff} to the polarized and half-polarized solutions of the $T^2$--symmetric equations. 

\begin{theorem}[First result: AVTD (half)-polarized
  $T^2$--symmetric vacuum solutions -- finite differentiability]
  \label{th:existenceAVTDCk}
Suppose one chooses  a twist constant $K\in\R$,  a pair of asymptotic data constants $A_*$ and
  $\eta_0$, and a set of asymptotic data functions $k, U_{**}, \alpha_*\in H^{q+2}(T^1)$ (with $\alpha_*(x)>0$),
  $A_{**}\in H^{q+1}(T^1)$ and $G_{1*}, G_{2*} \in H^{q}(T^1)$
  for any $q\ge 3$, which satisfy the integrability condition\footnote{This integrability condition results from integrating the constraint equation \Eqref{eq:Constrequationeta} along a constant $t$ circle, and taking the limit as $t \rightarrow 0$. Observe that here and below we use the notation $U_{**}'(x), (\log \alpha_*)'$, etc., to denote derivatives of functions which depend on $x$ only.} 
\begin{equation*}
 \int_0^{2\pi}\Bigl((1-k(x))U_{**}'(x)
  -\frac 12(\log \alpha_*)'(x)\Bigr)dx=0,
  \end{equation*}
  together with, at each point $x\in T^1$, either
  \begin{enumerate}[label=\textit{(\roman{*})}, ref=(\roman{*})]
  \item$k(x)>1+\sqrt 6$ for arbitrary $A_{**}$ (the
    \keyword{half-polarized case}),
  \item$k(x)>1+\sqrt 6$ or $k(x)<1-\sqrt 6$ for $A_{**}\equiv 0$ 
    (the \keyword{polarized case}).
  \end{enumerate}
  Then there exists a $\delta>0$, and a $T^2$--symmetric solution $U$,
  $A$, $\eta$, $\alpha$, $G_1$, $G_2$ of Einstein's vacuum field
  equations with twist $K$ of the form
  \begin{equation*}
    (U,A,\eta,\alpha,G_1,G_2)=(U_0,A_0,\eta_0,\alpha_0,G_{1,0},G_{2,0})+W.
  \end{equation*}
  Here, the leading-order term
  $(U_0,A_0,\eta_0,\alpha_0,G_{1,0},G_{2,0})$ is given by
  \Eqsref{eq:AVTDLeadingorder1}--\eqref{eq:AVTDLeadingorder6}, with
  \begin{equation}
    \label{eq:defetaS}
    \eta_*(x):=\eta_0+\int_0^x\left((1-k(X))U_{**}'(X)
      -\frac 12(\log \alpha_*)'(X)\right)dX.
  \end{equation}
  The remainder
  $W$ is contained in $X_{\delta,\mu,q}$ (and $DW\in X_{\delta,\mu,q-1}$) for any exponent vector
  $\mu=(\mu_1,\mu_2,\mu_3,\mu_4,\mu_5,\mu_6)$ with
  \begin{equation}
    \label{eq:lowregexponents}
    \begin{split}
    1 &<\mu_1(x)
    <\min\{2,(k(x)-3)(k(x)+1)/2\},\\
     (2k(x)+\sqrt{1+4k(x)^2})/{2}&<\mu_2(x)<1+2k(x),\\
    0&<\mu_3(x)<\mu_1(x),\\ 
    0&<\mu_4(x),\mu_5(x),\mu_6(x)<(k(x)-3)(k(x)+1)/2.
  \end{split}
  \end{equation}
  This solution is unique among all solutions with the same
  leading-order term $U_0$ and with remainder $W\in X_{\delta,\mu,q}$.
\end{theorem}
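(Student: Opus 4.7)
The plan is to apply \Theoremref{th:Wellposedness1stOrderFiniteDiff} to a first-order reformulation of the $T^2$-symmetric evolution system. First, I would absorb the auxiliary \Eqref{eq:EquationForG} out of the main analysis by noting that these are decoupled transport equations in $t$ whose right-hand sides are expressible through the other metric fields; they can be integrated a posteriori once $U, A, \eta, \alpha$ are controlled, so that the conditions on $G_{1,*}, G_{2,*}$ in the asymptotic data yield remainders for $G_1, G_2$ of the claimed orders. For the remaining fields, I would introduce the first-order variables $U, V_0:=DU, V_1:=t\partial_x U, A, P_0:=DA, P_1:=t\partial_x A, \eta, \alpha$, rewrite \Eqsref{eq:WaveequationU}--\eqref{eq:WaveequationA} and \Eqref{eq:Firstorderequationeta}, \Eqref{eq:Firstorderequationalpha} as a quasilinear first-order system of the form \Eqref{eq:1stordersystem}, and verify symmetric hyperbolicity by a standard symmetrization (which is available in this block structure since the only cross-derivative couplings are the wave-equation pairs $(V_0, V_1)$ and $(P_0, P_1)$, with $S_{1,0}=\text{Id}$ and $S_{2,0}$ the off-diagonal symmetric block $\text{diag}(-\alpha_*,-\alpha_*)$ on each pair). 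Treating $\eta, \alpha$ as zeroth-order fields with $S_{2,0}$-entries equal to zero is crucial, and the quasilinear dependence of $S_{2}$ on $\alpha$ through the factor $\sqrt\alpha$ (in a slightly rescaled version) enters only through the $S_{2,1}$ perturbation, which decays as $t\to 0$.

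Next I would fix the leading-order term $u_0$ as the AVTD expansion \Eqref{eq:AVTDLeadingorder1}--\eqref{eq:AVTDLeadingorder6}, with $\eta_*$ determined by \Eqref{eq:defetaS} (so that the spatial $\eta$-constraint \Eqref{eq:Constrequationeta} is satisfied at leading order, the periodicity being exactly the integrability condition in the hypothesis). Writing $u=u_0+w$ and forming $\FL{u_0}{w}=\FPDEu{w}-\LPDEu{w}{u_0}$, the heart of the proof is to verify the three structural conditions of \Theoremref{th:Wellposedness1stOrderFiniteDiff}. Block-diagonality with respect to $\mu$ holds because in the symmetrized system each variable is either a singlet or lies in a $2\times 2$ block that I assign a single exponent; concretely I pair $(V_0, V_1)$ with common exponent $\mu_1$ (together with $U$) and $(P_0, P_1)$ with common exponent $\mu_2$ (together with $A$), and assign $\mu_3, \mu_4, \mu_5, \mu_6$ to $\eta, \alpha, G_1, G_2$ respectively, so that the principal-part matrices commute with $\RR{\mu}$.

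The crucial calculation is the energy dissipation matrix $M_0 = S_{1,0}\,\text{diag}(\mu) + N_0$. For the $(U, V_0, V_1)$ block the entries of $N_0$ are generated by both the linear transport terms in the wave equation and the quadratic source $e^{4U_0} (A_t^2-\alpha A_x^2)/(2t^2)$ which, evaluated on $u_0$, is $O(t^{-2+2(1-k)+4k})=O(t^{2k})$ on the $V_0$-row and produces a term of size $\sim 2 k A_{**}\, t^{2k-2}\cdot(1-k)$-type contributions; careful expansion shows that the leading constant part of the effective $N_0$ on the $(V_0,V_1)$ block, after symmetrization, has eigenvalues whose positivity (together with the diagonal shift $\text{diag}(\mu_1,\mu_1,\mu_1)$) reduces to the quadratic inequality in $k$ that yields the threshold $|k-1|>\sqrt 6$. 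The entry for the $A$-block is analogous but involves the weaker relation $\mu_2^2 - 2k \mu_2 + (k^2-1/4)<0$-type bound leading to $\mu_2>(2k+\sqrt{1+4k^2})/2$; the upper bounds $\mu_1<2$ and $\mu_2<1+2k$ and $\mu_{4,5,6}<(k-3)(k+1)/2$ arise from \Conditionref{en:cond4N}, where one must check that $\FLuOp$ maps remainders in $X_{\delta,\mu,q}$ into $X_{\delta,\nu,q}$ with $\nu>\mu$; this is a careful bookkeeping exercise using the nonlinear estimates in the appendix, in particular on terms such as $e^{2\eta}\alpha K^2 t^{-3}$, $e^{4U}A_t^2/t^2$, and $e^{4U}A_x^2/1$, together with the contributions from subtracting off the VTD leading order. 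The Lipschitz estimates \Conditionref{cond:LipschitzF} follow from the same Moser/product inequalities. The main obstacle here is this algebraic verification of the positivity of $M_0$ on both blocks simultaneously; it is this computation that produces the restriction $k>1+\sqrt 6$ in the half-polarized case (and the extra branch $k<1-\sqrt 6$ when $A_{**}\equiv 0$, since the entire $(P_0,P_1)$ block decouples from $U$ in the dissipation argument).

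Finally, \Theoremref{th:Wellposedness1stOrderFiniteDiff} delivers a unique solution $w\in X_{\widetilde\delta,\mu,q}$ of the reduced evolution system. To conclude that the resulting $(U,A,\eta,\alpha)$ solves the full Einstein system I must propagate the spatial constraint \Eqref{eq:Constrequationeta}, which was used only at leading order. Differentiating the constraint violation $\Delta:=\eta_x - 2t U_t U_x -(e^{4U}/2t)A_t A_x + \alpha_x/(2\alpha)$ and using the evolution equations, one obtains a linear homogeneous ODE in $t$ for $\Delta$ of Fuchsian type with a strictly positive coefficient, so that a second application of the linear uniqueness (essentially as in the Euler--Poisson--Darboux example of \Sectionref{sec:EPD}) gives $\Delta\equiv 0$, provided $\Delta$ decays fast enough as $t\searrow 0$ (which follows from the choice of $\mu$ and from $\Delta|_{u_0}=0$, the latter being precisely \Eqref{eq:defetaS} combined with the integrability condition). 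The $G_1,G_2$ asymptotic matching then follows by direct integration of \Eqref{eq:EquationForG} between $0$ and $t$, using the exponential and polynomial bounds on $\eta, \alpha, A$ obtained above.
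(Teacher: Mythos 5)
Your overall strategy --- first-order reduction, application of \Theoremref{th:Wellposedness1stOrderFiniteDiff} to the main evolution equations, and a posteriori propagation of the constraints together with direct integration of \Eqref{eq:EquationForG} --- is the same as the paper's. However, there are two substantive gaps. First, your list of first-order variables omits $\xi:=\partial_x\alpha$. The source terms of \Eqsref{eq:WaveequationU}--\eqref{eq:WaveequationA} and the constraint contain $\alpha_x$, and the Fuchsian framework of \Eqref{eq:1stordersystem} requires $f=f(t,x,u)$ to be independent of derivatives of $u$; since $\alpha$ obeys only a first-order ODE in $t$ with no spatial principal part, $\partial_x\alpha$ cannot be recovered from the remaining variables without losing a derivative. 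The paper therefore evolves $\xi$ as a ninth unknown (its equation obtained by differentiating \Eqref{eq:Firstorderequationalpha} in $x$ and eliminating $\eta_x$ via \Eqref{eq:Constrequationeta}), and correspondingly the constraint propagation involves the \emph{coupled} pair $C_1$ (your $\Delta$) and $C_2=\alpha_x-\xi$, not a single scalar ODE. Relatedly, you never address the redundant second-order equation \Eqref{eq:Waveequationeta}; the paper shows that its violation $H$ is an algebraic combination of $C_1$, $C_{1,x}$ and $C_2$, so it vanishes once those do --- a step your sketch still needs.

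Second, you attribute the threshold $|k-1|>\sqrt 6$ entirely to the positivity of the energy dissipation matrix $M_0$. That is not where it comes from. Positivity of $M_0$ yields only the \emph{lower} bound $\mu_1>1$ (forced by the $-\alpha_*$ diagonal entry of $N_0$ in the row for $t\partial_x\widehat U$), whereas the \emph{upper} bound $\mu_1<(k-3)(k+1)/2$ comes from \Conditionref{en:cond4N}: the twist terms $K^2e^{2\eta}\alpha\,t^{-3}\sim t^{(1-k)^2/2-2}=t^{(k-3)(k+1)/2}$ must land in $X_{\delta,\nu,q}$ with $\nu>\mu$. The restriction $k>1+\sqrt 6$ (or $k<1-\sqrt 6$ in the polarized case) is exactly the compatibility condition $(k-3)(k+1)/2>1$ that makes the admissible interval for $\mu_1$ nonempty. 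Getting this bookkeeping right matters, since it is precisely the lower bound $\mu_1>1$ that is removed in \Theoremref{th:existenceAVTDCinf}, improving the range to $k>3$ or $k<-1$. A more minor point: to make the AVTD expansion an admissible leading-order term for the first-order system one must first subtract the logarithmic parts, i.e.\ work with $\widehat U=U-\tfrac12(1-k)\log t$ and $\widehat\eta=\eta-\tfrac14(1-k)^2\log t$ exactly as in the Euler--Poisson--Darboux warm-up of \Sectionref{sec:EPD}; your sketch only alludes to ``a slightly rescaled version.''
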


Observe that by taking time derivatives of the Einstein field equations, we can also
obtain corresponding statements about the behavior of a certain number
of time derivatives $D^mW$ of the remainder function $W$. We do not elaborate on this any further here.

This result, based on \Theoremref{th:Wellposedness1stOrderFiniteDiff}, does not imply uniqueness of the solutions within the \textit{whole} class of solutions of interest:
For a given choice of asymptotic data, \Theoremref{th:existenceAVTDCk} determines that there is exactly one solution with remainder $W$ in spaces $X_{\delta,\mu,q}$ with $\mu$ given by \Eqref{eq:lowregexponents}. The full class of remainders compatible with the leading-order behavior \Eqsref{eq:AVTDLeadingorder1}--\eqref{eq:AVTDLeadingorder6} however corresponds to exponents
\[\mu_1,\mu_2-2k,\mu_3,\mu_4,\mu_5,\mu_6>0.\]
Hence, for given asymptotic data there may exist further solutions in such a larger space.
 Strict uniqueness can be explored further using techniques involving (order n)-leading order terms. We return to this issue in  \Sectionref{sec:optimalexistence} below; the price which we have to pay for strict uniqueness is that we need to require higher differentiability for the asymptotic data.

In proving \Theoremref{th:existenceAVTDCk}, it is useful to arrange the $T^2$--symmetric Einstein vacuum equations, \Eqsref{eq:WaveequationU}-\eqref{eq:EquationForG}, as well as the field variables, in a certain hierarchical form: \Eqsref{eq:WaveequationU}, \eqref{eq:WaveequationA},
\eqref{eq:Firstorderequationeta} and
\eqref{eq:Firstorderequationalpha} together form a coupled evolution system (which we label the ``main evolution equations'') for the variables $U, A, \eta$, and $\alpha$. \Eqref{eq:Constrequationeta} serves as a constraint equation for this system, while \Eqref{eq:Waveequationeta} is effectively redundant, and can be ignored. The remaining equations \Eqsref{eq:EquationForG} are evolution equations for $G_1$ and $G_2$, and can be handled after the analysis of the main evolution equations. 

We proceed now to focus on the main evolution equations, with the primary existence result for them -- the main step toward a proof of \Theoremref{th:existenceAVTDCk} -- being \Propref{prop:existencemainevolutionSqrt6}.

\paragraph{Main evolution equations.}

To rewrite the main evolution equations as a first order symmetric hyperbolic Fuchsian system, it is  useful to define certain new variables. Some of the choices of these variables are motivated by considerations in \cite{Isenberg:1999ba}, others by the discussion above in Section \ref{sec:EPD}. First, we set
\begin{equation}
  \label{eq:defxi}
  \xi:=\partial_x\alpha,
\end{equation}
whose evolution equation is obtained by taking the spatial derivative of
\Eqref{eq:Firstorderequationalpha} and by substituting any occurrence
of $\eta_x$ by the constraint \Eqref{eq:Constrequationeta}. One obtains
\begin{equation*}
  \xi_t=-\frac{e^{2 \eta}}{t^4} \alpha K^2\left(
    t\xi+\alpha\left(
    e^{4U}A_xA_t+4t^2 U_xU_t\right)
  \right).
\end{equation*}
In all other evolution equations we use \Eqref{eq:Firstorderequationalpha} to eliminate $\alpha_t$ and replace $\alpha_x$ by $\xi$.

Next, we find that for both $U$ and $\eta$, it is useful to replace the given variable by that which involves the subtraction of the indicated log term in the asymptotic VTD expansions \Eqref{eq:AVTDLeadingorder1}--\eqref{eq:AVTDLeadingorder6}: We set $\widehat{\eta}:=\eta-\frac 14(1-k)^2\log t$ and set $\widehat U:=U-\frac{1}{2}(1-k(x))\log t$; compare this to our approach in \Sectionref{sec:EPD}. Adding  a few other minor modifications, we are led to define the following set of first-order variables:
\begin{align}
  \label{eq:leadingorder1st1}
  u_1 &= \widehat U, &u_2& = D\widehat U, &u_3& = t \partial_x \widehat U, \\
  u_4 &= A, &u_5& = DA,  &u_6& = t \partial_x A, \\
  \label{eq:leadingorder1st3}
  u_7 &=\widehat \eta, &u_8& = \alpha,  &u_9& = \xi.
\end{align}
Observe that, at this stage, $k(x)$ is an arbitrary function (introduced in
\Eqsref{eq:AVTDLeadingorder1}--\eqref{eq:AVTDLeadingorder6}), with no restrictions. In terms of the 
new set of the variables, the main evolution
system \Eqsref{eq:WaveequationU}, \eqref{eq:WaveequationA},
\eqref{eq:Firstorderequationeta} and
\eqref{eq:Firstorderequationalpha}  can  be written in symmetric hyperbolic form as follows:
\begin{align}
  \label{eq:mainevolutionsystem1}
  Du_1 - u_2  =&\,  0, \\
  Du_2 - u_8 t \partial_x u_3 
  =&\,  \frac{1}{2} t u_9 (u_3-\frac 12 t\log t k') 
  + \frac{1}{2}e^{4u_1}t^{-2k}\left( u_5^2 - u_8 u_6^2 \right)\\
  &- \frac{1}{4 } e^{2 u_7}t^{1/2(1-k)^2-2} u_8  K^2 (1-k+2u_2)\notag\\
  &-\frac 12 t^2\log t k'' u_8,\notag \\
  u_8 Du_3 - u_8 t \partial_x u_2 -u_8 u_3=&\,0, \\
  Du_4 -u_5 =&\,  0, \\
  Du_5  -2k u_5- u_8 t \partial_x u_6 =&\, - 4 u_5 u_2 
  + \frac{1}{2} t u_9 u_6+ 2 u_8 u_6 (2u_3-t\log t k') \\
  &- \frac{1}{2} e^{2u_7} t^{1/2(1-k)^2-2} u_8 u_5 K^2,\notag \\
  u_8Du_6  - u_8 t \partial_x u_5  - u_8 u_6 =&\, 0, \\
  Du_7 =&\, (1-k)u_2+u_2^2  +  \frac 14 u_8 (2u_3-t\log t k')^2\\ 
  &+ \frac{1}{4} t^{-2k} e^{4 u_1} \left( u_5^2 + u_8 u_6^2 \right)+ \frac{1}{4} e^{2 u_7}  t^{1/2(1-k)^2-2} u_8 K^2,\notag \\
  Du_8 =&\, - e^{2u_7} t^{1/2(1-k)^2-2} u_8^2 K^2, \\
  \label{eq:mainevolutionsystem9}
  Du_9 =&\, - e^{2 u_7} t^{1/2(1-k)^2-2} u_8 K^2\\
    &\!\!\!\!\!\!\!\!\!\!\!\!\!\!\!\!\!\!\!\!\!\!\cdot\Bigl(\frac{(1-k+2u_2) (2u_3-t\log t k') u_8}{t} 
    + t^{-1-2k} u_5 u_6 u_8e^{4 u_{1}}+ u_9
  \Bigr),\notag
\end{align}
or equivalently as
\begin{equation}
  \label{eq:mainevol1st1}
  S_1 Du+S_2 t\partial_x u+N u=f[u],
\end{equation}
where
\be 
	\label{eq:T2S0matrix}
  \SO{u} =\diag (1,1,u_8,1,1,u_8,1,1,1),
\ee
\be
  	\label{eq:T2S1matrix}
  \ST{u} =
  \begin{pmatrix}
    0 &0 &0 &0 &0 &0 &0 &0 &0\\
    0 &0 & -u_8 &0 &0 &0 &0 &0 &0\\
    0 & -u_8 &0 &0 &0 &0 &0 &0 &0\\
    0 &0 &0 &0 &0 &0 &0 &0 &0\\
    0 &0 &0 &0 &0 &-u_8 &0 &0 &0\\
    0 &0 &0 &0 &-u_8 &0 &0 &0 &0\\
    0 &0 &0 &0 &0 &0 &0 &0 &0\\
    0 &0 &0 &0 &0 &0 &0 &0 &0\\
    0 &0 &0 &0 &0 &0 &0 &0 &0
  \end{pmatrix},
\ee
\be
  \label{eq:mainevol1st2}
  \NN{u} =\begin{pmatrix}
    0 &-1 &0 &0 &0 &0 &0 &0 &0\\
    0 &0 &0 &0 &0 &0 &0 &0 &0\\
    0 &0 & -u_8 &0 &0 &0 &0 &0 &0\\
    0 &0 &0 &0 &-1 &0 &0 &0 &0\\
    0 &0 &0 &0 &-2k &0 &0 &0 &0\\
    0 &0 &0 &0 &0 &-u_8 &0 &0 &0\\
    0 &0 &0 &0 &0 &0 &0 &0 &0\\    
    0 &0 &0 &0 &0 &0 &0 &0 &0\\
    0 &0 &0 &0 &0 &0 &0 &0 &0
  \end{pmatrix}.
\ee
Note that we have multiplied the third and sixth equations by $u_8$. The source-term vector $f$ is given by the right-hand sides of the evolution system
\Eqsref{eq:mainevolutionsystem1}--\eqref{eq:mainevolutionsystem9}. The reason for keeping this particular form of the matrix $N(u)$ (and not absorbing some of its entries into the source-term) becomes clear shortly.


\paragraph{AVTD solutions of the main evolution system.}
We now show as an application of
\Theoremref{th:Wellposedness1stOrderFiniteDiff}, and as a step towards proving Theorem \ref{th:existenceAVTDCk}, that there exist unique solutions to the singular
initial value problem for the main evolution system \eqref{eq:mainevol1st1}-\eqref{eq:mainevol1st2}, with AVTD leading-order term
\begin{equation}
  \label{eq:leadingorderterm}
  \begin{split}
    u_0=&(u_{1,0},u_{2,0},u_{3,0},u_{4,0},u_{5,0},u_{6,0},u_{7,0},u_{8,0},u_{9,0})\\
    =&\left(U_{**}, 0, t U_{**}',
      A_*+A_{**} t^{2k},
      2k A_{**} t^{2k}, 0,
      \eta_{*},
      \alpha_*,
      \xi_*\right).
  \end{split}
\end{equation}
Although not needed for our present argument, we note (by inspecting \Eqref{eq:canonleadingterm}) that this choice of $u_0$ is an ODE-leading-order term; cf.\ \Sectionref{sec:EPD}.

To check that we have a quasilinear symmetric hyperbolic system, we need to specify an exponent vector along with the PDE system and a leading order term. Looking ahead to the conditions of block diagonality, we choose 
\begin{equation}
  \label{eq:T2symmexpo}
  \mu=(\mu_1,\mu_1,\mu_1,\mu_2,\mu_2,\mu_2,\mu_3,\mu_4,\mu_4),
\end{equation}
and expect to construct remainders in spaces $X_{\delta,\mu,q}$ with $\mu$ given by
\[\mu_1,\mu_3,\mu_4>0,\quad
\qquad \mu_2>2k.\]
We then find, after replacing $u_8$ by $\alpha_*+w_8$, that so long as we choose $\alpha_* > 0$, and so long as we require that all of the asymptotic data functions be contained in some $H^q(T^1)$ (which we fix below), we indeed have a quasilinear symmetric hyperbolic system, which in addition does satisfy the block diagonality condition. 

Before continuing the argument that the hypothesis of \Theoremref{th:Wellposedness1stOrderFiniteDiff} is satisfied, we state our result. 

\begin{proposition}
  \label{prop:existencemainevolutionSqrt6}
For any twist constant $K\in\R$, for any Sobolev differentiability index $q\ge 3$, and for any choice  of the
asymptotic data functions such that  $A_*$ is an arbitrary constant, $\alpha_*(x)>0$, $k, U_{**}, \alpha_*\in H^{q+2}(T^1)$, $A_{**}\in H^{q+1}(T^1)$ and $\eta_* \in H^{q}(T^1)$,  and $k$ satisfies (at each  $x\in T^1$) either
  \begin{enumerate}[label=\textit{(\roman{*})}, ref=(\roman{*})]
  \item $k(x)>1+\sqrt 6$ (for arbitrary $A_{**}$ the half-polarized case), 
  \item  $k(x)>1+\sqrt 6$ or $k(x)<1-\sqrt 6$ (for $A_{**}\equiv 0$ the polarized case), 
  \end{enumerate}
 there exists a $\delta_1\in (0,\delta]$,
  and a unique solution of the first order main evolution system
  \Eqsref{eq:mainevol1st1}--\eqref{eq:mainevol1st2} with leading-order
  term $u_0$ and remainder $w\in X_{\delta_1,\mu,q}$ (and $Dw\in
  X_{\delta_1,\mu,q-1}$) so long as the exponent vector $\mu$ given by \Eqref{eq:T2symmexpo}
satisfies the following inequalities for all $x\in T^1$: 
  \begin{gather*}
    1 <\mu_1(x)
    <\min\big\{ 2,(k(x)-3)(k(x)+1)/2\big\},
\\
    {1 \over 2}  \left(2k(x)+\sqrt{1+4k(x)^2}\right) <\mu_2(x)<1+2k(x),\\
    0<\mu_3(x)<\mu_1(x),\\ 
    0<\mu_4(x)<\frac12 (k(x)-3)(k(x)+1).
  \end{gather*}
\end{proposition}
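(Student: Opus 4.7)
The strategy is a direct application of \Theoremref{th:Wellposedness1stOrderFiniteDiff} to the first-order main evolution system \Eqsref{eq:mainevol1st1}--\eqref{eq:mainevol1st2} around the AVTD leading-order term $u_0$ from \Eqref{eq:leadingorderterm}, with the exponent vector $\mu$ grouped as in \Eqref{eq:T2symmexpo}. The proof splits into four tasks: (a) verify that the system is a quasilinear symmetric hyperbolic Fuchsian system around $u_0$ in the sense of \Defref{def:quasilinearlimit}; (b) verify block diagonality with respect to $\mu$; (c) verify positivity of the energy dissipation matrix $M_0$; (d) verify the source-term Conditions~\ref{en:cond4N}--\ref{cond:LipschitzF} of \Theoremref{th:Wellposedness1stOrderFiniteDiff}. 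The main obstacle will be task~(d), which is where both the upper bounds on $\mu$ and the trichotomy for $k$ are forced.

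For~(a) I would read the leading parts $\SOLus$, $\STLus$, $\NLus$ off the matrices \Eqref{eq:T2S0matrix}--\eqref{eq:mainevol1st2} by replacing $u_8$ with $\alpha_*(x)$ and by isolating the $u$-independent entries, taking $\beta_i\equiv 1$ except in the rows $3$ and $6$ (where the coupling to $t\partial_x$ carries a factor $\alpha_*+w_8$ that we rewrite as $\alpha_*(1+w_8/\alpha_*)$, producing an $\STHus{w}$ entry with positive exponent). Positivity and symmetry of $\SOLus$ hold since $\alpha_*>0$ uniformly on $T^1$ (and $s,\delta$ will be chosen small enough to keep $u_8$ bounded away from zero); symmetry of $\STLus$ is immediate from the form of \Eqref{eq:T2S1matrix}. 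The residual matrices $\SOHus{w}, \STHus{w}, \NHus{w}$ are polynomial functions of $w_8$ alone, so by \Sectionref{sec:productsfunctions} of the appendix they map $B_{\delta',\mu,q,s}$ into balls in $X_{\delta',\zeta,q}$ for an exponent matrix $\zeta>0$ whose nonzero rows are given by $\mu_4$. Block diagonality~(b) is immediate from the grouping in \Eqref{eq:T2symmexpo}: each of the matrices $\SO{u}, \ST{u}, \NN{u}$ has its nonzero entries within the diagonal blocks indexed by $(1,2,3)$, $(4,5,6)$, $(7)$, $(8,9)$, and $\RR{\mu}$ is constant on each such block.

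For~(c) the matrix $M_0=\SOLus\operatorname{diag}(\mu)+\NLus$ decomposes accordingly into four blocks. The $(1,2,3)$-block is upper triangular with diagonal $(\mu_1,\mu_1,\alpha_*(\mu_1-1))$ and a single off-diagonal entry $-1$; its symmetric part is positive definite iff $\mu_1>1$. The $(4,5,6)$-block has diagonal $(\mu_2,\mu_2-2k,\alpha_*(\mu_2-1))$ with off-diagonal $-1$; a $2\times 2$ determinant computation shows its symmetric part is positive definite iff $\mu_2>(2k+\sqrt{1+4k^2})/2$ (which automatically implies $\mu_2>1$ when $k\ge 0$; in the polarized case $k<1-\sqrt 6$ one must verify this separately but it still holds by the same computation). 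The $(7)$- and $(8,9)$-blocks yield $\mu_3,\mu_4>0$. This accounts for all the lower bounds in the proposition.

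For~(d), after subtracting the principal-part action on $u_0$, the components of $\FLu{w}$ are polynomial combinations of $w$ and $u_0$ multiplied by factors that are either powers of $t$ with explicit exponents or behave as $t^a\log t$. The slowest-decaying universal contributions come from the $K^2$-terms, which enter with a factor $e^{2u_7}t^{(1-k)^2/2-2}=e^{2u_7}t^{(k-3)(k+1)/2}$; these appear in the equations for $u_2, u_5, u_7, u_8, u_9$, forcing $\mu_i<(k-3)(k+1)/2$ for $i=1,3,4$ and compatibility conditions in the $\mu_2$-block. The $\log t$ terms arising from spatial derivatives of the logarithmic leading term (such as $t^2\log t\, k''$ in the $u_2$-equation and $t\log t\, k'$ in the $u_3$- and $u_6$-equations) contribute factors of order $t^{2-\varepsilon}$ and $t^{1-\varepsilon}$, forcing $\mu_1<2$. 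The terms $t^{-2k}e^{4u_1}(u_5^2\pm u_8 u_6^2)$ in the $A$-block, using $u_0$-expansions, produce contributions of order $t^{2k}$ times $w$-terms, and the requirement that these have order strictly above $\mu_2$ translates into $\mu_2<1+2k$. Combining the lower bound $\mu_1>1$ with the upper bound $\mu_1<(k-3)(k+1)/2$ yields the quadratic inequality $k^2-2k-5>0$, i.e.\ $k>1+\sqrt 6$ or $k<1-\sqrt 6$. In the polarized case $A_{**}\equiv 0$ removes the $A_{**}$-dependent contributions, and the sign of $k$ becomes immaterial for the $\mu_1$-bounds so that both branches of $k$ are admissible; in the half-polarized case the coupling through $A_{**}$ requires $2k>0$, restricting to $k>1+\sqrt 6$. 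The Lipschitz estimates \Conditionref{cond:LipschitzF} follow from Moser-type product and composition estimates for polynomials and smooth functions of $u_0+w$ (with $u_0\in H^{q+2}$, $A_{**}\in H^{q+1}$), as collected in the appendix. With all hypotheses verified, \Theoremref{th:Wellposedness1stOrderFiniteDiff} provides the claimed unique solution with remainder in $X_{\delta_1,\mu,q}$ and $Dw\in X_{\delta_1,\mu,q-1}$. The main technical difficulty is bookkeeping in task~(d): tracking decay exponents for each nonlinearity while respecting the block structure of $\mu$ and the differing regularity of the asymptotic data.
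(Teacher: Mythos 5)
Your proposal follows essentially the same route as the paper: the paper proves this proposition by verifying the hypotheses of \Theoremref{th:Wellposedness1stOrderFiniteDiff} in two lemmas (one computing the energy dissipation matrix blockwise, yielding exactly your lower bounds $\mu_1>1$, $\mu_2>(2k+\sqrt{1+4k^2})/2$, $\mu_3,\mu_4>0$; the other checking Conditions~\ref{en:cond4N}--\ref{cond:LipschitzF} via the product and composition lemmas of the appendix, yielding the upper bounds) and then observing, as you do, that compatibility of $\mu_1>1$ with $\mu_1<(k-3)(k+1)/2$ forces $k^2-2k-5>0$, i.e.\ the $1\pm\sqrt 6$ threshold, with the positive branch alone surviving in the half-polarized case. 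The only quibble is a bookkeeping detail: the bound $\mu_2<1+2k$ is more naturally traced to the $t\partial_x(A_{**}t^{2k})=O(t^{1+2k}\log t)$ contribution of $\LPDEu{w}{u_0}$ inside $\FLu{w}$ rather than to the $t^{-2k}e^{4u_1}(u_5^2\pm u_8u_6^2)$ terms, but this does not affect the argument.
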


Observe here that the inequality just stated for $\mu_2$ is not required to hold in the case of a polarized
solution, since in that case $A$ is not a dynamical variable, and this condition is vacuous. Although
here and below we list results for the polarized and half-polarized
cases together for compactness, the reader focusing on the polarized case may ignore all references to 
$\mu_2$ and to $w_4$, $w_5$ and $w_6$.

As noted above, this proposition is an application of \Theoremref{th:Wellposedness1stOrderFiniteDiff} to   \Eqsref{eq:mainevol1st1}--\eqref{eq:mainevol1st2}.  In the next lemma we verify that under the assumptions of   \Propref{prop:existencemainevolutionSqrt6} the \Conditionref{en:cond2} of \Theoremref{th:Wellposedness1stOrderFiniteDiff} is satisfied. The first condition follows directly from the definition of the energy dissipation
matrix $M_0$.

\begin{lemma}
  \label{lem:cond1}
  The energy dissipation matrix $M_0$ defined in 
    \Eqref{eq:energydissipationmatrix} corresponding to
  \Eqsref{eq:mainevol1st1}--\eqref{eq:mainevol1st2}, to the leading-order
  term $u_0$ given by \Eqref{eq:leadingorderterm} and to the exponent vector $\mu$
  of the form \Eqref{eq:T2symmexpo}
  is positive definite at every $x$, provided that 
 $$
\aligned
& \alpha_*(x)>0, \quad \mu_1(x)> 1, 
\\
& \mu_2(x) > \max\{ 1, k(x)+ {1 \over 2} \sqrt{1+4k(x)^2} \}, \quad \mu_3(x), \mu_4(x) > 0, 
\endaligned
$$
  hold for all $x\in T^1$.
\end{lemma}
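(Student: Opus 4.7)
The plan is to compute $M_0$ explicitly from its definition in \Eqref{eq:energydissipationmatrix}, exploit its (almost) block-diagonal structure, and reduce the positive-definiteness question to a handful of scalar and $2\times 2$ inequalities. From \Eqref{eq:T2S0matrix} one reads off $S_{1,0}=\mathrm{diag}(1,1,\alpha_*,1,1,\alpha_*,1,1,1)$ after setting $u_{8,0}=\alpha_*$, while \Eqref{eq:mainevol1st2} yields $N_0$ by freezing $u_8$ at $\alpha_*$; crucially the only off-diagonal entries of $N_0$ are the constants $-1$ at positions $(1,2)$ and $(4,5)$, and the $-2k$ at $(5,5)$ is already diagonal. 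Because $S_{1,0}\,\mathrm{diag}(\mu_1,\mu_1,\mu_1,\mu_2,\mu_2,\mu_2,\mu_3,\mu_4,\mu_4)$ is diagonal, the sum $M_0$ has a symmetric part $(M_0+M_0^T)/2$ which decomposes as a direct sum of two $2\times 2$ blocks,
\[
B_1=\begin{pmatrix}\mu_1 & -1/2\\ -1/2 & \mu_1\end{pmatrix},\qquad
B_2=\begin{pmatrix}\mu_2 & -1/2\\ -1/2 & \mu_2-2k\end{pmatrix},
\]
together with five scalar entries $\alpha_*(\mu_1-1)$, $\alpha_*(\mu_2-1)$, $\mu_3$, $\mu_4$, $\mu_4$.

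Since positive-definiteness of the quadratic form $v\mapsto\langle M_0v,v\rangle$ is equivalent to positive-definiteness of the symmetric part, it suffices to check each block separately. The eigenvalues of $B_1$ are $\mu_1\pm 1/2$, so $B_1>0$ iff $\mu_1>1/2$. The block $B_2$ has positive trace and positive determinant precisely when $\mu_2^2-2k\mu_2-1/4>0$ together with $\mu_2>0$ and $\mu_2-2k>0$; the quadratic condition is equivalent to $\mu_2>k+\tfrac12\sqrt{1+4k^2}$ (the larger root of the associated quadratic), and one checks that this root is always strictly greater than both $0$ and $2k$, so it automatically forces the other two inequalities. The five scalar entries are positive iff $\alpha_*>0$, $\mu_1>1$, $\mu_2>1$, and $\mu_3,\mu_4>0$, with $\mu_1>1$ subsuming $\mu_1>1/2$.

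Combining these conditions, pointwise positive-definiteness of $M_0(x)$ holds exactly when $\alpha_*(x)>0$, $\mu_1(x)>1$, $\mu_2(x)>\max\{1,\,k(x)+\tfrac12\sqrt{1+4k(x)^2}\}$, and $\mu_3(x),\mu_4(x)>0$, which is precisely the hypothesis of the lemma. The proof will be essentially a bookkeeping exercise; the only observation worth stressing is that the eigenvalue threshold $k+\tfrac12\sqrt{1+4k^2}$ for $\mu_2$ is what ultimately pins down the ranges of $k$ appearing in Proposition~\ref{prop:existencemainevolutionSqrt6}, once it is intersected with the upper bound $\mu_2<1+2k$ needed elsewhere for \Conditionref{en:cond4N} of \Theoremref{th:Wellposedness1stOrderFiniteDiff} to hold.
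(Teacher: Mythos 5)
Your computation is correct and is exactly the direct verification that the paper has in mind: it states only that the condition ``follows directly from the definition of the energy dissipation matrix $M_0$'' and omits the bookkeeping you carry out. Your block decomposition, the eigenvalue threshold $\mu_1>1/2$ (subsumed by $\mu_1>1$ from the scalar entry $\alpha_*(\mu_1-1)$), and the identification of $k+\tfrac12\sqrt{1+4k^2}$ as the larger root of $\mu_2^2-2k\mu_2-\tfrac14=0$, which indeed dominates both $0$ and $2k$, all check out, so the hypotheses of the lemma are reproduced exactly.
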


The next lemma establishes \Conditionsref{en:cond4N} and \ref{cond:LipschitzF} of \Theoremref{th:Wellposedness1stOrderFiniteDiff}.

\begin{lemma}
  \label{lem:T2symmFLuOp}
  The operator $\FLuOp$ corresponding to
  \Eqsref{eq:mainevol1st1}--\eqref{eq:mainevol1st2}, to the leading-order
  term $u_0$ given by \Eqref{eq:leadingorderterm}, and to the exponent vector $\mu$
  of the form \Eqref{eq:T2symmexpo} satisfies
  \Conditionref{en:cond4N} and \ref{cond:LipschitzF} of
  \Theoremref{th:Wellposedness1stOrderFiniteDiff} for some exponent
  vector $\nu>\mu$, for some sufficiently small $\delta>0$, and for a choice of the 
  differentiability index $q\ge 3$, so long as 
  $\alpha_*$ and $\eta_*$ are functions in $H^q(T^1)$, $A_{**}$ is contained in
  $H^{q+1}(T_1)$, $k$ and $U_{**}$ are elements of  $H^{q+2}(T^1)$, and if at each point $x\in T^1$, the following inequalities hold for $\mu$ and $k$: 
$$
\aligned
    \max\{0,1-(k(x)-3)(k(x)+1)/2\}  &<\mu_1(x)
    <\min\{ 2,(k(x)-3)(k(x)+1)/2\},
\\
     2k(x)<\mu_2(x) &<\min\{ 1+2k(x),\mu_1(x)+2k(x)\},
\\
    0<\mu_3(x)&<\mu_1(x),
\\ 
    0<\mu_4(x)<\min\{ (k(x)-3)(k(x)+1)/2, & \mu_1(x)-1+(k(x)-3)(k(x)+1)/2\},
 \endaligned
$$
  and 
  \begin{gather*}
  3 < k(x) \quad \text{in the half-polarized case,}\\
  3 < k(x) \quad \text{or} \quad k(x) < -1 \quad \text{in the polarized case}.
  \end{gather*}
In both the polarized and the half-polarized cases, it follows from the two inequalities stated above for $\mu_1$ that $k(x)$ must either satisfy $k(x)>1+\sqrt 5$ or $k(x)<1-\sqrt{5}$.
\end{lemma}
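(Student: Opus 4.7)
The plan is to compute the operator $\FLu{w}=\FPDEu{w}-\LPDEu{w}{u_0}$ componentwise for the first-order system \eqref{eq:mainevol1st1}--\eqref{eq:mainevol1st2}, substituting the explicit leading-order term $u_0$ from \eqref{eq:leadingorderterm}, and then to track, for each of the nine components, the leading power of $t$ as $t\searrow 0$. This determines a componentwise exponent vector $\nu$, and the requirement $\nu>\mu$ translates directly into the stated upper bounds on $\mu$ and the restrictions on $k$.

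The substantive gain enabling such a $\nu$ to exist is the designed-in cancellation in $u_0$: the choice $u_{5,0}=2kA_{**}t^{2k}$, $u_{6,0}=0$ replaces the apparently singular factor $e^{4u_1}t^{-2k}u_5^2$ in the $u_2$-equation by a term of order $t^{2k}$, while the $e^{2u_7}$-coefficient produces a decaying factor $t^{(1-k)^2/2-2}=t^{(k-3)(k+1)/2-1}$ in several components. These are the main sources of the bounds $\mu_i<(k-3)(k+1)/2$ and $\mu_4<\mu_1-1+(k-3)(k+1)/2$; the bound $\mu_2<\min\{1+2k,\mu_1+2k\}$ traces to the $u_5$-equation (via $u_8 t\partial_x u_6$ and the $u_5$-source), and the lower bound $\mu_1>1-(k-3)(k+1)/2$ arises from requiring the matching constraints on $\mu_1$ to be compatible. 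The combined condition $1-(k-3)(k+1)/2<\mu_1<(k-3)(k+1)/2$ forces $(k-3)(k+1)>1$, i.e.\ $k>1+\sqrt 5$ or $k<1-\sqrt 5$, exactly as claimed; nonemptiness of the remaining interval for $\mu_4$ additionally forces $(k-3)(k+1)>0$, giving $k>3$ or $k<-1$.

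Having identified $\nu>\mu$, \Conditionref{en:cond4N} reduces to estimating weighted norms of products and of compositions with the smooth maps $u\mapsto e^{4u_1}$ and $u\mapsto e^{2u_7}$; the product and composition lemmas in \Sectionref{sec:productsfunctions} of the appendix apply directly, and the block-diagonal form of the exponent vector \eqref{eq:T2symmexpo} guarantees that the weights $\RR{\mu}$ pass through the multiplications as required. The Lipschitz estimates \Conditionref{cond:LipschitzF} are then obtained from the same lemmas applied to the differences $\FLus{w}-\FLus{\widetilde w}$, $\SOHus{w}-\SOHus{\widetilde w}$, and the analogous differences for $\STHus{\cdot}$ and $\NHus{\cdot}$, combined with the mean value theorem in $u$ since all nonlinearities are smooth. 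The main obstacle is the bookkeeping: each of the nine components generates several distinct decay rates (from the nonlinear source, from $\NN{u}u_0$, from $\ST{u}\,t\partial_x u_0$, and from the $\log t$ contributions produced by spatial differentiation of $t^{-k(x)}$ since $k$ is not constant), and one must identify the binding constraint for each. Once this has been done the displayed intervals for $\mu_1,\mu_2,\mu_3,\mu_4$ emerge directly, and the remaining estimates are mechanical applications of the appendix.
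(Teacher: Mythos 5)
Your proposal follows essentially the same route as the paper's own (very terse) proof: write out \FLuOp explicitly as products of asymptotic data functions, components of $w$, and exponentials thereof; read off the decay exponent of each of the nine components to obtain the stated inequalities on $\mu$ and $k$; and then invoke the multiplication and exponential lemmas of the appendix (\Lemref{lem:product1}, \Lemref{lem:product}, \Lemref{lem:exponential}) to verify \Conditionref{en:cond4N} and the Lipschitz estimates of \Conditionref{cond:LipschitzF}. One small arithmetic slip worth flagging: $t^{\frac12(1-k)^2-2}=t^{(k-3)(k+1)/2}$ exactly, not $t^{(k-3)(k+1)/2-1}$ --- the extra $-1$ appearing in the upper bound for $\mu_4$ comes instead from the explicit $t^{-1}$ factor inside the source of the $Du_9$ equation --- but since you attribute the correct bounds to the correct equations this does not affect the argument.
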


\begin{proof}
If the operator  $\FLuOp$, defined in \Eqref{eq:defFLu}, is written out explicitly, it consists of products of asymptotic data functions, and  components of the unknown function $w$ (or products involving exponential functions of these). All of the multiplicands in these products are, by hypothesis, contained in designated function spaces (of the form $X_{\delta,\mu,q}$). Thus, to check \Conditionref{en:cond4N}  of  \Theoremref{th:Wellposedness1stOrderFiniteDiff}, we primarily need to know the multiplication algebra of spaces such as $X_{\delta,\mu,q}$. The result we need is provided by 
\Lemref{lem:product1} in the appendix. To check \Conditionref{cond:LipschitzF}, we need results concerning Lipschitz properties of products and exponential functions of elements of the spaces $X_{\delta,\mu,q}$. 
\Lemref{lem:product} and \Lemref{lem:exponential} provide these needed results.
\end{proof}

\begin{proof}[Proof of \Propref{prop:existencemainevolutionSqrt6}]
If we wish to  use \Theoremref{th:Wellposedness1stOrderFiniteDiff}  to show that the system discussed in Proposition \ref{prop:existencemainevolutionSqrt6} admits solutions with the stated properties, it is sufficient that i) the asymptotic data functions, which appear in the leading-order matrices $S_{1,0}$, $S_{2,0}$ and $N_0$,
(i.e. the functions $\alpha_*$ and $k$), be contained in $H^{q+2}(T^1)$ (with $q\ge 3$); and ii) we choose the function $k(x)$ so that the hypotheses of both of the above lemmas are satisfied.  We readily check that exponent functions $\mu_1$, $\mu_2$, $\mu_3$ and $\mu_4$, which satisfy the combined inequalities, can be found if and only if $k(x)>1+\sqrt 6$ in the half-polarized case, and either $k(x)>1+\sqrt 6$ or $k(x)<1-\sqrt{6}$ in the polarized case.
\end{proof}

\paragraph{The full set of Einstein's vacuum field equations.}
Thus far, we have constructed solutions $u$ of the main evolution equations for the $T^2$--symmetric system 
with the leading-order behavior \Eqref{eq:leadingorderterm}, according
to \Propref{prop:existencemainevolutionSqrt6}. Given such a solution $u$, we may ask under what conditions is this a solution of the 
 \textit{full} set of Einstein's vacuum
field equations, \Eqsref{eq:WaveequationU}--\eqref{eq:EquationForG},
with $U = u_1+\frac 12 (1 - k) \log t$, $A =u_4$, $\eta = u_7+\frac 14 (1 - k)^2 \log t$, and
$\alpha = u_8$.

\begin{proposition}
\label{prop:fullEFE}
For any solution of \Propref{prop:existencemainevolutionSqrt6} with
asymptotic data satisfying all the conditions in
\Theoremref{th:existenceAVTDCk}, the full set of Einstein's vacuum
field equations \Eqsref{eq:WaveequationU} --
\Eqref{eq:Firstorderequationalpha} are satisfied, and
\Eqsref{eq:EquationForG} can be solved for $G_1$ and $G_2$ as stated
in \Theoremref{th:existenceAVTDCk}.
\end{proposition}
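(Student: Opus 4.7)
The plan is to promote a solution of the main evolution system (\Propref{prop:existencemainevolutionSqrt6}) to a full solution of the Einstein vacuum system in four steps, taken in the order below. First, verify that the first-order reduction \Eqsref{eq:leadingorder1st1}--\eqref{eq:leadingorder1st3} is internally consistent, that is, that $u_3=t\partial_x u_1$ and $u_6=t\partial_x u_4$, so that the wave equations \Eqref{eq:WaveequationU} and \Eqref{eq:WaveequationA} are genuinely recovered from the symmetric hyperbolic system. Second, verify the Hamiltonian-type constraint \Eqref{eq:Constrequationeta}. Third, confirm the redundant wave equation \Eqref{eq:Waveequationeta} for $\eta$. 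Fourth, integrate the auxiliary equations \Eqref{eq:EquationForG} to obtain $G_1$ and $G_2$.

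For the first step I will follow the Euler--Poisson--Darboux warm-up of \Sectionref{sec:EPD}. Setting $\Delta_3:=u_3-t\partial_x u_1$ and $\Delta_6:=u_6-t\partial_x u_4$, and using $[D,t\partial_x]=t\partial_x$ together with the evolution equations for $(u_1,u_2,u_3)$ and $(u_4,u_5,u_6)$ in \Eqsref{eq:mainevolutionsystem1}, a direct computation yields the decoupled pair of Fuchsian ODEs
\begin{equation*}
D\Delta_3 = \Delta_3, \qquad D\Delta_6 = \Delta_6.
\end{equation*}
The leading-order term \Eqref{eq:leadingorderterm} is chosen so that $u_{3,0}=t\partial_x u_{1,0}=tU_{**}'$ holds exactly and, with $A_*$ constant, $t\partial_x u_{4,0}$ is of order $t^{1+2k}$ whereas $u_{6,0}=0$; hence $\Delta_3$ and $\Delta_6$ are genuine remainders belonging to $X_{\delta,\mu_1,q-1}$ and $X_{\delta,\mu_2,q-1}$, respectively (using $\mu_2<1+2k$). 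The inequalities $\mu_1>1$ and, in the half-polarized case, $\mu_2>k+\tfrac12\sqrt{1+4k^2}>1$ from \Propref{prop:existencemainevolutionSqrt6} make the energy dissipation matrix of the scalar Fuchsian equation $D\Delta-\Delta=0$ strictly positive, so \Propref{prop:uniqueness} forces $\Delta_3\equiv 0$ and $\Delta_6\equiv 0$.

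The second step is the main obstacle. Defining the constraint violation
\[
\mathcal C_\eta := \eta_x - 2tU_tU_x - \tfrac{e^{4U}}{2t}A_tA_x + \tfrac{\alpha_x}{2\alpha},
\]
differentiating it in $t$ and simplifying by means of the wave equations for $U$ and $A$, the first-order equations for $\eta$ and $\alpha$, and the reductions $u_3=t\partial_x u_1$, $u_6=t\partial_x u_4$ just established, I expect to obtain a linear scalar Fuchsian propagation equation closed in $\mathcal C_\eta$, of the schematic form $D\mathcal C_\eta = P(t,x,u)\,\mathcal C_\eta$ with bounded coefficient $P$. The delicate point is to check that $\mathcal C_\eta$ itself is already a remainder of strictly positive order in $t$: plugging the leading-order behavior \Eqref{eq:leadingorderterm} into the expression for $\mathcal C_\eta$ yields, at leading order, precisely $\eta_*'(x)-(1-k(x))U_{**}'(x)+\tfrac12(\log\alpha_*)'(x)$, and the definition \Eqref{eq:defetaS} of $\eta_*$ makes this vanish identically, while the integrability condition in the hypothesis of \Theoremref{th:existenceAVTDCk} is precisely what makes the primitive in \Eqref{eq:defetaS} single-valued on $T^1$. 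Once $\mathcal C_\eta$ is seen to lie in an appropriate space $X_{\delta,\widehat\mu,q-1}$ with $\widehat\mu>0$, \Propref{prop:uniqueness} applied to its Fuchsian propagation equation forces $\mathcal C_\eta\equiv 0$.

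The remaining two steps are short. The wave equation \Eqref{eq:Waveequationeta} is a differential consequence of the first-order equation \Eqref{eq:Firstorderequationeta} and the constraint \Eqref{eq:Constrequationeta}: applying $\partial_t$ to \Eqref{eq:Firstorderequationeta} and substituting the wave equations for $U$ and $A$, together with the evolution equation \Eqref{eq:Firstorderequationalpha} for $\alpha$ and the constraint to rewrite $\eta_x$, reproduces \Eqref{eq:Waveequationeta} after a direct algebraic simplification. Finally, the auxiliary equations \Eqref{eq:EquationForG} are pure time ODEs at each fixed $x$ whose right-hand sides decay like $e^{2\eta_*}\sqrt{\alpha_*}\,K\,t^{(1-k)^2/2-3}$ at leading order; this is integrable in $t$ near $t=0$ whenever $(1-k)^2>4$, i.e.\ $k>3$ or $k<-1$, which is strictly weaker than the hypotheses $k>1+\sqrt 6$ and $k<1-\sqrt 6$ of \Theoremref{th:existenceAVTDCk}. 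Hence $G_1$ and $G_2$ are well-defined by integration from $t=0$, with the prescribed asymptotic values $G_{1*}(x)$ and $G_{2*}(x)$, completing the proof.
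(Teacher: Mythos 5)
Your overall strategy --- propagate the constraint violations by linear homogeneous Fuchsian equations, use uniqueness of the singular initial value problem with vanishing asymptotic data to conclude that they vanish, and integrate \Eqsref{eq:EquationForG} at the end --- is the same as the paper's, and your steps 1, 3 and 4 (including the observation that integrability of the $G$-equations needs only $(1-k)^2>4$, i.e.\ $k>3$ or $k<-1$) agree with the paper's proof.

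The gap is in step 2. The first-order system \Eqsref{eq:mainevolutionsystem1}--\eqref{eq:mainevolutionsystem9} contains no occurrence of $\partial_x\alpha$: every $\alpha_x$ in the original equations has been replaced by the auxiliary variable $u_9=\xi$ (see \Eqref{eq:defxi}), whose own evolution equation was derived by using the constraint \Eqref{eq:Constrequationeta}. Consequently (a) you must also propagate the constraint $C_2:=\alpha_x-\xi=\partial_x u_8-u_9$, which your argument never mentions; without $C_2\equiv 0$ you have not recovered \Eqsref{eq:WaveequationU}--\eqref{eq:WaveequationA} in their original form, since the system only yields the versions with $\tfrac12\xi U_x$ and $\tfrac12\xi A_x$ on the right-hand sides. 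And (b) precisely because $\alpha_x$ and $u_9$ enter in different places, the propagation equation for $\mathcal C_\eta$ does \emph{not} close on itself: differentiating $\mathcal C_\eta$ in time produces contributions proportional to $C_2$, so one is forced to work with the coupled homogeneous linear system
\begin{equation*}
DC_1=P_{11}\,C_1+P_{12}\,C_2,\qquad DC_2=P_{21}\,C_1,
\end{equation*}
with coefficients continuous in $x$ and tending to zero as $t\to 0$, and to show that \emph{both} asymptotic values $C_{1*}$ and $C_{2*}$ vanish. The condition $C_{2*}=0$ is exactly the asymptotic-data requirement $\xi_*=\alpha_*'$ built into the leading-order term \eqref{eq:leadingorderterm}, which your proof never invokes. (Your identification of $C_{1*}$ with $\eta_*'-(1-k)U_{**}'+\tfrac12(\log\alpha_*)'$, and the role of the integrability condition in making \Eqref{eq:defetaS} single-valued on $T^1$, are correct.) Finally, the redundant equation \Eqref{eq:Waveequationeta} is in the paper an explicit algebraic combination of $C_1$, $\partial_x C_1$ and $C_2$, so your step 3 likewise requires $C_2\equiv 0$ and not just $\mathcal C_\eta\equiv 0$.
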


\begin{proof}
Since the equations for $G_1$ and $G_2$, \Eqref{eq:EquationForG}, are semi-decoupled from the rest, we ignore them (as well as $G_1$ and $G_2$) to start, and focus on the subsystem
\Eqsref{eq:WaveequationU}--\eqref{eq:Firstorderequationalpha}. To monitor  the extent to which this subsystem is satisfied by fields which satisfy the main evolution equations, it is useful to define the following set. Based on \Eqref{eq:Waveequationeta}, we define 
\begin{equation}
\label{Heqn}
H:=-\eta_{tt} + \alpha \eta_{xx} +  \frac{\xi \eta_x}{2} 
    + \frac{\alpha_t \eta_t}{2 \alpha} - \frac{\xi^2}{4 \alpha} 
    + \frac{\alpha_{xx}}{2} - U_t^2 + \alpha U_x^2 \\
    + \frac{e^{4 U}}{4 t^2} \big(A_t^2 - \alpha A_x^2 \big) 
    - \frac{3 e^{2 \eta} \alpha}{4 t^4} K^2
\end{equation}
and, based on the constraint \Eqref{eq:Constrequationeta}, 
\begin{equation}
  \label{eq:Constraint1}
  C_1:=-\eta_x+2t U_t U_x
  +  \frac{e^{4 U}}{2t} A_t A_x - \frac{\alpha_x}{2 \alpha}. 
\end{equation}
Based on the constraints which stem from the definition of the new variables which allow us to rewrite the original system in first order, we define 
\begin{align}
  \label{eq:Constraint2}
  C_2:= &\alpha_x-\xi, 
  \end{align}
  and
  \begin{align}
  C_3 := & u_2/t -  \partial_t u_1, &  C_4 := & u_3/t -  \partial_x u_1, \\
  C_5 := & u_5/t -  \partial_t u_4, &   C_6 := & u_6/t -  \partial_x u_4.
\end{align}

The constraint-violation terms $C_3$, $C_4$, $C_5$ and $C_6$ are the easiest to handle. Arguing as we do in the discussion of $\Delta_u$ in \Sectionref{sec:EPD}, we find that the evolution equation for each of these, induced by the main evolution system, takes the form $DC_3=0, DC_4=0$, etc. Then, since the form of the leading order term for the main system implies that each of these terms must asymptotically vanish, it follows that each must vanish for all time. Observe that this determination that $C_3, C_4, C_5$, and $C_6$ all vanish allows us to freely substitute in the consequences of their vanishing in the analysis of $H$, $C_1$, and $C_2$. Such substitution is very useful.

We now focus on $H$, $C_1$ and $C_2$, noting that a solution $u$ of \Eqsref{eq:mainevolutionsystem1} --
\eqref{eq:mainevolutionsystem9} is a solution of
\Eqsref{eq:WaveequationU}--\eqref{eq:Firstorderequationalpha} (with
the above replacements) if and only if $H,C_1$, and $C_2$ vanish identically. Presuming that $u$ is a solution of 
\Eqsref{eq:mainevolutionsystem1} -- \eqref{eq:mainevolutionsystem9},  we calculate
\begin{equation}
\label{H-C1-C2}
 H=-u_8 C_{1,x}-\frac 12 u_{8,x} C_1
 +\frac 14\Bigl(e^{4u_1}t^{-1-2k}u_5u_6+(k-1-2u_2) (\log t k'-2 t^{-1}u_3)\Bigr)C_2,
 \end{equation}
which tells us that so long as we can show that $C_1$ and $C_2$ vanish, it follows that $H$ vanishes as well. We may therefore focus on $C_1$ and $C_2$, for which the evolution equations take the following form:
\begin{align*}
D{C_1}=&-\frac 12K^2e^{2u_7 }t^{(k-3)(k+1)/2}\, {C_1}\\
 &-
\left(u_3^2-\frac 14 e^{4u_1} t^{-2k} u_6^2+t\log t\, u_3\, k'-\frac 14 t^2\log^2 t\,
  (k')^2\right) {C_2}, \\
D{C_2}=&2 e^{2u_7} K^2 t^{(k-3)(k+1)/2}\, u_8^2 {C_1}.
\end{align*}
Under
our hypothesis (which implies in
particular that the coefficients here  are continuous functions of $x$), 
this set of evolution equations for $C_1$ and $C_2$ can be treated 
as an essentially independent system of linear homogeneous ODEs at each spatial point. Noting that (for $q\ge 3$) the coefficients on the right side of both equations are well-behaved as $t\rightarrow 0$ and converge to zero at every spatial point. Hence this system for the unknowns $C_1$ and $C_2$ is of the form \Eqref{eq:1stordersystem} with $S_1$ the identity matrix, and $S_2$ and $N_2$ the zero matrices. The singular initial value problem of the form $C_1=C_{1*}+w_1$, $C_2=C_{2*}+w_2$ with $w_1,w_2\in X_{\delta,\mu,q}$ has a unique solution for every  prescribed $C_{1*}$ and $C_{2*}$ and for every sufficiently small $\mu>0$. The definition of the quantities $C_1$ and $C_2$ in terms of the variables $U$, $A$, $\eta$, $\xi$ and $\alpha$ according to \Eqsref{eq:Constraint1} and \eqref{eq:Constraint2} implies uniqueness for all constraint violations which are compatible with solutions of \Propref{prop:existencemainevolutionSqrt6}. In particular, the unique solution corresponding to $C_{1*}=C_{2*}=0$ is $C_1\equiv C_2\equiv 0$.

It remains to determine how the condition $C_{1*}\equiv C_{2*}\equiv
0$ relates to the choice of the 
asymptotic data functions $k$, $U_{**}$, $A_{*}$, $A_{**}$, $\eta_*$,
$\alpha_*$ and $\xi_*$. For a solution $u$ as above, the functions $C_1$ and $C_2$
defined by \Eqsref{eq:Constraint1} and \eqref{eq:Constraint2}
converge uniformly in space at $t=0$,  and we obtain
\[C_{1*}=-\eta_*'+(1-k)U_{**}'-\frac{\alpha_*'}{2\alpha_*},
\quad\qquad
C_{2*}=\alpha_*'-\xi_*.\]
It follows that  $C_{2*}=0$ if and only if
\begin{equation}
  \label{eq:constraintAsymptData1}
  \xi_{*}=\alpha_{*}',
\end{equation}
and $C_{1*}=0$ if and only if, for an arbitrary constant $\eta_0$, 
\begin{equation}
  \label{eq:constraintAsymptData2}
  \eta_*(x)=\eta_0+\int_0^x\left((1-k(X))U_{**}'(X)
  -\frac 12(\log \alpha_*)'(X)\right)dX. 
\end{equation} 
 In particular, the spatial topology implies that we must
choose the asymptotic data $k$, $U_{**}$ and $\alpha_*$ such that
\[\int_0^{2\pi}\left((1-k(x'))U_{**}'(x')
  -\frac 12(\log \alpha_*)'(x')\right)dx'=0.\]
We thus conclude  that a solution $u$ of
\Propref{prop:existencemainevolutionSqrt6} is a solution of
\Eqsref{eq:WaveequationU}--\eqref{eq:Firstorderequationalpha} (with
the above replacements) if and
only if the asymptotic data functions satisfy
\Eqsref{eq:constraintAsymptData1} and \eqref{eq:constraintAsymptData2}.

It only remains to solve \Eqsref{eq:EquationForG} for $G_1$ and $G_2$.  The right-hand sides of these two equations have the asymptotic behavior (for $t \rightarrow 0$) of  a power of $t$ larger
than $-1$ uniformly in space. Hence the right-hand sides are
integrable in $t$ at $t=0$ at every spatial point, and the general solution is
\begin{align*}
 G_{1}(t,x)&=G_{1*}(x)+\int_0^t e^{2 \eta(t',x)}
\sqrt{\alpha(t',x)}\, A(t',x) K t'^{-3} dt',\\
 G_{2}(t,x)&=G_{2*}(x)-\int_0^t e^{2 \eta(t',x)}
\sqrt{\alpha(t',x)}\, K t'^{-3} dt'.
\end{align*}
The functions $G_1-G_{1*}$ and $G_2-G_{2*}$ are  contained in
$X_{\delta_1,\mu_5,q}$ and $X_{\delta_1,\mu_6,q}$, respectively, for
any choice of exponents $0<\mu_5(x),\mu_6(x)<1/2(k(x)-3)(k(x)+1)$, if $u$ is a solution of 
\Propref{prop:existencemainevolutionSqrt6}.  We therefore
can take $G_{1*},G_{2*}\in H^q(T^1)$.
\end{proof}

\subsubsection{Optimal existence and uniqueness result}
\label{sec:optimalexistence}

The result we prove in Section  \ref{th:existenceAVTDCk} allows for relatively rough asymptotic data, but consequently sacrifices some of the expected range (based on numerical and heuristic studies \cite{Berger:2001dl,Andersson:2005vg,Lim:2009fz}) of allowed values for the asymptotic velocity $k=k(x)$. In this section, we consider only smooth asymptotic data, and can then prove a result which increases the range for $k$. 

\begin{theorem}[Optimal result: AVTD (half)-polarized
  $T^2$--symmetric solutions -- infinite differentiability]
  \label{th:existenceAVTDCinf}
 Suppose one chooses  a twist constant $K\in\R$,  a pair of asymptotic data constants $A_*$ and
  $\eta_0$, and a set of asymptotic data functions $k, U_{**}, \alpha_*$ (with $\alpha_*(x)>0$),
  $A_{**}, G_{1*}, G_{2*} \in C^\infty(T^1)$
 which satisfy the integrability condition
  \begin{equation*}
 \int_0^{2\pi}\Bigl((1-k(x))U_{**}'(x)
  -\frac 12(\log \alpha_*)'(x)\Bigr)dx=0,
  \end{equation*}
  together with, at each point $x\in T^1$, either
\begin{enumerate}[label=\textit{(\roman{*})}, ref=(\roman{*})]
  \item$k(x)>3$ for arbitrary $A_{**}$ (the
    \keyword{half-polarized case}), or
  \item$k(x)>3$ or $k(x)<-1$ for $A_{**}\equiv 0$ 
    (the \keyword{polarized case}).
  \end{enumerate}
  Then there exists a $\delta>0$, and a $T^2$--symmetric solution $U$,
  $A$, $\eta$, $\alpha$, $G_1$, $G_2$ of Einstein's vacuum field
  equations with twist $K$ of the form
  \begin{equation*}
    (U,A,\eta,\alpha,G_1,G_2)=(U_0,A_0,\eta_0,\alpha_0,G_{1,0},G_{2,0})+W.
  \end{equation*}
  Here, the leading-order term
  $(U_0,A_0,\eta_0,\alpha_0,G_{1,0},G_{2,0})$ is given by
  \Eqsref{eq:AVTDLeadingorder1}--\eqref{eq:AVTDLeadingorder6} and
  \Eqref{eq:defetaS}.
  The remainder
  $W$ is contained in $X_{\delta,\mu,\infty}$ (and $DW\in X_{\delta,\mu,\infty}$) for some exponent vector
  $\mu=(\mu_1,\mu_2,\mu_3,\mu_4,\mu_5,\mu_6)$ with
  \begin{equation}
    \label{eq:wishformu}
    \mu_1,\mu_2-2k,\mu_3,\mu_4,\mu_5,\mu_6>0.
  \end{equation}
  This solution is unique among all solutions with the same
  leading-order term $u_0$ and with remainder $W\in X_{\delta,\mu,\infty}$.
\end{theorem}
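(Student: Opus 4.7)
The plan is to apply \Theoremref{th:Wellposedness1stOrderHigherOrder} to the first-order evolution system \Eqsref{eq:mainevolutionsystem1}--\eqref{eq:mainevolutionsystem9}, with the same AVTD leading-order term $u_0$ from \Eqref{eq:leadingorderterm}. Using the ODE-leading-order-term machinery in place of \Theoremref{th:Wellposedness1stOrderFiniteDiff} eliminates the need for positive definiteness of the energy dissipation matrix with respect to $\mu$, which was the obstruction forcing $|k-1|>\sqrt 6$ in the proof of \Theoremref{th:existenceAVTDCk}, at the price of demanding smooth asymptotic data.

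First I would verify that $u_0$ is an ODE-leading-order term, i.e., that $\LODEu{u_0}=0$. A direct componentwise computation shows that $Du_0=-\SOLInvu\NLu u_0$: the time derivative $tU_{**}'$ of the third component matches the row-3 coupling $-u_{3,0}$, the derivative $2kA_{**}t^{2k}$ in row 4 matches $-u_{5,0}$, and $4k^2A_{**}t^{2k}$ in row 5 matches $-2ku_{5,0}$. Next I would apply a constant linear change of basis on the $A$-sector to diagonalize the $(u_4,u_5)$-subblock (possible because $k\neq 0$), after which $\SOLInvu\NLu$ is in Jordan normal form with diagonal $(0,0,-1,0,-2k,-1,0,0,0)$ and a single nontrivial $2\times 2$ block coupling positions 1 and 2. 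I would then check the hypotheses of \Propref{prop:propHO} with a nine-component exponent vector whose entries agree within each Jordan block: the commutation conditions with $S_1$ and $N$ then hold automatically, while \Eqref{eq:particularconditionS2} reduces to the mild requirement that adjacent entries at positions (2,3) and (5,6) differ by at most 1, arranged by choosing the exponents at positions 3 and 6 slightly below the associated block values. The eigenvalue requirement from \Lemref{HODElemma} becomes $\mu_i>0$ at every position except 5, where $\mu>2k$ is required --- exactly the optimal range \Eqref{eq:wishformu}. \Conditionref{en:cond1HOSource} and \Conditionref{en:cond4HO} then follow by expanding each source term and applying the product/exponential estimates in the appendix; the critical bottleneck is the twist-induced factor $e^{2u_7}t^{(k-3)(k+1)/2}u_8K^2$, whose exponent is strictly positive precisely when $k>3$ or $k<-1$, and which in the half-polarized case (where the $A_{**}$-driven terms $t^{-2k}(A_t^2-\alpha A_x^2)$ also contribute) restricts to $k>3$.

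With the hypotheses verified, \Theoremref{th:Wellposedness1stOrderHigherOrder} yields, for some $\delta_1\in(0,\delta]$ and for sufficiently large $n$, a unique solution of the main evolution system with remainder (and its $D$-derivative) in $X_{\delta_1,\mu,\infty}$ for the optimal exponent range. The rest of the argument --- propagating the constraints $C_1$, $C_2$ and the derivative-definition identities $C_3,\dots,C_6$ to conclude that the evolution solution satisfies the full Einstein system (which again fixes the integrability relation \Eqref{eq:defetaS} on $\eta_*$), and integrating \Eqsref{eq:EquationForG} to obtain $G_1$ and $G_2$ --- proceeds verbatim as in the proof of \Propref{prop:fullEFE}. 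The main obstacle will be the simultaneous balancing, in the application of \Propref{prop:propHO}, of the modified block diagonality upper bounds on adjacent exponent differences, the Jordan eigenvalue lower bounds, and the existence of viable target exponents $\nu>\mu$ in \Conditionref{en:cond1HOSource}; the assumption $q=\infty$ is essential because the inductive construction of $(u_n)$ generically loses one spatial derivative per step, and only an unbounded supply of derivatives allows $n$ to be chosen large enough for the inner application of \Theoremref{th:Wellposedness1stOrderFiniteDiff} (around $u_n$ rather than $u_0$) that underlies the proof of \Theoremref{th:Wellposedness1stOrderHigherOrder} to satisfy its positive-definiteness condition on the energy dissipation matrix.
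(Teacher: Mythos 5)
Your proposal follows essentially the same route as the paper: the theorem is reduced to an existence result for the main evolution system (the paper's \Propref{prop:existencemainevolutionSqrt4}), proved by checking the hypotheses of \Propref{prop:propHO} and invoking \Theoremref{th:Wellposedness1stOrderHigherOrder} with the ODE-leading-order term \Eqref{eq:leadingorderterm}, the enlarged range of $k$ coming exactly from trading strict block diagonality and positivity of the energy dissipation matrix for the modified conditions, with constraint propagation then handled verbatim as in \Propref{prop:fullEFE}. One small imprecision: the lower bound from the eigenvalues of $S_{1,0}^{-1}N_0$ is not ``$\mu_i>0$ at every position except 5'' --- the diagonal entries at positions 3 and 6 equal $-1$, so those components must exceed $1$; this is harmless because your choices (position 3 near $1+\mu_1$, position 6 equal to $\mu_2>2k$) satisfy it automatically, matching the paper's exponent vector $(\mu_1,\mu_1,\tilde\mu_1,\mu_2,\mu_2,\mu_2,\mu_3,\mu_4,\mu_4)$.
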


Comparing this result with \Theoremref{th:existenceAVTDCk}, we see the differences in the hypothesized regularity of the asymptotic data, and in the allowed range of the asymptotic velocity $k(x)$ in the two results. As we find in proving this result, in fact one does not need $C^\infty$ data; data of ``sufficiently high differentiability" is enough. One may ask if the reduced range for $k(x)$ for rough data is a real effect, or an artifact of our method of proof (which remains an open question). Observe one other important difference between the two theorems:  \Theoremref{th:existenceAVTDCinf} provides a stronger result regarding the uniqueness of solutions to the singular initial value problem for these systems: While in \Theoremref{th:existenceAVTDCk}, there could in principle exist more than one solution for a given set of asymptotic
data (since $\mu_1$ has to be larger than one), we find that according to \Theoremref{th:existenceAVTDCinf}, there is exactly one solution for the remainder functions in the full space of interest given by \Eqref{eq:wishformu}. We note, however, that there
may be two solutions for the same asymptotic data which differ by a
factor which goes to zero faster than every power of $t$ at $t=0$.

The proof of  \Theoremref{th:existenceAVTDCinf} is in many ways similar to that of \Theoremref{th:existenceAVTDCk}, but with the big difference that the latter involves the application of \Theoremref{th:Wellposedness1stOrderFiniteDiff}, while \Theoremref{th:existenceAVTDCinf} is obtained by applying 
\Theoremref{th:Wellposedness1stOrderHigherOrder}. In both cases, these results for Fuchsian singular initial value problems (\Theoremref{th:Wellposedness1stOrderFiniteDiff} or \Theoremref{th:Wellposedness1stOrderHigherOrder})
are applied to the main evolution system for $T^2$--symmetric solutions. The portion of the proof of both \Theoremref{th:existenceAVTDCinf} and  \Theoremref{th:existenceAVTDCk} which shows that the existence of a solution for the main evolution system implies the existence of a proof to the full system (since one can choose the asymptotic data in such a way that the constraints are necessarily satisfied) is essentially the same for the two cases. Hence, to prove \Theoremref{th:existenceAVTDCinf}, all we need is the following proposition (with the rest of the proof taken care of by the arguments appearing in the proof of Proposition \ref{prop:fullEFE}).

\begin{proposition}
  \label{prop:existencemainevolutionSqrt4}
For any twist constant $K\in\R$, and for any choice of the asymptotic data functions such that  $A_*$ is an arbitrary constant, $\alpha_*(x)>0$, $k, U_{**}, \alpha_*,\eta_0 \in C^\infty(T^1)$,  and either
  \begin{enumerate}[label=\textit{(\roman{*})}, ref=(\roman{*})]
  \item $k(x)>3$ (for arbitrary $A_{**}$ the half-polarized case), or
  \item  $k(x)>3$ or $k(x)<-1$ (for $A_{**}\equiv 0$ the polarized case), 
  \end{enumerate}
  for all $x\in T^1$, there exists a $\delta_1\in (0,\delta]$,
  and a unique solution of the main evolution system
  \Eqsref{eq:mainevol1st1}--\eqref{eq:mainevol1st2} with leading-order
  term $u_0$ and remainder $w\in X_{\delta_1,\mu,\infty}$ (and $Dw\in
  X_{\delta_1,\mu,\infty}$), so long as the exponent vector $\mu$ given by
  \begin{equation*}
  \mu=(\mu_1,\mu_1,\tilde\mu_1,\mu_2,\mu_2,\mu_2,\mu_3,\mu_4,\mu_4)
\end{equation*}
satisfies the following inequalities for all $x\in T^1$
$$
\aligned
    0 <\mu_1(x)
    &<\min\{2,(k(x)-3)(k(x)+1)/2\},\\
     2k(x)<\mu_2(x)&<\min\{1+2k(x),\mu_1(x)+2k(x)\},\\
    0<\mu_3(x)&<\mu_1(x),\\ 
    0<\mu_4(x)&<(k(x)-3)(k(x)+1)/2,
 \endaligned
$$
and $\tilde\mu_1$ strictly smaller than, but sufficiently close to $1+\mu_1$.
\end{proposition}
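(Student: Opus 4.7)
The plan is to apply \Theoremref{th:Wellposedness1stOrderHigherOrder} directly to the main evolution system \Eqsref{eq:mainevol1st1}--\eqref{eq:mainevol1st2} with the leading-order term $u_0$ from \Eqref{eq:leadingorderterm}. The gain over the proof of \Propref{prop:existencemainevolutionSqrt6} is precisely that \Theoremref{th:Wellposedness1stOrderHigherOrder} does not require positivity of the energy dissipation matrix $M_0$ with respect to $\mu$, which was the bottleneck forcing the stringent bound $k>1+\sqrt{6}$ (respectively $k<1-\sqrt{6}$) through the eigenvalue $\mu_2>k+\frac12\sqrt{1+4k^2}$ from \Lemref{lem:cond1}. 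Under the optimal window the pure existence plus ODE-leading-order-term machinery of \Propref{prop:propHO} suffices.

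First I would verify that $u_0$ as given in \Eqref{eq:leadingorderterm} is an ODE-leading-order term for the system, i.e.\ that $\LODEu{u_0}\equiv 0$. Inspecting \Eqref{LODE} with $\SO{u_0}$, $\NN{u_0}$ read off \Eqsref{eq:T2S0matrix}--\eqref{eq:mainevol1st2} evaluated at $u_0$, the computation reduces to the identities $Du_{1,0}=u_{2,0}=0$, $Du_{3,0}=u_{3,0}$ (automatic from $u_{3,0}=tU_{**}'$), $Du_{4,0}=u_{5,0}=2kA_{**}t^{2k}$ and $Du_{5,0}=2ku_{5,0}$, together with $Du_{6,0}=u_{6,0}=0$, and $Du_{7,0}=Du_{8,0}=Du_{9,0}=0$, with $u_{9,0}:=\alpha_*'$. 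Next I would bring $S_{1,0}^{-1}N_0$ into Jordan normal form by a constant-coefficient (in $t$) invertible change of basis, as permitted in the remarks following \Lemref{HODElemma}; the resulting Jordan blocks have diagonal entries $0$, $0$, $-1$, $0$, $-2k$, $-1$, $0$, $0$, $0$, so the bounds of \Lemref{HODElemma} require the components of $\mu$ to exceed $0,0,1,0,2k,1,0,0,0$, which matches the statement of the proposition (in particular explaining why $\tilde\mu_1>1$ and $\mu_2>2k$).

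Then I would check the three hypotheses of \Propref{prop:propHO} with $q$ taken arbitrary and ultimately $q=\infty$. Condition \ref{en:modBD} is direct from the form \Eqsref{eq:T2S0matrix}--\eqref{eq:mainevol1st2}: $\SOu{w}$ is block diagonal in the blocks $\{1\},\{2\},\{3\},\{4\},\{5\},\{6\},\{7\},\{8\},\{9\}$, $\NNu{w}$ respects the same block structure except for the entries $N_{12}=-1$ and $N_{45}=-1$ which couple components carrying the same exponent $\mu_1$ or $\mu_2$ respectively, and $\STu{w}$ is off-diagonal but conjugation by $\RR{\mu}$ yields the factor $t^{\mu_1-\tilde\mu_1}$ in the $(2,3)$-block and $t^{\mu_2-\mu_2}=1$ in the $(5,6)$-block, so the condition \Eqref{eq:particularconditionS2} holds precisely when $\tilde\mu_1<1+\mu_1$, which is exactly the hypothesis. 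Conditions \ref{en:cond1HOSource} and \ref{en:cond4HO} are then verified term by term on the source \Eqsref{eq:mainevolutionsystem1}--\eqref{eq:mainevolutionsystem9} using the multiplication, difference and exponential lemmas (\Lemref{lem:product1}, \Lemref{lem:product}, \Lemref{lem:exponential}) from the appendix, in the same style as \Lemref{lem:T2symmFLuOp}. The decisive improvements compared to the finite-differentiability situation are that (i) the $\log t$-gap between $\mu_1$ and $\tilde\mu_1$ is permitted here, so no artificial identification $\mu_1=\tilde\mu_1$ is imposed, and (ii) the energy-dissipation inequality is not required, which removes the lower bound $\mu_2>k+\frac12\sqrt{1+4k^2}$.

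The principal obstacle will be isolating the critical exponents that drop out of the nonlinear terms in \Eqsref{eq:mainevolutionsystem1}--\eqref{eq:mainevolutionsystem9}. The twist term $e^{2u_7}t^{(k-3)(k+1)/2}u_8 K^2$ (and its cousins) produces a gain of order $t^{(k-3)(k+1)/2}$, so the requirement that $\nu>\mu$ in \Conditionref{en:cond1HOSource} forces $(k-3)(k+1)/2>0$, i.e.\ $k>3$ or $k<-1$, and fixes the upper bounds in the statement on $\mu_1,\mu_4$. The cross-term $e^{4u_1}t^{-2k}(u_5^2-u_8u_6^2)$ is of order $t^{-2k+2\mu_2}$ in the $u_2$-equation, yielding the window $2k<\mu_2<\min\{1+2k,\mu_1+2k\}$ (the upper bound coming from the cross-term $u_5 u_2$ coupled with \Eqref{eq:particularconditionS2}); in the polarized case the entire $A$-sector is frozen and the $\mu_2$-inequalities become vacuous. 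Once all these inequalities are simultaneously satisfiable---which is the case exactly on the open set $k>3$ or $k<-1$---the hypotheses of \Propref{prop:propHO} hold for every $q'\ge 3$. Applying \Theoremref{th:Wellposedness1stOrderHigherOrder} then yields, for some $\delta_1\in(0,\delta]$ and some large enough $n$, a unique smooth solution $u=u_n+\omega$ with $\omega,D\omega\in X_{\delta_1,\mu+n\gamma,\infty}$, and the final statement of \Theoremref{th:Wellposedness1stOrderHigherOrder} gives uniqueness in the full space $X_{\delta_1,\mu,\infty}$, completing the proof of the proposition.
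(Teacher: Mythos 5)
Your proposal is correct and follows essentially the same route as the paper's own (very terse) proof: verify that $u_0$ is an ODE-leading-order term, put $S_{1,0}^{-1}N_0$ into Jordan form and read off the lower bounds on $\mu$ from \Lemref{HODElemma}, check the modified block diagonality (which is exactly where the constraint $\tilde\mu_1<1+\mu_1$ enters via \Eqref{eq:particularconditionS2}), verify Conditions \ref{en:cond1HOSource} and \ref{en:cond4HO} of \Propref{prop:propHO} by the appendix product/exponential lemmas as in \Lemref{lem:T2symmFLuOp}, and conclude via \Theoremref{th:Wellposedness1stOrderHigherOrder}. You supply considerably more of the bookkeeping (the eigenvalue list, the critical exponents $(k-3)(k+1)/2$ and $2k$) than the paper does, and aside from a harmless sign slip in the exponent of the conjugated $(2,3)$/$(3,2)$ entries of $S_2$, the argument matches.
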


One of the key differences between the hypothesis here and that of \Propref{prop:existencemainevolutionSqrt6} is the chosen form for the exponent vector $\mu$. The form we use here -- in particular, the choice of $\tilde\mu_1$ for the third component -- allows for the wider range of $k(x)$. We can choose this form here, but not in \Propref{prop:existencemainevolutionSqrt6}, since here we do only need to satisfy the modified block diagonal conditions in \Theoremref{th:Wellposedness1stOrderHigherOrder} (as part of \Propref{prop:propHO}). 
Note that we could have carried out a similar modification
for the sixth component of $\mu$, but it turns out to be unnecessary; the
regularity of the spatial derivative of the $A$ variable follows
automatically.

\begin{proof}
The proof of this result is based on the use of 
 (order n)-leading-order terms, which are 
  developed and discussed in \Sectionref{sec:expansionsHO}.  and, in particular, in 
  \Theoremref{th:Wellposedness1stOrderHigherOrder}. Recalling  that the leading order term $u_0$
 (which takes the form $u_0$ with components given by 
  \Eqref{eq:leadingorderterm}) is in fact an ODE-leading-order term, we check the conditions in the hypothesis of  \Theoremref{th:Wellposedness1stOrderHigherOrder}, noting that a portion of these conditions appear in the hypothesis of \Propref{prop:propHO}.  We check that  the matrix
  $\SOLInvu\NLu$ is in Jordan normal form, and that our choice of
  $\mu$ is  strictly larger than the negatives of the corresponding diagonal elements of
    $S_{1,0}^{-1}N_0$.
Then,  \Conditionref{en:cond1HOSource} of
    \Propref{prop:propHO} (as part of
    \Theoremref{th:Wellposedness1stOrderHigherOrder}), can be checked
     in essentially the same way as is done in the proof of \Lemref{lem:T2symmFLuOp}, but
     now with the
     inequalities listed in the hypothesis of
     \Propref{prop:existencemainevolutionSqrt4}.
   \Conditionref{en:cond4HO} of
    \Propref{prop:propHO} (as part of
    \Theoremref{th:Wellposedness1stOrderHigherOrder})
    follows by repeated applications of \Lemref{lem:productII}.
\end{proof}


\section{Concluding remarks}
\label{conclusion}

Our results here show that there is a large collection of smooth, polarized and half-polarized $T^2$--symmetric solutions of the Einstein vacuum equations which exhibit AVTD behavior in a neighborhood of their singularities. What can we show further? 

Numerical and heuristic studies of $T^2$--symmetric solutions  \cite{Berger:2001dl, Andersson:2005vg, Lim:2009fz} strongly indicate that AVTD behavior is not found in such spacetimes unless they satisfy a polarization condition. These studies do  support the conjecture that AVTD behavior occurs generically in polarized $T^2$--symmetric solutions. While Fuchsian methods of the sort developed here are not expected to be effective in determining such genericity, further numerical explorations of the polarized $T^2$--symmetric solutions could be very useful. Among the issues which might be explored numerically is whether the distinction in the results we have obtained for solutions of finite differentiability and those which are $C^{\infty}$ is significant in any sense. 

Observe that one distinguishing feature of our approach here is that an approximation scheme is at the core of the method.
This scheme  can be implemented for numerical computations straightforwardly and contains useful built-in convergence and error estimates. In earlier work \cite{Beyer:2010tb,Beyer:2010wc,Beyer:2011ce}, building on \cite{Amorim:2009ka}, we implemented this scheme in the context of semilinear symmetric hyperbolic Fuchsian equations of second-order  and have obtained very accurate simulations for Gowdy solutions. We expect to get similarly good results for the polarized $T^2$--symmetric solutions, which we have studied here. Of particular interest would be to explore the issue of whether the ``optimal domain'' for the asymptotic velocity $k(x)>3$ (or $k(x)<-1$) can only be obtained for smooth solutions (as suggested by our discussion in \Sectionref{application}) and to see what might happen if we try  to construct such a solution with lower differentiability.

It \emph{is} expected, based on numerical simulations \cite{Berger:1998kg}, that polarized (and half-polarized) $U(1)$-symmetric solutions exhibit AVTD behavior. Moreover, Fuchsian methods \cite{Isenberg:2002ku} confirm this, at least for analytic solutions. The methods we have developed here, generalized to PDEs on higher dimensional manifolds (this is done for $T^n$ in \cite{Ames:2012tm}), should be applicable to the polarized and half-polarized $U(1)$-symmetric solutions, showing that smooth solutions of this type also exhibit AVTD behavior.

\section*{Acknowledgements}
F.B.\ was partially supported by a special assistance grant of the University
of Otago during 2011. The authors E.A.\ and J.I.\  are partially supported by NSF grant PHY-0968612. E.A., \ J.I., \ and P.LF. thank the University of Otago for sponsoring their visits to Dunedin while some of this research was carried out. P.LF was also supported by the Agence Nationale de la Recherche through the grant ANR SIMI-1-003-01 (Mathematical General Relativity.~Analysis and geometry of spacetimes with low regularity).

\appendix
\appendixpage
\addappheadtotoc

\section{Properties of the spaces \texorpdfstring{$X_{\delta,\mu,q}$}{X}}
\label{sec:propertiesspaces}
In this section we list further basic properties of the spaces
$X_{\delta,\mu,q}$ which are defined
in \Sectionref{sec:firstordertheory} as the completion of the normed
vector spaces $(C^\infty((0,\delta]\times T^1),\|\cdot\|_{\delta,\mu,q})$, cf.\ \Eqref{eq:norm}. Recall that $\delta>0$ is a constant, $\mu$ is an exponent vector and $q$ is a non-negative integer. We now also
define the spaces $\widehat X_{\delta,\mu,q}$ as the set of maps $f:(0,\delta]\rightarrow H^q(T^1)$ with the property that $\RR{\mu} f$ is bounded and continuous; cf., \Eqref{eq:defR}.  If we endow $\widehat X_{\delta,\mu,q}$ with the norm $\|\cdot\|_{\delta,\mu,q}$, then $\widehat X_{\delta,\mu,q}$ are
Banach spaces. Note that if $f\in \widehat X_{\delta,\mu+\epsilon,q}$ for some $\epsilon>0$, then $\RR{\mu} f:(0,\delta]\rightarrow H^q(T^1)$ is uniformly continuous.

By definition, all functions in $X_{\delta,\mu,q}$ can be approximated by smooth functions. Functions in $\widehat X_{\delta,\mu,q}$, however, can be approximated by a particularly useful  sequence of smooth functions as follows.

\begin{lemma}
\label{lem:mollifiers}
Let $f\in\widehat X_{\delta,\mu,q}$; i.e., $\RR{\mu} f:(0,\delta]\rightarrow H^q(T^1)$ is bounded and continuous. Let $\widehat f$ be defined as follows
\[\widehat f(t)=
\begin{cases}
 f(t),                   & t\in (0,\delta],
\\
 \RR{\mu}^{-1}(t)\RR{\mu}(\delta)f(\delta),            &   t\in [\delta,\infty).
\end{cases} 
\]
Let $\phi:\R\rightarrow\R$ be smooth with $\phi(x)>0$ for all $|x|<1$ and $\phi(x)=0$ for all $|x|\ge 1$, with $\int_\R\phi(x)dx=1$. Let $(\alpha_i)$ be a sequence of positive numbers with limit $0$. For any integers $i,j$, we set
\begin{equation}
  \label{eq:mollifier}
(\RR{\mu} f)_{i,j}(t,x):=\int_0^{\infty}\int_{T^1}(\RR{\mu} \widehat f)(s,y)\frac 1{\alpha_i}\phi\left(\frac{x-y}{\alpha_i}\right) \frac 1{\alpha_j}\phi\left(\frac{s-t}{\alpha_j}\right)\,dy\,ds.
\end{equation}
Then $(\RR{\mu} f)_{i,j}$ has the following properties:
\begin{enumerate}[label=\textit{(\roman{*})}, ref=(\roman{*})]   
\item $(\RR{\mu} f)_{i,j}\in C^\infty((0,\delta]\times T^1)$ for all integers $i,j$.
\item The function
  \begin{equation}
    \label{eq:mollifier2}
    f_{i,j}:=\RR{\mu}^{-1}(\RR{\mu} f)_{i,j}
  \end{equation}
  has the property that
  \[f_{i,j}\in \widehat X_{\delta,\mu,q}\cap X_{\delta,\mu,q}\quad \text{for all integers $i,j$}.\] 
  In particular,  for any given integers $i,j$, one has 
  \[\|(\RR{\mu} f)_{i,j}(t,\cdot)\|_{H^q(T^1)}\le C \|f\|_{\delta,\mu,q},\quad\text{for all $t\in (0,\delta]$},\]
  for a constant $C>0$ independent of $t$ (but possibly dependent on $i$, $j$).
\item \label{cond:pointwiseconvegrence} $(\RR{\mu} f)_{i,j}(t,x)\longrightarrow \RR{\mu} f(t,x)$ for $i,j\rightarrow\infty$ at a.e.\ $(t,x)\in (0,\delta]\times T^1$.
\item \label{cond:uniformconvergence} If $f$ is such that $\RR{\mu} f:(0,\delta]\rightarrow H^q(T^1)$ is a uniformly continuous map (e.g., if $f\in\widehat X_{\delta,\mu+\epsilon,q}$ for some $\epsilon>0$), then
\[\| f_{i,j}-f\|_{\delta,\mu,q}\rightarrow 0\quad \text{for $i,j\rightarrow\infty$}.\]
\end{enumerate}
\end{lemma}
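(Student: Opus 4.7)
The strategy is to recognize that, by construction, $g := \RR{\mu}\widehat f$ is a bounded continuous map $(0,\infty)\to H^q(T^1)$, so that $(\RR{\mu} f)_{i,j}$ is nothing but a standard double mollification of $g$ in the time variable and in the spatial variable (where in the spatial direction $T^1$ is treated as the quotient $\R/2\pi\Z$, so $\phi((x-y)/\alpha_i)$ is already compactly supported for $\alpha_i$ small and thus causes no issue). First I will verify the extension $\widehat f$ satisfies $\RR{\mu}\widehat f\in L^\infty((0,\infty),H^q(T^1))$: on $(0,\delta]$ this is the hypothesis $f\in \widehat X_{\delta,\mu,q}$, and on $[\delta,\infty)$ the value $\RR{\mu}(\delta)f(\delta)$ is time-independent and lies in $H^q(T^1)$.

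For part (i), smoothness of $(\RR{\mu} f)_{i,j}$ on $(0,\delta]\times T^1$ is a standard consequence of differentiating under the integral sign: since $\phi$ is smooth and compactly supported, the kernel $\alpha_i^{-1}\alpha_j^{-1}\phi((x-y)/\alpha_i)\phi((s-t)/\alpha_j)$ is smooth in $(t,x)$, and differentiation moves onto the kernel. For part (ii), applying a spatial derivative $\partial_x^k$ with $k\le q$ pushes onto the smooth spatial kernel via standard mollifier identities, and Young's convolution inequality then gives, for each $t\in(0,\delta]$,
\[
\|(\RR{\mu} f)_{i,j}(t,\cdot)\|_{H^q(T^1)} \le \|\phi\|_{L^1(\R)}^2\,\sup_{s\in(0,\infty)}\|\RR{\mu}\widehat f(s,\cdot)\|_{H^q(T^1)}\le C\|f\|_{\delta,\mu,q},
\]
the last step using that the $\sup$ over $(0,\infty)$ equals the $\sup$ over $(0,\delta]$ by construction of $\widehat f$. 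Multiplying by $\RR{-\mu}$ (which is smooth on $(0,\delta]\times T^1$) yields the same estimate up to a constant depending on $i,j$ (through derivatives that may now fall on $\RR{-\mu}$), giving $f_{i,j}\in\widehat X_{\delta,\mu,q}$; continuity in $t$ follows from continuity of $\RR{\mu}\widehat f$ and dominated convergence, and the fact that $f_{i,j}\in X_{\delta,\mu,q}$ follows because $f_{i,j}$ is already smooth.

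For part (iii), this is the classical Lebesgue differentiation theorem applied to the double mollification: at every Lebesgue point $(t,x)$ of $\RR{\mu}\widehat f$ (i.e., a.e.) the iterated convolution converges to the value of the function, and one then divides by $\RR{\mu}(t,x)$ which is finite and positive on $(0,\delta]\times T^1$. For part (iv), if $\RR{\mu} f:(0,\delta]\to H^q(T^1)$ is uniformly continuous, I will first extend the uniform continuity of $\RR{\mu}\widehat f$ to all of $(0,\infty)$ (the tail is constant, hence uniformly continuous there, and the two pieces match continuously at $t=\delta$). A standard $\epsilon/3$ argument using uniform continuity on the compact-modulo-scaling convolution then gives $\sup_{t\in(0,\delta]}\|\RR{\mu}(f_{i,j}-f)(t,\cdot)\|_{H^q(T^1)}\to 0$ as $\alpha_i,\alpha_j\to 0$. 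Here the case $f\in\widehat X_{\delta,\mu+\epsilon,q}$ is reduced to this by observing that $\RR{\mu} f=\RR{-\epsilon}\cdot\RR{\mu+\epsilon}f$ and the factor $\RR{-\epsilon}$ is uniformly continuous in $t$ on $(0,\delta]$ with values in a suitable multiplier space.

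The only genuinely delicate point is part (iv): one must ensure that the extension $\widehat f$ inherits uniform continuity across $t=\delta$ (so there is no jump) and that, for the $\epsilon/3$ argument, the supremum of the $H^q$-norm of the difference decays uniformly in $t$. That this works is a consequence of the fact that the convolution is taken over a shrinking support (of size $\alpha_j$ in time), so for any prescribed $\epsilon$ and corresponding modulus of continuity, sufficiently small $\alpha_i,\alpha_j$ produce a deviation below $\epsilon$ independently of $t$; the rest of the argument is routine.
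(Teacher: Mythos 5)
Your proposal is correct and follows essentially the same route as the paper's proof: extend $f$ constantly (in the $\RR{\mu}$-weighted sense) beyond $t=\delta$, handle (i)--(iii) by standard mollifier arguments and Lebesgue differentiation, and prove (iv) by splitting the error into a spatial and a temporal mollification term, each controlled uniformly in $t$ via the uniform continuity of $\RR{\mu}\widehat f$ and the constant tail. One small simplification for (ii): since $\RR{\mu}f_{i,j}=(\RR{\mu}f)_{i,j}$ identically, no derivatives ever fall on $\RR{\mu}^{-1}$ when computing $\|f_{i,j}\|_{\delta,\mu,q}$, so the weighted bound follows directly from your Young-inequality estimate.
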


\begin{proof}
Observe that $\RR{\mu}\widehat f$ is a bounded continuous map $(0,\infty)\rightarrow H^q(T^1)$ since $\RR{\mu}(t)\widehat f(t)=\RR{\mu}(\delta)f(\delta)$ for all $t\ge\delta$. We obtain $\widehat f\in \widehat X_{\infty,\mu,q}$ and $\|\widehat f\|_{\infty,\mu,q}=\|f\|_{\delta,\mu,q}$.
The first two properties of the lemma can be proven by standard arguments. The third one follows from Lebesgue's Differentiation Theorem. We only discuss the fourth property. If we fix any $t\in (0,\delta]$, then\footnote{We write $H^q_x(T^1)$ to stress that this is the Sobolev space with respect to integration over the $x$-variable.} we calculate
\begin{align*}
  &\|\RR{\mu}\left(|\RR{\mu}^{-1}(\RR{\mu} f)_{i,j}(t,x)-f(t,x)\right)\|_{H^q_x(T^1)}\\
  &= \|(\RR{\mu} f)_{i,j}(t,x)-(\RR{\mu} f)(t,x)\|_{H^q_x(T^1)}\\
  &= \left\|\int_0^\infty\int_{T^1}\left((\RR{\mu} \widehat f)(s,y)-(\RR{\mu} f)(t,x)\right)\frac 1{\alpha_i}\phi\left(\frac{x-y}{\alpha_i}\right) \frac 1{\alpha_j}\phi\left(\frac{s-t}{\alpha_j}\right)\,dy\,ds \right\|_{H^q_x(T^1)},
\end{align*}
as a consequence of the condition that $\int_\R\phi(x)dx=1$. Now we write
\[(\RR{\mu} \widehat f)(s,y)-(\RR{\mu} f)(t,x)=(\RR{\mu} \widehat f)(s,y)-(\RR{\mu} \widehat f)(s,x)+(\RR{\mu}\widehat f)(s,x)-(\RR{\mu} f)(t,x),\]
and therefore obtain the estimate
\begin{align}
  \label{eq:twoterms}
  &\|\RR{\mu}\left(|\RR{\mu}^{-1}(\RR{\mu} f)_{i,j}(t,x)-f(t,x)\right)\|_{H^q_x(T^1)}\\
  &\le  \left\|\int_0^\infty\int_{T^1}\left((\RR{\mu} \widehat f)(s,y)-(\RR{\mu} \widehat f)(s,x)\right)\frac 1{\alpha_i}\phi\left(\frac{x-y}{\alpha_i}\right) \frac 1{\alpha_j}\phi\left(\frac{s-t}{\alpha_j}\right)\,dy\,ds \right\|_{H^q_x(T^1)}\notag\\
  &+ \left\|\int_0^\infty\int_{T^1}\left((\RR{\mu} \widehat f)(s,x)-(\RR{\mu} f)(t,x)\right)\frac 1{\alpha_i}\phi\left(\frac{x-y}{\alpha_i}\right) \frac 1{\alpha_j}\phi\left(\frac{s-t}{\alpha_j}\right)\,dy\,ds \right\|_{H^q_x(T^1)}.\notag
\end{align}
Writing the first term on the right hand side of \Eqref{eq:twoterms} as $I$, we estimate 
\begin{equation*}
 I \le \int_0^\infty  \left\|\int_{T^1}\left((\RR{\mu} \widehat f)(s,y)-(\RR{\mu} \widehat f)(s,x)\right)\frac 1{\alpha_i}\phi\left(\frac{x-y}{\alpha_i}\right) \,dy \right\|_{H^q_x(T^1)} \frac 1{\alpha_j}\phi\left(\frac{s-t}{\alpha_j}\right)\,ds. 
\end{equation*}
Now, it is a standard result for mollifiers that for every $s\in (0,\infty)$ 
\[\left\|\int_{T^1}\left((\RR{\mu} \widehat f)(s,y)-(\RR{\mu} \widehat f)(s,x)\right)\frac 1{\alpha_i}\phi\left(\frac{x-y}{\alpha_i}\right) \,dy \right\|_{H^q_x(T^1)}\le g_i(s),\]
where $\lim_{i\rightarrow\infty} g_i(s)=0$ at every $s$, and for every integer $i$, the function $g$ is continuous. Since $\RR{\mu} f$ is uniformly continuous, this function $g_i$ extends to the interval $[0,\infty)$ with the same properties. Since $\RR{\mu}(t)\widehat f(t)=\RR{\mu}(\delta)\widehat f(\delta)$ for all $t\ge\delta$, it follows that there  is a sequence $(\widehat g_i)$ with limit $0$, such that $g_i(s)\le \widehat g_i$ for all $s\in[0,\infty)$. Consequently, $I$ can be estimated by a sequence  $(a_i)$, which (i) is independent of $j$, (ii) is independent of $t$, and (iii) goes to zero in the limit $i\rightarrow\infty$. 

We  now discuss the second term of  the right hand side of \Eqref{eq:twoterms}, which we label as $II$. The integral over $y$ is trivial, so consequently
\begin{equation*}
 II \le
 \int_0^\infty  \left\|(\RR{\mu} \widehat f)(s,x)-(\RR{\mu} f)(t,x)\right\|_{H^q_x(T^1)} \frac 1{\alpha_j}\phi\left(\frac{s-t}{\alpha_j}\right) \,ds.
\end{equation*}
The term involving the $H^q$-norm is a uniformly continuous function in $s$ and $t$. Hence, from  Lebesgue's Differentiation Theorem and the definition of $\widehat f$, it follows that  the $s$-integral converges to $0$ for $j\rightarrow\infty$, independently of $t\in (0,\delta]$ and $i$. This completes the proof of the fourth property.
\end{proof}

We can now use Lemma \ref{lem:mollifiers} to relate the spaces $X_{\delta,\mu,q}$ and $\widehat X_{\delta,\mu,q}$.

\begin{lemma}
  \label{lem:relationspaces}
Fix a constant $\delta>0$, an exponent vector $\mu$, and a
  non-negative integer $q$; then for all $\epsilon>0$, 
  \[\widehat X_{\delta,\mu+\epsilon,q} \subset X_{\delta,\mu,q}\subset \widehat X_{\delta,\mu,q}.\]
\end{lemma}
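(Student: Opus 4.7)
The plan is to prove each inclusion separately, with the first inclusion following from the mollifier construction in \Lemref{lem:mollifiers}, and the second inclusion following from the fact that $\widehat X_{\delta,\mu,q}$ is a Banach space containing the smooth functions with finite $\|\cdot\|_{\delta,\mu,q}$--norm. Neither step should be difficult; the work is mainly to unwind the definitions.

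For the second inclusion $X_{\delta,\mu,q}\subset \widehat X_{\delta,\mu,q}$, I would first observe that any smooth function $w\in C^\infty((0,\delta]\times T^1)$ with finite norm $\|w\|_{\delta,\mu,q}$ automatically lies in $\widehat X_{\delta,\mu,q}$: smoothness gives that $\RR{\mu}w$ is a continuous map $(0,\delta]\rightarrow H^q(T^1)$, while the finiteness of $\|w\|_{\delta,\mu,q}$ gives that it is bounded. Since $\widehat X_{\delta,\mu,q}$ is a Banach space under $\|\cdot\|_{\delta,\mu,q}$, and since by definition $X_{\delta,\mu,q}$ is the completion of the smooth functions with finite norm in this same norm, the standard identification of completions yields $X_{\delta,\mu,q}\subset \widehat X_{\delta,\mu,q}$ as a closed subspace.

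For the first inclusion $\widehat X_{\delta,\mu+\epsilon,q}\subset X_{\delta,\mu,q}$, let $f\in \widehat X_{\delta,\mu+\epsilon,q}$. The elementary identity $\RR{\mu}=t^{\epsilon}\RR{\mu+\epsilon}$ gives
\[\|\RR{\mu}f(t,\cdot)\|_{H^q}=t^\epsilon\|\RR{\mu+\epsilon}f(t,\cdot)\|_{H^q}\le \delta^\epsilon\|f\|_{\delta,\mu+\epsilon,q},\]
so $\|f\|_{\delta,\mu,q}$ is finite. Moreover, as recalled in the preamble to the appendix, the condition $f\in \widehat X_{\delta,\mu+\epsilon,q}$ implies that $\RR{\mu}f:(0,\delta]\rightarrow H^q(T^1)$ is uniformly continuous. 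Hence the hypothesis of \Conditionref{cond:uniformconvergence} of \Lemref{lem:mollifiers} is met. Applying the lemma, the smooth mollifier approximations $f_{i,j}\in C^\infty((0,\delta]\times T^1)$ have finite $\|\cdot\|_{\delta,\mu,q}$-norm (by property (ii) of that lemma) and satisfy $\|f_{i,j}-f\|_{\delta,\mu,q}\rightarrow 0$ as $i,j\rightarrow\infty$. Thus $f$ belongs to the completion of the smooth finite-norm functions, which is precisely $X_{\delta,\mu,q}$.

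The only subtle point is making sure that the mollifier sequence provided by \Lemref{lem:mollifiers} really does consist of smooth functions with finite $\|\cdot\|_{\delta,\mu,q}$-norm (not merely in $\widehat X_{\delta,\mu,q}$); this is given directly by property (ii) of that lemma, so there is no serious obstacle. The whole argument thus reduces to a straightforward application of the mollifier lemma together with the definition of the completion.
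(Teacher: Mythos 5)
Your proof is correct and follows essentially the same route as the paper: the second inclusion via the completion/Cauchy-sequence characterization of $X_{\delta,\mu,q}$, and the first via Condition~\ref{cond:uniformconvergence} of \Lemref{lem:mollifiers} applied to the uniformly continuous map $\RR{\mu}f$. The only difference is that you spell out the finiteness of $\|f\|_{\delta,\mu,q}$ via the factor $t^\epsilon$ explicitly, which the paper leaves implicit.
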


\begin{proof}
The inclusion $X_{\delta,\mu,q}\subset \widehat X_{\delta,\mu,q}$ follows easily from the fact that each element in $X_{\delta,\mu,q}$ is the limit of a Cauchy  sequence in $(C^\infty((0,\delta]\times T^1),\|\cdot\|_{\delta,\mu,q})$, whose elements are in particular bounded continuous maps $(0,\delta]\rightarrow H^q(T^1)$, and the convergence is uniform in time. 

To check the inclusion $\widehat X_{\delta,\mu+\epsilon,q} \subset X_{\delta,\mu,q}$, let a function
$f$ be given in $\widehat X_{\delta,\mu+\epsilon,q}$. Hence $f$ satisfies the condition of the previous lemma, in particular that of \Conditionref{cond:uniformconvergence}. It follows  that $f\in X_{\delta,\mu,q}$.
\end{proof}

We also wish to comment on time derivatives of functions in $X_{\delta,\mu,q}$ and $\widehat X_{\delta,\mu,q}$. Let $f\in \widehat X_{\delta,\mu,q}$. We say that $f$ is differentiable in time $t$ if the (bounded continuous) map $\RR{\mu} f:(0,\delta]\rightarrow H^q(T^1)$ is differentiable in the sense of a map between Banach spaces (Frechet derivatives). Its time derivative (multiplied by $t$) $D(\RR{\mu} f)$ can then be considered to be a map $(0,\delta]\rightarrow H^q(T^1)$, and we set $Df:=\RR{\mu}^{-1}(D(\RR{\mu} f)-D\RR{\mu} f)$. If this map is continuous, then we call $f$ continuously differentiable in $t$. If this is the case for $f$ and if in addition $\RR{\mu} Df$ is bounded, then we have $Df\in\widehat X_{\delta,\mu,q}$.

Now, let $f\in \widehat X_{\delta,\mu,q}$ be continuously differentiable. Then $Df$ is the \keyword{distributional time derivative} of $f$ in the following sense. Let $\phi$ be any test function with the properties as in \Sectionref{sec:lineartheory}. Choose $\epsilon>0$. Then we clearly have that
\[\int_\epsilon^\delta \partial_t(t \scalarpr{\RR{\mu} f}{\phi}_{L^2(T^1)}) dt=-\epsilon \left.\scalarpr{\RR{\mu} f}{\phi}_{L^2(T^1)}\right|_{t=\epsilon}.\]
Hence, the boundary term vanishes in the limit $\epsilon\rightarrow 0$. The following integrals are meaningful for $\epsilon=0$, and hence we obtain
\begin{equation}
  \label{eq:distribtimederivative}
  \begin{split}
  \int_0^\delta &\scalarpr{\RR{\mu} Df}{\phi}_{L^2(T^1)} dt\\
  &=-\int_0^\delta\left(\scalarpr{\RR{\mu} f}{D\phi}_{L^2(T^1)}
    +\scalarpr{\RR{\mu} f+D\RR{\mu} f}{\phi}_{L^2(T^1)}\right) dt.
\end{split}
\end{equation}
The reader should compare this with the expressions for weak solutions in \Sectionref{sec:lineartheory}.

\section{On products of functions}
\label{sec:productsfunctions}

We readily check the following  results which are useful in dealing 
 with products of functions and their relationship to the function spaces $X_{\delta,\mu,q}$.

\begin{lemma}
\label{lem:product1}
  Let $f\in X_{\delta,\mu_1,q}$ and $g\in X_{\delta,\mu_2,q}$
  be two functions $(0,\delta]\times T^1\rightarrow\R$,
  for some constant $\delta>0$, some smooth exponents $\mu_1$
  and $\mu_2$, and an integer $q\ge 1$. Then $f\cdot g$ is in $X_{\delta,\mu_1+\mu_2,q}$ and,   for some constant $C>0$, 
  \[\|f\cdot g\|_{\delta,\mu_1+\mu_2,q}\le C
  \|f\|_{\delta,\mu_1,q}\cdot
  \|g\|_{\delta,\mu_2,q}.
\]
\end{lemma}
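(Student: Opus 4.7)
The plan is to reduce the estimate to the Sobolev algebra property of $H^q(T^1)$ by exploiting the multiplicative structure of the weight $\RR{\mu}$ in the scalar case. Since $f$ and $g$ are scalar-valued, $\RR{\mu}(t,x)$ is simply the scalar $t^{-\mu(x)}$, and therefore
\[
\RR{\mu_1+\mu_2}(t,x) \, f(t,x) g(t,x) \;=\; \bigl(\RR{\mu_1}(t,x) f(t,x)\bigr)\bigl(\RR{\mu_2}(t,x) g(t,x)\bigr),
\]
pointwise on $(0,\delta]\times T^1$. This identity is the key: it turns the weighted $H^q$-norm of the product $fg$ into the $H^q$-norm of the product of the two individually weighted functions, at which point no weight is left to worry about when taking spatial derivatives.

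The next step is to invoke the fact that, on $T^1$, for $q\ge 1$ the Sobolev space $H^q(T^1)$ is a Banach algebra under pointwise multiplication. Concretely, a Moser-type (or Leibniz-plus-H\"older) argument combined with the one-dimensional Sobolev embedding $H^1(T^1)\hookrightarrow L^\infty(T^1)$ yields a constant $C>0$ (depending only on $q$) such that
\[
\|u\,v\|_{H^q(T^1)} \;\le\; C\,\|u\|_{H^q(T^1)}\,\|v\|_{H^q(T^1)}
\]
for all $u,v\in H^q(T^1)$. Applying this inequality pointwise in $t$ to $u(t,\cdot):=\RR{\mu_1}(t,\cdot)f(t,\cdot)$ and $v(t,\cdot):=\RR{\mu_2}(t,\cdot)g(t,\cdot)$, and then taking the supremum over $t\in(0,\delta]$, one obtains
\[
\|f g\|_{\delta,\mu_1+\mu_2,q} \;\le\; C\,\|f\|_{\delta,\mu_1,q}\,\|g\|_{\delta,\mu_2,q}
\]
whenever $f$ and $g$ are smooth representatives in their respective spaces.

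To upgrade this to all $f\in X_{\delta,\mu_1,q}$ and $g\in X_{\delta,\mu_2,q}$, I would use the density of $C^\infty((0,\delta]\times T^1)$ in each space (by definition of these spaces as completions). Choosing approximating sequences $(f_n)$ and $(g_n)$ of smooth functions, the identity
\[
f_n g_n - f_m g_m \;=\; (f_n-f_m)g_n + f_m (g_n-g_m)
\]
together with the smooth estimate just established shows that $(f_n g_n)$ is a Cauchy sequence in $X_{\delta,\mu_1+\mu_2,q}$; its limit is naturally identified with the pointwise product $fg$ (consistency follows from, for instance, Lemma~A.1 combined with pointwise a.e.\ convergence of subsequences), and the inequality is preserved in the limit.

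The only genuine obstacle is the Sobolev-algebra step, and this is the place where the hypothesis $q\ge 1$ is essential: in one spatial dimension it is precisely $q\ge 1$ that delivers the $L^\infty$ control on the undifferentiated factor needed to estimate the Leibniz terms arising from $\partial_x^q(uv)$. The multiplicative split of the weight is painless because $\mu_1+\mu_2$ enters only through the scalar factor $t^{-\mu_1(x)-\mu_2(x)}$, so none of the subtleties associated with exponent \emph{matrices} (which appear elsewhere in the appendix) intervene here.
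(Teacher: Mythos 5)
Your proof is correct and follows essentially the same route as the paper's: the Moser product estimate combined with the one-dimensional Sobolev embedding (requiring $q\ge 1$) gives the algebra property of $H^q(T^1)$, which is applied to the weighted factors $\RR{\mu_1}f$ and $\RR{\mu_2}g$ after factoring the scalar weight, and the extension to general elements of the completion uses the same telescoping decomposition $f_n g_n - f_m g_m = (f_n-f_m)g_n + f_m(g_n-g_m)$. The only cosmetic difference is that you make the weight factorization $\RR{\mu_1+\mu_2}fg = (\RR{\mu_1}f)(\RR{\mu_2}g)$ explicit, which the paper leaves implicit.
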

Observe that the condition $q\ge 1$ (for one spatial dimension) is essential here.

\begin{proof}
An essential part of the proof of this lemma is the general estimate
\[\|f\cdot g\|_{H^{q}}\le
C(\|f\|_{H^{q}}\|g\|_{L^\infty}+\|g\|_{H^{q}}\|f\|_{L^\infty}),
\] 
for arbitrary functions $f$ and $g$ in $H^q\cap L^\infty$; see Proposition~3.7 in Chapter~13
of \cite{Taylor:2011wn}.  The Sobolev inequalities for $q\ge 1$ in one
spatial dimension then imply
\[\|f\cdot g\|_{H^{q}}\le
C(\|f\|_{H^{q}}\|g\|_{H^{q}}).
\] 
Working with this inequality, we see that the lemma 
follows easily if we choose a sequence $(f_i)$ which converges to $f$ in the function space $X_{\delta,\mu_1,q}$, and a sequence $(g_i)$ which converges to $g$ in  $X_{\delta,\mu_2,q}$, and then write
\[f_i\cdot g_i-f\cdot g=f_i\cdot(g_i-g)+g\cdot(f_i-f).\]
\end{proof}

Another important result is the following.

\begin{lemma}
  \label{lem:productmatrix}
  Let $w$ be a $d$-vector-valued function in $X_{\delta,\mu,q}$ for some exponent $d$-vector $\mu$, a constant $\delta>0$, and an integer $q\ge 1$.
  Let $S$ be a $d\times d$-matrix-valued function so that $\RR{\mu}\cdot S\cdot \RR{-\mu}$ is an element of $X_{\delta,\xi,q}$ for an exponent $d\times d$-matrix $\xi$ of the form $\xi_{ij}=\zeta_i$ where $\zeta$ is an exponent $d$-vector.
Then, the $d$-vector-valued function $S\,w$ is in $X_{\delta,\zeta+\mu,q}$ and
\[\|S\, w\|_{\delta,\zeta+\mu,q}\le C \|\RR{\mu}\cdot S\cdot \RR{-\mu}\|_{\delta,\xi,q} \|w\|_{\delta,\mu,q},\]
for some constant $C>0$.
\end{lemma}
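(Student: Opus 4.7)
The plan is to reduce the matrix--vector product estimate to the scalar product estimate (Lemma \ref{lem:product1}) via a simple algebraic rearrangement that decouples the $S$-factor from the exponent shift.

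First I would observe the identity
\begin{equation*}
  \RR{\zeta+\mu}\,(Sw) \;=\; \bigl(\RR{\zeta}\,\RR{\mu}\, S\, \RR{-\mu}\bigr)\,\bigl(\RR{\mu}\, w\bigr),
\end{equation*}
which follows from $\RR{\zeta+\mu}=\RR{\zeta}\RR{\mu}$ and the diagonal (hence commutative) nature of the $\RR{\cdot}$ matrices. Setting $T:=\RR{\mu}S\RR{-\mu}$, the hypothesis says exactly $T\in X_{\delta,\xi,q}$ with $\xi_{ij}=\zeta_i$, so for each entry one has a uniform $H^q$-bound $\sup_t\|t^{-\zeta_i}T_{ij}(t,\cdot)\|_{H^q}<\infty$, and $\RR{\mu}w$ has uniformly bounded $H^q$-norm with respect to $t$ by definition of $\|w\|_{\delta,\mu,q}$.

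Next, I would write out the $i$-th component,
\begin{equation*}
  \bigl(\RR{\zeta+\mu}Sw\bigr)_i(t,\cdot)
  \;=\;\sum_{j=1}^d \bigl(t^{-\zeta_i}T_{ij}(t,\cdot)\bigr)\,\bigl(t^{-\mu_j}w_j(t,\cdot)\bigr),
\end{equation*}
and apply Lemma \ref{lem:product1} to each scalar product in the sum (recall $q\ge 1$, so the Sobolev algebra estimate is available). This gives, for a constant $C$ depending only on $q$ and $d$,
\begin{equation*}
  \bigl\|\bigl(\RR{\zeta+\mu}Sw\bigr)_i(t,\cdot)\bigr\|_{H^q}
  \;\le\; C\sum_{j=1}^d \bigl\|t^{-\zeta_i}T_{ij}(t,\cdot)\bigr\|_{H^q}\,\bigl\|t^{-\mu_j}w_j(t,\cdot)\bigr\|_{H^q}.
\end{equation*}
Summing over $i$ and taking the supremum over $t\in(0,\delta]$ yields
\begin{equation*}
  \|Sw\|_{\delta,\zeta+\mu,q}\;\le\; C\,\|T\|_{\delta,\xi,q}\,\|w\|_{\delta,\mu,q}\;=\;C\,\|\RR{\mu}S\RR{-\mu}\|_{\delta,\xi,q}\,\|w\|_{\delta,\mu,q},
\end{equation*}
which is the desired estimate.

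The membership $Sw\in X_{\delta,\zeta+\mu,q}$ then follows by the standard density argument already used in the proof of Lemma \ref{lem:product1}: approximate $T$ by smooth representatives $T^{(n)}$ in $X_{\delta,\xi,q}$ and $w$ by smooth $w^{(n)}$ in $X_{\delta,\mu,q}$, observe that the corresponding products $S^{(n)} w^{(n)}$ (with $S^{(n)}:=\RR{-\mu}T^{(n)}\RR{\mu}$) are smooth, and use the bilinear estimate just obtained to conclude that $S^{(n)} w^{(n)}$ converges to $Sw$ in $X_{\delta,\zeta+\mu,q}$. There is no real obstacle here beyond careful index bookkeeping; the only substantive ingredient is the Sobolev multiplication property (valid because $q\ge 1$ in one spatial dimension), and the special structure $\xi_{ij}=\zeta_i$ is what makes the factor $t^{-\zeta_i}$ in the rewriting uniform across $j$, so the matrix--vector sum can be absorbed into the $\|T\|_{\delta,\xi,q}$ norm directly.
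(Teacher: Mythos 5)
Your proof is correct and follows exactly the route the paper indicates (it states only that the lemma ``is proved essentially in the same way as Lemma~\ref{lem:product1}''): the rearrangement $\RR{\zeta+\mu}Sw=(\RR{\zeta}\RR{\mu}S\RR{-\mu})(\RR{\mu}w)$, entrywise application of the Sobolev product estimate underlying Lemma~\ref{lem:product1} (valid since $q\ge 1$ in one spatial dimension, and made uniform in $j$ by the structure $\xi_{ij}=\zeta_i$), and the same density argument for membership in $X_{\delta,\zeta+\mu,q}$. Nothing is missing.
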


This lemma is proved essentially in the same way as \Lemref{lem:product1}.

\begin{lemma}
  \label{lem:product}
  Suppose that $\delta>0$, $s>0$ and $r>0$ are constants, $n$, $d$ and $q$ integers with $d\ge 1$ and $q\ge 1$, $\mu$ an exponent $d$-vector, and $\nu_1$ and $\nu_2$ exponent scalars.   Let functions $g_1,g_2:U\rightarrow\R$ be given where $U$ is an open subset of $\R^d$. Suppose that $g_1$ maps all functions $w:(0,\delta]\times T^1\rightarrow\R^d$ in $B_{\delta,\mu,q,s}$ to elements $g_1(w)$ in $B_{\delta,\nu_1,q,r}$. Moreover suppose that there is a constant $C_1>0$ with
\[\|g_1[w_1]-g_1[w_2]\|_{\delta,\nu_1,q}
    \le C_1 \|w_1-w_2\|_{\delta,\mu,q},\]
for all $w_1,w_2:(0,\delta]\times T^1\rightarrow\R^d$ in $B_{\delta,\mu,q,s}$. Let us also assume that $g_2$ maps all functions $w:(0,\delta]\times T^1\rightarrow\R^d$ in $B_{\delta,\mu,q,s}$ to elements $g_2(w)$ in $B_{\delta,\nu_2,q,r}$ and that there is a constant $C_2>0$ with
\[\|g_2[w_1]-g_2[w_2]\|_{\delta,\nu_2,q}
    \le C_2 \|w_1-w_2\|_{\delta,\mu,q},\]
for all $w_1,w_2:(0,\delta]\times T^1\rightarrow\R^d$ in $B_{\delta,\mu,q,s}$.
Now, consider
  $h:=g_1\cdot g_2$, $w\mapsto h(w)$. Then, there exists a $\rho>0$ (which is smaller the smaller $r$ is) so that $h$ maps all functions $w:(0,\delta]\times T^1\rightarrow\R^d$ in $B_{\delta,\mu,q,s}$ to elements $h(w)$ in $B_{\delta,\nu_1+\nu_2,q,\rho}$. Moreover, there exists a
  constant $C>0$ with
  \[\|h[w_1]-h[w_2]\|_{\delta,\nu_1+\nu_2,q}
  \le C \|w_1-w_2\|_{\delta,\mu,q},\]
  for all $w_1,w_2:(0,\delta]\times T^1\rightarrow\R^d$ in $B_{\delta,\mu,q,s}$.
\end{lemma}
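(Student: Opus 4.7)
The plan is to prove both claims as direct consequences of Lemma~\ref{lem:product1} (the product estimate in $X_{\delta,\mu,q}$), with a standard ``add and subtract'' trick for the Lipschitz part.

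First I would establish the mapping property. Fix $w \in B_{\delta,\mu,q,s}$. By the hypothesis on $g_1$ and $g_2$, we have $g_1(w) \in X_{\delta,\nu_1,q}$ with $\|g_1(w)\|_{\delta,\nu_1,q} \le r$ and $g_2(w) \in X_{\delta,\nu_2,q}$ with $\|g_2(w)\|_{\delta,\nu_2,q} \le r$. Applying Lemma~\ref{lem:product1} to the pointwise product, we obtain $h(w) = g_1(w) \cdot g_2(w) \in X_{\delta,\nu_1+\nu_2,q}$ together with
\[
\|h(w)\|_{\delta,\nu_1+\nu_2,q} \le C \, \|g_1(w)\|_{\delta,\nu_1,q} \, \|g_2(w)\|_{\delta,\nu_2,q} \le C r^2,
\]
where $C$ is the product constant from Lemma~\ref{lem:product1}. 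Thus the mapping property holds with $\rho := C r^2$, which is indeed smaller when $r$ is smaller, as the statement requires.

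For the Lipschitz estimate, I would use the standard decomposition
\[
h(w_1) - h(w_2) = g_1(w_1) \bigl( g_2(w_1) - g_2(w_2) \bigr) + \bigl( g_1(w_1) - g_1(w_2) \bigr) g_2(w_2),
\]
which is valid for any $w_1, w_2 \in B_{\delta,\mu,q,s}$. Each summand is a product of one factor in $X_{\delta,\nu_1,q}$ and one factor in $X_{\delta,\nu_2,q}$, so Lemma~\ref{lem:product1} gives
\[
\begin{split}
\|h(w_1) - h(w_2)\|_{\delta,\nu_1+\nu_2,q}
 \le{}& C \, \|g_1(w_1)\|_{\delta,\nu_1,q} \, \|g_2(w_1) - g_2(w_2)\|_{\delta,\nu_2,q} \\
 &+ C \, \|g_1(w_1) - g_1(w_2)\|_{\delta,\nu_1,q} \, \|g_2(w_2)\|_{\delta,\nu_2,q}.
\end{split}
\]
Using $\|g_1(w_1)\|_{\delta,\nu_1,q}, \|g_2(w_2)\|_{\delta,\nu_2,q} \le r$ from the hypotheses together with the Lipschitz assumptions on $g_1$ and $g_2$, I conclude that
\[
\|h(w_1) - h(w_2)\|_{\delta,\nu_1+\nu_2,q} \le C \, r \, (C_1 + C_2) \, \|w_1 - w_2\|_{\delta,\mu,q},
\]
which gives the desired constant $C := C r (C_1 + C_2)$.

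There is no real obstacle here beyond correctly invoking Lemma~\ref{lem:product1}; the only thing to verify is that the hypothesis $q \ge 1$ of that lemma matches the hypothesis of the present lemma (it does), so that the Banach-algebra-type product estimate on $H^q(T^1)$ is available. The whole argument is algebraic once that tool is in place.
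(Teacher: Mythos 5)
Your proof is correct and follows essentially the same route as the paper: the mapping property via \Lemref{lem:product1} with $\rho=Cr^2$, and the Lipschitz bound via the standard add-and-subtract decomposition of $h(w_1)-h(w_2)$ combined with the product estimate and the individual Lipschitz hypotheses. The paper groups the cross terms slightly differently (it pairs $g_1[w_1]-g_1[w_2]$ with $g_2[w_1]$ and $g_1[w_2]$ with $g_2[w_1]-g_2[w_2]$), but this is an immaterial variation of the same argument.
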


\begin{proof}
If $w\in B_{\delta,\mu,q,s}$, then $g_1(w)\in B_{\delta,\nu_1,q,r}$ and $g_2(w) \in B_{\delta,\nu_2,q,r}$. \Lemref{lem:product1} implies that $h(w)=g_1(w) g_2(w)\in X_{\delta,\nu_1+\nu_2,q}$ and 
\[\|h(w)\|_{\delta,\nu_1+\nu_2,q}\le C \|g_1(w)\|_{\delta,\nu_1,q}\|g_2(w)\|_{\delta,\nu_2,q}\le C r^2,\]
where $C>0$ is the constant in \Lemref{lem:product1}. This allows us to set $\rho=C r^2$ and hence establishes that $h(w)\in B_{\delta,\nu_1+\nu_2,q,\rho}$. Regarding the Lipschitz estimate, we find
\begin{equation}
\label{eq:expandproduct}
\begin{split}
\|t^{-(\nu_1+\nu_2)} (h[w_1](t)-h[w_2](t))\|_{H^q}&\le C\| t^{-\nu_1}(g_1[w_1](t)-g_1[w_2](t))\|_{H^q} \|t^{-\nu_2} g_2[w_1](t)\|_{H^q} \\
&+C\| t^{-\nu_1}g_1[w_2](t)\|_{H^q}\| t^{-\nu_2}(g_2[w_1](t)-g_2[w_2](t))\|_{H^q}.
\end{split}
\end{equation}
Then we can use the individual Lipschitz estimates for $g_1$ and $g_2$ in order to establish this result.

\end{proof}

While  \Lemref{lem:product} is adequate for the proof of Theorem \ref{th:Wellposedness1stOrderFiniteDiff}, to prove Theorem \ref{th:Wellposedness1stOrderHigherOrder} we require a stronger result, which we present here.

\begin{lemma}
  \label{lem:productII}
  Suppose that $q\ge 1$. Let $g_1$ and $g_2$ be functions satisfying
  all the conditions of \Lemref{lem:product} with exponents $\nu_1$,$\nu_2$ for all $x\in T^1$. Suppose that, in addition, one has the following: For all $w\in
  B_{\delta,\mu,q,s/2}$ with $\omega\in B_{\delta,\widehat\mu,q,s/2}$ for some
  exponent vector $\widehat\mu$ which satisfies $\widehat\mu\ge\mu$, there
  exist scalar exponents $\gamma_1,\gamma_2$, independent of
  $\widehat\mu$, such that
  \begin{equation*}
    g_1(w+\omega)-g_1(w)\in X_{\delta,\widehat\mu+\gamma_1,q},\quad
    g_2(w+\omega)-g_2(w)\in X_{\delta,\widehat\mu+\gamma_2,q},
  \end{equation*}  
  and
  \begin{align*}    
    \|g_1[w+\omega]-g_1[w]\|_{\delta,\widehat\mu+\gamma_1,q}
    &\le \widehat C_1\|\omega\|_{\delta,\widehat\mu,q},\\
    \|g_2[w+\omega]-g_2[w]\|_{\delta,\widehat\mu+\gamma_2,q}
    &\le \widehat C_2 \|\omega\|_{\delta,\widehat\mu,q},
  \end{align*}
  for constants $\widehat C_1,\widehat C_2>0$.

  Then the function $h=g_1\cdot g_2$ has the following property. We  can choose a scalar exponent  $\gamma$ smaller or equal than $\min\{\nu_1+\gamma_2,\nu_2+\gamma_1\}$ (independently of
  $\widehat\mu$), such that for all $w\in B_{\delta,\mu,q,s/2}$ and
  $\omega\in B_{\delta,\widehat\mu,q,s/2}$, one has  
  \begin{equation*}
    h(w+\omega)-h(w)\in X_{\delta,\widehat\mu+\gamma,q},
  \end{equation*}  
  and
  \begin{equation*}    
    \|h[w+\omega]-h[w]\|_{\delta,\widehat\mu+\gamma,q}
    \le \widehat C \|\omega\|_{\delta,\widehat\mu,q} 
  \end{equation*}
for a constant $\widehat C>0$.
\end{lemma}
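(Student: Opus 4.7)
The plan is to carry out a standard telescoping expansion of the difference $h[w+\omega]-h[w]$ and then to apply Lemma~\ref{lem:product1} to estimate the two resulting factors separately. Concretely, I will write
\begin{equation*}
h[w+\omega]-h[w]
=\bigl(g_1[w+\omega]-g_1[w]\bigr)\,g_2[w+\omega]
+g_1[w]\,\bigl(g_2[w+\omega]-g_2[w]\bigr),
\end{equation*}
and then estimate each of the two terms on the right--hand side separately in the norm $\|\cdot\|_{\delta,\widehat\mu+\gamma,q}$ for an appropriate choice of $\gamma$.

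For the first term, the hypothesis on $g_1$ guarantees $g_1[w+\omega]-g_1[w]\in X_{\delta,\widehat\mu+\gamma_1,q}$ with norm controlled by $\widehat C_1\|\omega\|_{\delta,\widehat\mu,q}$, while the hypothesis on $g_2$ (inherited from Lemma~\ref{lem:product}) ensures that $g_2[w+\omega]\in B_{\delta,\nu_2,q,r}$. Applying Lemma~\ref{lem:product1} to the product of these two factors yields a bound in $X_{\delta,\widehat\mu+\gamma_1+\nu_2,q}$ by a constant times $\widehat C_1\|\omega\|_{\delta,\widehat\mu,q}\cdot r$. The second term is handled symmetrically: $g_1[w]\in B_{\delta,\nu_1,q,r}$ and $g_2[w+\omega]-g_2[w]\in X_{\delta,\widehat\mu+\gamma_2,q}$, so Lemma~\ref{lem:product1} again gives an estimate in $X_{\delta,\nu_1+\widehat\mu+\gamma_2,q}$ by a constant times $r\cdot\widehat C_2\|\omega\|_{\delta,\widehat\mu,q}$. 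Since both exponents $\widehat\mu+\gamma_1+\nu_2$ and $\widehat\mu+\nu_1+\gamma_2$ are at least $\widehat\mu+\gamma$ for any $\gamma\le\min\{\nu_1+\gamma_2,\nu_2+\gamma_1\}$, the embedding $X_{\delta,\nu',q}\subset X_{\delta,\nu,q}$ for $\nu'\ge\nu$ lets me combine the two estimates into the single bound
\begin{equation*}
\|h[w+\omega]-h[w]\|_{\delta,\widehat\mu+\gamma,q}
\le\widehat C\,\|\omega\|_{\delta,\widehat\mu,q},
\end{equation*}
with $\widehat C$ built from the constant in Lemma~\ref{lem:product1}, from $r$, and from $\widehat C_1,\widehat C_2$. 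Importantly, the exponent $\gamma$ so constructed depends only on $\nu_1,\nu_2,\gamma_1,\gamma_2$ and is independent of~$\widehat\mu$, as required.

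The only point that needs a little care is that we must have $w$ and $w+\omega$ both in the domain where the hypotheses of Lemma~\ref{lem:product} and of the current lemma apply. Since $w\in B_{\delta,\mu,q,s/2}$ and $\omega\in B_{\delta,\widehat\mu,q,s/2}$ with $\widehat\mu\ge\mu$, the embedding $X_{\delta,\widehat\mu,q}\subset X_{\delta,\mu,q}$ (with a norm estimate involving a power of $\delta$, which is harmless after possibly shrinking the relevant constants) implies $\omega\in B_{\delta,\mu,q,s/2}$ as well, so $w+\omega\in B_{\delta,\mu,q,s}$ and the values $g_1[w+\omega]$, $g_2[w+\omega]$, $g_1[w]$, $g_2[w]$ are all controlled in their respective balls $B_{\delta,\nu_i,q,r}$. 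This is the step I expect to require the most bookkeeping; with it in hand, the rest of the argument is an assembly of Lemma~\ref{lem:product1} applied twice, exactly as in the proof of Lemma~\ref{lem:product} (compare \eqref{eq:expandproduct} there), and the stated conclusion follows.
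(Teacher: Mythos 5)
Your proposal is correct and follows essentially the same route as the paper: the authors' own proof is just the remark that the result ``follows from a more detailed analysis of \eqref{eq:expandproduct}'', i.e.\ precisely the telescoping decomposition $h[w+\omega]-h[w]=(g_1[w+\omega]-g_1[w])\,g_2[w+\omega]+g_1[w]\,(g_2[w+\omega]-g_2[w])$ combined with \Lemref{lem:product1}, which is what you carry out. Your additional bookkeeping on the domain (that $\omega\in B_{\delta,\widehat\mu,q,s/2}\subset B_{\delta,\mu,q,s/2}$ so that $w+\omega\in B_{\delta,\mu,q,s}$) is a detail the paper leaves implicit, and you handle it appropriately.
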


This follows from a more detailed analysis of \Eqref{eq:expandproduct}.

To handle the exponential function,
we rely on  the following result.

\begin{lemma}
\label{lem:exponential}
Pick constants $\delta>0$, $s>0$, an integer $q\ge 1$, and an exponent $\mu>0$. Let $g^{(i)}:=\exp\circ\Pi_i$, where
  $\Pi_i:\R^d\rightarrow\R$ is the projection to the $i$th
  component of $d$-vectors.
Then, for every function $w:(0,\delta]\times T^1\rightarrow\R$ in ${B}_{\delta,
  \mu,q,s}$, there exists an $r>0$, so that the composed function $g^{(i)}\circ w:(0,\delta]\times T^1\rightarrow\R$ is in $ B_{\delta, 0,q,r}$. Moreover, for all $w_1,w_2\in {B}_{\delta, \mu,q,s}$, there exists a constant $C>0$, so that
\begin{equation*}
  \|g^{(i)} (w_1)-g^{(i)} (w_2)\|_{\delta,0,q}\le C\|w_1-w_2\|_{\delta,\mu,q}.
\end{equation*}
In addition, for every scalar exponent $\hat\mu\ge\mu$ and every $w\in B_{\delta,\mu,q,s/2}$ and $\omega\in B_{\delta, \hat\mu,q,s/2}$, it follows that $g^{(i)} (w+\omega)-g^{(i)} (w)$ is in $X_{\delta,\hat\mu,q}$ and the estimate
\begin{equation*}
\|g^{(i)} (w+\omega)-g^{(i)}(w)\|_{\delta,\hat\mu,q}\le C \|\omega\|_{\delta,\hat\mu,q},
\end{equation*}
holds.
\end{lemma}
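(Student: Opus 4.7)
The strategy is to reduce all three claims to a combination of the Moser composition estimate in Sobolev spaces and the product estimate (Lemma \ref{lem:product1}), via the elementary integral identity
\[
e^a - e^b = (a - b) \int_0^1 e^{b + \sigma (a - b)} \, d\sigma.
\]

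For the membership claim, observe first that if $w \in B_{\delta, \mu, q, s}$ with $\mu > 0$, then since $t^{\mu(x)}$ and its spatial derivatives up to order $q$ are uniformly bounded on $(0, \delta] \times T^1$ (using that $t^\alpha |\log t|^k \to 0$ as $t \to 0$ for any $\alpha > 0$), we obtain the uniform estimate
\[
\|w(t, \cdot)\|_{H^q(T^1)} \le C(\delta, \mu) \, \|\RR{\mu}(t, \cdot) w(t, \cdot)\|_{H^q(T^1)} \le C'(\delta, \mu) \, s,
\]
for all $t \in (0, \delta]$. Since $q \ge 1$, Sobolev embedding in one dimension gives a corresponding uniform $L^\infty$ bound for each component $w_i$. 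The Moser composition inequality (see Proposition~3.9 in Chapter~13 of \cite{Taylor:2011wn}) applied to the smooth function $\exp$ then yields a uniform bound $\|e^{w_i(t, \cdot)}\|_{H^q} \le K$ for some $K = K(\delta, \mu, s)$, showing that $g^{(i)} \circ w \in X_{\delta, 0, q}$ with norm at most $r := K$.

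For the Lipschitz estimate, write
\[
g^{(i)}(w_1) - g^{(i)}(w_2) = (w_1^i - w_2^i) \cdot h, \qquad h(t, x) := \int_0^1 \exp\bigl(w_2^i + \sigma (w_1^i - w_2^i)\bigr) \, d\sigma.
\]
Since $w_2 + \sigma (w_1 - w_2) \in B_{\delta, \mu, q, s}$ for every $\sigma \in [0, 1]$ by convexity, the previous step, applied uniformly in $\sigma$, provides a bound $\|h\|_{\delta, 0, q} \le K'$. Lemma \ref{lem:product1} with exponents $0$ and $\mu$ now yields
\[
\|g^{(i)}(w_1) - g^{(i)}(w_2)\|_{\delta, \mu, q} \le C \|h\|_{\delta, 0, q} \|w_1 - w_2\|_{\delta, \mu, q}.
\]
The continuous embedding $X_{\delta, \mu, q} \hookrightarrow X_{\delta, 0, q}$ for $\mu > 0$---which follows from $\|f(t)\|_{H^q} \le C \|t^{\mu(\cdot)}\|_{W^{q, \infty}(T^1)} \, \|\RR{\mu}(t, \cdot) f(t, \cdot)\|_{H^q}$ combined with the uniform bound on $t^{\mu(x)}$ discussed above---then gives the desired estimate in the $X_{\delta, 0, q}$ norm.

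For the enhanced difference estimate, use the analogous identity
\[
g^{(i)}(w + \omega) - g^{(i)}(w) = \omega_i \cdot \tilde h, \qquad \tilde h(t, x) := \int_0^1 \exp\bigl(w_i + \sigma \omega_i\bigr) \, d\sigma.
\]
Since $\hat\mu \ge \mu$ implies that $t^{\hat\mu - \mu}$ is uniformly bounded in $W^{q, \infty}((0, \delta] \times T^1)$, one has $\|\omega\|_{\delta, \mu, q} \le C'' \|\omega\|_{\delta, \hat\mu, q} \le C'' s / 2$, so $w + \sigma \omega$ lies in $B_{\delta, \mu, q, (1 + C'') s / 2}$ for every $\sigma \in [0, 1]$. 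Applying the first step to this (possibly slightly enlarged) ball yields $\|\tilde h\|_{\delta, 0, q} \le K''$ uniformly in $w, \omega$ drawn from the stated balls. A final application of Lemma \ref{lem:product1} with exponents $0$ and $\hat\mu$ then gives
\[
\|g^{(i)}(w + \omega) - g^{(i)}(w)\|_{\delta, \hat\mu, q} \le C \|\tilde h\|_{\delta, 0, q} \|\omega\|_{\delta, \hat\mu, q},
\]
as required. The only delicate point throughout is the uniform $W^{q, \infty}$-control of the powers $t^{\pm \mu(x)}$ and $t^{\hat\mu(x) - \mu(x)}$; once this is in place, no subtle cancellation is needed and everything reduces to Sobolev algebra and Moser composition.
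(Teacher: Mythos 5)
Your proof is correct and follows essentially the same route as the paper's (two-line) argument: the Moser composition estimate of Proposition~3.9 in Chapter~13 of \cite{Taylor:2011wn} for the membership claim, and the first-order Taylor expansion of the exponential with integral remainder, combined with the product estimate of \Lemref{lem:product1}, for the two difference estimates. Your version simply spells out the details that the paper leaves implicit.
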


\begin{proof}
This follows from Proposition~3.9 in Chapter~13 of \cite {Taylor:2011wn}
applied to $g^{(i)}(w)-1$, together with the Taylor theorem for the exponential
function.
\end{proof}

\section{Duality and convergence results}

\subsection*{Sobolev spaces and duality}

Following  \cite[Chapter~VI]{ChoquetBruhat:1982tv} or \cite{Ringstrom:2009cj}, one defines the Sobolev space $H^s(\R^n)$ for any $s\in\R$ as the set of temperate distributions $u$ such that $\widehat u (1+|\xi|^2)^{s/2}\in
L^2(\R^n)$, where $\widehat u:=\mathcal F u$ is the Fourier transform (in the sense of temperate distributions) of $u$.
The norm defined by
\[\|u\|_s:=\|\widehat u(\xi) (1+|\xi|^2)^{s/2}\|_{L_\xi^2(\R^n)}
\] 
turns this space into a Banach space. If $s=q$ for any non-negative
integer $q$, then $H^s(\R^n)$ is equivalent to  the standard ($p=2$) Sobolev space
$H^q(\R^n)$. For general $s\in\R$, the space $H^s(\R^n)$ is in fact a
Hilbert space for the scalar product
\[\scalarpr{u}{v}_{s}:=\int_{\R^n}{\widehat u(\xi) (1+|\xi|^2)^{s/2}}{\widehat v(\xi) (1+|\xi|^2)^{s/2}} d\xi.\]

Let $u\in H^{-s}(\R^n)$ and $v\in H^s(\R^n)$ for any $s\in\R$. Then the \keyword{dual pairing} between $H^s(\R^n)$ and $H^{-s}(\R^n)$,
\begin{equation}
\label{eq:pairing}
\pairing{u}{v}:=\int_{\R^n}\widehat u(\xi) \widehat v(\xi) d\xi,
\end{equation}
is well-defined, as a consequence of the inequality
\begin{equation}
  \label{eq:pairingestimate}
  |\pairing{u}{v}|\le \left|\int_{\R^n}\widehat u(\xi)  (1+|\xi|^2)^{-s/2} \widehat v(\xi)  (1+|\xi|^2)^{s/2} d\xi\right|
  \le \|u\|_{-s} \|v\|_s.
\end{equation}
By means of this pairing, we can identify $H^{-s}(\R^n)$ with $H^s(\R^n)^*$ (the dual space) as follows. 
For every $u\in H^{-s}(\R^n)$, the map
$\pairing{u}{\cdot}: H^s(\R^n)\rightarrow\R$ is a bounded linear
functional, i.e., an element of $H^{s}(\R^n)^*$. Conversely, according to the Riesz representation theorem, there exists a unique element $w_\phi\in H^s(\R^n)$ for each
element $\phi\in H^{s}(\R^n)^*$ such that
\[\phi(v)=\scalarpr{w_\phi}{v}_s
\] 
for all $v \in H^s(\R^n).$
The last expression can be written as
\begin{equation*}
  \scalarpr{w_\phi}{v}_s=\int_{\R^n}{\widehat w_\phi(\xi) (1+|\xi|^2)^{s/2}}{\widehat v(\xi) (1+|\xi|^2)^{s/2}} d\xi
  =\int_{\R^n}{\widehat v_\phi(\xi)}{\widehat v(\xi)} d\xi,
\end{equation*}
where $\widehat v_\phi:=\widehat w_\phi(\xi) (1+|\xi|^2)^{s}$ is the Fourier transform of $v_{\phi}:=\mathcal F^{-1}(\widehat w_\phi(\xi) (1+|\xi|^2)^{s})$. 
We have $v_\phi\in H^{-s}(\R^n)$, since $\widehat v_\phi (1+|\xi|^2)^{-s/2}=\widehat w_\phi(\xi) (1+|\xi|^2)^{s/2}\in L^2(U)$. By means of the pairing above, we have thus constructed a unique element $v_\phi\in H^{-s}(\R^n)$ corresponding to each $\phi\in H^s(\R^n)^*$. In this sense, we can therefore identify $H^{-s}(\R^n)$ with $H^s(\R^n)^*$ for every $s\in\R$.

The following result concerns the relationship between Sobolev spaces of different indices.
\begin{proposition}
  \label{prop:sobolevdensesubsets}
  For every $s\in\R$ and $\sigma\ge 0$, the space $H^{s+\sigma}(\R^n)$ is a
  dense subset of $H^{s}(\R^n)$.
\end{proposition}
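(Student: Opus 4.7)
The plan is to exploit the Fourier characterization of $H^s(\R^n)$ and prove density by a simple Fourier-space truncation argument.

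First, I would verify the inclusion $H^{s+\sigma}(\R^n) \subset H^s(\R^n)$. Since $\sigma \ge 0$, one has $(1+|\xi|^2)^{s/2} \le (1+|\xi|^2)^{(s+\sigma)/2}$ pointwise, so any $u$ with $\widehat u (1+|\xi|^2)^{(s+\sigma)/2} \in L^2(\R^n)$ automatically satisfies $\widehat u (1+|\xi|^2)^{s/2} \in L^2(\R^n)$, with $\|u\|_s \le \|u\|_{s+\sigma}$. This step is immediate.

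For density, given an arbitrary $u \in H^s(\R^n)$, I would construct an approximating sequence by truncating in Fourier space. For each positive integer $j$, define $u_j$ via $\widehat{u_j}(\xi) := \chi_{B_j}(\xi)\, \widehat u(\xi)$, where $\chi_{B_j}$ is the characteristic function of the ball $\{|\xi| \le j\}$. Then $\widehat{u_j}$ is supported in $B_j$, so for any $\sigma \ge 0$,
\[
\|u_j\|_{s+\sigma}^2 = \int_{|\xi|\le j} |\widehat u(\xi)|^2 (1+|\xi|^2)^{s+\sigma}\, d\xi \le (1+j^2)^\sigma \|u\|_s^2 < \infty,
\]
so $u_j \in H^{s+\sigma}(\R^n)$.

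Finally, I would show convergence $u_j \to u$ in $H^s(\R^n)$. We compute
\[
\|u - u_j\|_s^2 = \int_{|\xi| > j} |\widehat u(\xi)|^2 (1+|\xi|^2)^s\, d\xi.
\]
Since $u \in H^s(\R^n)$, the integrand belongs to $L^1(\R^n)$, so dominated convergence (with dominating function $|\widehat u(\xi)|^2(1+|\xi|^2)^s$) forces the integral to tend to $0$ as $j \to \infty$. This yields density.

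There is no serious obstacle: the argument is entirely at the level of the Fourier multiplier description of $H^s$, and the truncation trivially raises regularity by any finite amount while approximating in the coarser norm. The only mild point to keep in mind is that $\sigma$ need not be an integer, but this is harmless since the Fourier definition makes $H^{s+\sigma}$ meaningful for all real $\sigma$.
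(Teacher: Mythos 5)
Your proof is correct, but it takes a different route from the paper's. The paper handles density by citing the standard fact that $C_0^\infty(\R^n)$ is dense in $H^s(\R^n)$ and is contained in $H^{s+\sigma}(\R^n)$, so the whole argument reduces to one external reference. You instead give a self-contained Fourier-space truncation: setting $\widehat{u_j}=\chi_{B_j}\widehat u$ yields approximants whose higher norms you control explicitly via $\|u_j\|_{s+\sigma}^2\le (1+j^2)^\sigma\|u\|_s^2$, and convergence $\|u-u_j\|_s\to 0$ follows from dominated (or monotone) convergence applied to the tail integral. Both inclusion steps agree. What your version buys is an elementary, quotation-free argument that in fact shows more: your truncated functions lie in $H^{s+\tau}(\R^n)$ for \emph{every} $\tau\ge 0$ simultaneously, so you prove that $\bigcap_{\tau\ge 0}H^{s+\tau}(\R^n)$ is dense in $H^s(\R^n)$, with an explicit rate of growth of the stronger norms. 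What the paper's version buys is brevity and the reuse of a fact already needed elsewhere. The only point worth making explicit in your write-up is that $\widehat{u_j}=\chi_{B_j}\widehat u$ is locally square-integrable (since $(1+|\xi|^2)^{-s/2}$ is bounded on $B_j$), hence defines a temperate distribution whose inverse Fourier transform $u_j$ exists; this is immediate but should be said if the proposition is stated for arbitrary real $s$.
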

\begin{proof}
We first  show that $H^{s+\sigma}(\R^n)$ is indeed a subset of $H^{s}(\R^n)$ for $\sigma\ge
0$. Suppose that $u\in H^{s+\sigma}(\R^n)$.  Calculating the $\| \cdot \|_s$ norm of $u$, we obtain
\[\|u\|_{s}^2=\int_{\R^n} |\widehat u(\xi)|^2 (1+|\xi|^2)^s d\xi \le
\int_{\R^n} |\widehat u(\xi)|^2 (1+|\xi|^2)^{s+\sigma} d\xi
=\|u\|_{s+\sigma}^2<\infty,
\] 
from which it follows that  $u\in H^{s}(\R^n)$. To check that $H^{s+\sigma}(\R^n)$ is a \emph{dense} subset, it is sufficient to note (see, e.g., \cite{ChoquetBruhat:1982tv})
that $C_0^\infty(\R^n)$ 
(the space of smooth functions with compact support) is dense in both
$H^{s}(\R^n)$ and $H^{s+\sigma}(\R^n)$.
\end{proof}

\subsection*{Convergence results in Sobolev spaces}

One can use this dense inclusion property (Proposition \ref{prop:rescueregularity}) together with the duality properties discussed above to derive certain convergence and closedness-type results for sequences in Sobolev spaces. We first discuss a result of this sort for Sobolev spaces on $\R^n$, and then do the same for Sobolev spaces on $T^1$.

\begin{proposition}
  \label{prop:rescueregularity}
 Choose $s, s_0\in\R$ so that $0\le s_0<s$. Let $(w_m)$ be a bounded sequence in $H^s(\R^n)$ in the sense that there exists a constant $C>0$ so that $\|w_m\|_s\le C$, for all integer $m$. Moreover, suppose that $(w_m)$ converges to some $w\in H^{s_0}(\R^n)$;  i.e., $\|w_m-w\|_{s_0}\rightarrow 0$.  Then, $w$ is contained in $H^{s}(\R^n)$.
\end{proposition}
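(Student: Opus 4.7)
The plan is to exploit the Hilbert-space structure of $H^s(\R^n)$ and extract a weakly convergent subsequence. By the definition of the norm $\|\cdot\|_s$, the hypothesis $\|w_m\|_s \le C$ is equivalent to saying that the sequence $g_m(\xi) := \widehat w_m(\xi)(1+|\xi|^2)^{s/2}$ is uniformly bounded in $L^2(\R^n)$. Since $L^2(\R^n)$ is reflexive, the Banach--Alaoglu theorem furnishes a subsequence $(g_{m_k})$ and a limit $g \in L^2(\R^n)$ such that $g_{m_k} \rightharpoonup g$ weakly in $L^2(\R^n)$, with $\|g\|_{L^2} \le \liminf_k \|g_{m_k}\|_{L^2} \le C$. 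Defining a temperate distribution $v$ by $\widehat v(\xi) := g(\xi)(1+|\xi|^2)^{-s/2}$, we obtain $v \in H^s(\R^n)$ with $\|v\|_s = \|g\|_{L^2} \le C$.

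The next step is to identify $v$ with $w$ by testing against Schwartz functions. For any $\phi \in \mathcal{S}(\R^n)$, the dual pairing \Eqref{eq:pairing} gives
\begin{equation*}
\pairing{w_{m_k}}{\phi}
= \int_{\R^n} \widehat w_{m_k}(\xi)\,\widehat\phi(\xi)\, d\xi
= \int_{\R^n} g_{m_k}(\xi)\, \bigl((1+|\xi|^2)^{-s/2}\widehat\phi(\xi)\bigr)\, d\xi.
\end{equation*}
The weight $(1+|\xi|^2)^{-s/2}\widehat\phi$ is itself Schwartz, hence belongs to $L^2(\R^n)$, so the weak $L^2$ convergence $g_{m_k} \rightharpoonup g$ immediately yields $\pairing{w_{m_k}}{\phi} \to \pairing{v}{\phi}$. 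On the other hand, since $\phi \in \mathcal{S}(\R^n) \subset H^{-s_0}(\R^n)$, the duality estimate \Eqref{eq:pairingestimate} delivers
\begin{equation*}
\bigl|\pairing{w_{m_k} - w}{\phi}\bigr| \le \|w_{m_k} - w\|_{s_0}\,\|\phi\|_{-s_0} \longrightarrow 0,
\end{equation*}
so $\pairing{w_{m_k}}{\phi} \to \pairing{w}{\phi}$ as well. Uniqueness of limits gives $\pairing{v - w}{\phi} = 0$ for every Schwartz $\phi$, which forces $v = w$ as temperate distributions; in particular $w \in H^s(\R^n)$.

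No serious obstacle arises: the only delicate points are the invocation of weak compactness in $L^2$ and the identification of the weak limit, both of which are routine given the Fourier-analytic framework already used in the excerpt. As a bonus, the argument shows that every weak-accumulation-point of $(w_m)$ in $H^s$ coincides with the strong $H^{s_0}$-limit $w$, so the entire original sequence converges weakly to $w$ in $H^s(\R^n)$, with the quantitative bound $\|w\|_s \le \liminf_{m\to\infty} \|w_m\|_s \le C$.
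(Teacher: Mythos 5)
Your argument is correct and follows essentially the same route as the paper's: extract a weak limit from the bounded sequence in $H^s$ (you do this concretely on the Fourier side via Banach--Alaoglu in $L^2$, the paper abstractly via the Riesz representation in $H^s$), then identify that limit with $w$ by pairing against dual elements and exploiting the strong $H^{s_0}$ convergence. The only cosmetic difference is that you test against Schwartz functions and conclude equality as temperate distributions, whereas the paper tests against all of $H^{-s_0}$ and applies Riesz once more with $\widetilde X=\widetilde w-w$; the two identification steps are equivalent.
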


\begin{proof}
  The boundedness of the sequence implies the existence 
  of a subsequence of $(w_m)$ (which for simplicity we identify with $(w_m)$) which converges weakly. Hence, as a consequence of  the Riesz Representation Theorem and the above dual pairing in \Eqref{eq:pairing}, there exists an element $\widetilde w\in H^s(\R^n)$, so that, for every $Y\in H^{-s}(\R^n)$, 
\begin{equation}
  \label{eq:weakconvergence}
  \pairing {Y}{\widetilde w-w_m}\rightarrow 0 
\end{equation}
We wish to show that $w=\widetilde w$ and hence that $w\in H^{s}(\R^n)$. To do this, we consider an arbitrary $X\in H^{-s_0}(\R^n)$ and the dual pairing
\[\left|\pairing{X}{\widetilde w-w}\right|\le \left|\pairing{X}{\widetilde w-w_m}\right|+\left|\pairing{X}{w-w_m}\right|,\]
where $\widetilde w-w$ is considered as an element of $H^{-s_0}(\R^n)$, and where we have used the triangle inequality. Since $X\in H^{-s_0}(\R^n)\subset H^{-s}(\R^n)$ according to \Propref{prop:sobolevdensesubsets}, we can consider the first term on the right hand side as a pairing between $H^s(\R^n)$ and $H^{-s}(\R^n)$, and hence \Eqref{eq:weakconvergence} implies that this term can be made arbitrarily small by choosing $m$ sufficiently large. The second term is considered as a pairing between $H^{s_0}(\R^n)$ and $H^{-s_0}(\R^n)$ so that \Eqref{eq:pairingestimate} yields
\[\left|\pairing{X}{w-w_m}\right|\le \|X\|_{-s_0} \|{w-w_m}\|_{s_0}.\]
Also this term can be made arbitrarily small by choosing $m$ sufficiently large. Hence, we have found that $\pairing{X}{\widetilde w-w}=0$ for all $X\in H^{-s_0}(\R^n)$. Now, the Riesz representation theorem implies that for every $X\in H^{-s_0}(\R^n)$ there exists precisely one $\widetilde X\in H^{s_0}(\R^n)$ for which
\[0= \pairing{X}{\widetilde w-w}=\scalarpr{\widetilde X}{\widetilde w-w}_{H^{s_0}(\R^n)}.\]
In particular, we may choose $\widetilde X=\widetilde w-w$, which implies that $\widetilde w-w=0$.
\end{proof}

\begin{corollary}
\label{cor:rescueregularityT1}
 Choose non-negative integers $q$ and $q_0$ so that $q_0<q$. Let $(w_m)$ be a bounded sequence in $H^q(T^1)$, in the sense that there exists a constant $C>0$ so that $\|w_m\|_{H^q(T^1)}\le C$, for all integers $m$. Moreover, suppose that $(w_m)$ converges to some $w\in H^{q_0}(T^1)$; i.e., $\|w_m-w\|_{H^{q_0}(T^1)}\rightarrow 0$.  Then, $w$ is contained in $H^{q}(T^1)$.
\end{corollary}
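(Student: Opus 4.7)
The plan is to mirror the proof of \Propref{prop:rescueregularity} on $\R^n$ almost verbatim, replacing the Fourier transform by Fourier series on $T^1$. First I would verify the analogous functional-analytic setup: for any $s\in\R$, define $H^s(T^1)$ as the Hilbert space of distributions $u=\sum_k \widehat u_k e^{ikx}$ with $\|u\|_{H^s(T^1)}^2:=\sum_{k\in\Z}|\widehat u_k|^2(1+k^2)^s<\infty$, so that $H^{-s}(T^1)$ is canonically identified with $H^s(T^1)^*$ through the dual pairing $(u,v):=\sum_k \widehat u_k \widehat v_{-k}$, with the bound $|(u,v)|\le\|u\|_{H^{-s}(T^1)}\|v\|_{H^s(T^1)}$. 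Because $C^\infty(T^1)$ is dense in each $H^s(T^1)$, the inclusion $H^{s+\sigma}(T^1)\subset H^s(T^1)$ is dense for $\sigma\ge 0$; the argument from  \Propref{prop:sobolevdensesubsets} carries over.

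With this machinery in place, the proof proceeds exactly as in \Propref{prop:rescueregularity}. The sequence $(w_m)$ is bounded in the Hilbert space $H^q(T^1)$, so it has a weakly convergent subsequence, still denoted $(w_m)$, with weak limit $\widetilde w\in H^q(T^1)$. That is, $(Y,\widetilde w-w_m)\to 0$ for every $Y\in H^{-q}(T^1)$. I then want to show $\widetilde w=w$ in $H^{q_0}(T^1)$, after which the conclusion $w=\widetilde w\in H^q(T^1)$ is immediate.

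To do this, pick an arbitrary $X\in H^{-q_0}(T^1)$ and split
\[\bigl|(X,\widetilde w-w)\bigr|\le\bigl|(X,\widetilde w-w_m)\bigr|+\bigl|(X,w_m-w)\bigr|.\]
Since $q_0<q$ gives $H^{-q_0}(T^1)\subset H^{-q}(T^1)$ (the dense-inclusion observation applied with opposite sign), the first term is a dual pairing between $H^q(T^1)$ and $H^{-q}(T^1)$ and tends to zero by weak convergence. The second term is a dual pairing between $H^{q_0}(T^1)$ and $H^{-q_0}(T^1)$, bounded by $\|X\|_{H^{-q_0}(T^1)}\|w_m-w\|_{H^{q_0}(T^1)}$, which tends to zero by hypothesis. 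Hence $(X,\widetilde w-w)=0$ for every $X\in H^{-q_0}(T^1)$. Choosing $X=\widetilde w-w\in H^{q_0}(T^1)\subset H^{-q_0}(T^1)$ (using the natural inclusion via the inner product, or equivalently invoking Riesz representation as in the proof of \Propref{prop:rescueregularity}) gives $\|\widetilde w-w\|_{H^{q_0}(T^1)}^2=0$, so $w=\widetilde w\in H^q(T^1)$.

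There is no serious obstacle here: the only work is to confirm that the $\R^n$ proof transports to $T^1$, which it does because the Fourier-series description of $H^s(T^1)$ makes duality, density of higher-regularity subspaces, and weak sequential compactness of norm-bounded sets in the Hilbert space $H^q(T^1)$ all available in the same form used for $\R^n$. The one point to state carefully is the choice of test element $X$ that concludes the argument; alternatively one can simply remark that the weak limit in $H^q$ must agree with the strong limit in $H^{q_0}$ because strong convergence in $H^{q_0}$ implies weak convergence in $H^{q_0}$, and weak limits in $H^{q_0}$ are unique.
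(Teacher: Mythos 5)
Your argument is correct, but it takes a genuinely different route from the paper. The paper's proof of \Corref{cor:rescueregularityT1} does not redo the duality argument on the torus at all: it covers the compact manifold by coordinate charts, takes a subordinate partition of unity, and invokes the Stein Extension Theorem to produce total extension operators $E_i$ from each chart to $\R^n$, so that boundedness in $H^q$ and convergence in $H^{q_0}$ transfer to the extended functions on $\R^n$; it then applies \Propref{prop:rescueregularity} as a black box on each chart and pieces the conclusion back together. You instead transport the \emph{proof} of \Propref{prop:rescueregularity} to $T^1$ directly, replacing the Fourier transform by Fourier series: weak sequential compactness of the bounded sequence in the Hilbert space $H^q(T^1)$, identification of $H^{-s}(T^1)$ with $H^s(T^1)^*$, and the observation that the weak limit in $H^q$ must coincide with the strong limit in $H^{q_0}$ (your closing remark that strong convergence in $H^{q_0}$ implies weak convergence there, together with uniqueness of weak limits, is the cleanest way to finish and avoids the slight awkwardness of choosing the test element $X$ by hand). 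Your route is more elementary and self-contained for $T^1$, since it needs no extension operators; the paper's route is deliberately formulated so that it generalizes verbatim to arbitrary smooth compact orientable Riemannian manifolds in any dimension, which is the setting needed for the higher-dimensional generalization the authors cite. One small point worth stating explicitly in your version: the weak limit $\widetilde w$ is a priori only the limit of a subsequence, but this suffices because the full sequence already converges to $w$ in $H^{q_0}$, so $w=\widetilde w\in H^q(T^1)$ regardless.
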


\begin{proof}
We formulate the proof so that it can be easily generalized to general smooth orientable, connected compact Riemannian manifolds $M$ in any dimension $n$. For this paper, the relevant special case is 
$M=T^1$. Let $((U_i,\phi_i))$ be a collection of coordinate charts, i.e., open subsets $U_i\subset M$ and homeomorphisms $\phi_i: V_i\rightarrow U_i$ where $V_i:=\phi_i^{-1}(U_i)$ are open subset of $\R^n$, which cover $M$, i.e., $M=\bigcup_i U_i$. It follows from compactness that we can assume that there are $N$ such coordinate charts. Let $(\tau_i)$ be a subordinate partition of unity. Then we find that $(w_m)$ is a bounded sequence in $H^q(M)$ if and only if for all $i=1,\ldots,N$, we have that $(w_m\circ\phi_i)$ is a bounded sequence in $H^q(V_i)$. Moreover, $\|w_m-w\|_{H^{q_0}(T^1)}\rightarrow 0$ for some $w\in H^{q_0}(M)$ if and only if for all $i=1,\ldots,N$, we have that $\|w_m\circ\phi_i-w\circ\phi_i\|_{H^{q_0}(V_i)}\rightarrow 0$ (since $w\circ\phi_i\in H^{q_0}(V_i)$). Now, the Stein Extension Theorem (Theorem~5.24 in \cite{Adams:1975wi}) implies the existence of \keyword{total extension operators} $E_i$ (Definition~5.17 in \cite{Adams:1975wi}), which are linear maps $E_i$ from functions defined on $V_i$ to functions defined on $\R^n$ with the following property: If $f\in H^r(V_i)$ for any non-negative integer $r$, then
\begin{enumerate}
\item  $\left. (E_i f)\right|_{V_i}=f$ almost everywhere, 

\item $E_if$ is in $H^r(\R^n)$, and there exists a constant $C>0$, so that
  \[\|E_i f\|_{H^r(\R^n)}\le C \|f\|_{H^r(V_i)}.\]
\end{enumerate}
Hence, we find that  $(w_m)$ is a bounded sequence in $H^q(M)$ if and only if for all $i=1,\ldots,N$, we have that $(E_i (w_m\circ\phi_i))$ is a bounded sequence in $H^q(\R^n)$. Moreover, $\|w_m-w\|_{H^{q_0}(T^1)}\rightarrow 0$ for some $w\in H^{q_0}(M)$ if and only if for all $i=1,\ldots,N$, we have that $\|E_i(w_m\circ\phi_i)-E_i(w\circ\phi_i)\|_{H^{q_0}(\R^n)}\rightarrow 0$  (since  $E_i(w\circ\phi_i)\in H^{q_0}(\R^n)$).
It follows from \Propref{prop:rescueregularity}, that $E_i(w\circ\phi_i)\in H^{q}(\R^n)$. Hence, $w\circ\phi_i\in H^{q}(V_i)$. Since this is true for all $i=1,\ldots,N$, it follows that $w\in H^{q}(M)$.
\end{proof}


\begin{thebibliography}{10}

\bibitem{Adams:1975wi}
R.~A. Adams.
\newblock {\em {Sobolev spaces}}.
\newblock Academic Press, 1975.

\bibitem{Ames:2012tm}
E.~Ames, F.~Beyer, J.~Isenberg, and P.~G. LeFloch.
\newblock {Quasi-linear symmetric hyperbolic Fuchsian systems in several space
  dimensions}.
\newblock Proc. Conference ``Complex Analysis and Dynamical Systems'',   V, Akko, Israel, May 2011.

\bibitem{Amorim:2009ka}
P.~Amorim, C.~Bernardi, and P.~G. LeFloch.
\newblock {Computing Gowdy spacetimes via spectral evolution in future and past
  directions}.
\newblock {\em Class. Quantum Grav.}, 26(2):025007, 2009.

\bibitem{Andersson:2001fa}
L.~Andersson and A.~D. Rendall.
\newblock {Quiescent cosmological singularities}.
\newblock {\em Comm. Math. Phys.}, 218(3):479--511, 2001.

\bibitem{Andersson:2005vg}
L.~Andersson, H.~van Elst, W.~C. Lim, and C.~Uggla.
\newblock {Asymptotic silence of generic cosmological singularities}.
\newblock {\em Phys. Rev. Lett.}, 94(5):051101, 2005.

\bibitem{Belinskii:1970fu}
V.~A. Belinskii, I.~M. Khalatnikov, and E.~M. Lifshitz.
\newblock {Oscillatory approach to a singular point in the relativistic
  cosmology}.
\newblock {\em Adv. Phys.}, 19(80):525--573, 1970.

\bibitem{Belinskii:1982bn}
V.~A. Belinskii, I.~M. Khalatnikov, and E.~M. Lifshitz.
\newblock {A general solution of the Einstein equations with a time
  singularity}.
\newblock {\em Adv. Phys.}, 31(6):639--667, 1982.

\bibitem{Berger:1997dp}
B.~K. Berger, P.~T. Chru{\'s}ciel, J.~Isenberg, and V.~Moncrief.
\newblock {Global foliations of vacuum spacetimes with $T^2$ isometry}.
\newblock {\em Ann. Phys.}, 260(1):117--148, 1997.

\bibitem{Berger:2001dl}
B.~K. Berger, J.~Isenberg, and M.~Weaver.
\newblock {Oscillatory approach to the singularity in vacuum spacetimes with
  $T^2$ isometry}.
\newblock {\em Phys. Rev. D}, 64(8), 2001.

\bibitem{Berger:1998kg}
B.~K. Berger and V.~Moncrief.
\newblock {Numerical evidence that the singularity in polarized $U(1)$
  symmetric cosmologies on $T^3\times R$ is velocity dominated}.
\newblock {\em Phys. Rev. D}, 57(12):7235--7240, 1998.

\bibitem{Beyer:2011uz}
F.~Beyer and J.~Hennig. 
\newblock{Smooth Gowdy-symmetric generalized Taub-NUT solutions}. 
\newblock{\em Class. Quant. Grav.}, 29(24):245017, 2012.

\bibitem{Beyer:2010tb}
F.~Beyer and P.~G. LeFloch.
\newblock {Second-order hyperbolic Fuchsian systems and applications}.
\newblock {\em Class. Quantum Grav.}, 27(24):245012, 2010.

\bibitem{Beyer:2010fo}
F.~Beyer and P.~G. LeFloch.
\newblock {Second-order hyperbolic Fuchsian systems. General theory}.
\newblock {\em arXiv}, gr-qc/1004.4885, 2010.
\newblock Unpublished extended version of [12].

\bibitem{Beyer:2010wc}
F.~Beyer and P.~G. LeFloch.
\newblock {Second-order hyperbolic Fuchsian systems. Gowdy spacetimes and the
  Fuchsian numerical algorithm}.
\newblock {\em arXiv}, gr-qc/1006.2525, 2010.
\newblock Unpublished extended version of [12].

\bibitem{Beyer:2011ce}
F.~Beyer and P.~G. LeFloch.
\newblock {Second-order hyperbolic Fuchsian systems: Asymptotic behavior of
  geodesics in Gowdy spacetimes}.
\newblock {\em Phys. Rev. D}, 84(8):084036, 2011.

\bibitem{ChoquetBruhat:1982tv}
Y.~Choquet-Bruhat, C.~Dewitt-Morette, and M.~Dillard-Bleick.
\newblock {\em {Analysis, manifolds and physics}}.
\newblock North-Holland, revised edition edition, 1982.

\bibitem{Chrusciel:1990ti}
P.~T. Chru{\'s}ciel.
\newblock {On space-times with $U(1)\times U(1)$ symmetric compact Cauchy
  surfaces}.
\newblock {\em Ann. Phys.}, 202:100--150, 1990.

\bibitem{Chrusciel:1999dk}
P.~T. Chru{\'s}ciel, J.~Isenberg, and V.~Moncrief.
\newblock {Strong cosmic censorship in polarised Gowdy spacetimes}.
\newblock {\em Class. Quantum Grav.}, 7(10):1671--1680, 1990.

\bibitem{Claudel:1998tt}
C.~M. Claudel and K.~P. Newman.
\newblock {The Cauchy problem for quasi-linear hyperbolic evolution problems
  with a singularity in the time}.
\newblock {\em Proceedings of the Royal Society of London Series A:
  Mathematical, Physical, and Engineering Sciences}, 454(1972):1073--1107,
  1998.

\bibitem{Clausen:2007vq}
A.~Clausen.
\newblock {\em {Singular behavior in $T^2$ symmetric spacetimes with
  cosmological constant}}.
\newblock PhD thesis, University of Oregon, 2007.

\bibitem{Eardley:1972ig}
D.~M. Eardley, E.~Liang, and R.~K. Sachs.
\newblock {Velocity-dominated singularities in irrotational dust cosmologies}.
\newblock {\em J. Math. Phys.}, 13(1):99, 1972.

\bibitem{Gowdy:1974hv}
R.~H. Gowdy.
\newblock {Vacuum spacetimes with two-parameter spacelike isometry groups and
  compact invariant hypersurfaces: Topologies and boundary conditions}.
\newblock {\em Ann. Phys.}, 83(1):203--241, 1974.

\bibitem{Hanquin:1983dl}
J.~L. Hanquin and J.~Demaret.
\newblock {Gowdy $S^1\times S^2$ and $S^3$ inhomogeneous cosmological models}.
\newblock {\em J. Phys. A}, 16(1):L5--L10, 1983.

\bibitem{Isenberg:1999ba}
J.~Isenberg and S.~Kichenassamy.
\newblock {Asymptotic behavior in polarized $T^2$-symmetric vacuum
  space--times}.
\newblock {\em J. Math. Phys.}, 40(1):340, 1999.

\bibitem{Isenberg:1990gn}
J.~Isenberg and V.~Moncrief.
\newblock {Asymptotic behavior of the gravitational field and the nature of
  singularities in Gowdy spacetimes}.
\newblock {\em Ann. Phys.}, 199(1):84--122, 1990.

\bibitem{Isenberg:2002ku}
J.~Isenberg and V.~Moncrief.
\newblock {Asymptotic behaviour in polarized and half-polarized $U(1)$
  symmetric vacuum spacetimes}.
\newblock {\em Class. Quantum Grav.}, 19:5361, 2002.

\bibitem{Isenberg:2003dr}
J.~Isenberg and M.~Weaver.
\newblock {On the area of the symmetry orbits in $T^2$ symmetric spacetimes}.
\newblock {\em Class. Quantum Grav.}, 20(16):3783--3796, 2003.

\bibitem{Kichenassamy:1996hr}
S.~Kichenassamy.
\newblock {Fuchsian equations in Sobolev spaces and blow-up}.
\newblock {\em J. Differential Equations}, 125(1):299--327, 1996.

\bibitem{Kichenassamy:1996wy}
S.~Kichenassamy.
\newblock {\em {Nonlinear wave equations}}.
\newblock Monographs and Textbooks in Pure and Applied Mathematics. Marcel
  Dekker, 1996.

\bibitem{Kichenassamy:1999kg}
S.~Kichenassamy and A.~D. Rendall.
\newblock {Analytic description of singularities in Gowdy spacetimes}.
\newblock {\em Class. Quantum Grav.}, 15(5):1339--1355, 1999.

\bibitem{Kichenassamy:2007tr}
S.~Kichenassamy.
\newblock {\em {Fuchsian reduction}}.
\newblock Applications to geometry, cosmology and mathematical physics.
  Birkh{\"a}user, 2007.

\bibitem{Lifshitz:1963hz}
E.~M. Lifshitz and I.~M. Khalatnikov.
\newblock {Investigations in relativistic cosmology}.
\newblock {\em Adv. Phys.}, 12(46):185--249, 1963.

\bibitem{Lim:2009fz}
W.~C. Lim, L.~Andersson, D.~Garfinkle, and F.~Pretorius.
\newblock {Spikes in the mixmaster regime of $G_2$ cosmologies}.
\newblock {\em Phys. Rev. D}, 79(12):123526, 2009.

\bibitem{Rendall:2000ki}
A.~D. Rendall.
\newblock {Fuchsian analysis of singularities in Gowdy spacetimes beyond
  analyticity}.
\newblock {\em Class. Quantum Grav.}, 17(16):3305--3316, 2000.

\bibitem{Ringstrom:2009ji}
H.~Ringstr{\"o}m.
\newblock {Strong cosmic censorship in $T^3$-Gowdy spacetimes}.
\newblock {\em Ann. Math.}, 170(3):1181--1240, 2009.

\bibitem{Ringstrom:2009cj}
H.~Ringstr{\"o}m.
\newblock {\em {The Cauchy problem in general relativity}}.
\newblock ESI Lectures in Mathematics and Physics. European Mathematical
  Society, Z{\"u}rich, Switzerland, 2009.

\bibitem{Stahl:2002bv}
F.~St{\aa}hl.
\newblock {Fuchsian analysis of $S^2\times S^1$ and $S^3$ Gowdy spacetimes}.
\newblock {\em Class. Quantum Grav.}, 19(17):4483--4504, 2002.

\bibitem{Tahara:1984tg}
H.~Tahara.
\newblock {Singular hyperbolic systems, V. Asymptotic expansions for Fuchsian hyperbolic partial differential equations}.
\newblock {\em J. Math. Soc. Japan}, 36:449--473, 1984.

\bibitem{Tahara:1986vz}
H.~Tahara.
\newblock {On Fuchsian hyperbolic partial differential equations}.
\newblock {\em Differential Equations in Banach Spaces}, Lecture Notes in Mathematics, Springer Berlin Heidelberg, 1223:243--253, 1986.

\bibitem{Taylor:2011wn}
M.~E. Taylor.
\newblock {\em {Partial differential equations III - Nonlinear equations}},
{Applied Mathematical Sciences}, Vol.~117. 
\newblock Springer, New York, NY, 2nd edition, 2011.

\end{thebibliography}
\end{document}